\documentclass[twoside,11pt]{article}
\usepackage[nohyperref, preprint]{jmlr2e}

\usepackage{lastpage}
\jmlrheading{27}{2026}{1-\pageref{LastPage}}{3/5; Revised}{~}{00-0000}{Sharma, Lowery, Owhadi, and Shankar}

\usepackage[utf8]{inputenc} 
\usepackage[T1]{fontenc}    
\usepackage{hyperref}       
\usepackage{url}            
\usepackage{booktabs}       
\usepackage{amsfonts}       
\usepackage{nicefrac}       
\usepackage{xcolor}         
\usepackage{changepage}
\usepackage{graphicx}
\usepackage{subcaption}
\usepackage{float}
\usepackage{fancyvrb}
\usepackage{algorithm}
\usepackage{algorithmicx}
\usepackage{algpseudocode}
\usepackage{amssymb,amsxtra,amsmath,mathabx}
\usepackage{MnSymbol}
\usepackage{latexsym}
\usepackage{mathtools}
\usepackage{ifthen}
\usepackage{multirow}
\usepackage{hhline}
\usepackage{xurl}
\usepackage{Article}
\usepackage{cleveref}
\usepackage{arydshln}
\usepackage{tikz}
\usetikzlibrary{positioning,shapes.geometric,arrows.meta}

\algnewcommand\algorithmiclet{\textbf{Let:}}
\algnewcommand\Let{\item[\algorithmiclet]}

\makeatletter
\algnewcommand{\AlgLabel}[1]{%
  \item[\textbf{#1:}]%
  \setcounter{ALG@line}{0}%
}
\makeatother

\makeatletter
\newcommand{\DisableBooktabs}{%
  \let\toprule\relax
  \let\midrule\relax
  \let\bottomrule\relax
}
\makeatother

\def\A{\mathcal{A}}
\def\V{\mathcal{V}}
\def\bv{\mathbf{v}}

\def\ba{\mathbf{a}}

\def\bb{\mathbf{b}}
\def\ub{\underline{b}}
\def\bc{\mathbf{c}}
\def\uc{\underline{c}}
\def\G{\mathcal{G}}
\def\Gtilde{\tilde{\mathcal{G}}}
\def\ftilde{\tilde{f}}
\def\bPhi{\mathbf{\Phi}}
\def\bK{\mathbf{K}}
\def\bLambda{\mathbf{\Lambda}}

\ShortHeadings{Trustworthy Fluids}{Sharma, Lowery, Owhadi, and Shankar}
\firstpageno{1}

\begin{document}

\title{Fluids You Can Trust: Property-Preserving Operator Learning for Incompressible Flows}

\author{\name Ramansh Sharma \email ramansh@cs.utah.edu \\
       \addr Kahlert School of Computing\\
       University of Utah\\
       UT, USA
       \AND
       \name Matthew Lowery \email mlowery@cs.utah.edu \\
       \addr Kahlert School of Computing\\
       University of Utah\\
       UT, USA
       \AND
       \name Houman Owhadi \email owhadi@caltech.edu \\
       \addr Department of Computing and Mathematical Sciences\\
       California Institute of Technology\\
       CA, USA
       \AND
       \name Varun Shankar \email shankar@cs.utah.edu \\
       \addr Kahlert School of Computing\\
       University of Utah\\
       UT, USA
}

\editor{}

\maketitle

\begin{abstract}%
We present a novel property-preserving kernel-based operator learning method for incompressible flows governed by the incompressible Navier--Stokes equations. Traditional numerical solvers incur significant computational costs to respect incompressibility. Operator learning offers efficient surrogate models, but current neural operators fail to exactly enforce physical properties such as incompressibility, periodicity, and turbulence. Our kernel method maps input functions to expansion coefficients of output functions in a property-preserving kernel basis, ensuring that predicted velocity fields \textit{analytically} and \textit{simultaneously} preserve the aforementioned physical properties. Our method leverages efficient numerical linear algebra, simple rootfinding, and streaming to allow for training at-scale on desktop GPUs. We also present universal approximation results and both pessimistic and more realistic \emph{a priori} convergence rates for our framework.  We evaluate the method on challenging 2D and 3D, laminar and turbulent, incompressible flow problems. Our method achieves up to six orders of magnitude lower relative $\ell_2$ errors upon generalization and trains up to five orders of magnitude faster compared to neural operators, despite our method being trained on desktop GPUs and neural operators being trained on cutting-edge GPU servers. Moreover, while our method enforces incompressibility analytically, neural operators exhibit very large deviations. Our results show that our method provides an accurate and efficient surrogate for incompressible flows.
\end{abstract}

\begin{keywords}
operator learning, surrogate modeling, incompressible flow, kernel methods
\end{keywords}

\section{Introduction}
\label{sec:intro}
Incompressible fluid flows arise in an enormous range of engineering and scientific applications, such as the study of flow past airfoils \& wings~\citep{thwaites1960incompressible}, aerodynamics~\citep{kwak1986three, chang1988numerical, kwak2009cfd}, weather prediction~\citep{castorrini2023investigations}, chemical mixing~\citep{najm2005modeling}, and hemodynamics~\citep{janela20102783, suncica2003effective, secomb2016hemodynamics, womersley1955method, mcdonald1955relation, perktold1995computer}. These flows are typically modeled with the incompressible Navier--Stokes (INS) equations given by:
\begin{align}
\label{eq:ins}
\frac{\partial \bu}{\partial t} + (\bu \cdot \nabla) \bu &= \nu \nabla^2 \bu - \frac{1}{\rho} \nabla p + \frac{1}{\rho} \mathbf{f}, \quad \texttt{on } \Omega \times (0,T], \\
\nabla\cdot\bu &= 0, \quad \texttt{on } \Omega \times (0,T], \\
\mathcal{B} \bu &= \mathbf{g}, \quad \texttt{on } \partial \Omega \times (0,T], \\
\bu(\mathbf{x}, 0) &= \bu_0(\mathbf{x}),  \quad \texttt{on } \Omega,
\end{align}
where $\bu$ is the divergence-free velocity field, $p$ is the pressure, $\nu$ is the kinematic viscosity, $\rho$ is the fluid density, $\mathbf{f}$ is the external body forces, $\mathcal{B}$ denotes the boundary operator, and $\bu_0$ and $\mathbf{g}$ are the initial and boundary conditions respectively. These equations constitute a system of nonlinear partial differential equations (PDEs) and encompass a variety of physical phenomena such as boundary layers, vortex dynamics, flow separation, and turbulence. The challenging nature of these PDEs necessitates specialized numerical discretizations of which there are four dominant classes: (1) saddle-point methods, which solve the coupled PDE system directly and reach arbitrary orders of accuracy~\citep{orszag1986boundary, karniadakis1991high, john2002higher, ahmed2018assessment}; (2) high-order methods that treat the pressure explicitly and recover it via the pressure Poisson equation (``PPE'' methods)~\citep{johnston2004accurate,john2002higher, rosales2021high}; (3) artificial compressibility methods~\citep{chorin1967numerical, guermond2015high}; and (4) temporally low-order projection-based methods~\citep{chorin1993mathematical, brown2001accurate,goda1979multistep, kan1986secondorder,bell1989secondorder, guy2005stability, guermond2006overview}. All of these methods come with their own computational bottlenecks, typically requiring problem-specific preconditioners~\citep{elman2002preconditioners, john2002higher, ahmed2018assessment}, expensive Poisson solves~\citep{johnston2004accurate,jung2013heterogenous, dick2015solving} or vector Helmholtz solves~\citep{guermond2015high,guermond2017high}. Further, explicit treatment of the nonlinear advection term leads to the Courant--Friedrichs--Lewy (CFL) stability constraint, severely limiting admissible time-step size~\citep{kress2006time}. These issues have led to a growing interest in developing \emph{surrogate models} that can reduce the cost of obtaining solutions for new problem configurations~\citep{ionita2014data, ohlberger2015reduced, tezzele2020enhancing, bhattacharya2021model}.
Surrogate models for PDEs include reduced order models~\citep{holmes2012turbulence, quarteroni2015reduced}, machine learning (ML) based methods~\citep{lee2020model, fresca2020deep,sirignano2018dgm, thuerey2020deep,sanchez2020learning, pfaff2020learning,PINNs1}. More recently, operator learning models~\citep{boulle2024mathematical} that act directly on function spaces have emerged as a competitive class of surrogates. Operator learning seeks to approximate the \textit{solution operator} of the PDE as a mapping from input functions (geometry, PDE parameters, initial conditions, and/or boundary conditions) to output functions (solutions). 
Prominent neural operator architectures include deep operator networks (DeepONets)~\citep{lu2019deeponet}, which train two separate neural network in conjunction; Fourier neural operators (FNOs)~\citep{li2020fourier} and their geometrically-flexible variants such as the Geo-FNO~\citep{li2020fourier}, which involve kernel-based integral operators computed implicitly via the fast Fourier transform(FFT); the kernel neural operator (KNO)~\citep{lowery2024kernel} which uses explicit kernels and quadrature instead of the FFT; deep Green networks (DGN)~\citep{gin2021deepgreen, boulle2022data} and their graph neural operator (GNO) counterparts~\citep{li2020neural}, which aim to learn the Green's kernel either globally or locally; and more recently, transformer-based neural operators that leverage the attention mechanism~\citep{hao2023gnot, liu2025geometry, liu2024mitigating}. Complementary to these methods, recent work~\cite{batlle2024kernel,mora2025operator} has shown that kernel/Gaussian process (GP) regression can be competitive for operator learning as well. Kernel methods for operator learning are meshless and typically use only a single trainable parameter; note that operator-valued kernels were first introduced in~\citet{kadri2011operator, kadri2016operator}. 
Despite these advances, to the best of our knowledge, current operator learning techniques are unable to \emph{analytically} and \emph{simultaneously} satisfy multiple fluid properties such as incompressibility, periodicity, and turbulence-related power laws. Recent attempts in this direction with either soft constraints~\citep{goswami2022physics, li2023phase, li2024physics, zhang2025floating} or hard constraints~\cite{richter2022neural,khorrami2024physics} have only succeeded in approximate enforcement of incompressibility. Unfortunately, a surrogate that fails to analytically satisfy these properties can produce physically-inconsistent predictions, accumulate spurious divergence, or be unable to replicate key flow features even if its pointwise errors are small. 
\begin{figure}[!tpb]
    \includegraphics[width=\linewidth]{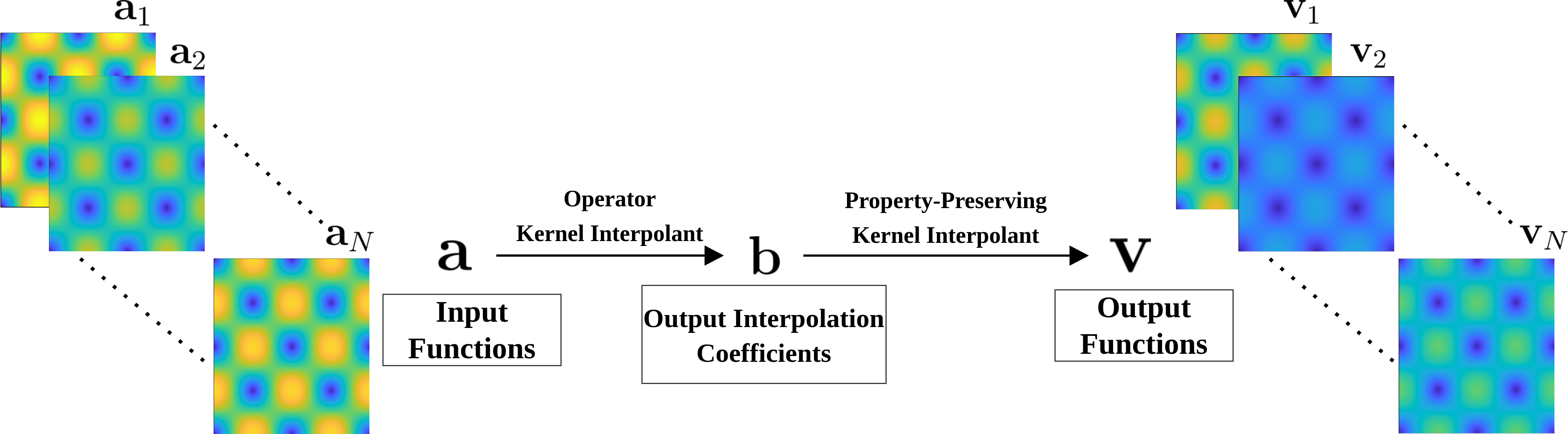}
    \caption{Schematic diagram of the proposed property-preserving kernel method.}
    \label{fig:schematic}
\end{figure}

In this work, we address this gap by introducing a novel property-preserving kernel-based operator learning method for incompressible flows. Unlike the kernel method in~\citet{batlle2024kernel}, we learn a map from input function samples to \emph{interpolation coefficients} associated with output functions, where these coefficients are associated with a kernel basis that is analytically property-preserving. Our framework uses two kernel interpolants: (1) one that interpolates the output functions using a property-preserving kernel basis, and (2) one that consumes input functions and fits the interpolation coefficients from (1). The resulting method only uses two trainable parameters. This approach decouples pointwise generalization errors in our surrogate models from their ability to respect physical constraints; while the former still depends on the training data, the latter is always guaranteed to machine precision; see Figure~\ref{fig:schematic} for an illustration of our method.

We illustrate the capabilities of our method with extensive numerical experiments on an extensive suite of 2D and 3D problems involving incompressible flows. We construct surrogates that are all analytically incompressible on generalization, but that also analytically satisfy periodicity or turbulence-related power laws when applicable. Across these problems, we compare our method against the vanilla kernel method in~\citet{batlle2024kernel}, and two state-of-the-art neural operator baselines, Geo-FNO~\citep{li2023fourier} and Transolver~\citep{wu2024transolver}. The resulting method is meshless; interpretable; typically orders of magnitude more accurate than existing neural operator approaches even in pointwise generalization errors; analytically property-preserving (if the properties are known in advance); more efficient than neural operators in terms of trainable parameter counts (two for our method vs millions for neural operators); and orders of magnitude faster in terms of training time.

The remainder of the paper is organized as follows. In Section~\ref{sec:methods}, we present an overview of operator learning and both the high-level algorithm and the mathematical formulation of the property-preserving kernel method, with a focus on incompressibility, periodicity, and turbulence-related power laws. In Section \ref{sec:approx-thms}, we then present approximation theorems regarding universal approximation and convergence rates of the method. We present results of numerical experiments in Section~\ref{sec:results}, comparing our method against competitors in terms of accuracy, divergence, and runtime. Finally, in Section~\ref{sec:discussion}, we discuss the implications of the results in detail and discuss future directions.
%

\section{A Property-Preserving Kernel Method for Operator Learning}
\label{sec:methods}
In this section, we present an overview of operator learning (Section \ref{sec:overview}), then present the high-level algorithm used in this work (Section \ref{sec:contrib}), and then describe the mathematical formulation and implementation details of the property-preserving kernel method (Section \ref{sec:spatial_kernel}--\ref{sec:training_details}).
\subsection{Overview of Operator Learning}
\label{sec:overview}
We now briefly describe kernel-based operator learning. Let $\A$ and $\V$ be two separable Banach spaces of functions and let $\G: \A \rightarrow \V$ be an operator that maps functions $a \in \A$ (input functions) to functions $v \in \V$ (output functions), i.e. $\G(a) = v$. The goal of operator learning is to learn an approximation ${\Gtilde \approx \G}$ given $N$ pairs of input and output functions, $\{(a_i, v_i)\}, i=1,\dots,N$. Let $\Omega_a$ and $\Omega_v$ be the domains for the input and output functions respectively. The choice of these domains is arbitrary; they can be spatial, spatiotemporal, or parametric, for instance. In practice, the functions are provided as a finite number of samples on their respective domains; let $n$ and $m$ denote the number of samples provided for the input and output functions, respectively. Further, in the most general setting and the setting of this work, the functions can be vector-valued so that ${a: \Omega_a \rightarrow \mathbb{R}^p}$ and ${v: \Omega_v \rightarrow \mathbb{R}^d}$. In this work, we will focus on a kernel-based parametrization of $\Gtilde$, a generalization of the work by~\citet{batlle2024kernel}.
The kernel method for operator learning (first proposed by~\citet{batlle2024kernel}) formulates $\Gtilde$ as
\begin{align}
v \approx \Gtilde(a) = \chi \circ \ftilde \circ \varphi(a), \label{eq:kernel_method}
\end{align}
where ${\chi: \mathbb{R}^{m} \rightarrow \V}$ reconstructs $v$ by interpolating its $m$ samples, ${\varphi: \A \rightarrow \mathbb{R}^{n}}$ generates $n$ samples of $a$, and ${\ftilde: \mathbb{R}^{n} \rightarrow \mathbb{R}^{m}}$ maps from $n$ samples of $a$ to $m$ samples of $v$. Following~\citet{batlle2024kernel}, we endow both $\A$ and $\V$ with reproducing kernel Hilbert spaces (RKHSs) and choose kernel maps for $\chi$ and $\ftilde$. Specifically, we choose to work with positive-definite radial kernels (also called radial basis functions or RBFs) so that, assuming pairwise distinct interpolation points, the linear systems in~\eqref{eq:spatial_kernel_linearsystem},~\eqref{eq:spacetime_gramian}, and~\eqref{eq:operator_kernel_linearsystem} admit unique interpolants~\citep{chen2005scientific, fasshauer2007meshfree, baxter2010interpolation}.

\subsection{Our contribution}
\label{sec:contrib}
\begin{algorithm}[!htpb]
\caption{Training and evaluation procedure for the property-preserving kernel method.}
\label{alg:full_method}
\begin{algorithmic}[1]
    \Require \\
        \textbf{Training data set}: $N$ pairs of input and output function evaluations, $\{(\ba_i, \bv_i)\}_{i=1}^{N}$. \\
        \textbf{Inference example}: $(\ba^\star, \bv^\star)$ is an unseen pair of input and output function evaluations. \\
        Property-preserving kernel $\Phi$ and operator kernel $\lambda$.
        Ridge parameter $\theta$.
    \Let $\ba \in \mathbb{R}^{np}$ and $\bv \in \mathbb{R}^{md}$ denote the column vectors of the training input and output function evaluations respectively.
    \Let $\bv^{\star} \in \mathbb{R}^{m^{\star} d}$ denote the column vector of the inference output function evaluations at $m^{\star}$ points.
    \Let $\bPhi$ denote the $md \times md$ Gramian matrix associated with $\Phi$, see Section~\ref{sec:spatial_kernel} for details.
    \Let $\bK$ denote the $N \times N$ Gramian matrix associated with $\lambda$, see Section~\ref{sec:operator_kernel} for details.
    \Let $\bK^{\star}$ denote the $1 \times N$ operator kernel evaluation matrix and $\bPhi^{\star}$ denote the $m^{\star}d \times md$ property-preserving kernel evaluation matrix.
    \AlgLabel{Training}
    \State Solve the block linear system ~\eqref{eq:spatial_kernel_linearsystem} for $\bb_i, \ i=1,\dots,N$ using the techniques in Section \ref{sec:efficient_linear_algebra}.\label{state:spatial_solve}
    \State Find $\epsilon$ as the root of the function $\kappa(\bK_{\epsilon}) - 10^{15}$ using the approach outlined in Section~\ref{sec:shape_param}.
    \State Solve the linear system ~\eqref{eq:operator_kernel_linearsystem_ridge_epsilon} for $\bc^j, \ j=1, \dots, md$, using $\epsilon$ and $\theta$.
    \AlgLabel{Generalization}
    \State Obtain output coefficients as $\bb^{\star} \gets \begin{bmatrix} \bK^{\star} \bc^1 \\ \vdots \\ \bK^{\star} \bc^{md} \end{bmatrix} \in \mathbb{R}^{md}$.
    \State Obtain a property-preserving prediction $\tilde{\bv}^\star \gets \bPhi^{\star} \bb^{\star}$.
\end{algorithmic}
\end{algorithm}

Our method emerges from the observation that if $\chi$ is carefully constructed to be a matrix-valued kernel interpolant that \emph{analytically} enforces desirable properties (such as incompressibility, periodicity, and turbulence) when interpolating $v$, then changing $\ftilde$ to output expansion coefficients in that kernel interpolant ensures that all predictions made by $\Gtilde(a)$ for every $a$ will automatically (and analytically) satisfy those properties as well. Our method therefore has two main components; (i) the vector-valued \textbf{operator kernel} interpolant $\ftilde$, and (ii) the vector-valued \textbf{property-preserving kernel} interpolant $\chi$. We summarize the method in Algorithm~\ref{alg:full_method}; describe the two components in detail in Sections~\ref{sec:spatial_kernel} and~\ref{sec:operator_kernel}; and present relevant implementation details in Section~\ref{sec:kernel_implementation}. A schematic of our method is shown in Figure~\ref{fig:schematic}.

\textbf{A note on block vector notation}: We briefly describe the vector notation used throughout this section for input and output function evaluations. Let $v: \mathbb{R}^d \rightarrow \mathbb{R}^d$ be a vector valued function that is evaluated at a set of points $\{y_j\}_{j=1}^{m} \subset \mathbb{R}^d$. We denote these evaluations by the block vector
\begin{align}
\bv = \begin{bmatrix} \bv^1 \\ \vdots \\ \bv^d \end{bmatrix} \in \mathbb{R}^{md},
\end{align}
where each block $\bv^k = \begin{bmatrix}v^k(y_1), \ldots, v^k(y_m) \end{bmatrix}^T \in \mathbb{R}^m$ contains the evaluations of the $k^{\mathrm{th}}$ spatial component of $v$. This notation is helpful when discussing vector-valued kernel interpolants.

\subsection{The property-preserving kernel interpolant}
\label{sec:spatial_kernel}
We now discuss how various properties can be analytically encoded into a kernel basis. For this work, the relevant function spaces $\V$ always consist of incompressible velocity fields which are solutions to PDEs such as~\eqref{eq:ins}. Additionally, depending on the problem, these velocity fields can also exhibit spatial periodicity (due to boundary conditions) and turbulence (due to modeling choices). In our method, we use a property-preserving kernel basis in $\chi$ such that the velocity fields recovered analytically preserve these spatial properties.

Let ${Y = \{y_j\}_{j=1}^m \subset \Omega_v \subset \mathbb{R}^d}$ be a set of spatial points where an output function ${v: \mathbb{R}^d \rightarrow \mathbb{R}^d}$ is evaluated; further, let the block vector ${\bv = \begin{bmatrix} \bv^1, \ldots, \bv^d \end{bmatrix}^T \in \mathbb{R}^{md}}$ denote the function evaluations of $v$ at $Y$. Letting ${\Phi: \mathbb{R}^d \times \mathbb{R}^d \rightarrow \mathbb{R}^{d \times d}}$ be a matrix-valued positive-definite property-preserving kernel (described in the following subsections), we explicitly write the resulting kernel interpolant as
\begin{align}
\chi(y) = \sum_{j=1}^m \Phi(y, y_j) \ \ub^j,\label{eq:chi_interp}
\end{align}
where ${\ub^j \in \mathbb{R}^d}$. In order to interpolate $v$, we enforce ${\chi(y)= v(y)}$ for all ${y \in Y}$. These interpolation conditions give rise to the linear system
\begin{align}
\bPhi \bb = \bv,\label{eq:spatial_kernel_linearsystem}
\end{align}
where ${\bPhi_{ij} = \Phi(y_i, y_j)}$ is the ${md \times md}$ block Gramian matrix arising from evaluations of $\Phi$, and $\bb = \begin{bmatrix} \bb^1, \ldots, \bb^d \end{bmatrix}^T \in \mathbb{R}^{md}$ contains the interpolation coefficient vectors for each spatial component. In the operator learning setting, there are multiple such target functions $v_i$; we denote by $\bb_i \in \mathbb{R}^{md}$ the coefficient vector for the $i^{\mathrm{th}}$ output function, and by $b^j_i$ its $j^{\mathrm{th}}$ component. The system \eqref{eq:spatial_kernel_linearsystem} has a unique solution if the points in $Y$ are distinct~\citep{fasshauer2007meshfree,Wendland2004}. This $md \times md$ system is computationally expensive to solve ($O(d^3m^3)$ operations for a Cholesky factorization, and $O(d^2m^2)$ operations for subsequent solves for each right-hand side). However, we never form and store $\bPhi$, but instead compute only with the individual blocks in $\bPhi$, reducing the solution cost significantly. These details are described in Section \ref{sec:kernel_implementation}. Once the interpolation coefficients in \eqref{eq:chi_interp} are computed, the interpolant can be evaluated at any location (ideally within the hull of $Y$). Let $Y^*$ be a set of $m^*$ evaluation locations. Then, the evaluation of $\chi(y)$ at $Y^*$ can be written as a matrix-vector product
\begin{align}\label{eq:spatial_kernel_eval}
    \left.\chi (y)\right|_{Y^*} = \bPhi^{*} \bb,
\end{align}
where $\bPhi^*(y_k^*,y_j)$, $k=1,\ldots,m^*$, $j=1\ldots,m$ is the $m^*d \times md$ (rectangular) evaluation matrix.

Thus far, our description of these property-preserving matrix-valued kernels has been abstract. In the following subsections, we describe in detail how to simultaneously encode incompressibility, periodicity, and turbulence in $\Phi$.
\subsubsection{Incompressibility}
The primary goal of this work is the high-fidelity surrogate modeling of incompressible fluid flows. To that end, our matrix-valued kernel $\Phi$ and the corresponding property-preserving interpolant \eqref{eq:chi_interp} always enforce incompressibility analytically.

The approximation of divergence-free (incompressible) vector fields has a long history in the numerical methods and approximation literature. Classical divergence-conforming finite element spaces, such as the Raviart--Thomas~\citep{raviartthomas1977}, Brezzi--Douglas--Marini~\citep{brezzi1985two}, and N\'ed\'elec elements~\citep{nedelec1980mixed}, enforce the divergence-free constraint by construction through their basis functions, yielding local and exactly divergence-free numerical solutions. This paradigm established the foundational principle of enforcing physical constraints by restricting the admissible solution space. However, these constructions rely on polynomial interpolation over simplices, thereby necessitating spatial meshes. Subsequent work in the meshfree approximation literature led to the construction of matrix-valued divergence-free kernels from scalar-valued positive-definite kernels~\citep{narcowich1994generalized, narcowich2007divergence, fuselier2008sobolev, fuselier2009error, wendland2009divergence}, even leading to numerical methods for meshfree Helmholtz-Hodge decompositions and incompressible flow solvers~\citep{fuselier2017radial, fuselier2016high, drake2021partition, owhadi2023gaussian}. While Algorithm \ref{alg:full_method} can easily leverage mesh-based divergence-free approximations, we choose to use kernel-based methods as they are (1) meshfree and therefore more general; and more importantly (2) can \emph{simultaneously encode multiple properties in addition to incompressibility}.

We now describe the kernel construction used in this work. Let $\phi: \mathbb{R}^d \times \mathbb{R}^d \rightarrow \mathbb{R}$ be a scalar-valued positive-definite kernel that is at least $C^4(\mathbb{R}^d)$. The associated divergence-free matrix-valued kernel is defined by
\begin{align}
\Phi(x,y) = \nabla_y \times \nabla_x \times \phi(x,y),\label{eq:curlcurl}
\end{align}
where the differential operators act componentwise, and ${x=(x_1, \dots, x_d)}$ and ${y=(y_1, \dots, y_d)}$ are points in $\mathbb{R}^d$. For $d=2$, we obtain the following matrix-valued kernel
\begin{align}
\Phi(x,y) =
\begin{bmatrix}
\partial_{x_2} \partial_{y_2} & - \partial_{x_1} \partial_{y_2} \\
- \partial_{x_2} \partial_{y_1} & \partial_{x_1} \partial_{y_1}
\end{bmatrix} \phi(x, y).\label{eq:2ddfblocks}
\end{align}
Whereas for $d=3$ we obtain
\begin{align}
\Phi(x,y) =
\begin{bmatrix}
\partial_{x_3} \partial_{y_3} + \partial_{x_2} \partial_{y_2} & - \partial_{x_1} \partial_{y_2} & - \partial_{x_1} \partial_{y_3} \\
- \partial_{x_2} \partial_{y_1} & \partial_{x_3} \partial_{y_3} + \partial_{x_1} \partial_{y_1} & - \partial_{x_2} \partial_{y_3} \\
- \partial_{x_3} \partial_{y_1} & - \partial_{x_3} \partial_{y_2} & \partial_{x_2} \partial_{y_2} + \partial_{x_1} \partial_{y_1}
\end{bmatrix} \phi(x, y).\label{eq:3ddfblocks}
\end{align}
By construction, each column of $\Phi$ is divergence free with respect to the evaluation coordinate $y$, and the resulting kernel is positive-definite; a simplified form is also readily available for radial kernels~\citep{fuselier2016high}. An important consequence of this construction is that any kernel interpolant of the form~\eqref{eq:chi_interp} lies in a divergence-free space. Consequently, the interpolant $\chi$ satisfies ${\nabla \cdot \chi = 0}$ pointwise throughout the domain, independently of the sampling locations or interpolation coefficients. This provides an analytically exact enforcement of incompressibility.
\subsubsection{Periodicity}
\label{sec:periodicity}
In addition to incompressibility, many applications require velocity fields to satisfy periodic boundary conditions, typically for enabling the modeling of larger physical domains at lower computational cost but also for modeling truly periodic domain geometries. Given that we began with the divergence-free kernel $\Phi$, this now opens up the issue of additionally incorporating periodicity into $\Phi$.

In the kernel and RBF literature, periodic structure is imposed in settings ranging from interpolation on periodic domains~\citep{jacob2002sampling, fasshauer2007meshfree} to the numerical treatment of periodic boundary conditions~\citep{nguyen2012imposing} and closed curves~\citep{boissonnat2006effective}. Existing approaches fall into three broad categories: Fourier-based constructions, direct kernel and RBF modifications on periodic domains, and embedding-based techniques that enforce periodicity by construction. Fourier spectral and pseudospectral methods are classical tools for approximating periodic functions~\citep{canuto2006spectral, boyd2001chebyshev, fornberg_1996}, while kernel and RBF methods impose periodicity via kernel periodization and lattice-point methods~\citep{xiao2014periodized, cools2019lattice, kaarnioja2022fast}. Embedding-based methods encode periodicity through an explicit embedding of Euclidean coordinates into a compact manifold, most commonly the circle or torus. This idea has been employed in numerical methods for PDEs on the sphere~\citep{flyer2007transport, flyer2009radial}; for elastic surfaces in fluid-structure interaction problems in~\citet{shankar2015augmenting, shankar2015radial, kassen2021immersed}; and for problems posed on tori in~\citet{fuselier2015order, owhadi2023gaussian}.

Adopting the embedding-based approach, we enforce periodicity by composing kernels with an explicitly periodic embedding into a higher-dimensional Euclidean space. With this approach, periodicity is enforced by construction, allowing standard kernel constructions to be used without requiring kernel periodization. Specifically, we define an embedding $h: \mathbb{T}^d \rightarrow \mathbb{R}^{2d}$ as
\begin{align}
h(y) = (\cos(y_1), \sin(y_1), \dots, \cos(y_d), \sin(y_d)).\label{eq:periodic_embedding}
\end{align}
Composition with $h$ induces $2\pi$-periodicity in each spatial coordinate. Let $\phi^\pi: \mathbb{R}^{2d} \times \mathbb{R}^{2d} \rightarrow \mathbb{R}$ be a scalar-valued positive-definite kernel defined on the embedding space, \emph{i.e.}, $\phi^{\pi}(x,y) = \phi(h(x),h(y))$, where $\phi$ is some positive-definite kernel. Applying the standard curl-curl construction from~\eqref{eq:curlcurl} on $\phi^\pi$ gives the \emph{periodic, divergence-free, matrix-valued kernel}
\begin{align}
\Phi^\pi(x,y) = \nabla_y \times \nabla_x \times \phi^\pi(h(x),h(y)),
\end{align}
which inherits periodicity directly from the embedding and is therefore also $2\pi$-periodic in each spatial coordinate; note that this kernel is no longer radial. If periodicity must be imposed in only one coordinate, we modify $h(y)$ to only incorporate the torus embedding into that coordinate.
\subsubsection{Turbulence through power laws}
Finally, we turn our attention to creating reliable surrogates for turbulent incompressible flows. Turbulence in incompressible flow is a high-Reynolds-number regime of the Navier-Stokes equations characterized by nonlinear vortex stretching, broadband energy cascades, and anomalous dissipation~\citep{Frisch1995,Pope2000}. It is ubiquitous in geophysical and engineering flows, including atmospheric and oceanic boundary layers~\citep{Stull1988,Vallis2017}, riverine and open-channel flows~\citep{Nezu2005}, and internal and external aerodynamic flows~\citep{Schlichting2000}. Turbulent flows arise when the Reynolds number $Re = UL/\nu \gg 1$ and exhibit multiscale structure~\citep{Kolmogorov1941,Pope2000}. Modeling therefore requires statistical closures (RANS)~\citep{alfonsi2009reynolds}, large-eddy simulation (LES)~\citep{spalart2000strategies}, or direct numerical simulation (DNS)~\citep{spalart2000strategies} depending on the resolved range of scales \citep{Pope2000}; a range of other approaches also exist~\citep{spalart2000strategies,alam2015hybrid,cambon1999linear,durbin2018some}. In the high--Reynolds-number regime, kinetic energy injected at large scales is transferred to progressively smaller eddies through nonlinear interactions—Richardson’s cascade—until viscous dissipation dominates at the Kolmogorov scale $\eta \sim (\nu^3/\varepsilon)^{1/4}$, yielding the inertial-range spectrum $E(k)\sim \varepsilon^{2/3} k^{-5/3}$ \citep{Richardson1922,Kolmogorov1941,Frisch1995}.

Due to the ubiquity of turbulence in practical applications, we also incorporate turbulence preservation into our property-preserving kernels by leveraging the fact that Kolmogorov’s inertial-range prediction $E(k)\sim \varepsilon^{2/3}k^{-5/3}$ implies that second-order statistics scale as power laws in separation $r$ (e.g., $S_2(r)=\mathbb{E}|u(x+r)-u(x)|^2\sim (\varepsilon r)^{2/3}$), so that velocity fields exhibit self-similar, scale-invariant correlations~\citep{Kolmogorov1941,Frisch1995}. More specifically, we follow~\citet[Section 5.1]{owhadi2023gaussian} and define an additive multiscale kernel $\Phi^{\eta}$ of the form
\begin{align}
\Phi^\eta = \sum_{s=1}^q \alpha_s \Phi_s,
\end{align}
where $\sigma_s = \frac{\sigma_0}{2^s}$ is the shape parameter for $\Phi_s$ (see Section~\ref{sec:kernel_implementation} for details on how $\sigma_0$ is picked), $\alpha_s = \sigma_s^\gamma$, and
\begin{align}
\gamma = \begin{cases}
4, & \mathrm{if } \ d = 2, \\
\frac{2}{3} + 2, & \mathrm{if } \ d = 3.
\end{cases}
\end{align}
We set $q=5$ for all problems involving turbulence. Here, $q$ denotes the number of modes in the additive kernel, with each mode associated with a characteristic length scale corresponding to eddies at that scale~\citep{owhadi2021kernel}. In general, increasing $q$ improves the approximation accuracy of the multiscale kernel~\citep[Section 7]{owhadi2023gaussian}. The key idea here is that $\Phi^{\eta}$ can analytically mimic the Richardson cascade; in the continuum limit, this corresponds to representing $\Phi^{\eta}$ as a scale mixture whose spectral density matches the Kolmogorov target.

\textbf{Simultaneous property-preservation:} Though we described $\Phi$, $\Phi^{\pi}$, and $\Phi^{\eta}$ separately, our approach allows for the simultaneous property-preserving kernel $\Phi^{\pi,\eta}$, which is  matrix-valued, positive-definite, and more importantly, analytically divergence-free, analytically periodic, and analytically approximating the Richardson Cascade through power laws. We use this kernel within the property-preserving interpolant for turbulent, periodic, incompressible flows. In general, our work involves utilizing known problem features and the analytic enforcement of these features in our surrogates. We defer \emph{detection} of these features to future work. 
\subsubsection{Spacetime kernel interpolation}
In problems where $\Omega_v$ is a spacetime domain, the interpolant $\chi$ must now interpolate spacetime data. As an example, let $\Omega_v = \Omega \times \Gamma$, where $\Omega$ denotes the spatial domain and $\Gamma$ denotes the temporal domain. Consider a function $v(x,t)$, where $x \in \Omega \subset\mathbb{R}^d$ and $t \in \Gamma \subset \mathbb{R}$. Assume that we are provided with a set of spatial snapshots of $v$ on $\Omega$ at discrete time instances $\{t_1, \dots, t_T\}$ as part of our training data.  We now describe how to incorporate the ability to produce such spacetime predictions (upon generalization) into our surrogate.

We draw from the literature on spacetime kernel design, which employs products of spatial and temporal kernels~\citep{posa1993simple, cressie01121999, myers2002space, stein01032005, romero2017kernel, jing2024modified, yue2019novel, ku2020space, ku2022space,li2016estimation, li2011global}. Let $\psi: \mathbb{R} \times \mathbb{R} \rightarrow \mathbb{R}$ be a scalar-valued positive-definite kernel; we select $\psi = \phi$, but other choices are possible. We then define a spacetime interpolant using the product kernel $\Phi \psi$; this interpolant takes the form
\begin{align}
\chi(y, t) = \sum_{j=1}^{mT} \bigl(\Phi(y, y_j) \psi(t, t_j)\bigr) \ \ub^j,
\label{eq:chi_interp_spacetime}
\end{align}
where $(y_j, t_j)$ ranges over all space-time evaluations and the coefficients. As in the spatial case, we enforce the interpolation conditions $\chi(y,t) = v(y,t)$ at all $y \in Y$ and $t \in \{t_1, \dots, t_T\}$, thereby obtaining a block linear system of the form
\begin{align}
\begin{bmatrix}
\bPhi \ \psi_{1,1} & \bPhi \ \psi_{1,2} & \cdots & \bPhi \ \psi_{1,T} \\
\bPhi \ \psi_{2,1} & \bPhi \ \psi_{2,2} & \cdots & \bPhi \ \psi_{2,T} \\
\vdots & \vdots & \ddots & \vdots \\
\bPhi \ \psi_{T,1} & \bPhi \ \psi_{T,2} & \cdots & \bPhi \ \psi_{T,T}
\end{bmatrix}
\begin{bmatrix}
\bb^1 \\
\bb^2 \\
\vdots \\
\bb^T
\end{bmatrix}
=
\begin{bmatrix}
\bv^1 \\
\bv^2 \\
\vdots \\
\bv^T
\end{bmatrix},\label{eq:spacetime_gramian}
\end{align}
where $\psi_{ij} = \psi(t_i, t_j)$, $\bPhi$ is the Gramian matrix obtained by evaluating the matrix-valued spatial kernel $\Phi$ on $Y$, and the superscripts on $\bb$ and $\bv$ denote the timestep. Since $\Phi$ and $\psi$ are both positive-definite kernels and the spatial and temporal sampling locations are distinct, the kernel interpolant $\chi$ is uniquely determined~\citep{fasshauer2007meshfree,Wendland2004}. It is important to note that $\Phi$ here can be replaced by $\Phi^{\pi}$ or $\Phi^{\pi,\eta}$, as needed. Much as in the case of \eqref{eq:spatial_kernel_linearsystem}, \eqref{eq:spacetime_gramian} requires the use of efficient linear algebra to contend with the ${d+1}$ dimensionality of the problem. We describe these techniques in Section \ref{sec:efficient_linear_algebra}.

%
\subsubsection{Approximate kernel Fekete points}
\label{sec:fekete_points}
\begin{figure}[!b]
    \centering
    \begin{subfigure}{0.35\linewidth}
        \centering
        \includegraphics[width=\linewidth]{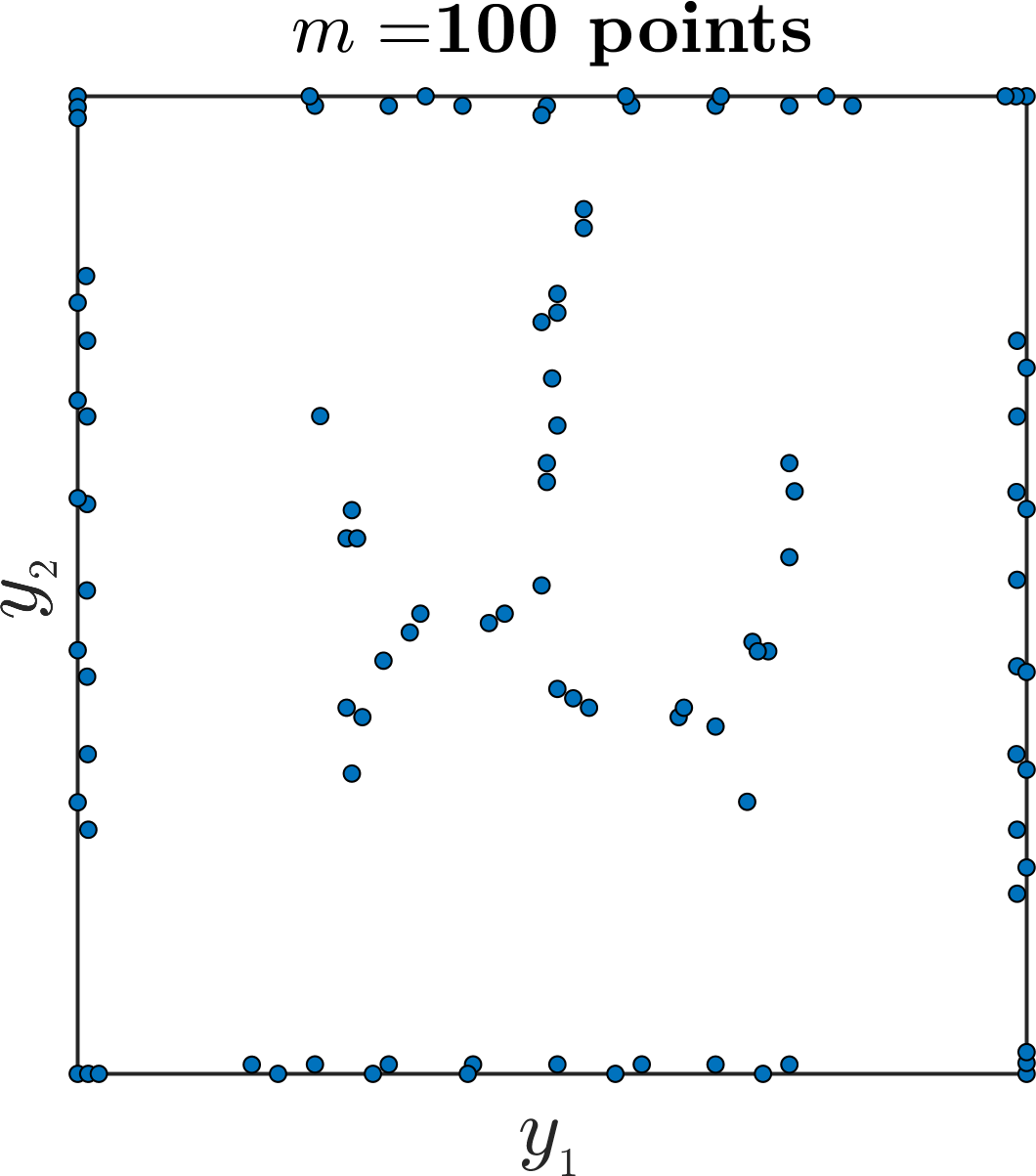}
        \caption{}
    \end{subfigure}
    \begin{subfigure}{0.35\linewidth}
        \centering
        \includegraphics[width=\linewidth]{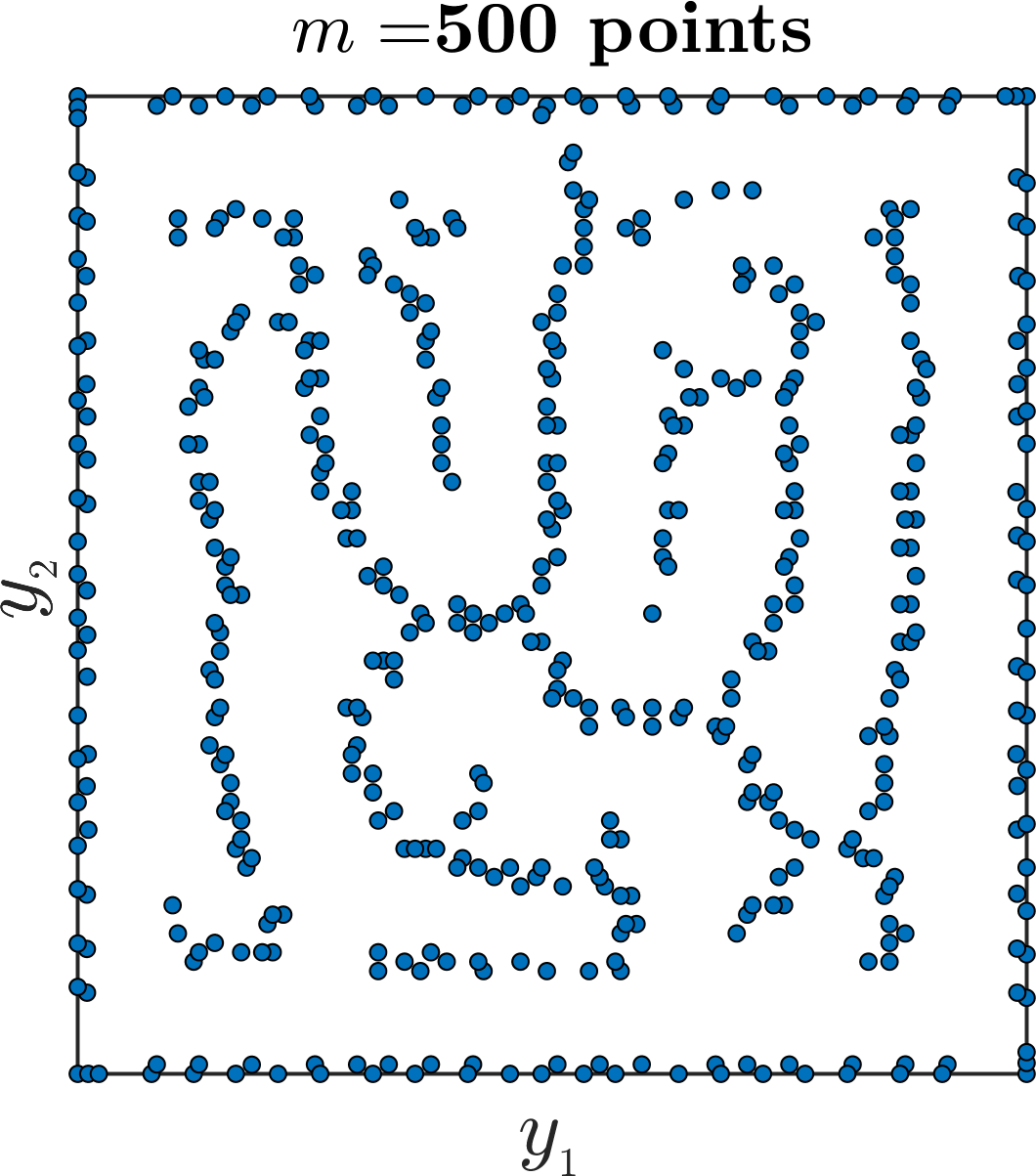}
        \caption{}
    \end{subfigure} \\
    \begin{subfigure}{0.35\linewidth}
        \centering
        \includegraphics[width=\linewidth]{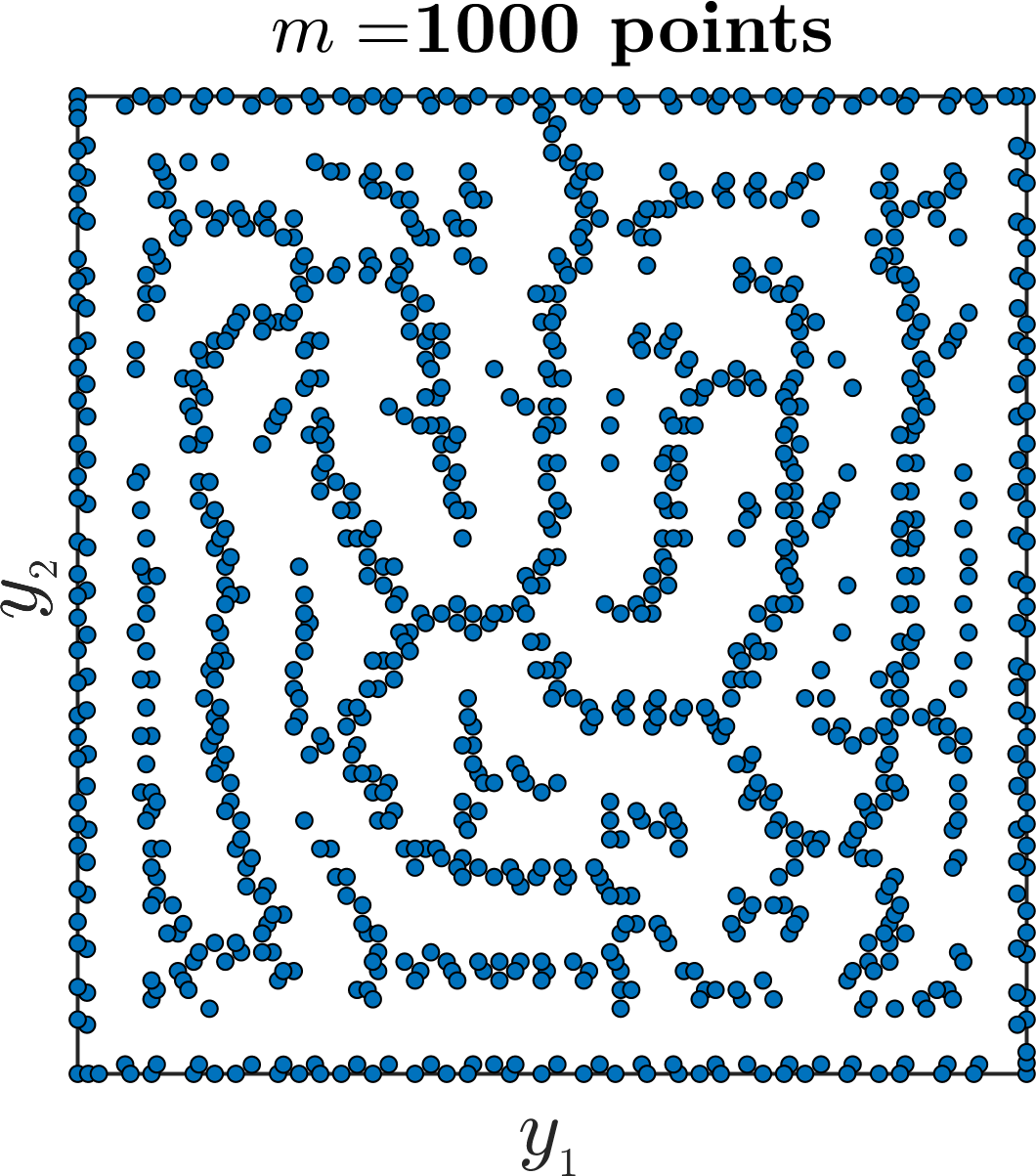}
        \caption{}
    \end{subfigure}
    \caption{(\textbf{A}) $m=100$, (\textbf{B}) $m=500$, and (\textbf{C}) $m=1000$ approximate kernel Fekete points in the domain $\Omega =[0, 1]^2$.}
    \label{fig:fekete_points}
\end{figure}
The global kernel interpolation performed throughout this work in~\eqref{eq:chi_interp} and~\eqref{eq:chi_interp_spacetime} is susceptible to the Runge phenomenon, a well known instability affecting global interpolation of smooth target functions~\citep{Runge1901, epperson1987runge}. This issue has been observed both for global polynomial interpolation and for RBF methods~\citep{platte2005polynomials, fornberg2007runge}. Since the velocity fields considered here are smooth and defined on bounded domains, naive global kernel interpolation on equally-spaced or quasi-uniform points may exhibit large oscillations near the domain boundary.

Most work in the literature addresses this difficulty through careful choice of interpolation points or bases. Approaches include boundary-clustered points~\citep{berrut2004barycentric}, spatially varying shape parameters in RBF interpolation~\citep{fornberg2007runge}, and extension-based techniques such as RBF extension~\citep{boyd2010six}, which are closely related to Fourier extension methods~\citep{boyd2002comparison, boyd2009exponentially}. Within the RBF literature, greedy point selection algorithms~\citep{schaback2000adaptive, de2005near} and inducing-point strategies~\citep{galy2021adaptive} are commonly employed to identify subsets of points that give improved stability and approximation accuracy.

A complementary and well-established strategy in polynomial interpolation is the use of Fekete points~\citep{sommariva2009computing, bos2008calculation}. These points are defined as the maximizers of the absolute value of the determinant of the polynomial Vandermonde matrix~\citep{de2012polynomial}, a property that is closely connected to favorable conditioning and reduced Lebesgue constants, and hence improved interpolation stability. However, exact Fekete points are expensive to compute and are known analytically only in very restricted settings. In practice, \emph{approximate} Fekete points are typically computed via rank-revealing, column-pivoted QR factorizations of Vandermonde matrices~\citep{sommariva2009computing, sommariva2010approximate}. Recent work has begun extending these ideas to kernel Gramian matrices in the univariate setting~\citep{karvonen2021kernel}. Motivated by these developments, we adopt an analogous approach for kernel-based interpolation by computing approximate Fekete points directly from our kernel Gramian matrices. We describe this approach below.

Let $Y$ be the set of candidate points from which we wish to pick approximate Fekete points; typically, this is the set of all data locations available to us. Let $\phi$ be a scalar-valued positive-definite kernel and let $\mathbf{A}_{ij} = \phi(y_i, y_j)$ be its associated Gramian matrix on $Y$. Since $\mathbf{A}$ is positive-definite, it admits the Cholesky factorization $\mathbf{A} = L L^T$ where $L$ is lower triangular. The columns of $L$ can be interpreted as a set of feature vectors induced by the kernel $\phi$ at the points $Y$. We seek to find a subset of points whose feature vectors are maximally linearly independent. To this end, we apply a rank revealing QR factorization with column pivoting to $L$
\begin{align}
L P = Q R,
\end{align}
where $Q$ has orthogonal columns, $R$ is upper triangular, and $P$ is a permutation matrix containing column pivot indices. The column pivoting orders the columns of $L$ according to their contribution to the numerical rank. We take the points corresponding to the first $m$ entries of $P$ as the \emph{approximate kernel Fekete points}. These points are used by the property-preserving kernel in all experiments. Figure~\ref{fig:fekete_points} shows example figures of $m=100$, $m=500$, and $m=1000$ approximate kernel Fekete points in the domain $\Omega =[0, 1]^2$. The figures show very clearly the greater relative importance the algorithm puts on the points closer to the boundary, which helps alleviate the Runge phenomenon and improves the regularity of the interpolation coefficients. In cases where $\Omega_a = \Omega_v$, the same set of points is used for both input and output fields; $n = m$.
\subsection{The operator kernel interpolant}
\label{sec:operator_kernel}
Once the interpolation coefficients in \eqref{eq:chi_interp} or \eqref{eq:chi_interp_spacetime} are obtained, the next step in our framework is to treat those coefficients as a function of the input functions and fit them in turn (via the object $\ftilde$ mentioned previously); this enables the prediction of new coefficients on generalization (when an unseen input function is supplied). We now describe the construction of $\ftilde$, which is an operator-valued kernel interpolant.
Let ${a: \mathbb{R}^d \rightarrow \mathbb{R}^p}$ be an input function and ${X = \{x_1, \dots, x_n\} \subset \Omega_a \subset \mathbb{R}^d}$ the set of points on its domain where it is sampled; we denote the block vector of evaluations of $a$ at $X$ by ${\ba = \begin{bmatrix} \ba^1, \ldots, \ba^p \end{bmatrix}^T \in \mathbb{R}^{np}}$. The operator-valued kernel interpolant (or simply operator kernel interpolant) requires a matrix-valued positive-definite kernel ${\Lambda: \mathbb{R}^{np} \times \mathbb{R}^{np} \rightarrow \mathbb{R}^{md \times md}}$ that is capable of consuming these $\mathbb{R}^{np}$ block vectors as arguments and outputting a matrix in $\mathbb{R}^{md \times md}$. Equipped with this matrix-valued kernel, we may now write the operator kernel interpolant $\ftilde$ as
\begin{align}
\ftilde(\ba) = \sum_{i=1}^N \Lambda(\ba, \ba_i) \ \uc_i,\label{eq:f_interp}
\end{align}
In order to reduce the dimensionality of the associated interpolation problem, we first select $\Lambda$ to be a diagonal kernel of the form $\Lambda = \lambda I$, where $I$ is the $md\times md$ identity matrix and $\lambda: \mathbb{R}^{np} \times \mathbb{R}^{np} \to \mathbb{R}$ is some scalar-valued positive definite kernel. This choice decouples all solves for the components of the coefficients $\uc_i$. After making this choice, we enforce the interpolation conditions ${\ftilde(\ba_i) = \bb_i}$, where $i = 1,\ldots,N$ indexes training functions and $\bb_i$ is the coefficient vector associated with the property-preserving interpolant $\chi$ for the $i$-th output function. This results in the following linear system with $N$ right-hand sides and the shared Gramian of $\lambda$:
\begin{align}\label{eq:operator_kernel_linearsystem}
\underbrace{\begin{bmatrix}
\lambda_{1,1} & \lambda_{1,2} & \ldots & \lambda_{1,N} \\
\lambda_{2,1} & \lambda_{2,2} & \ldots & \lambda_{2,N} \\
\vdots & \vdots & \ddots & \vdots \\
\lambda_{N,1} & \lambda_{N,2} & \ldots & \lambda_{N,N}
\end{bmatrix}}_{\bK}
\underbrace{\begin{bmatrix}
c^1_1 & c^2_1 & \ldots & c^{md}_1 \\
c^1_2 & c^2_2 & \ldots & c^{md}_2 \\
\vdots & \vdots & \ddots & \vdots\\
c^1_N & c^2_N & \ldots & c^{md}_N\\
\end{bmatrix}}_{C} =
\underbrace{\begin{bmatrix}
b^1_1 & b^2_1 & \ldots & b^{md}_1 \\
b^1_2 & b^2_2 & \ldots & b^{md}_2 \\
\vdots & \vdots & \ddots & \vdots\\
b^1_N & b^2_N & \ldots & b^{md}_N\\
\end{bmatrix}}_{B}
\end{align}
The system \eqref{eq:operator_kernel_linearsystem} has a unique solution if the input function sample vectors are all distinct~\citep{fasshauer2007meshfree,Wendland2004,batlle2024kernel}.
To account for possible irregularities in the input function samples, we include a ridge parameter $\theta$ on the Gramian's diagonal for regularization~\citep{batlle2024kernel, hastie2009elements}. This ensures positive-definiteness and serves to control the magnitude of the columns of $C$. This results in the following linear system:
\begin{align}
\bigl(\bK_\epsilon + \theta I\bigr) \ \bc^j = \bb^j, \ j=1, \dots, md,\label{eq:operator_kernel_linearsystem_ridge_epsilon}
\end{align}
where $\bc^j$ are the columns of $C$, $\bb^j$ are the columns of $B$, $\epsilon$ is the shape parameter of the operator kernel, and $I$ is the $N \times N$ identity matrix.

\textbf{Generalization:} Analogous to the evaluation of the property-preserving kernel interpolant in \eqref{eq:spatial_kernel_eval}, $\ftilde$ can be evaluated at an unseen test function $\ba^{\star}$ to obtain the property-preserving kernel interpolation coefficients for the corresponding output function $\bv^{\star}$. Denote by $\bK^*$ the $1\times N$ evaluation matrix for the kernel $\lambda$ so that $(\bK^*)_{1j} = \lambda(\ba^*, \ba_j)$, $j=1,\ldots,N$. Then,
\begin{align}\label{eq:operator_kernel_eval}
    \ftilde(\ba^*) = \bK^* C.
\end{align}
For $N^*$ test functions, $\bK^*$ is simply an $N^* \times N$ matrix. However, note that $\ftilde(\ba^*)$ merely returns an approximation to the output coefficient vector $\bb^*$ (in a property-preserving kernel expansion) to the output function $\bv^*$. Samples of the output function $\bv^*$ are in turn obtained by \eqref{eq:spatial_kernel_eval}.

\subsubsection{Operator-valued Gaussian processes for uncertainty quantification}
\label{sec:uq}
The operator kernel interpolant map also provides us with a natural method for estimating function-level uncertainty. Uncertainty quantification involves calculating the statistical uncertainty associated with a surrogate model's prediction~\citep{sudret2017surrogate}. The applications of this field vary widely, such as aerodynamic shape design~\citep{asouti2023radial}, materials science~\citep{kovachki2022multiscale}, inverse problems~\citep{martin2012stochastic, huang2022iterated}, and seismic wave propagation~\citep{martin2012stochastic}. In particular, kernel interpolants (regularized or otherwise) automatically form the mean function for a Gaussian process (GP); alternatively, GPs may be thought of as a straightforward probabilistic generalization of kernel interpolation. GP methods have been used to approximate uncertainty in machine learning model predictions~\citep{williams2006gaussian, zhu2018bayesian}. GPs have also been used in kernel based methods~\citep{tharakan2015scalable, csaji2019distribution, narayan2021optimal, mora2025operator}.

While not a major focus of this work, we briefly describe how to leverage the operator kernel interpolant $\ftilde$ to quantify uncertainty. 
The GP naturally induced by the operator kernel is given by ${\xi \sim \mathcal{N}(\widehat{\bb}, \Lambda)}$, where $\widehat{\bb}$ is the prior mean and $\Lambda$ the prior covariance kernel of the GP (and also the kernel of the interpolant \eqref{eq:f_interp}. Conditioning the GP on the training observations yields a posterior mean and posterior covariance kernel $\Sigma^{\star}$ given by
\begin{align}
\widebar{\bb} &= \bLambda^\star \bigl(\bLambda + \theta I\bigr)^{-1} \bb_{\mathrm{train}}, \label{eq:gp_posterior_mean} \\
\Sigma^\star &= \bLambda^{\star\star} - \bLambda^\star \bigl(\bLambda + \theta I\bigr)^{-1} (\bLambda^\star)^T, \label{eq:gp_posterior_covariance}
\end{align}
where ${\bb_{\mathrm{train}} = \begin{bmatrix} \bb_1 \\ \vdots \\ \bb_N \end{bmatrix} \in \mathbb{R}^{Nmd}}$, $\bLambda$ is the $Nmd \times Nmd$ matrix denoting the prior covariance on the training data set, $\bLambda^\star$ is the $md \times Nmd$ cross-covariance matrix, and $\bLambda^{\star \star} = \Lambda(\ba^{\star}, \ba^{\star})$ is the $md \times md$ matrix denoting the prior covariance on the test data set. The uncertainty in the predicted output field can be obtained by perturbing the posterior mean using the Cholesky factorization of the posterior covariance. Specifically, if $\Sigma^\star = L L^T$, then samples from the predictive distribution within one standard deviation are given by
\begin{align}
\widetilde{\bb}^\star = \widebar{\bb} \pm L z, \ z \sim \mathcal{N}(0, I),\label{eq:predictive_posterior_coeffs}
\end{align}
where $z \in \mathbb{R}^{md}$. The perturbed coefficient vector $\widetilde{\bb}^{\star}$ is then used in the property-preserving kernel interpolant in~\eqref{eq:spatial_kernel_linearsystem} to obtain a sample velocity field from the posterior distribution (via \eqref{eq:spatial_kernel_eval}). Hence, our method provides a framework for \emph{property-preserving uncertainty quantification}. While we leave a detailed exploration of the GP associated with our operator kernel interpolant for future work, we present preliminary results in Section \ref{sec:gp_results} for a turbulent flow problem in 3D.

%
\subsection{Implementation and training details}
\label{sec:kernel_implementation}
We now describe important implementation details for the two kernel interpolants used in our framework.
\subsubsection{Kernels, shape parameters, and nuggets}
\label{sec:common_details}
\paragraph{Kernels} We use the Euclidean distance metric in all kernel arguments; $\phi(x, y) = \phi(\|x-y\|_2)$, thereby making all non-periodic kernels radial. In this special case, the divergence-free kernel can be written as ${\Phi(x, y) = (- \Delta I + \nabla \nabla^T) \ \phi(\|x-y\|_2)}$ where $\Delta$ and $\nabla$ are the $\mathbb{R}^d$ Laplacian and gradient operators respectively acting on $y$, and $I$ is a $d \times d$ identity matrix. This does not work in the periodic case because the transformation $h$ (see~\eqref{eq:periodic_embedding}) breaks the radial characteristic of $\phi$.

We tested three different choices of radial kernels (RBFs): (i) the Gaussian kernel, (ii) the compactly supported $C^4$ Wendland kernel~\citep{wendland1995piecewise}, and (iii) the $C^4$ Mat\'ern kernel~\citep{demarchi2018lectures}. The $C^4$ Mat\'ern kernel outperformed the other two kernels consistently across all experiments (the Gaussian was more ill-conditioned and the Wendland kernel was less accurate when supports were small). We therefore report results using only this kernel. It is given by ${\phi_{\epsilon}(r) = e^{-(r\epsilon)} (3 + 3 r \epsilon + (r \epsilon)^2)}$, where $r = \|x-y\|_2$. Specifically, \textbf{the kernels $\phi$, $\psi$, and $\lambda$ are all chosen to be the $C^4$ Mat\'ern kernel}. In Section \ref{sec:approx-thms}, we present universal approximation results and theoretical convergence rates for the Mat\'ern kernel.

\paragraph{Shape parameters}
\label{sec:shape_param}
Like most RBFs, the Mat\'ern kernel comes equipped with a \emph{shape parameter} $\epsilon$ which enormously affects accuracy and conditioning of the corresponding kernel interpolants; in our case, we have three \emph{different} shape parameters that must be selected (one corresponding to $\phi$, $\psi$, and $\lambda$ each). We empirically found that selecting these shape parameters to be as small as possible produced the most accurate generalization results; this corresponds to the so-called flat limit of the kernel. We hypothesize that this was because our training function fields output by the SU2 solver (a finite-volume solver) are naturally consistent with a finite-order Sobolev regularity class~\citep{leveque2002finite}.  For fixed smoothness $\nu$, a Mat\'ern kernel induces a Sobolev (Bessel-potential) native space, and as the shape parameter goes to zero, its reproducing kernel Hilbert space (RKHS) norm approaches the homogeneous $\dot H^{m}$ seminorm ($m=\nu+d/2$), so the interpolant converges to the polyharmonic (Duchon) spline that minimizes that Sobolev semi-norm among all interpolants~\citep{Stein1999,duchon1976splinesmr}; thus, using a near-flat Mat\'ern kernel potentially aligns the operator kernel interpolant with the natural energy class of the finite-volume data.

Algorithms for selecting shape parameters have been studied extensively in the literature. One class of methods depends on the target functions, such as the leave-one-out cross-validation (LOOCV) method~\citep{fasshauer2007choosing, cavoretto2024comparing}, which is closely related to Rippa's algorithm~\citep{rippa1999algorithm, fasshauer2007choosing, ling2022stochastic}. Another class of methods involves optimizing the shape parameter for a target condition number~\citep{wang2002optimal, chen2023selection, krowiak2019choosing, flyer2011radial, shankar2014radial}. For convenience and implementation simplicity, we leverage the approach in~\cite{shankar2014radial} Specifically, given a kernel Gramian $\mathbf{A}_{\epsilon}$ and a target condition number $\delta$, we find the root of the function $s(\epsilon) = 1/\kappa(\mathbf{A}_\epsilon) - \delta$, where $1/\kappa$ is reciprocal 1-norm condition number of $\mathbf{A}_{\epsilon}$ (which is relatively inexpensive to estimate in comparison to the true condition number). More specifically, we find shape parameters by rootfinding using the popular Brent--Dekker method~\citep{brent1973algorithms, dekker1969finding} (as implemented in Matlab's \texttt{fzero} function). 
We consistently selected a target condition number of $10^{12}$ for the property-preserving kernel Gramian (or the product kernel in the spacetime case) and $10^{15}$ for the operator kernel Gramian; we were able to use a larger target condition number for the latter as it also included a ridge regularization term (aka nugget).

\paragraph{Nuggets}
Finally, following standard practice in RBF interpolation of noisy data~\citep{williams2006gaussian, fasshauer2015kernel}, we also regularized the operator kernel by adding a small ``nugget'', commonly known as the ridge parameter, $\theta$ to the diagonal of the Gramian  in~\eqref{eq:operator_kernel_linearsystem_ridge_epsilon}. Interestingly, we found that adding a ridge parameter to the property-preserving kernel degraded the overall generalization accuracy, whereas it was crucial for the operator kernel. We show results on the effect of varying $\theta$ in Section~\ref{sec:nugget_effect} and discuss these results in Section~\ref{sec:discussion}.
\subsubsection{Efficient linear algebra}
\label{sec:efficient_linear_algebra}
Each interpolant carries its own implementation challenges: the property-preserving interpolant brought with it computational cost issues due to the $O(d^3 m^3)$ cost of the naive linear solve, especially for the large $m$ explored in this work; the operator interpolant brings challenges in terms of GPU memory, and the inability to store all $N$ training functions on GPU memory for large $N$. We now describe our specialized implementation techniques.
%
\paragraph{Property-preserving kernel interpolant}
We now describe our approach to overcome the $O(d^3 m^3)$ cost for finding the property-preserving interpolation coefficients, a necessary preprocessing step to obtaining the training data for the operator kernel interpolant. Our approach is centered around careful and recursive use of the Schur complement of the block matrix $\bPhi$ (the property-preserving kernel Gramian). In the discussion below, we will focus on $\bPhi$; the same techniques were used for the other variants.

 We first consider the case when $d=2$. Here, $\Phi$ is a $2 \times 2$ matrix-valued kernel. When evaluated at $m$ points, its Gramian is a matrix of size $2m \times 2m$. The Gramian naturally separates into four blocks, but each block acts on both vector components of the coefficients; recall that each component of the coefficients corresponds to a different spatial dimension (in our case coordinate axes). Following~\citet{fuselier2016high}, we first reorder the individual blocks \emph{spatially}; evaluations for each spatial component are grouped together and the resulting Gramian has the form
\begin{align}
\bPhi =
\begin{bmatrix}
A_{11} & A_{12} \\
A_{21} & A_{22}
\end{bmatrix},
\; A_{ij} \in \mathbb{R}^{m \times m},
\end{align}
where, for notational convenience, $A_{ij}$ is the $(i, j)$ block in~\eqref{eq:2ddfblocks} evaluated at all $m$ points. Then, the block linear system for interpolation is given by
\begin{align}
\begin{bmatrix}
A_{11} & A_{12} \\
A_{21} & A_{22}
\end{bmatrix} \begin{bmatrix}
\bb^1 \\
\bb^2
\end{bmatrix} =
\begin{bmatrix}
\bv^1 \\
\bv^2
\end{bmatrix}.
\end{align}
For the symmetric kernels $\phi$ used in this work, the mixed partial derivatives commute w.r.t. $x$ and $y$. As a result, $\Phi$ exhibits symmetry in its block structure and $A_{12} = A_{21}$. This symmetry allows us to store only the upper (or lower) triangular blocks. Rather than forming and factorizing the full $2m \times 2m$ matrix, we apply a Schur complement reduction with respect to $A_{22}$. Defining
\begin{align}
S = A_{11} - A_{12} A_{22}^{-1} A_{12},\label{eq:2d_schur}
\end{align}
the coefficients are obtained with the following two steps:
\begin{align}
\bb^1 &= S^{-1} \left(\bv^1 - A_{12} A_{22}^{-1} \bv^2\right), \label{eq:2d_bb1} \\
\bb^2 &= A_{22}^{-1} \left(\bv^2 - A_{12} \bb^1 \right).\label{eq:2d_bb2}
\end{align}
This avoids explicitly forming and storing the full Gramian and reduces the memory usage from $\mathcal{O}(m^2 d^2)$ to $\mathcal{O}(m^2)$. Additionally, since $A_{22}$ is a positive-definite matrix, we apply a Cholesky factorization $A_{22} = L L^T$, and then perform forward and backward substitution solves using the lower triangular matrix $L$ in~\eqref{eq:2d_schur},~\eqref{eq:2d_bb1}, and~\eqref{eq:2d_bb2}. This reduces the overall runtime complexity from $\mathcal{O}(m^3 d^3)$ to $\mathcal{O}(m^3)$.

In $d=3$, after a similar reorder by spatial coordinates, the symmetric block linear system takes the form
\begin{align}
\left[
\begin{array}{cc|c}
A_{11} & A_{12} & A_{13} \\
A_{21} & A_{22} & A_{23} \\ \hline
A_{31} & A_{32} & A_{33}
\end{array}
\right]
\begin{bmatrix}
\bb^1 \\
\bb^2 \\ \hline
\bb^3
\end{bmatrix}
=
\begin{bmatrix}
\bv^1 \\
\bv^2 \\ \hline
\bv^3
\end{bmatrix},
\end{align}
where $A_{ij}$ refers to the $(i,j)$ block in~\eqref{eq:3ddfblocks}, $A_{ij}=A_{ji}$, and the horizontal and vertical bars indicate a partition of the Gramian into a $2 \times 2$ system. We apply the same Schur complement strategy recursively. First, we eliminate $\bb^3$ via the Schur complement of $\Phi_{33}$, giving the following $2 \times 2$ block system
\begin{align}
\begin{bmatrix}
\tilde{A}_{11} & \tilde{A}_{12} \\
\tilde{A}_{21} & \tilde{A}_{22}
\end{bmatrix}
\begin{bmatrix}
\bb^1 \\
\bb^2
\end{bmatrix}
=
\begin{bmatrix}
\tilde{\bv}^1 \\
\tilde{\bv}^2
\end{bmatrix},\label{eq:3d_reduced}
\end{align}
where
\begin{align*}
\tilde{A}_{11} = A_{11} - A_{13} A_{33}^{-1} A_{13}, & \
\tilde{A}_{12} = A_{12} - A_{13} A_{33}^{-1} A_{23},\\
\tilde{A}_{22} = A_{22} - A_{23} A_{33}^{-1} A_{23}, & \
\tilde{\bv}^1 = \bv^1 - A_{13} A_{33}^{-1} \bv^3,\\
\tilde{\bv}^2 = \bv^2 - &A_{23} A_{33}^{-1} \bv^3.
\end{align*}
\eqref{eq:3d_reduced} is then solved using the same Schur complement procedure described previously, after which $\bb^3$ is recovered as
\begin{align}
\bb^3 = A_{33}^{-1} \left(\bv^3 - A_{13} \bb^1 - A_{23} \bb^2 \right).
\end{align}
We once again obtain an $O(m^3)$ cost for this recursive solution and require $O(m^2)$ storage. It is also likely that these systems are better conditioned in practice than the original full Gramian. We also leverage a version of these algorithms at inference/generalization time: we only store blocks of the evaluation matrix $\bPhi^*$ and apply them component-wise to the predicted coefficient (block) vectors.

\textbf{Extension to spacetime kernels}: In the spacetime experiment in Section~\ref{sec:taylor_green_spacetime}, the output consists of velocity fields at four discrete timesteps, resulting in \eqref{eq:spacetime_gramian} having a $4 \times 4$ block Gramian. We recursively apply the same Schur complement strategy to reduce the system to first a $3 \times 3$ system, then a $2\times 2$ system in turn.
\paragraph{Operator kernel interpolant}
The main challenge with the operator kernel interpolant was in forming the $N \times N$ Gramian matrix ${\bK_{ij} = \lambda(\ba_i, \ba_j)}$ where $\ba \in \mathbb{R}^{np}$. Much of our computation is done with the JAX library~\citep{jax2018github} in Python which provides a useful vectorizable map, \texttt{vmap}, that can be customized to execute instructions in parallel on the GPU. Fortunately, forming the entries of $\bK$ is an embarrassingly parallelizable task. However, it is not possible to use \texttt{vmap} to form the entire matrix in one call due to the cost of storing all the input functions in memory simultaneously, even on a moderately powerful desktop GPU (NVIDIA RTX 4080); \texttt{vmap} also appears to incur a significant memory overhead of its own, as do JAX's just-in-time (JIT) compilation features. In our experiments, $n$ and $N$ can each be as large as $10,000$. Consequently, we required an efficient streaming method to compute the entries of $\bK$. We used Algorithm~\ref{alg:streaming_gramian} to form $B \times B$ blocks of $\bK$ in batches of size $B$.
\begin{algorithm}
\caption{Streaming construction of the operator kernel Gramian}
\label{alg:streaming_gramian}
\begin{algorithmic}[1]
\Require Evaluations of the input functions $\{\ba_i\}_{i=1}^N$, operator kernel $\lambda$, and batch size $B$.
\State Initialize $\bK = \mathbf{0}_{N \times N}$
\State $P = \lceil N / B \rceil$
\For{$i = 1$ to $P$}
    \State $\texttt{I}_{\mathrm{start}} \gets iB$
    \State $\texttt{I}_{\mathrm{end}} \gets \min((i+1) B, \ N)$
    \For{$j = 1$ to $P$}
        \State $\texttt{J}_{\mathrm{start}} \gets jB$
        \State $\texttt{J}_{\mathrm{end}} \gets \min((j+1) B, \ N)$
        \State $\bK_{\texttt{I}_{\mathrm{start}}:\texttt{I}_{\mathrm{end}}, \texttt{J}_{\mathrm{start}}:\texttt{J}_{\mathrm{end}}} = \lambda\left(\ba_{\texttt{I}_{\mathrm{start}}:\texttt{I}_{\mathrm{end}}}, \ba_{\texttt{J}_{\mathrm{start}}:\texttt{J}_{\mathrm{end}}}\right)$ \Comment \text{\texttt{vmap} call here.}
    \EndFor
\EndFor
\end{algorithmic}
\end{algorithm}
\subsubsection{Training details}
\label{sec:training_details}
We now describe our training method. Though the computation of our interpolation coefficients only requires linear solves, we also compute the shape parameters as described in Section \ref{sec:common_details}.

\paragraph{Preprocessing}
Before we begin training, we first preprocess our inputs to enable more intuitive and comparable selection of the shape parameter and the ridge parameter across all experiments; a discussion of the magnitude of the ridge parameter $\theta$ is provided in Section~\ref{sec:nugget_effect}. Specifically, we normalized all spatial functions to each have zero mean and unit standard deviation. Recalling that $\{a_i(x)\}_{i=1}^{N + N_{\mathrm{test}}}$ are the input functions, let $\mu(x)$ be the estimated spatial mean and estimated $\zeta(x)$ be the spatial standard deviation computed from the training set at the set of points $X$. Both training and test input functions are then normalized as follows: ${\frac{a_i(x) - \mu(x)}{\zeta(x)}, i=1,\dots,N+N_{\mathrm{test}}}$.

We also conduct another critical preprocessing step: the property-preserving kernel interpolants to the $N$ training output functions must be precomputed, and their coefficients must be stored as they form the right-hand side to the operator kernel interpolant in \eqref{eq:operator_kernel_linearsystem_ridge_epsilon}. We conducted this preprocessing step on the CPU for all experiments simply due to the memory constraints of storing the resulting $N md$ coefficients; while a streaming approach is possible here also, we did not explore such an approach here as this was simply viewed as a preprocessing step.

\paragraph{Two-step training}
We use a simple two-step training procedure. First, we optimize for the operator kernel's shape parameter as discussed in Section \ref{sec:common_details}: we used MATLAB's \texttt{fzero} function on the CPU to find the shape parameter for the target condition number of $10^{15}$ as it proved to be significantly faster than Python's rootfinding routines; note that \texttt{fzero} (the Brent-Dekker method) is iterative and gradient-free. We then use this shape parameter to reform the Gramian $\bK$ and solve the linear system \eqref{eq:operator_kernel_linearsystem_ridge_epsilon} on the GPU via the JAX Python library. We measure the total training time as the time taken for each of these steps. We measure total inference time for our method as the sum of the times taken to form the evaluation matrix $\bK^\star$, predict the output coefficients, and reconstruct the velocity fields from those coefficients.


%
\section{Approximation theorems and convergence rates}
\label{sec:approx-thms}

We present approximation theorems for (i) approximation of coefficient maps by operator-side kernel methods, (ii) spatial reconstruction by divergence-free kernels, and (iii) the composed operator error. Corollaries record the corresponding bounds when ridge regularization is used in the operator-side regression (our experimental setting). Throughout, we specialize to the $C^4$ Mat\'ern kernel, i.e.\ $\nu=5/2$. We present two types of theorems: the first set of theorems assume that the relevant dimension of the problem is the sampling dimension of the input and output functions; the second set of theorems instead present the results in terms of an unknown intrinsic dimension. We believe the latter to be more representative of our experimental results.

We equip $C(X;\mathbb R^p)$ with the uniform norm $\|f\|_\infty:=\sup_{x\in X}\|f(x)\|_2$, and all $L^2$ norms on $X$ and $\Omega$ are with respect to the Lebesgue measure. Here $p$ denotes the length of the coefficient vector returned by the operator learner (in the divergence-free reconstruction below, $p=Md$ after stacking vector weights).

\begin{theorem}[Universal approximation]
\label{th:ua}
Let $X\subset\mathbb R^{d_X}$ be compact, let $k_\nu(x,x')=\kappa_\nu(x-x')$ be the translation-invariant Mat\'ern kernel on $\mathbb R^{d_X}$, let $B\in\mathbb R^{p\times p}$ be symmetric positive definite, define $K_X(x,x'):=k_\nu(x,x')B$, and let $\mathcal H_{K_X}\subset C(X;\mathbb R^p)$ be the associated vector-valued RKHS. Then $\overline{\mathcal H_{K_X}}^{\|\cdot\|_\infty}=C(X;\mathbb R^p)$.
\end{theorem}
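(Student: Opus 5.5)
The plan is to reduce to the scalar case and then use the classical universality of the Mat\'ern kernel. First I will describe $\mathcal H_{K_X}$ explicitly. For the separable kernel $K_X=k_\nu B$, I will check that
\[
\mathcal H_{K_X}=\{B^{1/2}g : g=(g_1,\dots,g_p),\ g_i\in\mathcal H_{k_\nu}(X)\},
\]
with norm $\|B^{1/2}g\|_{\mathcal H_{K_X}}^2=\sum_i\|g_i\|_{\mathcal H_{k_\nu}}^2$. A direct way to see this is to diagonalize $B=U D U^T$. The space generated by finite sums $\sum_j K_X(\cdot,x_j)c_j=\sum_j k_\nu(\cdot,x_j)Bc_j$ is then $\{f: f=\sum_j k_\nu(\cdot,x_j)w_j,\ w_j\in\mathbb R^p\}$, because $B$ is invertible and so $w_j=Bc_j$ ranges over all of $\mathbb R^p$. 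Its closure contains every $f$ whose components lie in $\mathcal H_{k_\nu}(X)$. For density in $\|\cdot\|_\infty$, it therefore suffices to show that span$\{k_\nu(\cdot,x)e_i: x\in X,\ i\le p\}$ is uniformly dense in $C(X;\mathbb R^p)$.

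Next I will prove scalar universality: span$\{k_\nu(\cdot,x):x\in X\}$ is dense in $C(X)$. The route is the Hahn--Banach/Riesz duality argument. Suppose a finite signed Borel measure $\mu$ on $X$ annihilates all $k_\nu(\cdot,x)$. Then
\[
0=\iint k_\nu(x,x')\,d\mu(x)\,d\mu(x')=\int_{\mathbb R^{d_X}}\widehat{\kappa_\nu}(\omega)\,|\widehat\mu(\omega)|^2\,d\omega
\]
by Bochner's theorem. The Mat\'ern spectral density $\widehat{\kappa_\nu}(\omega)\propto(\epsilon^2+\|\omega\|^2)^{-\nu-d_X/2}$ is strictly positive everywhere. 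Hence $\widehat\mu\equiv0$, and since $\mu$ has compact support, Fourier uniqueness gives $\mu=0$. So the annihilator is trivial, and density follows. This is the standard characterization of Sriperumbudur--Fukumizu--Lanckriet/Micchelli--Xu--Zhang: full-support spectral measure implies universality. I could cite it, but I would include the short computation.

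Finally, I will combine the two steps. Given $f\in C(X;\mathbb R^p)$ and $\varepsilon>0$, approximate each component $f_i$ within $\varepsilon/\sqrt p$ by some $g_i\in\mathrm{span}\,k_\nu(\cdot,X)\subset\mathcal H_{K_X}$-components. Then $g=\sum_i g_ie_i\in\mathcal H_{K_X}$ by the first step, and $\sup_x\|f(x)-g(x)\|_2\le\varepsilon$.

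The main technical point is the identification in the first step, namely that the vector-valued RKHS contains all vector functions with scalar-RKHS components for a general SPD $B$. I expect this to follow cleanly from invertibility of $B$ at the level of finite kernel sections. That is all that is needed, since uniform density only involves the pre-Hilbert span and not the completed norm. The Fourier step needs $\widehat\mu$ to be a continuous bounded function and Parseval for measures against an integrable positive density. This is routine because $\kappa_\nu\in L^1$ and $\widehat{\kappa_\nu}>0$.
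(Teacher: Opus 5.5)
Your proposal is correct and follows essentially the same route as the paper: reduce the separable kernel $k_\nu B$ to componentwise scalar approximation, and obtain scalar density in $C(X)$ from the strictly positive Mat\'ern spectral density. The differences are minor. You prove the Sriperumbudur--Fukumizu--Lanckriet universality step directly via the Hahn--Banach/Bochner annihilator argument instead of citing it. You also absorb $B$ by noting that $\{k_\nu(\cdot,x)Bc\}$ and $\{k_\nu(\cdot,x)w\}$ have the same span because $B$ is invertible, whereas the paper approximates $B^{-1/2}f$ componentwise and maps back with $B^{1/2}$.
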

\begin{proof}
The Mat\'ern kernel is translation-invariant with strictly positive spectral density on $\mathbb R^{d_X}$, hence it is $c_0$-universal and $H_{k_\nu}$ is dense in $C(X)$ for compact $X$ \citep{sriperumbudur2011universality}. For $B\succ0$ and $K_X(x,x')=k_\nu(x,x')B$, one has $\mathcal H_{K_X}=\{B^{1/2}g:\ g\in H_{k_\nu}(X)^p\}$ with equivalent norms depending only on $B$ \citep{micchelli2005learning}. Given $f\in C(X;\mathbb R^p)$, set $g=B^{-1/2}f$ and approximate $g$ uniformly by $g_m\in H_{k_\nu}(X)^p$ componentwise; defining $f_m:=B^{1/2}g_m$ yields $f_m\in\mathcal H_{K_X}$ and $\|f_m-f\|_\infty\to0$.
\end{proof}

The next theorem gives a pessimistic worst-case rate for approximating the coefficient map from $N$ training inputs in ambient dimension $d_X$.

\begin{theorem}[Coefficient interpolation rate for vector-valued Mat\'ern kernels in $L^2(X)$]
\label{th:coeffsrate}
Let $X\subset\mathbb R^{d_X}$ be a bounded Lipschitz domain, let $k_\nu$ be the scalar Mat\'ern kernel of smoothness $\nu$, let $B\in\mathbb R^{p\times p}$ be symmetric positive definite, define $K_X(x,x'):=k_\nu(x,x')B$ with RKHS $\mathcal H_{K_X}$, and let $X_N\subset X$ be quasi-uniform with fill distance $h_X:=\sup_{x\in X}\min_{x_i\in X_N}\|x-x_i\|$. If $c^\dagger\in H^{\nu+d_X/2}(X;\mathbb R^p)$ and $\widehat c_N\in\mathcal H_{K_X}$ denotes the $K_X$-interpolant of $c^\dagger$ on $X_N$, then
\[
\|\widehat c_N-c^\dagger\|_{L^2(X)}
\le
C_X\,h_X^{\nu+d_X/2}\,\|c^\dagger\|_{H^{\nu+d_X/2}(X)}.
\]
If $X_N$ is quasi-uniform, then $h_X\asymp N^{-1/d_X}$ and
\[
\|\widehat c_N-c^\dagger\|_{L^2(X)}
\le
\widetilde C_X\,N^{-(\nu+d_X/2)/d_X}\,\|c^\dagger\|_{H^{\nu+d_X/2}(X)}.
\]
\end{theorem}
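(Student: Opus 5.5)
The proof reduces the vector-valued problem to $p$ scalar Mat\'ern interpolation problems and then applies the standard Sobolev error estimates for scalar kernel interpolation on Lipschitz domains, including the sharpened $L^2$ estimate.

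\textbf{Step 1: decoupling.} The kernel $K_X=k_\nu B$ with $B\succ0$ gives interpolation conditions
\[
\sum_j k_\nu(x_i,x_j)\,B\,\uc_j = c^\dagger(x_i).
\]
Substituting $\underline{d}_j:=B\uc_j$ turns this into $p$ decoupled scalar interpolation problems with the same Gramian $[k_\nu(x_i,x_j)]$, one per component of $c^\dagger$. The scalar Gramian is positive definite on distinct points, so each problem is uniquely solvable, and hence the vector problem is too. It follows that
\[
\widehat c_N=(I_N c^{\dagger,1},\dots,I_N c^{\dagger,p}),
\]
where $I_N$ is the scalar Mat\'ern interpolation operator on $X_N$. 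This agrees with the description $\mathcal H_{K_X}=B^{1/2}H_{k_\nu}^p$ used in Theorem~\ref{th:ua}, and the matrix $B$ drops out entirely.

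\textbf{Step 2: native space.} The Mat\'ern spectral density behaves like $(1+|\omega|^2)^{-(\nu+d_X/2)}$. Hence $H_{k_\nu}(\mathbb R^{d_X})=H^{\tau}(\mathbb R^{d_X})$ with equivalent norms, where $\tau:=\nu+d_X/2$. Because $X$ is a bounded Lipschitz domain, the Stein extension operator gives $H_{k_\nu}(X)=H^\tau(X)$ with equivalent norms. In particular, each component $c^{\dagger,k}$ lies in the native space and satisfies $\|c^{\dagger,k}\|_{H_{k_\nu}(X)}\lesssim\|c^{\dagger,k}\|_{H^\tau(X)}$.

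\textbf{Step 3: scalar error estimate.} First apply the sampling inequality (zeros lemma) of Narcowich--Ward--Wendland to $u=c^{\dagger,k}-I_Nc^{\dagger,k}$, which vanishes on $X_N$. For sufficiently small $h_X$ it gives
\[
\|u\|_{L^2(X)}\le C h_X^{\tau}|u|_{H^\tau(X)}.
\]
Native-space optimality of the interpolant gives $\|u\|_{H_{k_\nu}}\le\|c^{\dagger,k}\|_{H_{k_\nu}}$, and combining this with the norm equivalence of Step 2 yields
\[
\|u\|_{L^2}\le C h_X^{\tau}\|c^{\dagger,k}\|_{H^\tau(X)}.
\]
This is the full order $\tau=\nu+d_X/2$, which is exactly what the statement claims; no doubling beyond $\tau$ is asserted, so the basic estimate suffices. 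Summing the squares over $k=1,\dots,p$ gives the vector bound with the $p$-dependence absorbed into $C_X$. For $h_X$ not small, the bound follows trivially by enlarging the constant, since $\|u\|_{L^2}\lesssim\|u\|_{H^\tau}\lesssim\|c^\dagger\|_{H^\tau}$.

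\textbf{Step 4: rate in $N$.} Quasi-uniformity on a bounded Lipschitz domain gives $h_X\asymp q_X$. A volume-packing argument then gives $N h_X^{d_X}\asymp\mathrm{vol}(X)$, so $h_X\asymp N^{-1/d_X}$. Substituting into the bound of Step 3 yields the second inequality.

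\textbf{Main obstacle.} The delicate step is Step 3 on a general Lipschitz domain, because the sampling inequality there requires a uniform interior cone condition. That condition holds for bounded Lipschitz domains, and I would cite Wendland's monograph and Narcowich--Ward--Wendland for the precise statement rather than reprove it. Keeping track of the threshold $h_X\le h_0$ is then routine.
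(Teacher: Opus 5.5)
Your proposal is correct and takes essentially the same route as the paper: reduce to componentwise scalar Mat\'ern interpolation (the paper cites Micchelli--Pontil, whereas you do it by hand with the substitution $\underline{d}_j=B\underline{c}_j$), identify the native space with $H^{\nu+d_X/2}(X)$, apply the scalar Sobolev-type $L^2$ estimate componentwise, and use $h_X\asymp N^{-1/d_X}$. The only difference is that you prove the scalar estimate, via the zeros lemma, native-space optimality of the interpolant, and a separate treatment of the large-$h_X$ case, where the paper simply cites it.
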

\begin{proof}
For separable $K_X(x,x')=k_\nu(x,x')B$ with $B\succ0$, $\mathcal H_{K_X}$ is (up to norm-equivalence constants depending only on $B$) isomorphic to $H_{k_\nu}(X)^p$ and interpolation corresponds to componentwise scalar interpolation \citep{micchelli2005learning}. Since $H_{k_\nu}(X)\simeq H^{\nu+d_X/2}(X)$ \citep[Thm.~10.47]{Wendland2004}, the scalar Sobolev-type estimate gives $\|I_{X_N}g-g\|_{L^2(X)}\le C h_X^{\nu+d_X/2}\|g\|_{H^{\nu+d_X/2}(X)}$, and applying it componentwise yields the result; $h_X\asymp N^{-1/d_X}$ for quasi-uniform $X_N$.
\end{proof}
\begin{corollary}[Ridge perturbation of coefficient interpolation in $L^2(X)$]
\label{cor:ridge-coeff}
Under the assumptions of Theorem~\ref{th:coeffsrate}, let $\widehat c_{N,\alpha}\in\mathcal H_{K_X}$ be the kernel ridge regression estimator
\[
\widehat c_{N,\alpha}\in\arg\min_{f\in\mathcal H_{K_X}}
\frac1N\sum_{i=1}^N \|f(x_i)-c^\dagger(x_i)\|_2^2+\alpha\|f\|_{\mathcal H_{K_X}}^2,
\]
and let $\widehat c_N$ denote the $K_X$-interpolant of $c^\dagger$ on $X_N$. Then
\[
\|\widehat c_{N,\alpha}-c^\dagger\|_{L^2(X)}
\le
C_X h_X^{\nu+d_X/2}\|c^\dagger\|_{H^{\nu+d_X/2}(X)}
+
\|\widehat c_{N,\alpha}-\widehat c_N\|_{L^2(X)}.
\]
\end{corollary}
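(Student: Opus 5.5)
The bound follows from the triangle inequality with the interpolant $\widehat c_N$ as the intermediate point, and then Theorem~\ref{th:coeffsrate} for the first term. First I would check that $\widehat c_N$ is well defined under the stated assumptions. The Mat\'ern kernel $k_\nu$ is strictly positive definite and $B\succ0$, so the separable kernel $K_X=k_\nu B$ yields a Gramian $\mathbf{K}_N\otimes B$ (up to ordering) that is symmetric positive definite on the distinct nodes $X_N$. Hence the interpolant exists and is unique.

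Next I would confirm that the ridge estimator $\widehat c_{N,\alpha}$ is also well defined and lies in $\mathcal H_{K_X}$. By the representer theorem for vector-valued RKHSs \citep{micchelli2005learning}, the minimizer has the form $\sum_i K_X(\cdot,x_i)\,\uc_i$ with coefficients solving $(\mathbf{K}_N\otimes B+N\alpha I)\uc=\mathbf{c}^\dagger$. The functional is strictly convex and coercive, so the minimizer is unique and the ``$\in\arg\min$'' is unambiguous. Both functions lie in $\mathcal H_{K_X}\subset C(X;\mathbb R^p)\subset L^2(X;\mathbb R^p)$ because $X$ is bounded, so every $L^2$ norm in the statement is finite.

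The core step is the decomposition
\[
\widehat c_{N,\alpha}-c^\dagger=(\widehat c_N-c^\dagger)+(\widehat c_{N,\alpha}-\widehat c_N),
\]
together with Minkowski's inequality in $L^2(X;\mathbb R^p)$, where the pointwise norm is the Euclidean $\|\cdot\|_2$. Applying Theorem~\ref{th:coeffsrate} to the first term gives $C_X h_X^{\nu+d_X/2}\|c^\dagger\|_{H^{\nu+d_X/2}(X)}$, with the same constant $C_X$. The second term is left as stated, and this is exactly the claimed inequality.

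There is no real obstacle here. The only point needing care is that the constant and norms match those of Theorem~\ref{th:coeffsrate}, which they do because the same hypotheses are assumed. If more were wanted, one could bound the perturbation term by writing $\widehat c_{N,\alpha}-\widehat c_N=\sum_i K_X(\cdot,x_i)\,\delta_i$ with $\delta=\big[(\mathbf{K}_N\otimes B+N\alpha I)^{-1}-(\mathbf{K}_N\otimes B)^{-1}\big]\mathbf{c}^\dagger$. That leads to an $O(N\alpha/\lambda_{\min}(\mathbf{K}_N))$-type estimate, but such a bound is not required by the statement.
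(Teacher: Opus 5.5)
Your proposal is correct and matches the paper's proof: add and subtract $\widehat c_N$, apply the triangle inequality in $L^2(X;\mathbb R^p)$, and bound $\|\widehat c_N-c^\dagger\|_{L^2(X)}$ by Theorem~\ref{th:coeffsrate}. The paper's proof does not include your extra checks that the interpolant and ridge estimator are well defined, nor the optional perturbation estimate, but all of them are sound.
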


\begin{proof}
Add and subtract $\widehat c_N$ and apply the triangle inequality, then invoke Theorem~\ref{th:coeffsrate}.
\end{proof}

We next analyze spatial reconstruction using divergence-free Mat\'ern kernels with vector weights, leveraging key work by Fuselier et al.

\begin{theorem}[Divergence-free Mat\'ern reconstruction error in $L^2(\Omega)$ with vector weights]
\label{th:generalization}
Let $\Omega\subset\mathbb R^d$ be a bounded Lipschitz domain, let $\phi_\nu$ denote the scalar Mat\'ern kernel on $\mathbb R^d$, define the divergence-free matrix-valued kernel $\Phi_{\mathrm{div}}(x):=(-\Delta I+\nabla\nabla^\top)\phi_\nu(x)$, fix quasi-uniform sites $Y_M=\{y_j\}_{j=1}^M\subset\Omega$ with fill distance $h_Y:=\sup_{y\in\Omega}\min_{y_j\in Y_M}\|y-y_j\|$, define the reconstruction operator $\mathcal R:\mathbb R^{Md}\to L^2(\Omega;\mathbb R^d)$ by
\[
(\mathcal R c)(x):=\sum_{j=1}^M \Phi_{\mathrm{div}}(x-y_j)c_j,\qquad c=[c_1;\dots;c_M],\ \ c_j\in\mathbb R^d,
\]
and define the $L^2(\Omega)$ mass matrix $\mathbf T\in\mathbb R^{Md\times Md}$ by blocks
\[
\mathbf T_{ij}:=\int_\Omega \Phi_{\mathrm{div}}(x-y_i)^\top \Phi_{\mathrm{div}}(x-y_j)\,dx\in\mathbb R^{d\times d}.
\]
If $v\in H^{\nu+d/2}(\Omega;\mathbb R^d)$ is divergence-free and $v_M=\mathcal Rc^\star$ denotes its divergence-free Mat\'ern interpolant on $Y_M$, then for any $\widehat c\in\mathbb R^{Md}$,
\[
\|v-\mathcal R\widehat c\|_{L^2(\Omega)}
\le
C_Y h_Y^{\nu+d/2}\|v\|_{H^{\nu+d/2}(\Omega)}
+
\sqrt{\lambda_{\max}(\mathbf T)}\,\|\widehat c-c^\star\|_2.
\]
If $Y_M$ is quasi-uniform, then $h_Y\asymp M^{-1/d}$ and the first term is $\mathcal O(M^{-(\nu+d/2)/d})$.
\end{theorem}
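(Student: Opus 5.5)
The plan is to split the error with the triangle inequality, inserting the exact divergence-free interpolant $v_M=\mathcal R c^\star$:
\[
\|v-\mathcal R\widehat c\|_{L^2(\Omega)}\le \|v-\mathcal R c^\star\|_{L^2(\Omega)}+\|\mathcal R(c^\star-\widehat c)\|_{L^2(\Omega)}.
\]
This uses the linearity of $\mathcal R$. The first term is a pure interpolation error for divergence-free kernels. The second term is a perturbation term that depends only on the coefficient discrepancy. Each term is then handled separately.

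\textbf{Second term.} For any $e\in\mathbb R^{Md}$, expand $\|\mathcal R e\|_{L^2(\Omega)}^2=\int_\Omega\bigl\|\sum_j\Phi_{\mathrm{div}}(x-y_j)e_j\bigr\|_2^2\,dx$. This equals $\sum_{i,j}e_i^\top\mathbf T_{ij}e_j=e^\top\mathbf T e$ by the definition of the mass-matrix blocks. The integrals are finite because $\Phi_{\mathrm{div}}$ is continuous and $\Omega$ is bounded. The matrix $\mathbf T$ is symmetric positive semidefinite, so the Rayleigh quotient gives $e^\top\mathbf Te\le\lambda_{\max}(\mathbf T)\|e\|_2^2$. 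Taking square roots with $e=\widehat c-c^\star$ gives exactly the second term in the statement.

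\textbf{First term.} The plan is to invoke the Sobolev error estimates for divergence-free kernel interpolation of Fuselier, Narcowich, Ward and Wright. The key input is that the native space of $\Phi_{\mathrm{div}}=(-\Delta I+\nabla\nabla^\top)\phi_\nu$ is norm-equivalent to the divergence-free subspace of $H^{\tau}(\mathbb R^d;\mathbb R^d)$. Here $\phi_\nu$ has Fourier decay $(1+\|\omega\|^2)^{-(\nu+d/2)}$, and the matrix symbol is $\|\omega\|^2 I-\omega\omega^\top$ times $\hat\phi_\nu$. The order $\tau$ should be read off from these two facts. The divergence-free $v\in H^{\nu+d/2}(\Omega)$ is then extended to a divergence-free field on $\mathbb R^d$ with a comparable norm, for example via a stream-function or vector-potential extension. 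The standard sampling inequality ("zeros lemma") for functions vanishing on $Y_M$ is applied componentwise. Combined with the minimal-norm property of the interpolant, this yields $\|v-v_M\|_{L^2(\Omega)}\le C_Y h_Y^{\nu+d/2}\|v\|_{H^{\nu+d/2}(\Omega)}$. Quasi-uniformity then gives $h_Y\asymp M^{-1/d}$ by a volume-counting argument, which produces the stated $\mathcal O(M^{-(\nu+d/2)/d})$ rate.

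\textbf{Main obstacle.} The hard step is the first term: matching the Sobolev index of the native space to the index $\nu+d/2$ in the statement, and performing the divergence-free extension on a Lipschitz domain. The curl-curl operator adds two derivatives to the symbol, so the native-space order shifts, and the constant $C_Y$ must absorb the norm-equivalence and extension constants. The first approach to try is to quote the Fuselier et al.\ $L^2$ estimate directly in the form they state it, with the smoothness assumption phrased in terms of the native space. The stated exponent would then be justified under the paper's convention relating $\nu$ to the native-space order.
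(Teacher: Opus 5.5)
Your decomposition $v-\mathcal R\widehat c=(v-v_M)+\mathcal R(c^\star-\widehat c)$ and the Gram-matrix/Rayleigh-quotient bound for the second term are exactly the paper's proof, and that half is fine. The paper handles the first term with a bare citation of Fuselier (2008). You correctly flag two subtleties in that term, but neither of your resolutions works, and the paper's citation does not address them either.

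\emph{The index.} Use the paper's own convention $H_{k_\nu}\simeq H^{\nu+d/2}$, i.e.\ $\hat\phi_\nu(\omega)\sim(1+\|\omega\|^2)^{-(\nu+d/2)}$. Then the symbol $(\|\omega\|^2 I-\omega\omega^\top)\hat\phi_\nu(\omega)$ of $\Phi_{\mathrm{div}}$ decays like $(1+\|\omega\|^2)^{-(\nu+d/2-1)}$, so reading off the native-space order gives $\tau=\nu+d/2-1$. Your zeros-lemma-plus-minimal-norm argument therefore yields $\|v-v_M\|_{L^2(\Omega)}\le C h_Y^{\nu+d/2-1}\|v\|_{H^{\nu+d/2-1}}$, which is one power of $h_Y$ short. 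Fuselier's estimates are designed for targets rougher than the native space and give at most this order in $L^2$. No convention in the paper recovers the missing power. You must either state the bound with $\nu+d/2-1$, or build $\Phi_{\mathrm{div}}$ from a scalar kernel with $\hat\phi\sim(1+\|\omega\|^2)^{-(\nu+d/2+1)}$. Getting the stated exponent for the given kernel would require a superconvergence argument beyond the native space, which nothing cited supplies.

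\emph{The extension.} The divergence-free extension to $\mathbb R^d$ that you invoke does not exist in general, and without it the inequality is false, not merely unproved. Take $\Omega=\{1<|x|<2\}\subset\mathbb R^2$ and $v(x)=x/|x|^2$. This $v$ is smooth and divergence-free on $\overline\Omega$ but has flux $2\pi$ through every circle $|x|=r$. Every column of $\Phi_{\mathrm{div}}(\cdot-y_j)$ is divergence-free on all of $\mathbb R^2$, so $\mathcal R\widehat c$ has zero flux through each such circle. Integrating over $r\in(1,2)$ gives $\int_\Omega (v-\mathcal R\widehat c)\cdot\frac{x}{|x|}\,dx=2\pi$. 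Cauchy--Schwarz then gives $\|v-\mathcal R\widehat c\|_{L^2(\Omega)}\ge 2\pi/\sqrt{|\Omega|}=2\sqrt{\pi/3}$ for every $\widehat c$ and every $Y_M$. The right-hand side with $\widehat c=c^\star$, however, tends to $0$, even after the index is corrected. The spherical shell with $v=x/|x|^3$ does the same in 3D, so simple connectivity is not the right repair. An extra hypothesis is needed, e.g.\ that $v$ is the restriction of a divergence-free Sobolev field on $\mathbb R^d$, which forces zero flux through every inner boundary component. Only under such a hypothesis does your stream-function or vector-potential extension, and hence the citation, go through.
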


\begin{proof}
Sobolev approximation for divergence-free kernels yields $\|v-v_M\|_{L^2(\Omega)}\le C_Y h_Y^{\nu+d/2}\|v\|_{H^{\nu+d/2}(\Omega)}$ \citep{fuselier2008sobolev}. Moreover,
\[
\|\mathcal R(c^\star-\widehat c)\|_{L^2(\Omega)}^2
=
\int_\Omega \Big\|\sum_{j=1}^M \Phi_{\mathrm{div}}(x-y_j)(c^\star_j-\widehat c_j)\Big\|_2^2\,dx
=
(c^\star-\widehat c)^\top \mathbf T (c^\star-\widehat c)
\le
\lambda_{\max}(\mathbf T)\,\|\widehat c-c^\star\|_2^2.
\]
The claim follows from $v-\mathcal R\widehat c=(v-v_M)+\mathcal R(c^\star-\widehat c)$ and the triangle inequality.
\end{proof}
With these results in hand, we now present convergence rates for the operator learning problem.
\begin{theorem}[Operator error in $L^2(X;L^2(\Omega))$]
\label{th:errorbound}
Under the assumptions of Theorems~\ref{th:coeffsrate} and \ref{th:generalization}, assume $\|v\|_{L^\infty(X;H^{\nu+d/2}(\Omega))}:=\sup_{x\in X}\|v(\cdot;x)\|_{H^{\nu+d/2}(\Omega)}<\infty$, define $c^\dagger(x)\in\mathbb R^{Md}$ by $v_M(\cdot;x)=\mathcal R(c^\dagger(x))$ where $v_M(\cdot;x)$ is the divergence-free Mat\'ern interpolant of $v(\cdot;x)$ on $Y_M$, and let $\widehat c_N$ denote the $K_X$-interpolant of $c^\dagger$ on $X_N$. Defining $\widehat{\mathcal G}_{N,M}(x):=\mathcal R(\widehat c_N(x))$ and $\mathcal G(x):=v(\cdot;x)$ yields
\[
\|\widehat{\mathcal G}_{N,M}-\mathcal G\|_{L^2(X;L^2(\Omega))}
\le
\widetilde C_Y M^{-(\nu+d/2)/d}
+
\widetilde C_X \sqrt{\lambda_{\max}(\mathbf T)}\,N^{-(\nu+d_X/2)/d_X}.
\]
\end{theorem}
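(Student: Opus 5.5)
The plan is to apply Theorem~\ref{th:generalization} pointwise in the input variable $x\in X$, square, integrate over $X$, and then control the coefficient-error term with Theorem~\ref{th:coeffsrate}.

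\textbf{Pointwise bound.} Fix $x\in X$. By the definition of $c^\dagger$, the divergence-free interpolant of $v(\cdot;x)$ on $Y_M$ is $v_M(\cdot;x)=\mathcal R(c^\dagger(x))$. Hence the coefficient $c^\star$ in Theorem~\ref{th:generalization} is exactly $c^\dagger(x)$. Taking $\widehat c=\widehat c_N(x)$ there gives
\[
\|v(\cdot;x)-\widehat{\mathcal G}_{N,M}(x)\|_{L^2(\Omega)}\le C_Y h_Y^{\nu+d/2}\|v(\cdot;x)\|_{H^{\nu+d/2}(\Omega)}+\sqrt{\lambda_{\max}(\mathbf T)}\,\|\widehat c_N(x)-c^\dagger(x)\|_2 .
\]
The first term is bounded uniformly in $x$ by $C_Y h_Y^{\nu+d/2}\|v\|_{L^\infty(X;H^{\nu+d/2}(\Omega))}$. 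For quasi-uniform $Y_M$ this is at most a constant times $M^{-(\nu+d/2)/d}$.

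\textbf{Integration over $X$.} Take the $L^2(X)$ norm in $x$ and use Minkowski's inequality. The first term contributes at most $|X|^{1/2}$ times the uniform bound, which defines $\widetilde C_Y$ and requires $X$ bounded. The second term is $\sqrt{\lambda_{\max}(\mathbf T)}\,\|\widehat c_N-c^\dagger\|_{L^2(X)}$. Since $\mathbf T$ depends only on $Y_M$ and not on $x$, it factors out. Theorem~\ref{th:coeffsrate}, applied with $p=Md$, then bounds this by $\sqrt{\lambda_{\max}(\mathbf T)}\,\widetilde C_X N^{-(\nu+d_X/2)/d_X}\|c^\dagger\|_{H^{\nu+d_X/2}(X)}$. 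Absorbing the Sobolev norm into the constant gives the claimed estimate. Measurability of $x\mapsto \widehat{\mathcal G}_{N,M}(x)$ is immediate, because $\mathcal R$ is a bounded linear map composed with a continuous kernel interpolant.

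\textbf{Main obstacle.} The hypothesis of Theorem~\ref{th:coeffsrate}, namely $c^\dagger\in H^{\nu+d_X/2}(X;\mathbb R^{Md})$, is not stated directly. Here $c^\dagger(x)=\bPhi^{-1}\mathbf v(x)$, where $\mathbf v(x)$ collects the samples $v(y_j;x)$ and $\bPhi$ is the fixed invertible Gramian on $Y_M$. So $c^\dagger$ is a fixed linear map of finitely many point evaluations of $v$. Its $x$-regularity is therefore inherited from the $x$-regularity of $x\mapsto v(y_j;x)$. I would make this explicit by reading "under the assumptions of Theorem~\ref{th:coeffsrate}" as the assumption that $c^\dagger$ has that regularity, or equivalently that each map $x\mapsto v(y_j;\cdot)$ does. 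Then $\|c^\dagger\|_{H^{\nu+d_X/2}}\le\|\bPhi^{-1}\|\,\|\mathbf v\|_{H^{\nu+d_X/2}}$, and this quantity, which depends on $M$, is what $\widetilde C_X$ absorbs. This dependence of $\widetilde C_X$ on $M$ through $\|\bPhi^{-1}\|$ is the delicate point, and it should be stated plainly as a caveat.
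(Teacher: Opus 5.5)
Your proposal is correct and matches the paper's proof step for step. The paper also fixes $x$, applies Theorem~\ref{th:generalization} with $c^\star=c^\dagger(x)$ and $\widehat c=\widehat c_N(x)$, takes the $L^2(X)$ norm using the uniform bound on $\|v(\cdot;x)\|_{H^{\nu+d/2}(\Omega)}$, substitutes Theorem~\ref{th:coeffsrate}, and absorbs all multiplicative constants. Your caveat is a fair one that the paper's one-line proof leaves implicit: $\widetilde C_X$ contains $\|c^\dagger\|_{H^{\nu+d_X/2}(X)}$, which is controlled via $\|\bPhi^{-1}\|_2\,\|\mathbf v\|_{H^{\nu+d_X/2}(X)}$ and therefore depends on $M$.
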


\begin{proof}
For each $x\in X$, apply Theorem~\ref{th:generalization} with $v=v(\cdot;x)$, $c^\star=c^\dagger(x)$, and $\widehat c=\widehat c_N(x)$; take the $L^2(X)$ norm and use $\|v\|_{L^\infty(X;H^{\nu+d/2}(\Omega))}<\infty$ to bound the spatial term, then substitute Theorem~\ref{th:coeffsrate} and $h_Y\asymp M^{-1/d}$, absorbing multiplicative constants into $\widetilde C_Y,\widetilde C_X$.
\end{proof}

\begin{corollary}[Ridge operator bound in $L^2(X;L^2(\Omega))$]
\label{cor:ridge-op}
Under the assumptions of Theorem~\ref{th:errorbound}, replace $\widehat c_N$ by the kernel ridge estimator $\widehat c_{N,\alpha}$ of Corollary~\ref{cor:ridge-coeff} and define $\widehat{\mathcal G}_{N,M,\alpha}(x):=\mathcal R(\widehat c_{N,\alpha}(x))$. Then
\[
\|\widehat{\mathcal G}_{N,M,\alpha}-\mathcal G\|_{L^2(X;L^2(\Omega))}
\le
\widetilde C_Y M^{-(\nu+d/2)/d}
+
\sqrt{\lambda_{\max}(\mathbf T)}\,\|\widehat c_{N,\alpha}-c^\dagger\|_{L^2(X)},
\]
and consequently
\[
\|\widehat{\mathcal G}_{N,M,\alpha}-\mathcal G\|_{L^2(X;L^2(\Omega))}
\le
\widetilde C_Y M^{-(\nu+d/2)/d}
+
\widetilde C_X \sqrt{\lambda_{\max}(\mathbf T)}\,N^{-(\nu+d_X/2)/d_X}
+
\sqrt{\lambda_{\max}(\mathbf T)}\,\|\widehat c_{N,\alpha}-\widehat c_N\|_{L^2(X)}.
\]
\end{corollary}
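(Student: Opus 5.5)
The plan is to repeat the argument of Theorem~\ref{th:errorbound} pointwise in $x\in X$, with the ridge estimator $\widehat c_{N,\alpha}(x)$ in place of the interpolant $\widehat c_N(x)$, and then integrate over $X$. Fix $x\in X$ and apply Theorem~\ref{th:generalization} with $v=v(\cdot;x)$, $c^\star=c^\dagger(x)$ and $\widehat c=\widehat c_{N,\alpha}(x)$. This is legitimate because $\mathcal R$ is linear and $c^\dagger(x)$ is by definition the coefficient vector of the divergence-free interpolant $v_M(\cdot;x)$. We obtain
\[
\|v(\cdot;x)-\mathcal R\widehat c_{N,\alpha}(x)\|_{L^2(\Omega)}
\le
C_Y h_Y^{\nu+d/2}\|v(\cdot;x)\|_{H^{\nu+d/2}(\Omega)}
+\sqrt{\lambda_{\max}(\mathbf T)}\,\|\widehat c_{N,\alpha}(x)-c^\dagger(x)\|_2 .
\]
The matrix $\mathbf T$ depends only on $Y_M$ and $\Omega$, not on $x$, so $\lambda_{\max}(\mathbf T)$ is a constant that can be pulled out of the integral over $X$.

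Next, take the $L^2(X)$ norm of both sides and use Minkowski's inequality to split the norm of the sum. For the first term, bound $\|v(\cdot;x)\|_{H^{\nu+d/2}(\Omega)}$ by $\|v\|_{L^\infty(X;H^{\nu+d/2}(\Omega))}$. Because $X$ is bounded, $|X|^{1/2}$ is finite, and together with $h_Y\asymp M^{-1/d}$ this gives $\widetilde C_Y M^{-(\nu+d/2)/d}$, with the same constant as in Theorem~\ref{th:errorbound}. The second term is exactly $\sqrt{\lambda_{\max}(\mathbf T)}\,\|\widehat c_{N,\alpha}-c^\dagger\|_{L^2(X)}$, which proves the first display.

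For the second display, insert Corollary~\ref{cor:ridge-coeff}:
\[
\|\widehat c_{N,\alpha}-c^\dagger\|_{L^2(X)}\le C_X h_X^{\nu+d_X/2}\|c^\dagger\|_{H^{\nu+d_X/2}(X)}+\|\widehat c_{N,\alpha}-\widehat c_N\|_{L^2(X)} .
\]
Then use $h_X\asymp N^{-1/d_X}$ and absorb $\|c^\dagger\|_{H^{\nu+d_X/2}(X)}$ into $\widetilde C_X$, as in Theorem~\ref{th:errorbound}. That theorem already takes $c^\dagger\in H^{\nu+d_X/2}$ as an assumption through Theorem~\ref{th:coeffsrate}, so the regularity needed here is inherited.

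The main obstacle is measurability and integrability: we need $x\mapsto \|v(\cdot;x)-\mathcal R\widehat c_{N,\alpha}(x)\|_{L^2(\Omega)}$ to be a genuine $L^2(X)$ function so that the Bochner norm is defined. This follows because $\widehat c_{N,\alpha}$ lies in $\mathcal H_{K_X}\subset C(X;\mathbb R^{Md})$, $\mathcal R$ is a bounded linear map, and $v$ is assumed to belong to the relevant Bochner space. Beyond this point the proof is only the triangle inequality and bookkeeping of constants.
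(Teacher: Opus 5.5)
Your proposal is correct and follows the paper's proof. The paper's one-line argument applies Theorem~\ref{th:generalization} pointwise in $x$ with $\widehat c=\widehat c_{N,\alpha}(x)$, takes the $L^2(X)$ norm as in Theorem~\ref{th:errorbound}, and then inserts Corollary~\ref{cor:ridge-coeff} with $h_X\asymp N^{-1/d_X}$; your write-up does exactly this, and your remarks on the $x$-independence of $\mathbf T$ and on measurability spell out details the paper leaves implicit.
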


\begin{proof}
Repeat the proof of Theorem~\ref{th:errorbound} with $\widehat c_N$ replaced by $\widehat c_{N,\alpha}$, then apply Corollary~\ref{cor:ridge-coeff}.
\end{proof}

We now give intrinsic-dimension analogues for some unknown intrinsic dimension $r$. The only nontrivial additional ingredient is the mapping property of Sobolev/Bessel-potential spaces under composition with a sufficiently smooth bi-Lipschitz chart; for $s=\nu+r/2>1$, a bare bi-Lipschitz assumption is not sufficient in general, hence we assume the corresponding composition operator is bounded (cf.\ the cited references).

\begin{theorem}[Intrinsic-dimension coefficient interpolation rate in $L^2(X_{\mathrm{eff}})$]
\label{th:dimension1}
Let $X\subset\mathbb R^{d_X}$ be a bounded Lipschitz domain, let $k_\nu$ be the scalar Mat\'ern kernel of smoothness $\nu$, let $B\in\mathbb R^{p\times p}$ be symmetric positive definite, define $K_X(x,x'):=k_\nu(x,x')B$, and assume there exist an integer $r\ll d_X$, a bounded Lipschitz domain $\widetilde X\subset\mathbb R^r$, and a bi-Lipschitz diffeomorphism $\Phi:\widetilde X\to X_{\mathrm{eff}}\subset X$ such that the composition operators $u\mapsto u\circ\Phi$ and $u\mapsto u\circ\Phi^{-1}$ are bounded on $H^{\nu+r/2}$ (e.g.\ $\Phi,\Phi^{-1}$ have bounded derivatives up to order $\lceil \nu+r/2\rceil$), and $X_N=\Phi(\widetilde X_N)$ for a quasi-uniform set $\widetilde X_N\subset\widetilde X$ with fill distance $h_{\widetilde X}:=\sup_{\xi\in \widetilde X}\min_{\xi_i\in\widetilde X_N}\|\xi-\xi_i\|$. If $c^\dagger(x)=\widetilde c(\Phi^{-1}(x))$ for $x\in X_{\mathrm{eff}}$ with $\widetilde c\in H^{\nu+r/2}(\widetilde X;\mathbb R^p)$ and $\widehat c_N$ denotes the $K_X$-interpolant of $c^\dagger$ on $X_N$, then
\[
\|\widehat c_N-c^\dagger\|_{L^2(X_{\mathrm{eff}})}
\le
C_{\Phi}\, h_{\widetilde X}^{\nu+r/2}\,\|\widetilde c\|_{H^{\nu+r/2}(\widetilde X)}.
\]
If $\widetilde X_N$ is quasi-uniform, then $h_{\widetilde X}\asymp N^{-1/r}$ and the error is $\mathcal O(N^{-(\nu+r/2)/r})$.
\end{theorem}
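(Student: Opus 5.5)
The plan is to reduce the statement to a scalar, $r$-dimensional instance of the estimate in Theorem~\ref{th:coeffsrate} by pulling everything back to $\widetilde X$ through the chart $\Phi$. As in Theorem~\ref{th:coeffsrate}, the separable structure $K_X=k_\nu B$ with $B\succ0$ makes $K_X$-interpolation equivalent to componentwise scalar $k_\nu$-interpolation, up to norm-equivalence constants depending only on $B$ \citep{micchelli2005learning}. So it suffices to treat $p=1$.

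\textbf{Step 1 (pullback of the interpolant).} Define the pulled-back kernel $\widetilde k(\xi,\xi'):=k_\nu(\Phi(\xi),\Phi(\xi'))$ on $\widetilde X$. It is positive definite on $\widetilde X$ because $\Phi$ is injective and $k_\nu$ is strictly positive definite. The interpolant $\widehat c_N=\sum_i k_\nu(\cdot,x_i)a_i$ with $x_i=\Phi(\xi_i)$ satisfies
\[
\widehat c_N\circ\Phi=\sum_i \widetilde k(\cdot,\xi_i)a_i,\qquad (\widehat c_N\circ\Phi)(\xi_j)=\widetilde c(\xi_j).
\]
Hence $\widehat c_N\circ\Phi$ is exactly the $\widetilde k$-interpolant $I_{\widetilde X_N}\widetilde c$ of $\widetilde c$ on $\widetilde X_N$. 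Next, $\Phi$ is bi-Lipschitz, so its Jacobian determinant is bounded above and below. The change of variables then gives
\[
\|\widehat c_N-c^\dagger\|_{L^2(X_{\mathrm{eff}})}\le C\,\|I_{\widetilde X_N}\widetilde c-\widetilde c\|_{L^2(\widetilde X)},
\]
where the $L^2$ norm on $X_{\mathrm{eff}}$ is read as the surface ($r$-dimensional Hausdorff) measure when $r<d_X$.

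\textbf{Step 2 (native space of $\widetilde k$).} This is the main obstacle. I would show that the native space of $\widetilde k$ on $\widetilde X$ is norm-equivalent to $H^{\nu+r/2}(\widetilde X)$, arguing in three parts.
\begin{enumerate}
\item The native space of $k_\nu$ is $H^{\nu+d_X/2}(\mathbb R^{d_X})$ \citep[Thm.~10.47]{Wendland2004}.
\item The native space of a restricted kernel is the trace (restriction) space, with the quotient norm.
\item By the trace theorem for an $r$-dimensional smooth submanifold, restriction maps $H^{\nu+d_X/2}(\mathbb R^{d_X})$ onto $H^{\nu+d_X/2-(d_X-r)/2}(X_{\mathrm{eff}})=H^{\nu+r/2}(X_{\mathrm{eff}})$ and admits a bounded right inverse (an extension operator). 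This is the mechanism behind restricted-kernel results on manifolds (Fuselier--Wright).
\end{enumerate}
The assumed boundedness of $u\mapsto u\circ\Phi$ and $u\mapsto u\circ\Phi^{-1}$ on $H^{\nu+r/2}$ then transfers this identification to $\widetilde X$. Two points need care here. First, the trace and extension theorems are usually stated for boundaryless or smooth manifolds; for a chart image of a Lipschitz domain I would first extend $\Phi$ smoothly to a slightly larger domain (using the derivative bounds) and restrict at the end. Second, the derivative bounds on $\Phi$ are what make the submanifold regular enough for the trace theorem, which is why a bare bi-Lipschitz assumption would not suffice.

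\textbf{Step 3 (the rate).} With the native space identified, the standard sampling-inequality argument behind Theorem~\ref{th:coeffsrate} applies verbatim in dimension $r$ on the Lipschitz domain $\widetilde X$. It gives
\[
\|I_{\widetilde X_N}\widetilde c-\widetilde c\|_{L^2(\widetilde X)}\le C\,h_{\widetilde X}^{\nu+r/2}\,\|\widetilde c\|_{H^{\nu+r/2}(\widetilde X)}.
\]
Combining this with Step 1 and the componentwise reduction, and absorbing the Jacobian, trace, extension, composition and $B$ constants into $C_\Phi$, yields the first claim. For quasi-uniform $\widetilde X_N\subset\widetilde X\subset\mathbb R^r$ we have $h_{\widetilde X}\asymp N^{-1/r}$, which gives the stated $\mathcal O(N^{-(\nu+r/2)/r})$ rate.
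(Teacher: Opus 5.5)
Your argument has the same skeleton as the paper's proof: pull back through $\Phi$, note that $\widehat c_N\circ\Phi$ interpolates $\widetilde c$ on $\widetilde X_N$, then apply an $r$-dimensional Sobolev interpolation estimate on $\widetilde X$. The difference is that it rests on a key lemma the paper never states.

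The paper uses only the composition hypothesis, i.e.\ the norm equivalence of $H^{\nu+r/2}(X_{\mathrm{eff}})$ and $H^{\nu+r/2}(\widetilde X)$. It then invokes ``the $r$-dimensional Mat\'ern/Sobolev estimate'' directly. But the pulled-back kernel $k_\nu(\Phi(\xi),\Phi(\xi'))$ is not an $r$-dimensional Mat\'ern kernel, so that estimate does not apply until its native space is identified. Your Step~2 supplies exactly this. The native space of the restricted kernel is the restriction of $H^{\nu+d_X/2}(\mathbb R^{d_X})$ with the quotient norm, and trace plus extension identify it with $H^{\nu+d_X/2-(d_X-r)/2}=H^{\nu+r/2}$. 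This is where the exponent $\nu+d_X/2$ becomes $\nu+r/2$. Minimal-norm interpolation and a sampling inequality on $\widetilde X$ then give the rate.

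Your reading of $L^2(X_{\mathrm{eff}})$ with $r$-dimensional Hausdorff measure is also needed. With Lebesgue measure on $\mathbb R^{d_X}$, as the paper stipulates, the left-hand side vanishes identically for $r<d_X$.

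The price of your route is regularity. The trace/extension step concerns ambient $H^{\nu+d_X/2}$ functions, so it does not follow from the stated hypothesis, which only controls composition on $H^{\nu+r/2}$. Proved by tubular-neighbourhood flattening, it needs derivative bounds on $\Phi$ of order comparable to $\nu+d_X/2$, not $\lceil\nu+r/2\rceil$. Sharper trace theorems on $r$-regular sets may lower this, but that would then be a cited ingredient. So state the regularity of $X_{\mathrm{eff}}$ (or of a smooth extension of $\Phi$) that you use as an explicit assumption, rather than asserting that the given derivative bounds suffice.

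A small simplification: the componentwise reduction is exact. The $K_X$-interpolant coincides with the componentwise $k_\nu$-interpolant for every $B\succ0$, so no $B$-dependent constants arise.
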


\begin{proof}
By the boundedness of composition on $H^{\nu+r/2}$ under the stated regularity assumptions \citep{runst1996sobolev,bourdaud2023composition}, pullback by $\Phi$ yields norm equivalences between $H^{\nu+r/2}(X_{\mathrm{eff}})$ and $H^{\nu+r/2}(\widetilde X)$ (and likewise for $L^2$). Under pullback, interpolation on $X_N=\Phi(\widetilde X_N)$ corresponds to interpolation on $\widetilde X_N$ for the pulled-back target $\widetilde c$, and applying the $r$-dimensional $L^2$ Mat\'ern/Sobolev estimate yields the claim, with constants absorbed into $C_\Phi$.
\end{proof}

\begin{theorem}[Intrinsic-dimension operator bound in $L^2(X_{\mathrm{eff}};L^2(\Omega))$]
\label{th:dimension2}
Under the assumptions of Theorems~\ref{th:generalization} and \ref{th:dimension1}, defining $\widehat{\mathcal G}_{N,M}(x):=\mathcal R(\widehat c_N(x))$ and $\mathcal G(x):=v(\cdot;x)$ yields
\[
\|\widehat{\mathcal G}_{N,M}-\mathcal G\|_{L^2(X_{\mathrm{eff}};L^2(\Omega))}
\le
\widetilde C_Y M^{-(\nu+d/2)/d}
+
\widetilde C_{\Phi}\sqrt{\lambda_{\max}(\mathbf T)}\,N^{-(\nu+r/2)/r}.
\]
\end{theorem}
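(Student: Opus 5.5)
The plan is to reuse the proof of Theorem~\ref{th:errorbound} verbatim, changing only the coefficient-side input: Theorem~\ref{th:dimension1} takes the place of Theorem~\ref{th:coeffsrate}, and the outer norm is taken over $X_{\mathrm{eff}}$ instead of $X$.

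First, fix $x\in X_{\mathrm{eff}}$ and apply Theorem~\ref{th:generalization} with $v=v(\cdot;x)$, $c^\star=c^\dagger(x)$ and $\widehat c=\widehat c_N(x)$. Since $\widehat{\mathcal G}_{N,M}(x)=\mathcal R(\widehat c_N(x))$, this gives the pointwise bound
\[
\|v(\cdot;x)-\widehat{\mathcal G}_{N,M}(x)\|_{L^2(\Omega)}
\le C_Y h_Y^{\nu+d/2}\|v(\cdot;x)\|_{H^{\nu+d/2}(\Omega)}+\sqrt{\lambda_{\max}(\mathbf T)}\,\|\widehat c_N(x)-c^\dagger(x)\|_2 .
\]
The mass matrix $\mathbf T$ depends only on $Y_M$ and $\Omega$, not on $x$, so $\sqrt{\lambda_{\max}(\mathbf T)}$ is a uniform constant.

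Second, take the $L^2(X_{\mathrm{eff}})$ norm of both sides and apply Minkowski's inequality to split the two terms.

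\emph{Spatial term.} It is bounded by $C_Y h_Y^{\nu+d/2}\,|X_{\mathrm{eff}}|^{1/2}\sup_{x}\|v(\cdot;x)\|_{H^{\nu+d/2}}$. This needs the standing assumption $\|v\|_{L^\infty(X;H^{\nu+d/2}(\Omega))}<\infty$ inherited from Theorem~\ref{th:errorbound}; I will state explicitly that this assumption is carried over. It also uses $|X_{\mathrm{eff}}|\le|X|<\infty$, since $X$ is bounded. Substituting $h_Y\asymp M^{-1/d}$ then gives $\widetilde C_Y M^{-(\nu+d/2)/d}$.

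\emph{Coefficient term.} It equals $\sqrt{\lambda_{\max}(\mathbf T)}\,\|\widehat c_N-c^\dagger\|_{L^2(X_{\mathrm{eff}})}$. Theorem~\ref{th:dimension1} bounds this by $C_\Phi h_{\widetilde X}^{\nu+r/2}\|\widetilde c\|_{H^{\nu+r/2}(\widetilde X)}$, and with $h_{\widetilde X}\asymp N^{-1/r}$ it becomes $\widetilde C_\Phi\sqrt{\lambda_{\max}(\mathbf T)}\,N^{-(\nu+r/2)/r}$. The constant $\widetilde C_\Phi$ absorbs $C_\Phi$, the quasi-uniformity constant and $\|\widetilde c\|_{H^{\nu+r/2}}$.

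The main obstacle is a measure-theoretic subtlety in the coefficient term. If $r<d_X$, the image $X_{\mathrm{eff}}=\Phi(\widetilde X)$ is a lower-dimensional set and has Lebesgue measure zero in $\mathbb R^{d_X}$. Theorem~\ref{th:dimension1} must therefore be read with $L^2(X_{\mathrm{eff}})$ taken with respect to the pushforward of Lebesgue measure on $\widetilde X$ under the chart. I will adopt that same measure for the outer norm $L^2(X_{\mathrm{eff}};L^2(\Omega))$, so that the two bounds combine consistently. Under this convention $|X_{\mathrm{eff}}|$ in the spatial term becomes the pushforward measure of $X_{\mathrm{eff}}$. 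That quantity is finite because $\widetilde X$ is bounded and $\Phi$ is bi-Lipschitz, which controls its Jacobian, and it can be absorbed into $\widetilde C_Y$.

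A second point needs checking: the coefficient map $c^\dagger(x)$ must be the exact divergence-free interpolant coefficient vector. This holds by definition, as in Theorem~\ref{th:errorbound}, so the middle term $v_M$ cancels exactly as before. Adding the two bounds then yields the claimed estimate.
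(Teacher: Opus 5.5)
Your proposal is correct and follows the paper's proof essentially verbatim: apply Theorem~\ref{th:generalization} pointwise in $x$, take the $L^2(X_{\mathrm{eff}})$ norm, bound the spatial term using the $L^\infty(X;H^{\nu+d/2}(\Omega))$ assumption carried over from Theorem~\ref{th:errorbound}, and substitute Theorem~\ref{th:dimension1} for the coefficient term. Two of your remarks make precise points the paper leaves implicit: you state that this $L^\infty$ assumption must be imported, and you measure $X_{\mathrm{eff}}$ by the pushforward of Lebesgue measure on $\widetilde X$ rather than $d_X$-dimensional Lebesgue measure, under which the norm would be trivially zero when $r<d_X$.
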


\begin{proof}
Repeat the proof of Theorem~\ref{th:errorbound} with $X$ replaced by $X_{\mathrm{eff}}$, then substitute Theorem~\ref{th:dimension1} and $h_Y\asymp M^{-1/d}$, absorbing constants into $\widetilde C_Y,\widetilde C_\Phi$.
\end{proof}

\section{Results}
\label{sec:results}
We now present experimental results comparing our method with operator learning baselines on a diverse set of challenging 2D and 3D operator learning problems involving incompressible flows. We demonstrate that the property-preserving kernel method (PPKM) consistently outperforms state-of-the-art neural operators by several orders of magnitude in accuracy while \emph{analytically} preserving incompressibility and other properties across the board. We compared our method against three baselines: (i) the ``vanilla'' kernel method (VKM)~\citep{batlle2024kernel}, (ii) the Geo-FNO (geometry-aware FNO)~\citep{li2023fourier}, and (iii) the Transolver~\citep{wu2024transolver}; implementation details in Appendix~\ref{app:neural_operator}. Several variations of the PPKM were used in the experiments; the $\pi$-PPKM is spatially periodic, the $\eta$-PPKM incorporates turbulence power laws, and the $\pi \eta$-PPKM includes both properties. The PPKM (and its variations) and VKM were trained on the NVIDIA RTX 4080 GPU (with the selection of the shape parameter $\epsilon$ done on an AMD Ryzen 16-core CPU) using double precision while the neural operators were trained on the stronger NVIDIA A100 and A40 GPUs in single precision; these neural operators typically had too many parameters to allow for double precision training or sufficiently fast training on a single RTX 4080 GPU. Despite our use of double precision, our method consistently trains orders of magnitude faster than the neural operators, though it exhibits an order of magnitude slower inference times (likely due to the difference in double precision speeds between the A100 and the RTX 4080).

\textbf{Experimental details}: All data sets (except for the Taylor--Green vortices problems, which have analytical solutions) were produced using the SU2 numerical solver~\citep{economon2016su2} for incompressible flows~\citep{economon2020simulation}, which uses a second-order accurate finite volume method on unstructured meshes. Moreover, SU2 supports a variety of boundary conditions (BCs), including but not limited to periodic BCs, no-slip BCs, and wall functions for turbulent flow. For the turbulent problems explored, we solved the incompressible Reynolds averaged Navier--Stokes (RANS) equations in SU2 with the Shear Stress Transport (SST) turbulence model. We produced domain geometry meshes using \texttt{gmsh}~\citep{geuzaine2009gmsh} and used them within SU2; these scripts will be provided in the codebase upon publication of the manuscript. Table~\ref{tab:exp_config} lists the details pertaining to the flow in each problem; the domain, relevant flow regime, forcing term $f$, initial (IC) and boundary conditions, time domain, and operator learning map. For each problem, we generated $10,200$ time dependent simulations of which $N=10,000$ were used for training and $N_{\mathrm{test}}=200$ for testing. Following the mathematical notation introduced in Section~\ref{sec:methods}, the input functions $a \in \A$ may be scalar- or vector-valued; in contrast, the output functions are always vector-valued incompressible velocity fields. We used the truncated normal distribution for randomly sampling governing parameters such as initial velocity, viscosity, leading coefficients, etc. We define it as $\mathcal{N}_{[a, b]}(\mu, \sigma)$, where $\mu$ is the mean, $\sigma^2$ is the variance, and the random variables are sampled in the range $[a, b]$. This section reports the results for a variety of important 2D and 3D experiments (two each); results for additional classical benchmark problems are reported in Appendix~\ref{app:additional_results}. In Section~\ref{sec:nugget_effect}, we also present results on the effect of the ridge parameter $\theta$ on the accuracy of our method. Section~\ref{sec:gp_results} discusses results from a GP experiment for uncertainty quantification using the 3D species transport example as a case study (see Section~\ref{sec:uq} for details).

\textbf{Metrics reported}: Let $\widetilde{\bv} \in \mathbb{R}^m$ and $\bv \in \mathbb{R}^m$ be the vectors containing the pointwise velocity magnitudes of the predicted and true velocity fields respectively. We report the spatial relative $\ell_2$ error computed as $(\|\widetilde{\bv} - \bv\|_2)/\|\bv\|_2$, \emph{i.e.}, the relative error in the magnitudes. Further, we report the maximum pointwise interior divergence (ignoring spurious divergences due to possible edge effects from the solver or the method to compute divergence). Both metrics are averaged over the test functions. We additionally performed ablation studies over the number of training functions $N$ (uniformly sampled at random without replacement) and the number of spatial samples $m$ and report the results with the combination that yielded the lowest errors for our method. For the Geo-FNO, we reported errors across all $N$ using the largest available $m$ for each problem. Additionally, at the value of $N$ yielding the lowest error for the property-preserving kernel method, we reported Geo-FNO and Transolver errors using the same spatial resolution $m$. Table~\ref{tab:best_results} reports the relative $\ell_2$ errors, maximum pointwise divergence, training and inference (on training set) times, and the $\theta$ used by the proposed method. Section~\ref{sec:training_details} describes how training times are computed for the property-preserving and vanilla kernel methods. We now present each of our benchmark problems.
\begin{table}[!htbp]
\renewcommand{\arraystretch}{1.3}
\setlength{\tabcolsep}{1.4pt}
\centering
\resizebox{\linewidth}{!}{%
\begin{tabular}{l c l l c r c c r}
\toprule
\# & Problem & Domain & Flow regime & $f$ & IC & BC & Time & Operator \\
\midrule

\multirow{5}{*}{1)}
&  2D Flow Past & $[0,20]\!\times\![0,14]$ & Laminar & -- &  $\bu$ given & $\bu$ given at & $T=10$ & $\G:\bu(x,0)\!\to\!\bu(y,10)$ \\
&  a Cylinder &  & No shedding & &  & top/bottom/inlet & $\; \Delta t=10^{-3}$ &  \\
&  &  &  $\mathrm{Re}\!\in\![25,64]$  & &  & No-slip cylinder  &  &  \\
&  &  & Vortex shedding & &  & outlet, $p=0$ &  &  \\
&  &  & $\mathrm{Re}\!\in\![112,199]$ & &  &  &  &  \\
\midrule

\multirow{2}{*}{2)}
&  2D Lid--Driven & $[0,1]^2$ & Laminar & -- & $\bu$ given & $\bu^{\texttt{top}}$ given & $T=5$ & $\G:\bu(x,0)\!\to\!\bu(y,5)$ \\
&  Cavity Flow &  &  &  &  & $\bu_b=0,$ else & $\; \Delta t=10^{-3}$ &  \\
\midrule

\multirow{4}{*}{3)}
& 2D Backward & $[0,15]$ & Laminar & -- & $\; \bu^{\texttt{inlet}}=\nu x_2(0.5-x_2)$ & $\bu^{\texttt{inlet}}$ given & $T=5$ & $\G:\bu(x,0)\!\to\!\bu(y,5)$ \\
& Facing Step & $\!\times [-0.5,0.5]$ & $\mathrm{Re}\!\in\![28,900]$ &  & $\bu=0,$ else & $\bu=0$ at & $\; \Delta t=10^{-3}$ &  \\
&   &  &  &  &  & top, bottom, step &  &  \\
&   &  &  &  &  & outlet, $p=0$  &  &  \\
\midrule

\multirow{4}{*}{4)}
&  2D Buoyancy--Driven & $\; [0,1]^2$ & Laminar & $-g$ & $\bu=0$ & No-slip walls & $T=5$ & $\G:\texttt{T}(x, 0)\!\to\!\bu(y,5)$ \\
&  Cavity Flow &  & Ra$=10^6$ &  & $\texttt{T}_{\mathrm{left}}=\texttt{T}_1$ &  & $\; \Delta t=10^{-3}$ &  \\
&   &  &  &  & $\texttt{T}_{\mathrm{right}}=\texttt{T}_2$ &  &  &  \\
&   &  &  &  & $\texttt{T}=288.15\,\mathrm{K},$ else &  &  &  \\
\midrule

\multirow{3}{*}{5)}
&  2D Taylor--Green & $[0,2\pi]^2$ & Laminar & -- & $\bu=$ & $\bu^{\texttt{left}}=\bu^{\texttt{right}}$ & $T=1$ & $\G:\bu(x,0)\!\to\!\bu(y,1)$ \\
&  Vortices & & &  & $(A\sin x_1\cos x_2,$ & $\bu^{\texttt{top}}=\bu^{\texttt{bottom}}$ &  &  \\
&  (\textbf{spatial}) &  & &  & $A\cos x_1\sin x_2)e^{-2\nu t}$ &  &  &  \\
\midrule

\multirow{3}{*}{6)}
&  2D Taylor--Green & $[0,2\pi]^2$ & Laminar & -- & $\bu=$ & $\bu^{\texttt{left}}=\bu^{\texttt{right}}$ & $T=[0.7, 0.8$ & $\G:\bu(x,0)\!\to\!\bu(y,T)$ \\
&  Vortices &  & &  & $(A\sin x_1\cos x_2,$ & $\bu^{\texttt{top}}=\bu^{\texttt{bottom}}$ & $0.9, 1]$ &  \\
&  (\textbf{spacetime}) &  & &  & $A\cos x_1\sin x_2)e^{-2\nu t}$ &  &  & \\
\midrule

\multirow{3}{*}{7)}
&  2D Merging & $[0, 2\pi]^2$ & Laminar & -- & $\bu$ given & $\bu^{\texttt{left}}=\bu^{\texttt{right}}$ & $T=0.4$ & $\, \G:\omega(x, 0)\!\to\!\bu(y,0.4)$ \\
&  Vortices &  &  &  &  & $\bu^{\texttt{top}}=\bu^{\texttt{bottom}}$ & $\; \Delta t=10^{-3}$ &  \\
&   &  &  &  & $\omega$ (vorticity) given &  &  & \\
\midrule

\multirow{6}{*}{8)}
&  3D Species & See Figure~\ref{fig:species_transport} & Turbulent & -- & $\bu^{\texttt{inlet}},$ given & $\bu^{\texttt{inlet}}$ given & $T=0.5$ & $\G:\bu^{\texttt{inlet}}\!\to\!\bu(y,0.5)$ \\
&  Transport &  & $\mathrm{Re}\!\in\![10^5,\!10^6]$ &  &  $\bu=0,$ else & no-slip & $\; \Delta t=0.005$ &  \\
&  &  &  &  &  & inner wall, &  &  \\
&  &  &  &  &  & outer wall, &  &  \\
&  &  &  &  &  & \& blades &  &  \\
&  &  &  &  &  & outlet, $p=0$ &  &  \\
\midrule

\multirow{5}{*}{9)}
&  3D Flow & $[-7,10]\!\times\![-7,7]$ & Turbulent & -- & $\bu^{\texttt{inlet}},$ given & $\bu^{\texttt{inlet}}$ given & $T=1$ & $\G:\tau\!\to\!\bu(y,1)$ \\
&  Past an & $\!\times [0,3]$ & $\mathrm{Re}=3\times10^6$ &  &  $\bu=0,$ else & $\bu^{\mathrm{front}}=\bu^{\mathrm{back}}$ & $\; \Delta t=10^{-2}$ &  \\
&  Airfoil &  &  &  &  & No-slip airfoil &  &  \\
&   & $\tau$ denotes &  &  &  & outlet, $p=0$ &  &  \\
&   & the airfoil shape &  &  &  &  &  &  \\
\bottomrule
\end{tabular}}
\caption{Problem configurations; flow regime, domain, forcing term ($f$), initial and boundary conditions (IC and BC), time domain, and the operator of interest. Re stands for the Reynolds number and Ra for the Rayleigh number.}
\label{tab:exp_config}
\end{table}
\begin{table}[!htbp]
\renewcommand{\arraystretch}{1.03}
\setlength{\tabcolsep}{1pt}
\centering
\resizebox{\linewidth}{!}{%
\begin{tabular}{c|c|c|c|c|c|c|c|c}

\hline
Problem & $m$ & $N$ & $\theta$ & Metric & PPKM & VKM & Geo-FNO & Transolver \\
\hline

\multirow{4}{*}{2D Flow Past a Cylinder} & \multirow{4}{*}{1000} & \multirow{4}{*}{100} & \multirow{4}{*}{$10^{-6}$} &
rel. $\ell_2$ error & $3.95\times10^{-5}, \Phi$ & $5.32\times10^{-5}$ & $9.88\times10^{-4}$ & $8.02\times10^{-3}$ \\
\hhline{~~~~-----}
& & & & Divergence & $0$ & $0.788$ & $0.725$ & $0.652$ \\
\hhline{~~~~-----}
(no vortex shedding) & & & & Training time & $0.80$ & $0.81$ & $176$ & $55$ \\
\hhline{~~~~-----}
& & & & Inference time & $0.25$ & $0.25$ & $0.19$ & $0.09$ \\
\hline

\multirow{4}{*}{2D Flow Past a Cylinder} & \multirow{4}{*}{1000} & \multirow{4}{*}{10,000} & \multirow{4}{*}{$10^{-6}$} &
rel. $\ell_2$ error & $3.14\times10^{-7}, \Phi$ & $2.34\times10^{-6}$ & $4.62\times10^{-5}$ & $1.19\times10^{-4}$ \\
\hhline{~~~~-----}
& & & & Divergence & $0$ & $2.89$ & $2.498$ & $2.497$ \\
\hhline{~~~~-----}
(vortex shedding) & & & & Training time & $77$ & $77$ & $11,462$ & $4139$ \\
\hhline{~~~~-----}
& & & & Inference time & $4.19$ & $4.19$ & $0.14$ & $0.069$ \\
\hline

\multirow{4}{*}{2D Lid--Driven Cavity} & \multirow{4}{*}{1000} & \multirow{4}{*}{10,000} & \multirow{4}{*}{$10^{-6}$} &
rel. $\ell_2$ error & $1.39\times10^{-6}, \Phi$ & $1.45\times10^{-6}$ & $2.19\times10^{-4}$ & $1.89\times10^{-4}$ \\
\hhline{~~~~-----}
& & & & Divergence & $0$ & $7.96$ & $29.596$ & $29.537$ \\
\hhline{~~~~-----}
Flow & & & & Training time & $77$ & $77$ & $6314$ & $3934$ \\
\hhline{~~~~-----}
& & & & Inference time & $4.19$ & $4.19$ & $0.085$ & $0.065$ \\
\hline

\multirow{4}{*}{2D Backward--Facing} & \multirow{4}{*}{1000} & \multirow{4}{*}{500} & \multirow{4}{*}{$10^{-6}$} &
rel. $\ell_2$ error & $1.51\times10^{-6}, \Phi$ & $2.9\times10^{-6}$ & $1.055\times10^{-3}$ & $1.065\times10^{-3}$ \\
\hhline{~~~~-----}
& & & & Divergence & $0$ & $0.907$ & $0.765$ & $0.782$ \\
\hhline{~~~~-----}
Step & & & & Training time & $0.91$ & $0.92$ & $191$ & $239$ \\
\hhline{~~~~-----}
& & & & Inference time & $0.26$ & $0.26$ & $0.056$ & $0.079$ \\
\hline

\multirow{4}{*}{2D Buoyancy--Driven} & \multirow{4}{*}{5000} & \multirow{4}{*}{10,000} & \multirow{4}{*}{$10^{-4}$} &
rel. $\ell_2$ error & $6.27\times10^{-6}, \Phi$ & $6.14\times10^{-6}$ & $1.44\times10^{-4}$ & $1.69\times10^{-4}$ \\
\hhline{~~~~-----}
& & & & Divergence & $0$ & $4.28$ & $3.213$ & $3.213$ \\
\hhline{~~~~-----}
Cavity Flow & & & & Training time & $92$ & $92$ & $21,373$ & $16,277$ \\
\hhline{~~~~-----}
& & & & Inference time & $12$ & $12$ & $0.30$ & $0.29$ \\
\hline

\multirow{4}{*}{2D Taylor--Green} & \multirow{4}{*}{500} & \multirow{4}{*}{5000} & \multirow{4}{*}{$10^{-8}$} &
rel. $\ell_2$ error & $8.76\times10^{-10}, \Phi^\pi$ & $1.68\times10^{-9}$ & $5.31\times10^{-4}$ & $1.22\times10^{-4}$ \\
\hhline{~~~~-----}
& & & & Divergence & $0$ & $1.42$ & $12.313$ & $1.426$ \\
\hhline{~~~~-----}
Vortices & & & & Training time & $25$ & $26$ & $2419$ & $1988$ \\
\hhline{~~~~-----}
& & & & Inference time & $0.78$ & $0.78$ & $0.066$ & $0.050$ \\
\hline

\multirow{4}{*}{2D Taylor--Green} & \multirow{4}{*}{500} & \multirow{4}{*}{5000} & \multirow{4}{*}{$10^{-8}$} &
rel. $\ell_2$ error & $8.65\times10^{-10}, \Phi^\pi$ & $8.67\times10^{-10}$ & $8.75\times10^{-4}$ & $3.07\times10^{-4}$ \\
\hhline{~~~~-----}
& & & & Divergence & $0$ & $14.58$ & $\sim$ & $\sim$ \\
\hhline{~~~~-----}
Vortices, \textbf{Spacetime} & & & & Training time & $26$ & $26$ & $5966$ & $4580$ \\
\hhline{~~~~-----}
& & & & Inference time & $1.25$ & $1.17$ & $0.16$ & $0.17$ \\
\hline

\multirow{4}{*}{2D Merging Vortices} & \multirow{4}{*}{500} & \multirow{4}{*}{500} & \multirow{4}{*}{$10^{-4}$} &
rel. $\ell_2$ error & $1.09\times10^{-4}, \Phi^\pi$ & $1.09\times10^{-4}$ & $5.8\times10^{-3}$ & $3.44\times10^{-3}$ \\
\hhline{~~~~-----}
& & & & Divergence & $0$ & $0.56$ & $1.048$ & $1.034$ \\
\hhline{~~~~-----}
& & & & Training time & $0.87$ & $0.89$ & $308$ & $200$ \\
\hhline{~~~~-----}
& & & & Inference time & $0.25$ & $0.24$ & $0.080$ & $0.063$ \\
\hline

\multirow{4}{*}{3D Species Transport} & \multirow{4}{*}{7000} & \multirow{4}{*}{10,000} & \multirow{4}{*}{$10^{-4}$} &
rel. $\ell_2$ error & $2.38\times10^{-4}, \Phi^\eta$ & $2.35\times10^{-4}$ & $5.11\times10^{-4}$ & $2.26\times10^{-3}$ \\
\hhline{~~~~-----}
& & & & Divergence & $0$ & $\sim$ & $\sim$ & $\sim$ \\
\hhline{~~~~-----}
& & & & Training time & $90$ & $83$ & $80,738$ & $23,901$ \\
\hhline{~~~~-----}
& & & & Inference time & $21$ & $20$ & $1.13$ & $0.43$ \\
\hline

\multirow{4}{*}{3D Flow Past an} & \multirow{4}{*}{500} & \multirow{4}{*}{7000} & \multirow{4}{*}{$10^{-4}$} &
rel. $\ell_2$ error & $0.506, \Phi^{\pi, \eta}$ & $0.726$ & $0.477$ & $0.600$ \\
\hhline{~~~~-----}
& & & & Divergence & $0$ & $1.44$ & $\sim$ & $\sim$ \\
\hhline{~~~~-----}
Airfoil & & & & Training time & $53$ & $53$ & $31,535$ & $5721$ \\
\hhline{~~~~-----}
& & & & Inference time & $0.25$ & $0.28$ & $0.55$ & $0.19$ \\
\hline
\end{tabular}}
\caption{Accuracy, divergence, and runtime results. For each problem, we reported the spatial rel. (relative) $\ell_2$ error and maximum pointwise divergence on the test set, total training time, and inference time (on the training set) corresponding to $m$ points and $N$ training functions they were trained with, and the $\theta$ used by the PPKM and the VKM. The runtimes are reported in seconds. We report the (statistically significant) mean results for the Geo-FNO and the Transolver models over three random seeds. The choice of the PPKM is shown next to the relative $\ell_2$ error (see Section~\ref{sec:spatial_kernel} for notation). The PPKM and the VKM were trained in double precision while the neural operators were trained in single precision. $\sim$ indicates a blowup in the divergence magnitude ($> 10^{3}$).}
\label{tab:best_results}
\end{table}
\subsection{2D flow past a cylinder}
\label{sec:flow_cylinder}
\begin{figure}[!htpb]
    \centering
    \begin{tabular}{ccc}
    &
    \begin{subfigure}{0.45\linewidth}
        \centering
        \includegraphics[width=\linewidth]{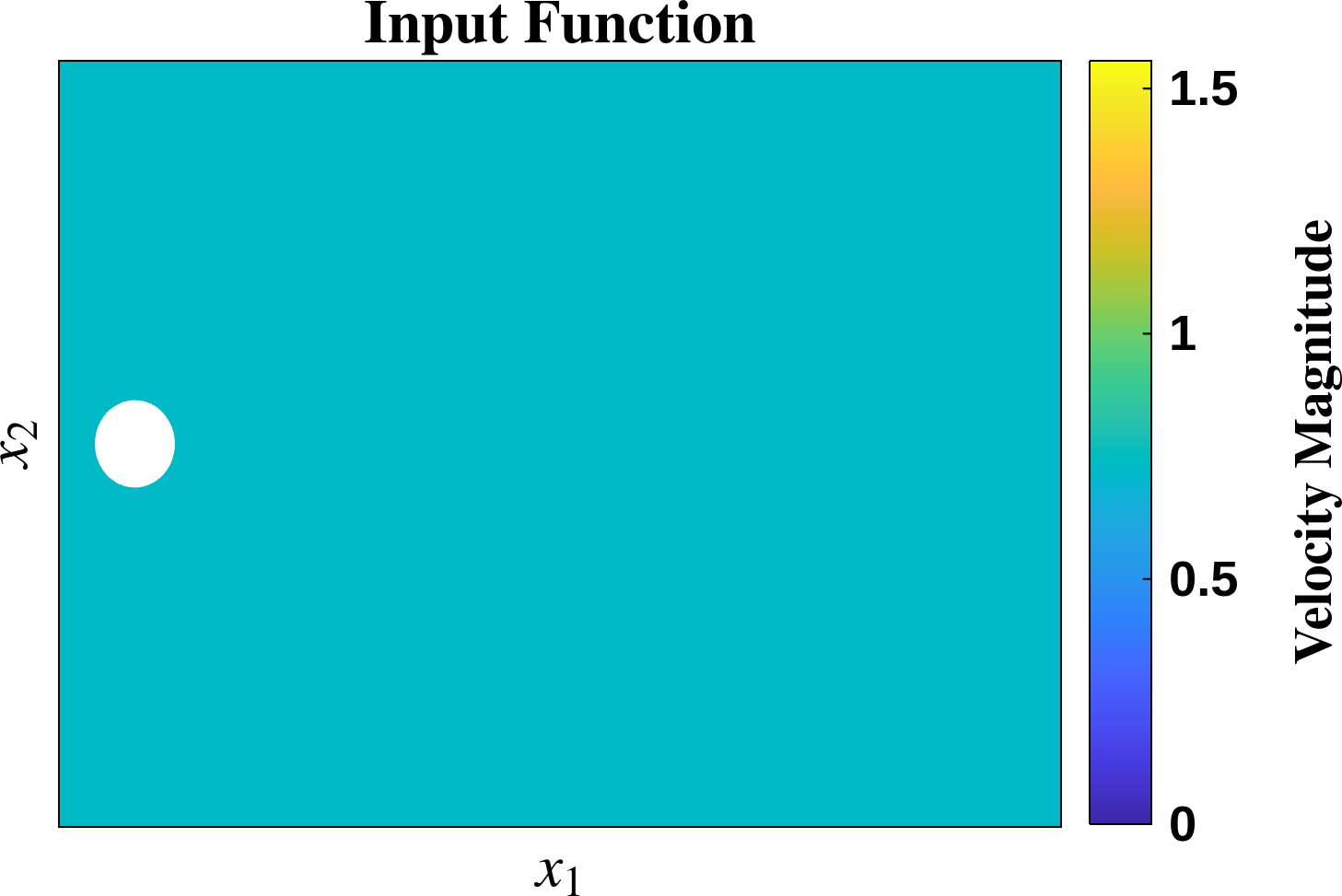}
        \caption{}
        \label{fig:flow_cylinder_input}
    \end{subfigure} &
    \begin{subfigure}{0.45\linewidth}
        \centering
        \includegraphics[width=\linewidth]{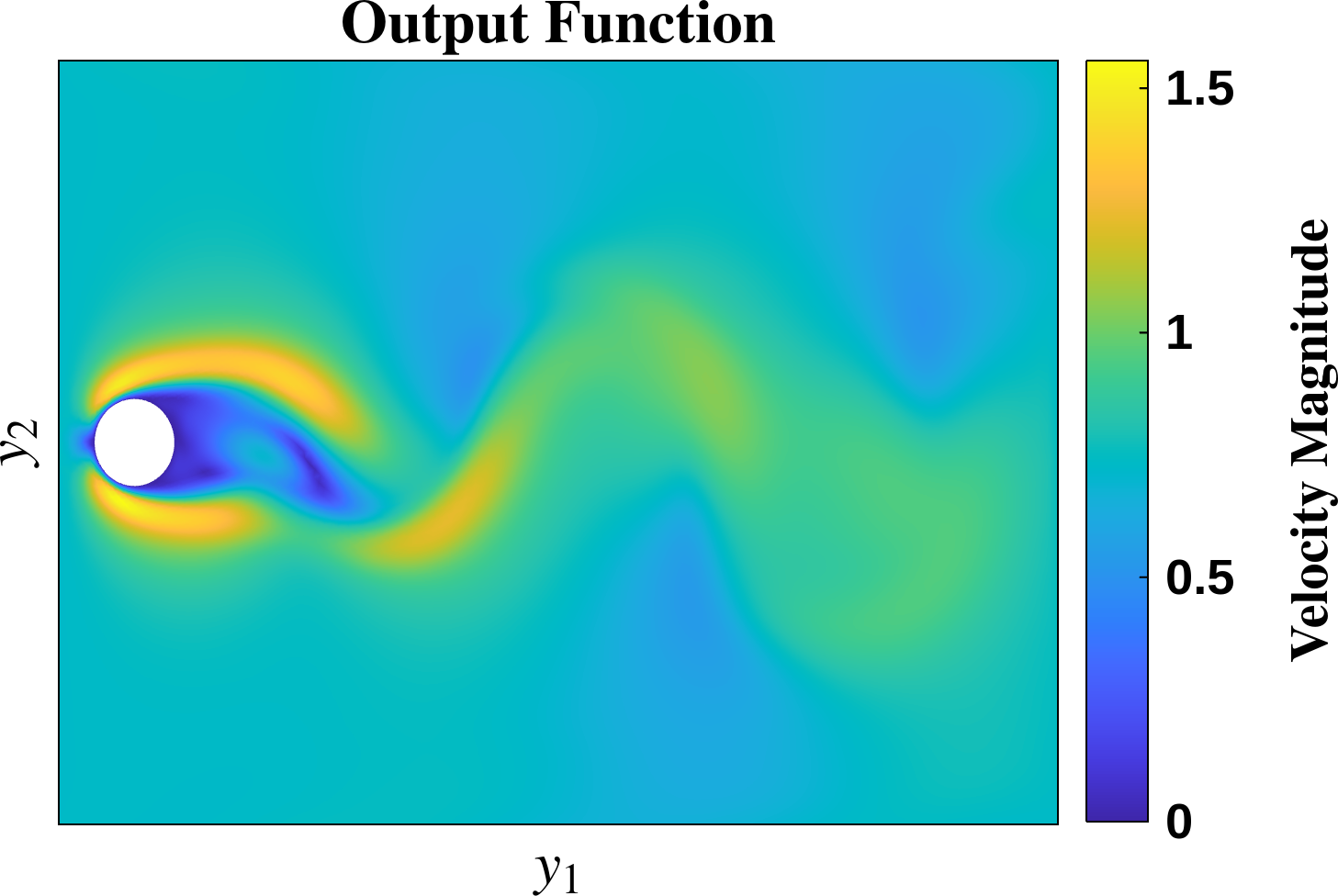}
        \caption{}
        \label{fig:flow_cylinder_output}
    \end{subfigure} \\
    \raisebox{10ex}{\rotatebox{90}{Vortex Shedding}} &
    \begin{subfigure}{0.52\linewidth}
        \centering
        \includegraphics[width=\linewidth]{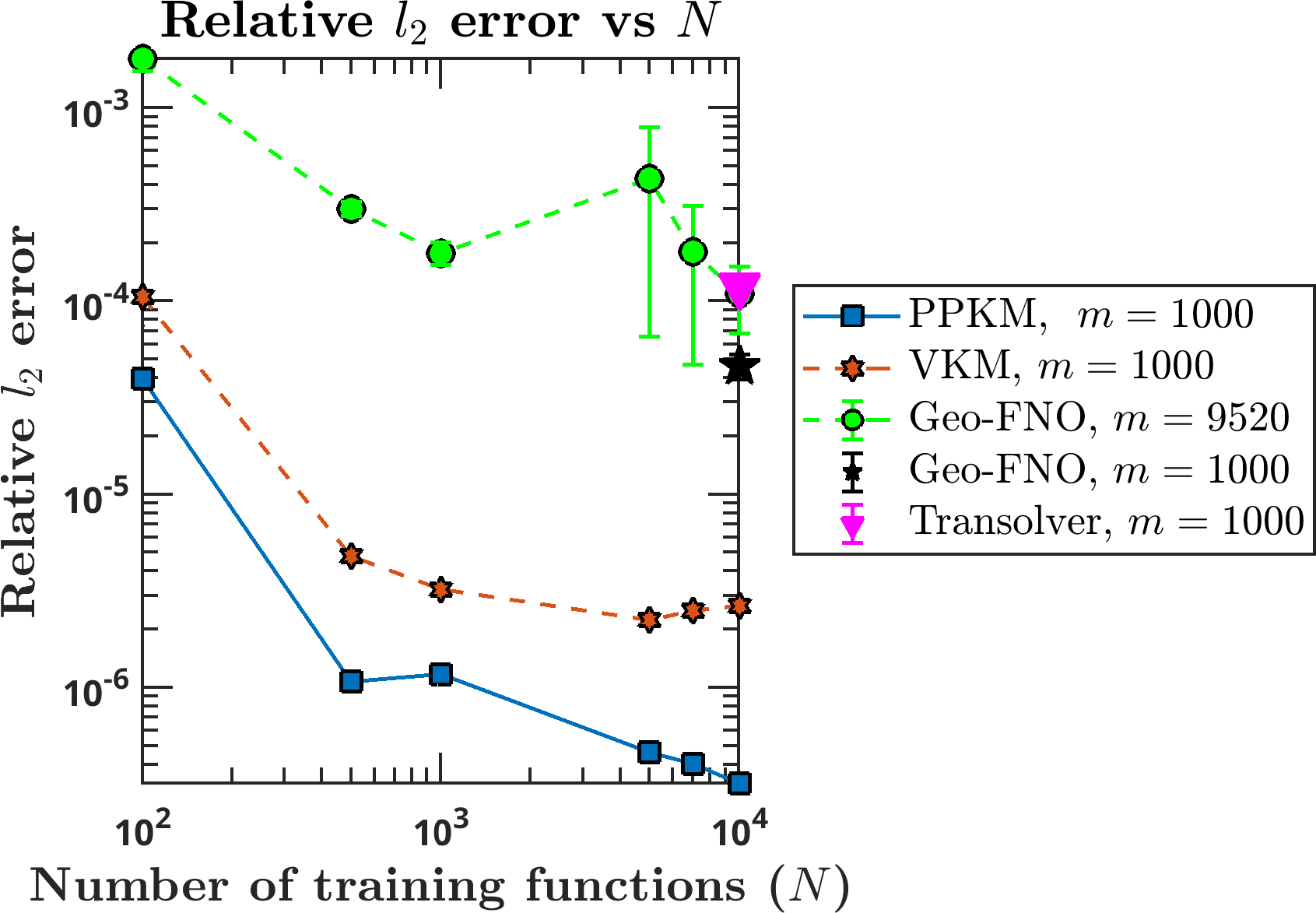}
        \caption{}
        \label{fig:flow_cylinder_shedding_errors}
    \end{subfigure} &
    \begin{subfigure}{0.52\linewidth}
        \centering
        \includegraphics[width=\linewidth]{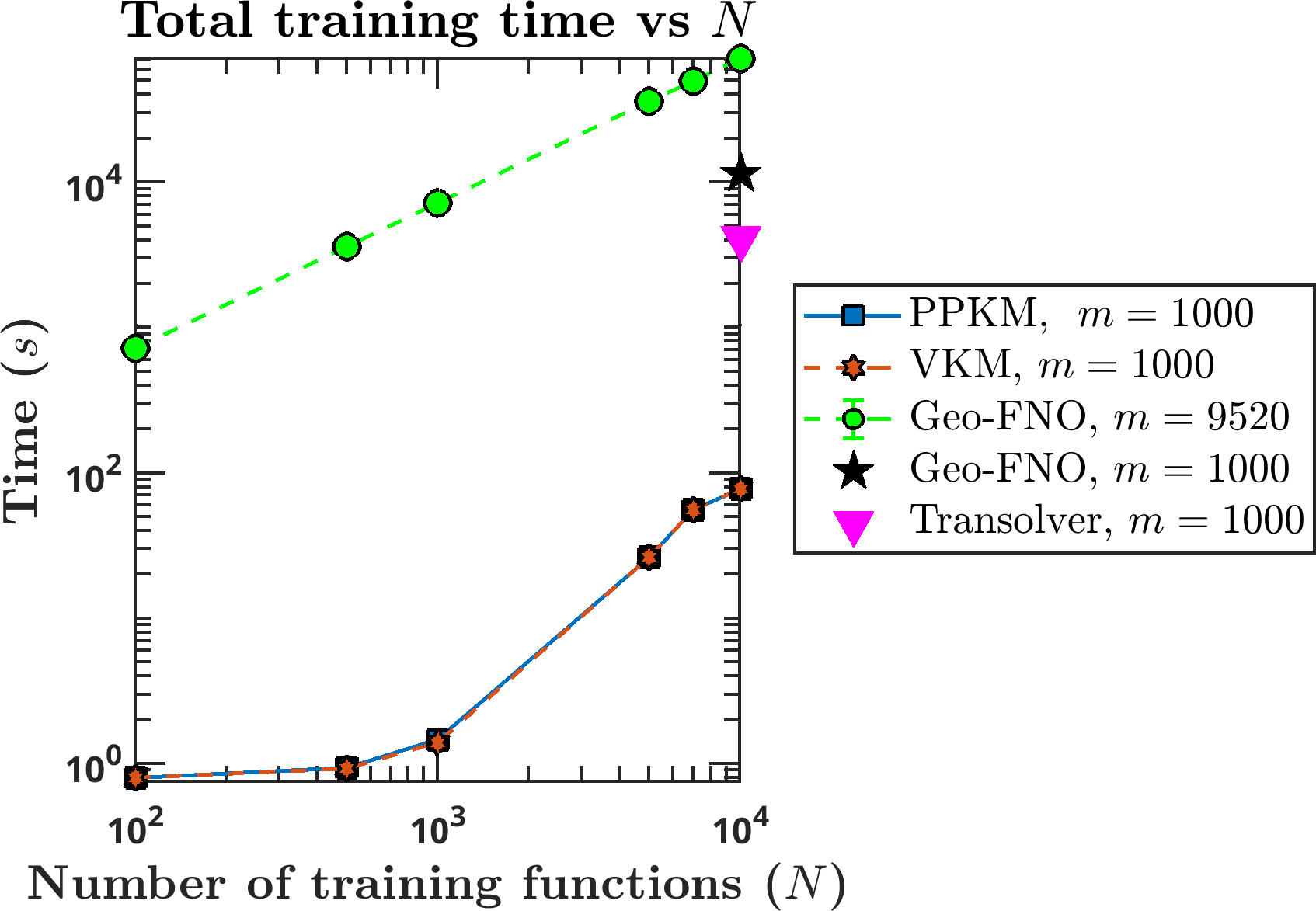}
        \caption{}
        \label{fig:flow_cylinder_shedding_times}
    \end{subfigure} \\
    \raisebox{10ex}{\rotatebox{90}{No Vortex Shedding}} &
    \begin{subfigure}{0.52\linewidth}
        \centering
        \includegraphics[width=\linewidth]{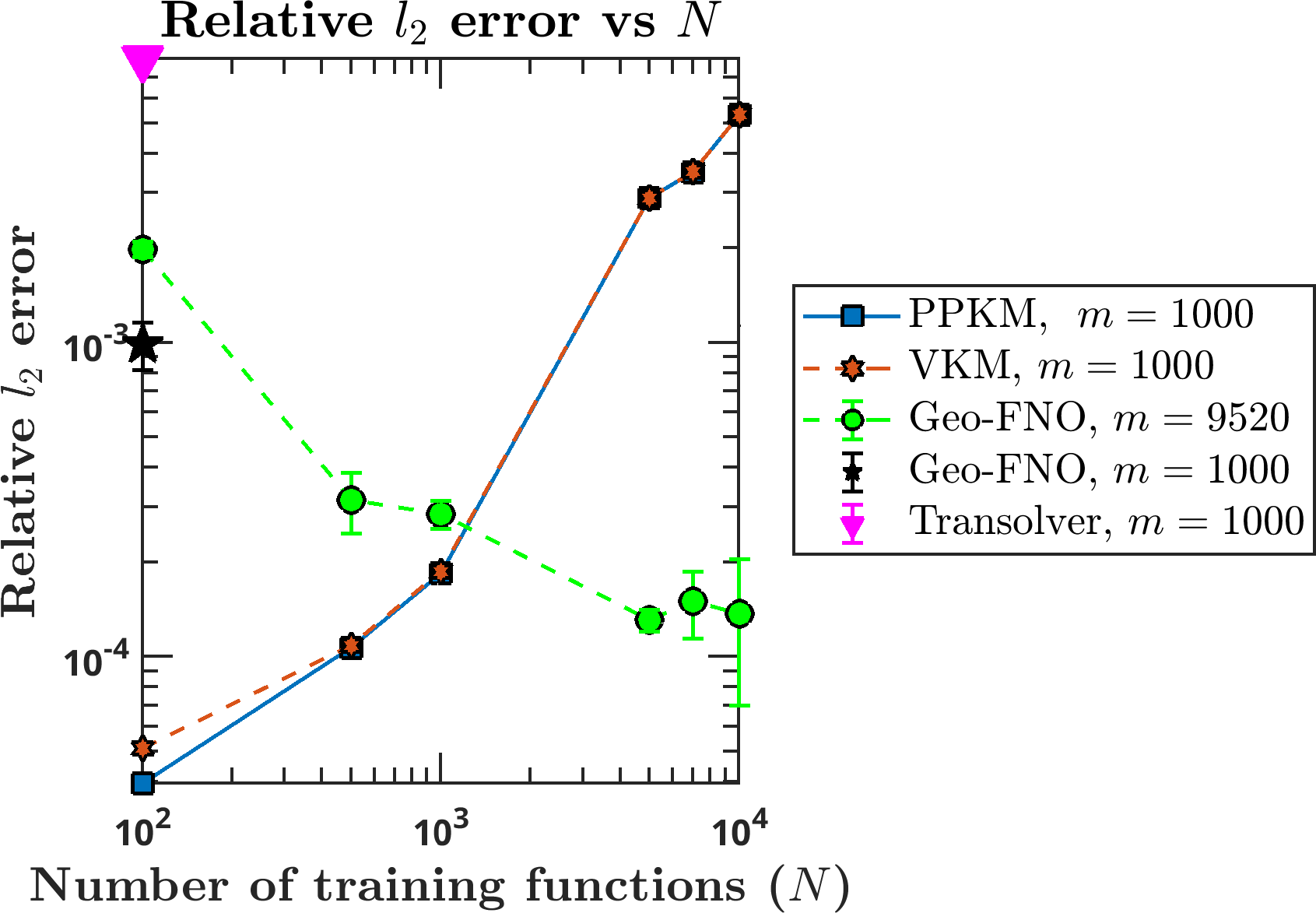}
        \caption{}
        \label{fig:flow_cylinder_laminar_errors}
    \end{subfigure} &
    \begin{subfigure}{0.52\linewidth}
        \centering
        \includegraphics[width=\linewidth]{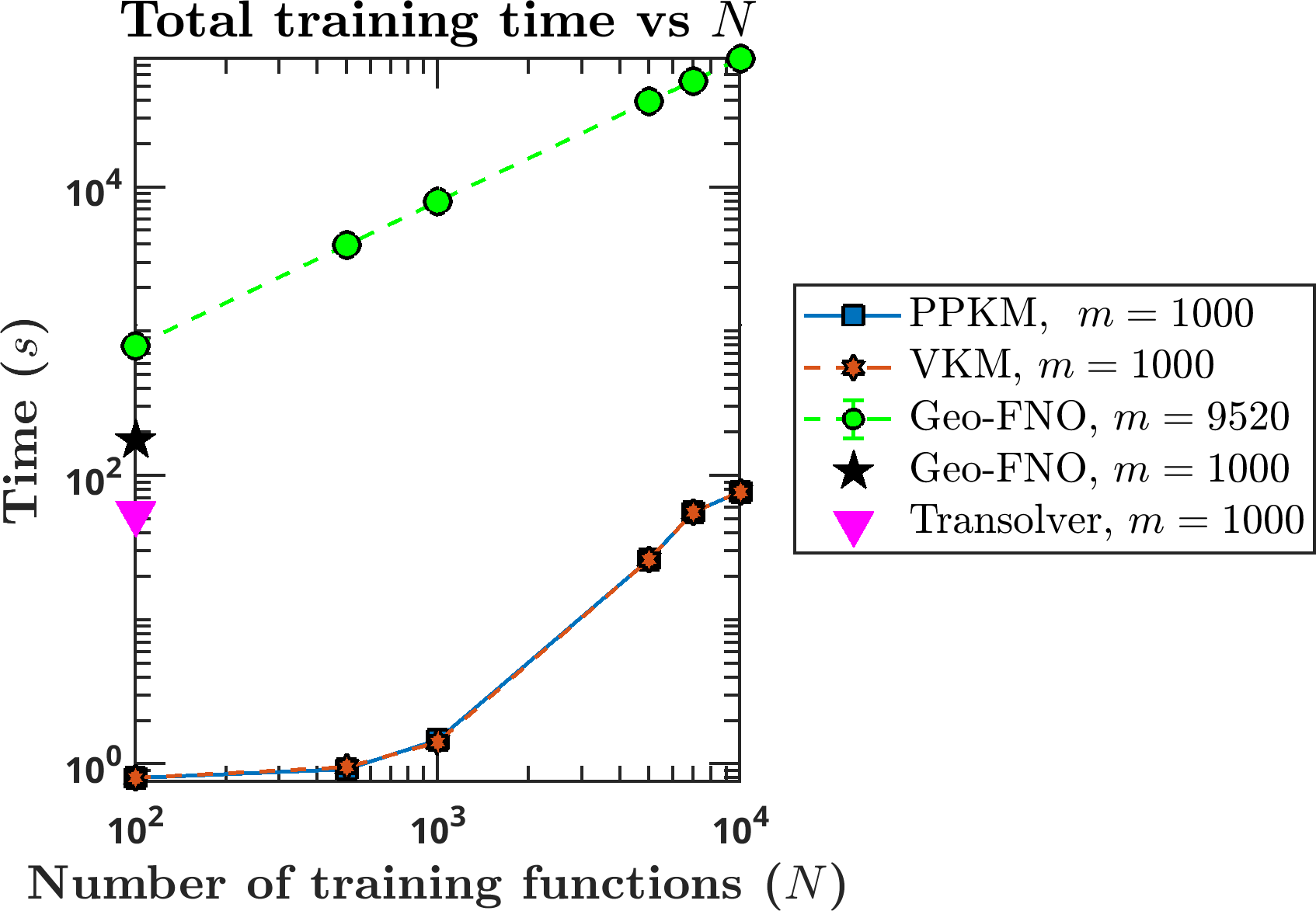}
        \caption{}
        \label{fig:flow_cylinder_laminar_times}
    \end{subfigure}
    \end{tabular}
    \caption{The 2D laminar flow past a cylinder problem. (\textbf{A}) and (\textbf{B}) show examples of an input function (the initial velocity) and an output function (the final velocity), respectively, from the vortex shedding regime. (\textbf{C}) and (\textbf{D}) show the test relative $\ell_2$ errors and training runtimes as functions of $N$ for the vortex shedding regime. (\textbf{E}) and (\textbf{F}) show the same results for the regime without vortex shedding.}
    \label{fig:flow_cylinder}
\end{figure}
This classical problem~\citep{tritton1959experiments, jackson1987finite, li1991numerical, rajani2009numerical} describes laminar fluid flow past a 2D cylinder of radius $0.8$ centered at $(1.5, 10)$ within a rectangular domain $\Omega_a = \Omega_v=[0, 20] \times [0, 14]$ representing a channel. We generated a 2D mesh with $9520$ points such that the refinement near the cylinder was at least twice as fine as it was elsewhere. We prescribed an inlet BC on the left boundary, free stream BC on the top and bottom boundaries (set to be the same as the inlet velocity), a no slip BC on the cylinder, and a zero pressure BC on the right boundary. We generated simulations in two flow regimes, with and without vortex shedding~\citep{provansal1987benard} in the wake of the cylinder. The IC was varied across simulations, but always set to a constant velocity everywhere sampled from $\mathcal{N}_{[0.1, 0.26]}(0.18, 0.026)$ for the case without vortex shedding; and from $\mathcal{N}_{[0.45, 0.8]}(0.625, 0.1)$ with vortex shedding. The simulations were run to $T=10$ with a time-step $\Delta t=10^{-3}$. We then learned the operator map $\G: \bu(x, 0) \rightarrow \bu(y, 10)$. The Reynolds numbers for these flow regimes and the relevant configuration details are provided in row~1 in Table~\ref{tab:exp_config}. Figures~\ref{fig:flow_cylinder_input} and~\ref{fig:flow_cylinder_output} show examples of input and output functions respectively.

The errors reported in Figure~\ref{fig:flow_cylinder_shedding_errors} show that in the vortex shedding case, our method consistently performed the best, with lower errors than the neural operators by up to 2 orders of magnitude for the same $m$ and by up to 2.5 orders of magnitude when the neural operators use a much larger $m$. 
In the flow without any vortex shedding, our method showed errors (Figure~\ref{fig:flow_cylinder_laminar_errors}) increasing with $N$. We suspect that importance sampling of the input functions would restore the proper convergence behavior. 
However, the lowest error was still achieved by our method at $N=100$. Further, Figures~\ref{fig:flow_cylinder_shedding_times} and~\ref{fig:flow_cylinder_laminar_times} show that our method trained orders of magnitude faster than all the baseline methods.
\subsection{2D spacetime Taylor--Green vortices}
\label{sec:taylor_green_spacetime}
\begin{figure}[!htpb]
    \centering
    \begin{tabular}{ccc}
    &
    \begin{subfigure}{0.37\linewidth}
        \centering
        \includegraphics[width=\linewidth]{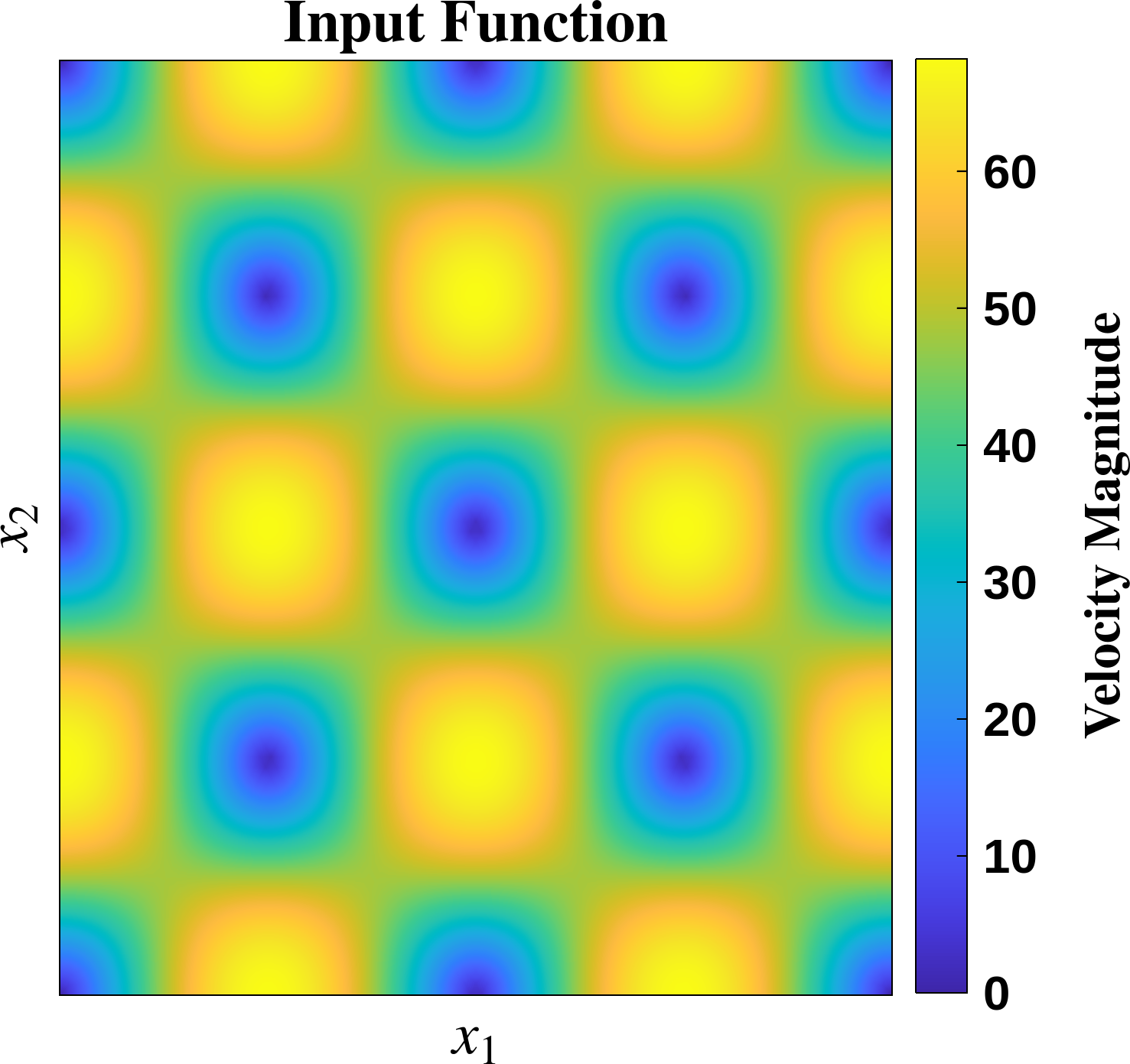}
        \caption{}
        \label{fig:taylor_green_input}
    \end{subfigure} &
    \begin{subfigure}{0.37\linewidth}
        \centering
        \includegraphics[width=\linewidth]{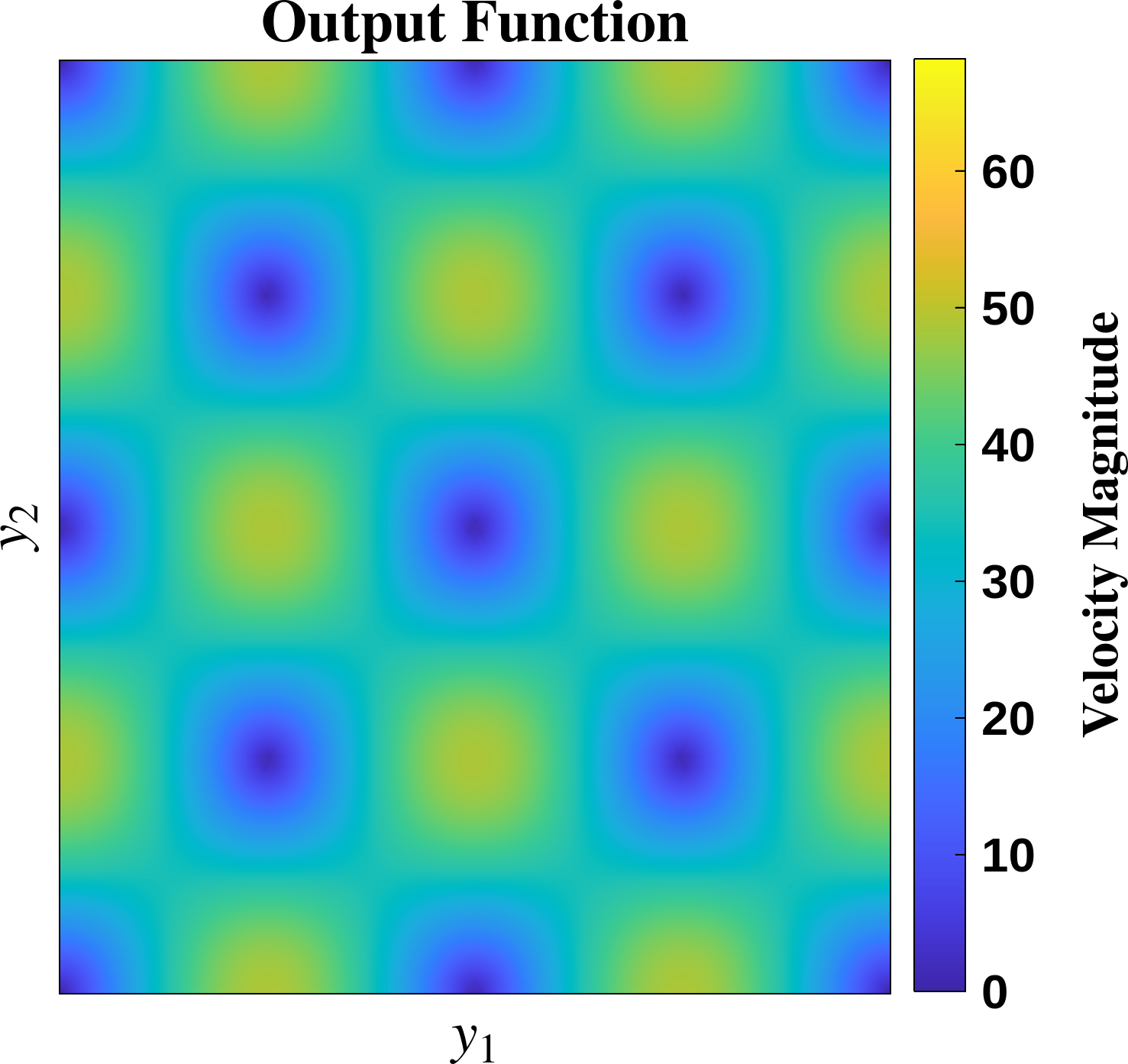}
        \caption{}
        \label{fig:taylor_green_output}
    \end{subfigure} \\
    \raisebox{3ex}{\rotatebox{90}{Initial velocity to final velocity}} &
    \begin{subfigure}{0.52\linewidth}
        \centering
        \includegraphics[width=\linewidth]{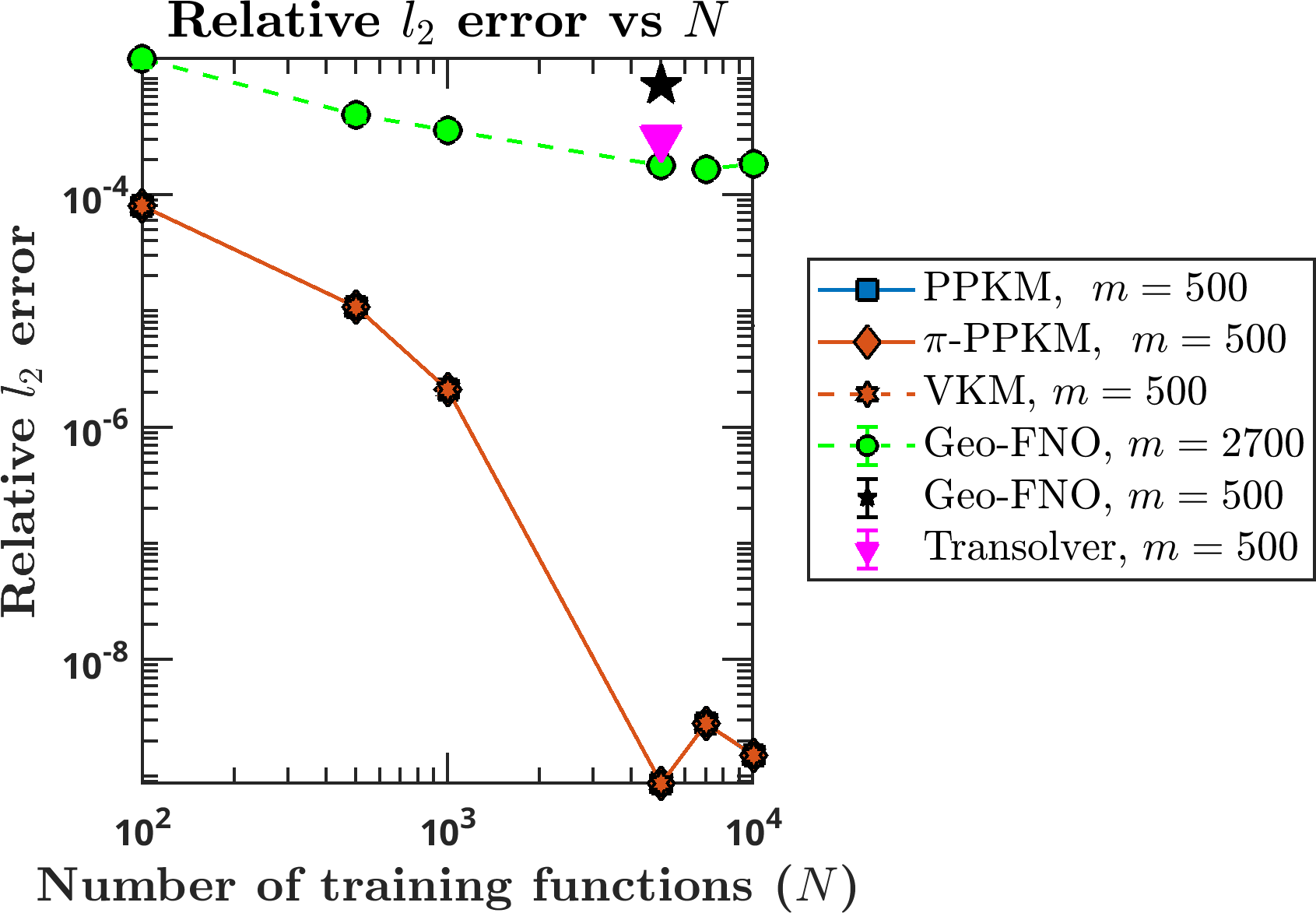}
        \caption{}
        \label{fig:taylor_green_time_errors}
    \end{subfigure} &
    \begin{subfigure}{0.52\linewidth}
        \centering
        \includegraphics[width=\linewidth]{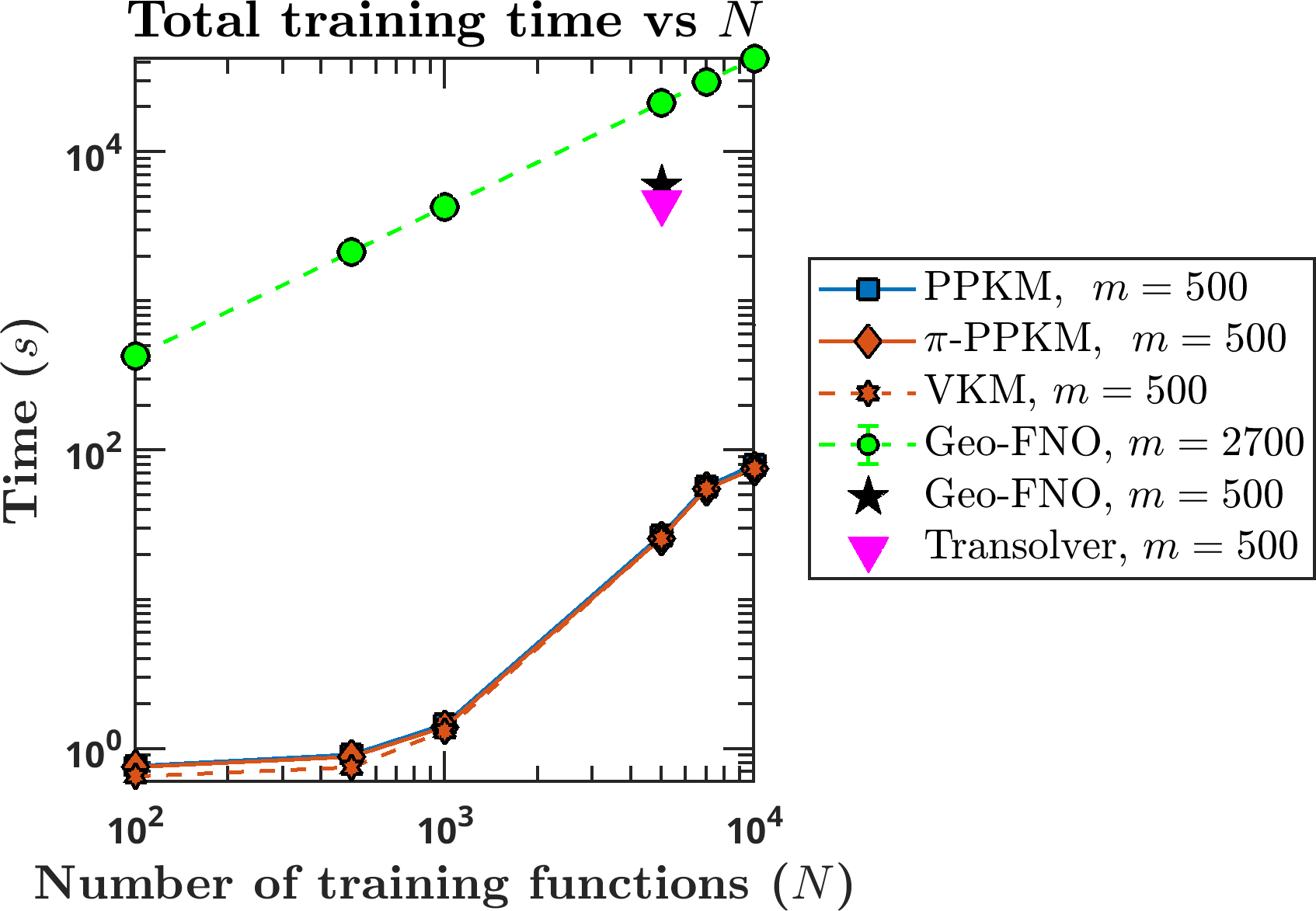}
        \caption{}
        \label{fig:taylor_green_time_times}
    \end{subfigure} \\
    \raisebox{3ex}{\rotatebox{90}{Flow parameters to final velocity}} &
    \begin{subfigure}{0.52\linewidth}
        \centering
        \includegraphics[width=\linewidth]{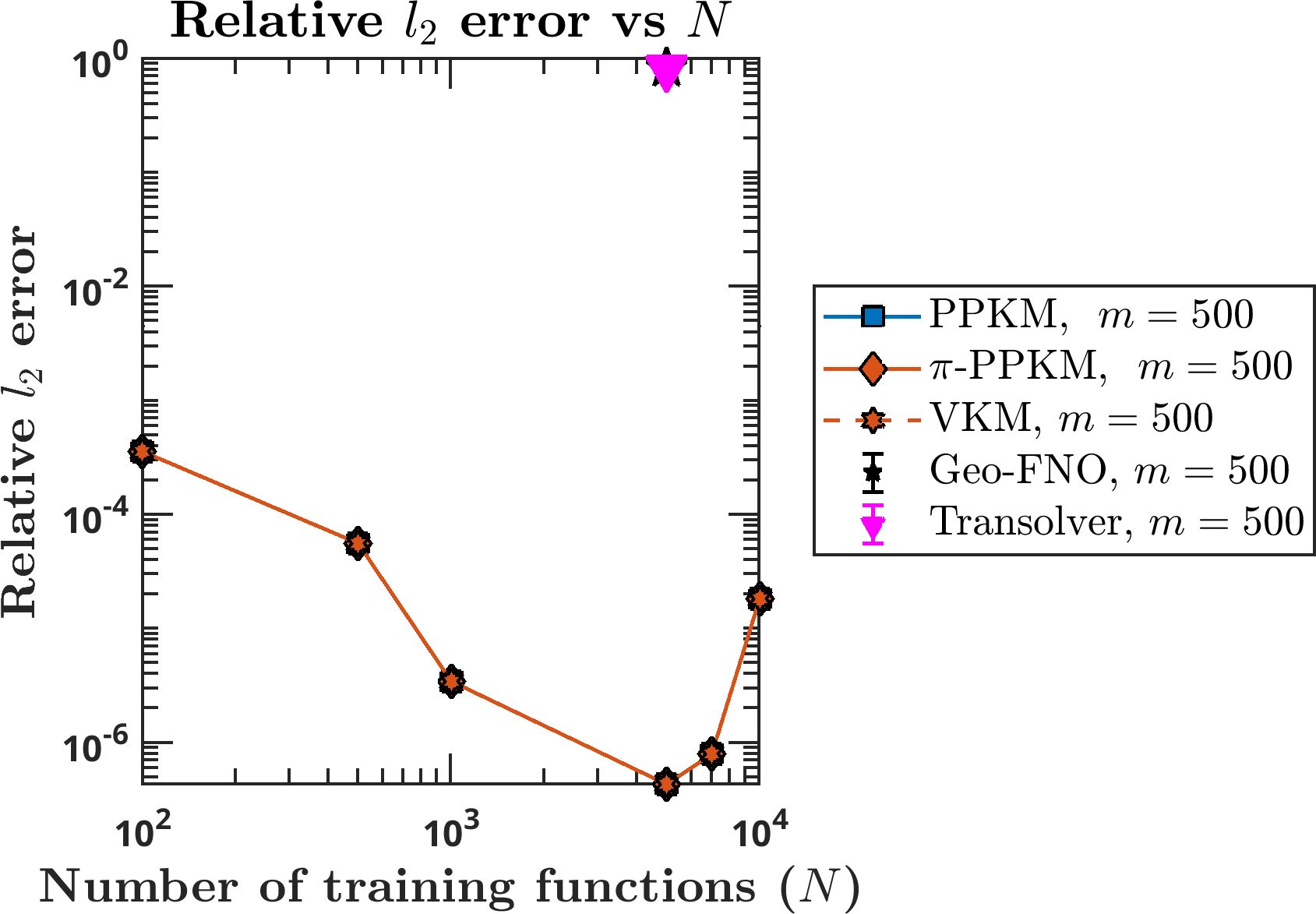}
        \caption{}
        \label{fig:taylor_green_time_coeffs_errors}
    \end{subfigure} &
    \begin{subfigure}{0.52\linewidth}
        \centering
        \includegraphics[width=\linewidth]{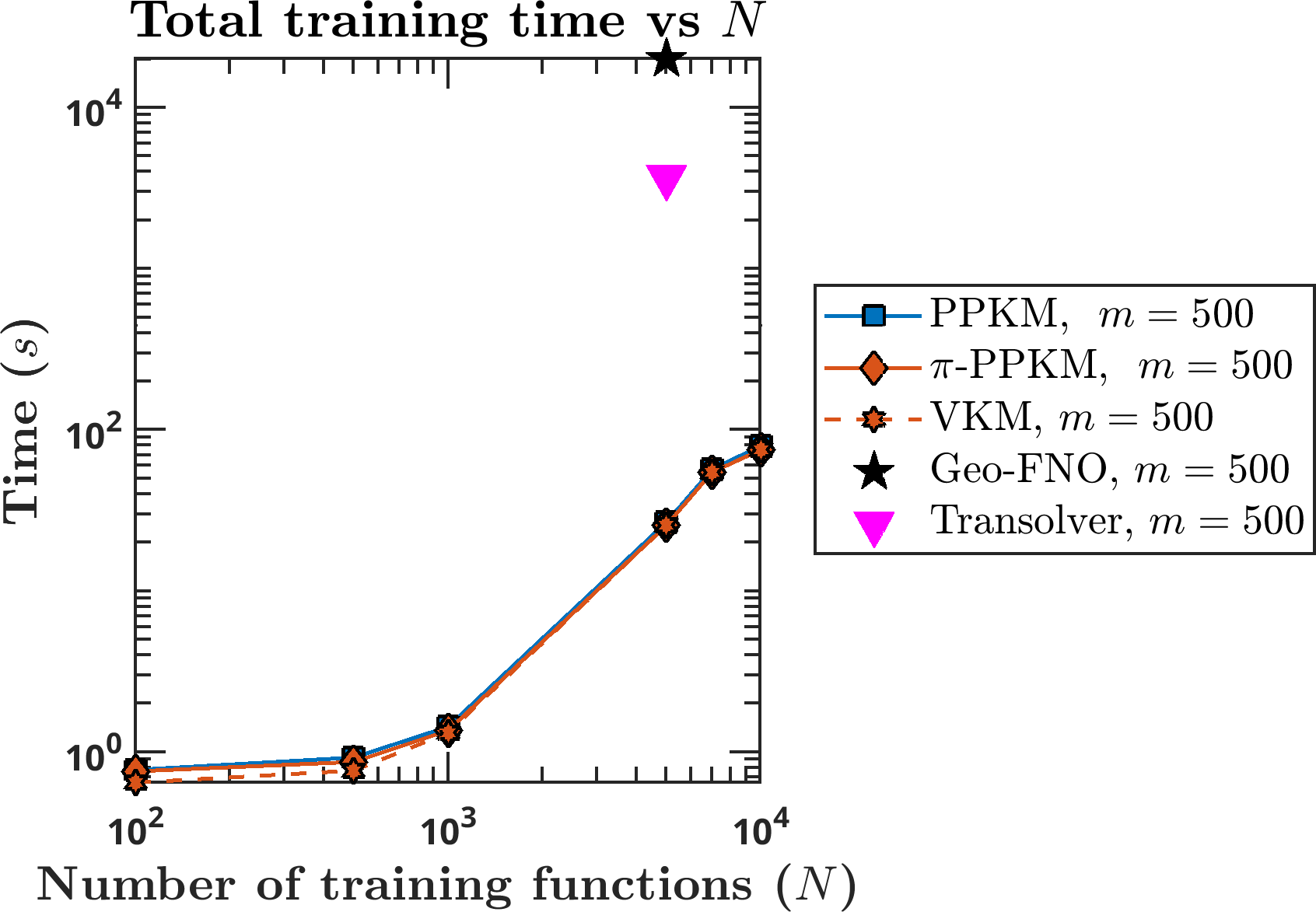}
        \caption{}
        \label{fig:taylor_green_time_coeffs_times}
    \end{subfigure}
    \end{tabular}
    \caption{The 2D laminar Taylor--Green vortices problem for the \textbf{spacetime} operator map. (\textbf{A}) and (\textbf{B}) show examples of an input function (the initial velocity) and an output function (the final velocity) at time $T=1$, respectively. Here, the output functions are snapshots of the velocity field at four timesteps (see Section~\ref{sec:taylor_green_spacetime} for details). (\textbf{C}) and (\textbf{D}) show the test relative $\ell_2$ errors and training runtimes as functions of $N$ for the operator map from the initial velocity to the final velocity at the four timesteps. (\textbf{E}) and (\textbf{F}) show the same results for the operator map from the flow parameters to the final velocity at the four timesteps.}
    \label{fig:taylor_green_spacetime}
\end{figure}
Taylor--Green vortex flow~\citep{taylor1937mechanism, kim1985application} describes the 2D laminar flow of a decaying periodic vortex in the torus ${\Omega_a = \Omega_v=[0, 2\pi]^2}$ (due to periodic boundary conditions). The flow admits an analytic solution ${\bu = [u_1, u_2]}$ where $u_1(x) = A \sin(x_1) \cos(x_2) e^{-2 \nu t}$, $u_2(x) = A \cos(x_1) \sin(x_2) e^{-2 \nu t}$. We randomly sampled $A \sim \mathcal{N}_{[0.1, 80]}(40, 30)$ and $\nu \sim \mathcal{N}_{[0.0001, 1]}(0.006, 0.1)$; for the domain, we used a 2D triangular mesh with $7477$ points. Here, the PPKM analytically enforces both incompressibility and spatial periodicity; see Section \ref{sec:spatial_kernel} for details. We then sought to recover the operator map given by ${\G_1: \bu(x, 0) \rightarrow \bu(y, t), \ t \in T}$, which maps an initial condition to $\Omega_v \times T$ where $T = [0.7, 0.8, 0.9, 1]$. We also alternatively learned the parametric operator map ${\G_2: \tau \rightarrow \bu(y, t), \ t \in T}$ where $\Omega_a = \tau \subset \mathbb{R}^2$ is the parameter space of $A$ and $\nu$. We report the relative $\ell_2$ error and the maximum pointwise divergence averaged over the four timesteps. Example input and output (at $T=1$) functions are shown in Figures~\ref{fig:taylor_green_input} and~\ref{fig:taylor_green_output}. Results for the problem focusing on the spatial operator map are shown in Appendix~\ref{sec:taylor_green}.

Due to limitations in computational resources, we were unable to report neural operator errors for different $N$ for $\G_2$ in this problem and in Section~\ref{sec:taylor_green}. Our method significantly outperformed the neural operator baselines in this problem, achieving up to 4.5 orders of magnitude lower errors (Figures~\ref{fig:taylor_green_time_errors} and~\ref{fig:taylor_green_time_coeffs_errors}) and up to 2.5 orders of magnitude faster training times (Figures~\ref{fig:taylor_green_time_times} and~\ref{fig:taylor_green_time_coeffs_times}). Since this problem had no numerical noise in the data set, all variations of the PPKM and the VKM achieve similar errors as they are all interpolatory; however, only the PPKM analytically satisfies periodicity and incompressibility. 
\subsection{3D species transport}
\label{sec:species_transport}
\begin{figure}[!htpb]
    \centering
    \begin{subfigure}{0.3\linewidth}
        \centering
        \includegraphics[width=\linewidth]{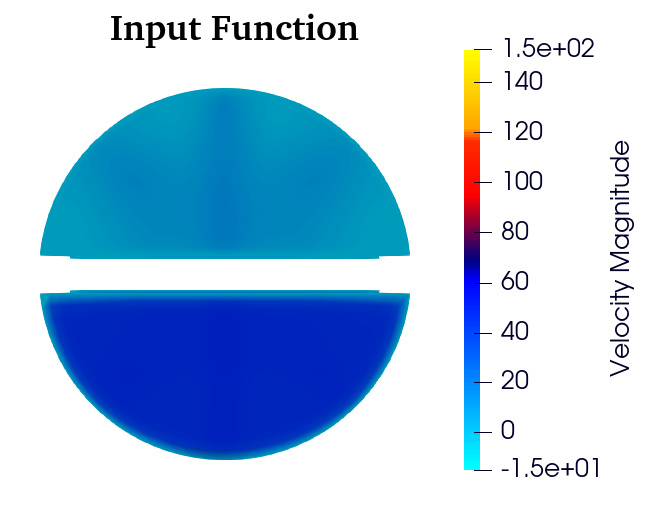}
        \caption{}
        \label{fig:species_transport_input}
    \end{subfigure} \\
    \begin{subfigure}{0.65\linewidth}
        \centering
        \includegraphics[width=\linewidth]{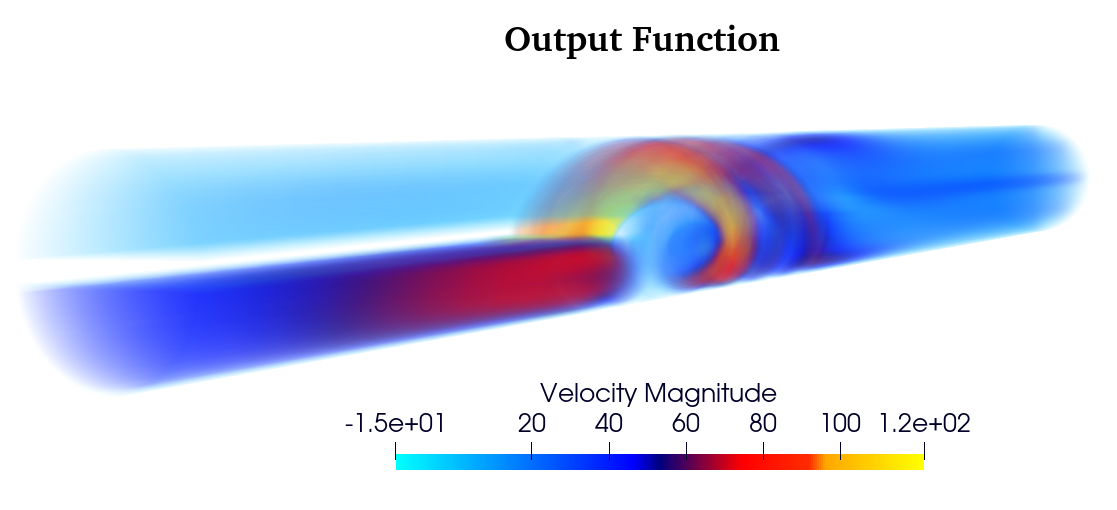}
        \caption{}
        \label{fig:species_transport_output}
    \end{subfigure} \\
    \begin{subfigure}{\linewidth}
        \centering
        \begin{tikzpicture}[>=Stealth, thick, scale=0.9, transform shape]
            \draw[black, very thick] (0,0) -- (10,0);  
            \draw[black, very thick] (10,-0.02) -- (10,2.02); 
            \draw[black, very thick] (10,2) -- (0,2);  
            \draw[black, very thick] (0,2.02) -- (0,1.18); 
            \draw[black, very thick] (0,0.82) -- (0,-0.02); 

            \draw[black, very thick] (0,1.2) -- (1.2,1.2);
            \draw[black, very thick] (0,0.8) -- (1.2,0.8);

            \draw[->, black, very thick] (-0.5,1.5) -- (0.5,1.5);
            \node[left, black] at (-0.5,1.5) {Inlet gas};

            \draw[->, black, very thick] (9.5,1.5) -- (10.4,1.5);
            \node[right, black] at (10.4,1.5) {Outlet};

            \draw[->, black, very thick] (-0.5,0.5) -- (0.5,0.5);
            \node[left, black] at (-0.5,0.5) {Inlet air};

            \draw[black, very thick] (3,2) -- (3,0);
            \node[above] at (3,2.1) {Blade 1};

            \draw[black, very thick] (5,2) -- (5,0);
            \node[above] at (5,2.1) {Blade 2};

            \draw[black, very thick] (7,2) -- (7,0);
            \node[above] at (7,2.1) {Blade 3};

            \draw[black, thick, dotted] (4,2) -- (4,0);
            \draw[black, thick, dotted] (6,2) -- (6,0);
            \draw[black, thick, dotted] (8,2) -- (8,0);

            \node[black] at (9,1.1) {Inner walls};
            \draw[black, ->] (8.5,0.8) -- (8.1,0.8);

            \node[black] at (5,-0.9) {Outer walls};
            \draw[->, black, very thick] (5,-0.7) -- (5,-0.01);
        \end{tikzpicture}
        \caption{}
        \label{fig:species_transport_domain}
    \end{subfigure} \\
    \begin{tabular}{cc}
    \begin{subfigure}{0.5\linewidth}
        \centering
        \includegraphics[width=\linewidth]{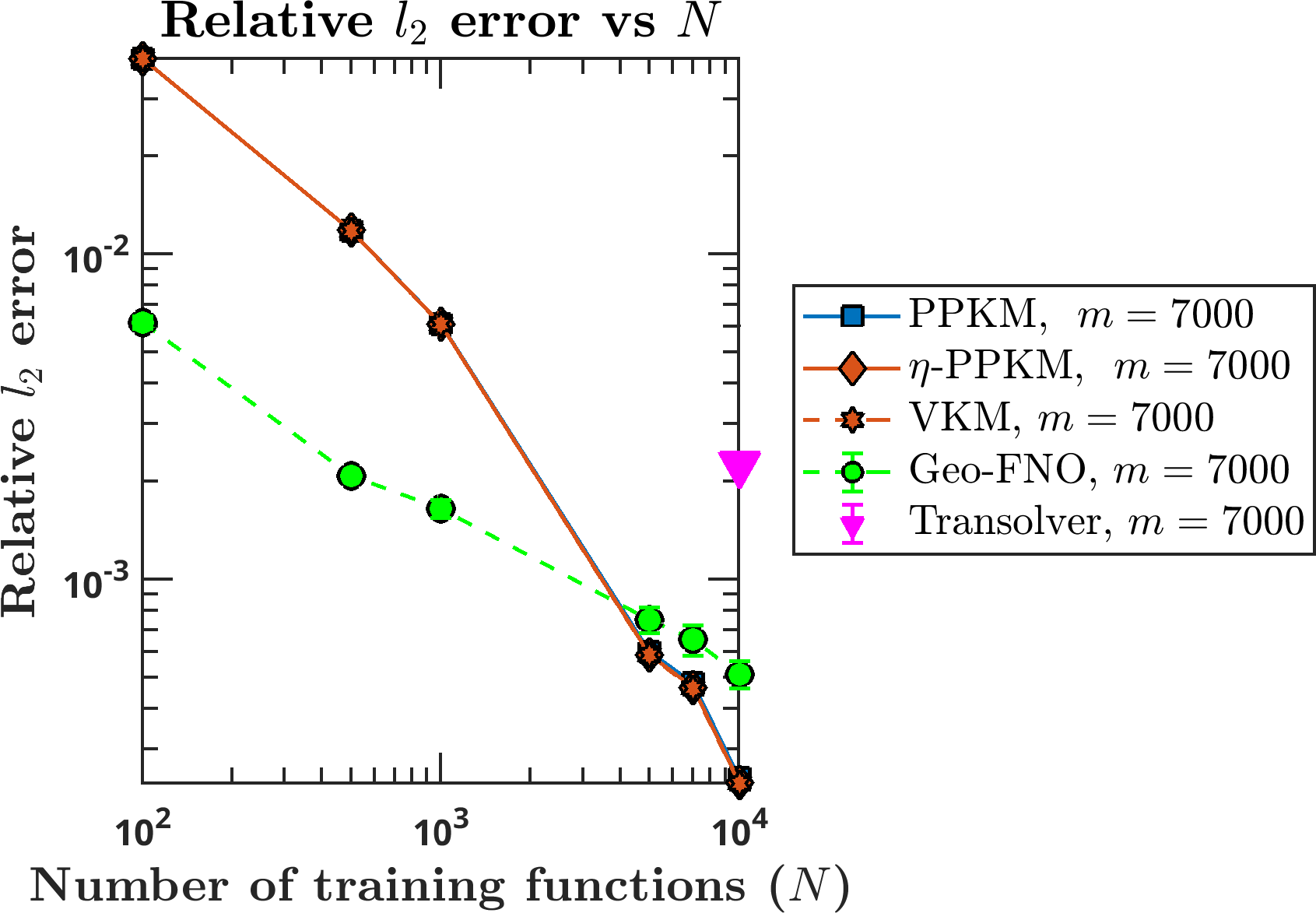}
        \caption{}
        \label{fig:species_transport_errors}
    \end{subfigure} &
    \begin{subfigure}{0.5\linewidth}
        \centering
        \includegraphics[width=\linewidth]{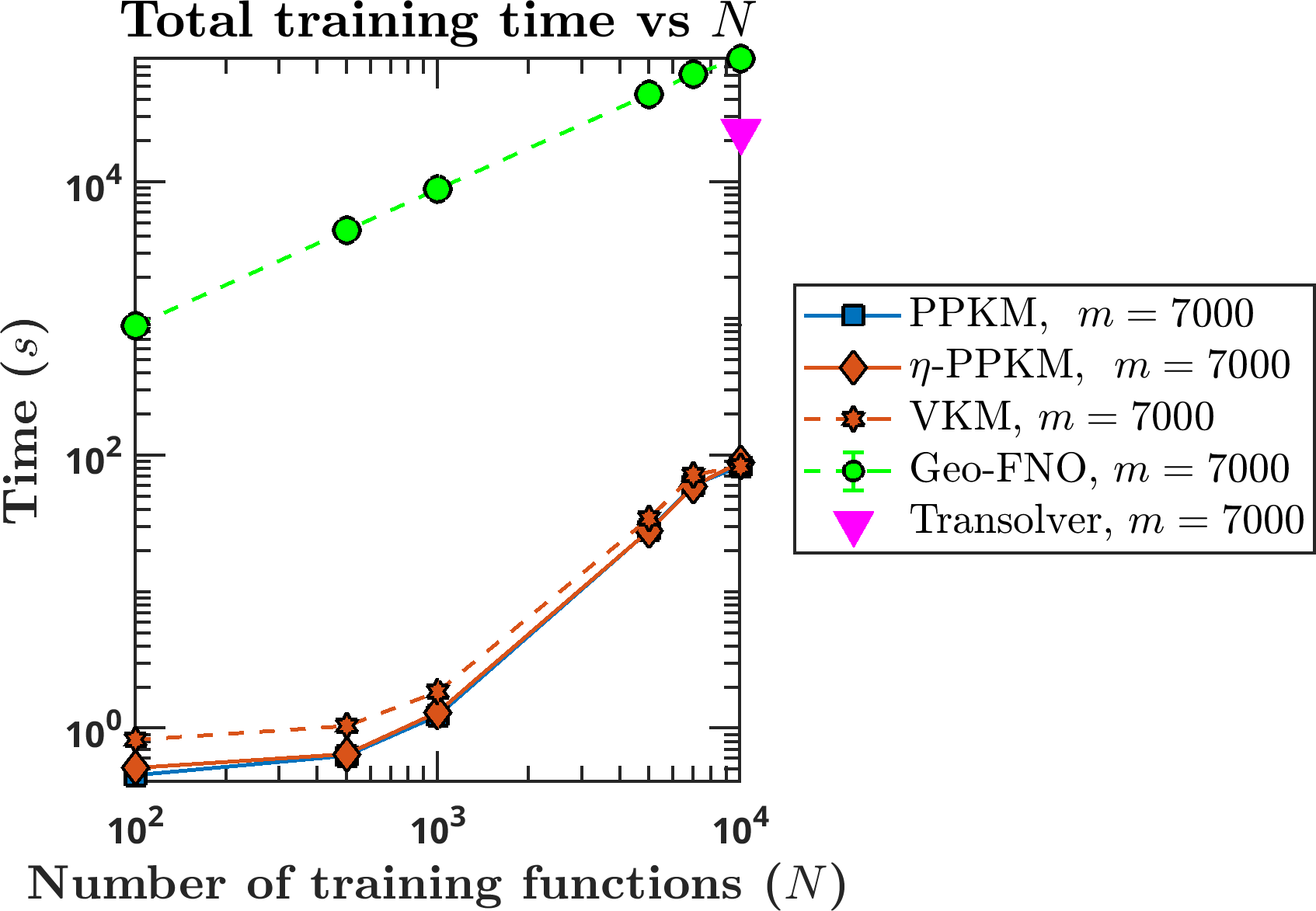}
        \caption{}
        \label{fig:species_transport_times}
    \end{subfigure}
    \end{tabular}
    \caption{The 3D turbulent species transport example. (\textbf{A}) and (\textbf{B}) show examples of an input function (the inlet velocity of the two gaseous species at the plane $z=0$) and an example output function (the final velocity), respectively. (\textbf{C}) shows a top-down \texttt{yz} view of the domain and its placement of the three blades and the inner walls. (\textbf{D}) and (\textbf{E}) show the test relative $\ell_2$ errors and training runtimes as functions of $N$.}
    \label{fig:species_transport}
\end{figure}
\begin{figure}[!htpb]
    \centering
    \includegraphics[width=0.9\linewidth]{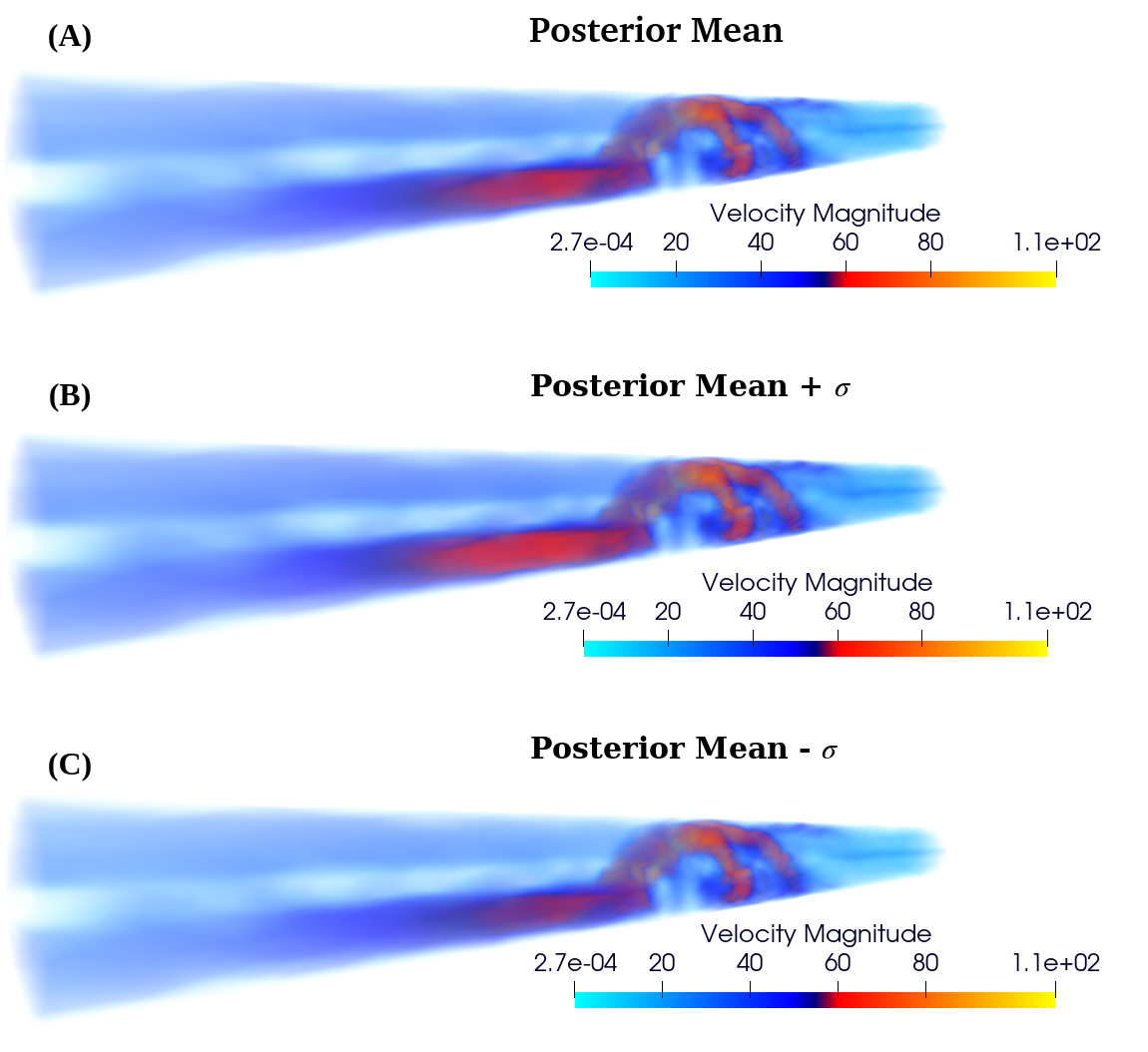}
    \caption{Uncertainty quantification for the 3D turbulent species transport example. (\textbf{A}) shows the posterior mean velocity field. (\textbf{B}) and (\textbf{C}) show the velocity fields corresponding to the one standard deviation of the posterior predictive distribution $\xi$. Section~\ref{sec:gp_results} provides a discussion of these results.}
    \label{fig:uqfigure}
\end{figure}
Next, we investigated operator learning on a 3D turbulent flow problem~\footnote{\url{https://su2code.github.io/tutorials/Inc_Species_Transport_Composition_Dependent_Model}} inspired by the work of~\citet{ubal2024adjoint}. This problem models the mixing of air and methane in a static Kenics mixer~\citep{ubal2024adjoint} with three perpendicular blades twisted along the \texttt{z} axis; see Figure~\ref{fig:species_transport_domain}. 
At the $z=0$ plane, we specified the air's velocity at $y<0$ and methane's velocity at $y>0$, sampled from $\mathcal{N}_{[1, 40]}(20, 20)$; the velocity everywhere else was initialized to zero. The three blades, inner wall, and outer wall used no-slip BCs; the outlet boundary at the plane $z=0.26$ used a zero pressure BC. We used a maximum of $7000$ approximate kernel Fekete points due to memory constraints. We ran the simulation until $T=0.5$ with $\Delta t=0.005$. We learned the operator $\G: \bu^{\mathrm{inlet}} \rightarrow \bu(y, 0.5)$. T
We show examples of input and output functions in Figures~\ref{fig:species_transport_input} and~\ref{fig:species_transport_output} respectively.

We report the errors in Figures~\ref{fig:species_transport_errors}. The Geo-FNO achieved lower errors at smaller values of $N$, but was overtaken by the PPKM for larger $N$. Despite the added computational cost due to the 3D problem domain, our method was orders of magnitude faster to train than the neural operators (as shown in Figure~\ref{fig:species_transport_times}).

\subsubsection{Uncertainty quantification}
\label{sec:gp_results}
We also demonstrate the GP-based uncertainty quantification capabilities of our framework here, described in Section~\ref{sec:uq}. We solved~\eqref{eq:gp_posterior_mean} and~\eqref{eq:gp_posterior_covariance} to compute the posterior mean $\widebar{\bb}$ and covariance $\Sigma^\star$, respectively, of $\xi$. Then, we applied a Cholesky factorization on the covariance; $\Sigma^\star = L L^T$. Finally, we used the $\Phi^\eta$ kernel to obtain velocity fields corresponding to $\pm 1$ standard distribution with the perturbed coefficients $\widebar{\bb} + L z$ and $\widebar{\bb} - L z$. For one of the 200 test functions, we show the posterior mean velocity field in Figure~\ref{fig:uqfigure}A and the corresponding velocity fields within $\pm 1$ standard deviation in Figures~\ref{fig:uqfigure}B and~\ref{fig:uqfigure}C. The maximum spatial variance for this test function within one standard deviation was $10^{-5}$. All three velocity fields are divergence-free and encoded with turbulence power laws by construction.
\subsection{3D flow past an airfoil}
\label{sec:airfoil}
\begin{figure}[!htpb]
    \centering
    \begin{subfigure}{0.4\linewidth}
        \centering
        \includegraphics[width=\linewidth]{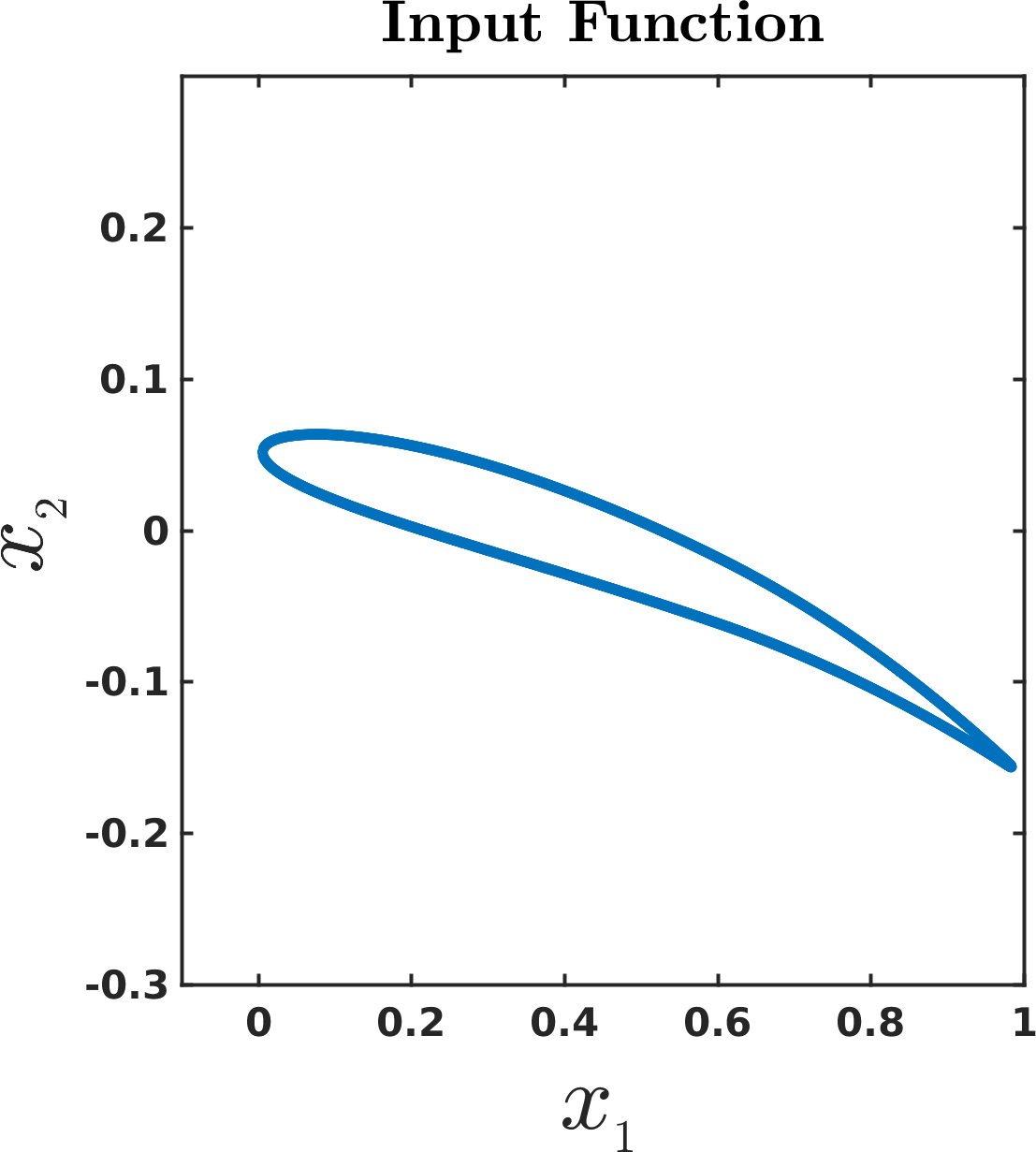}
        \caption{}
        \label{fig:airfoil_input}
    \end{subfigure} \\
    \begin{subfigure}{0.7\linewidth}
        \centering
        \includegraphics[width=\linewidth]{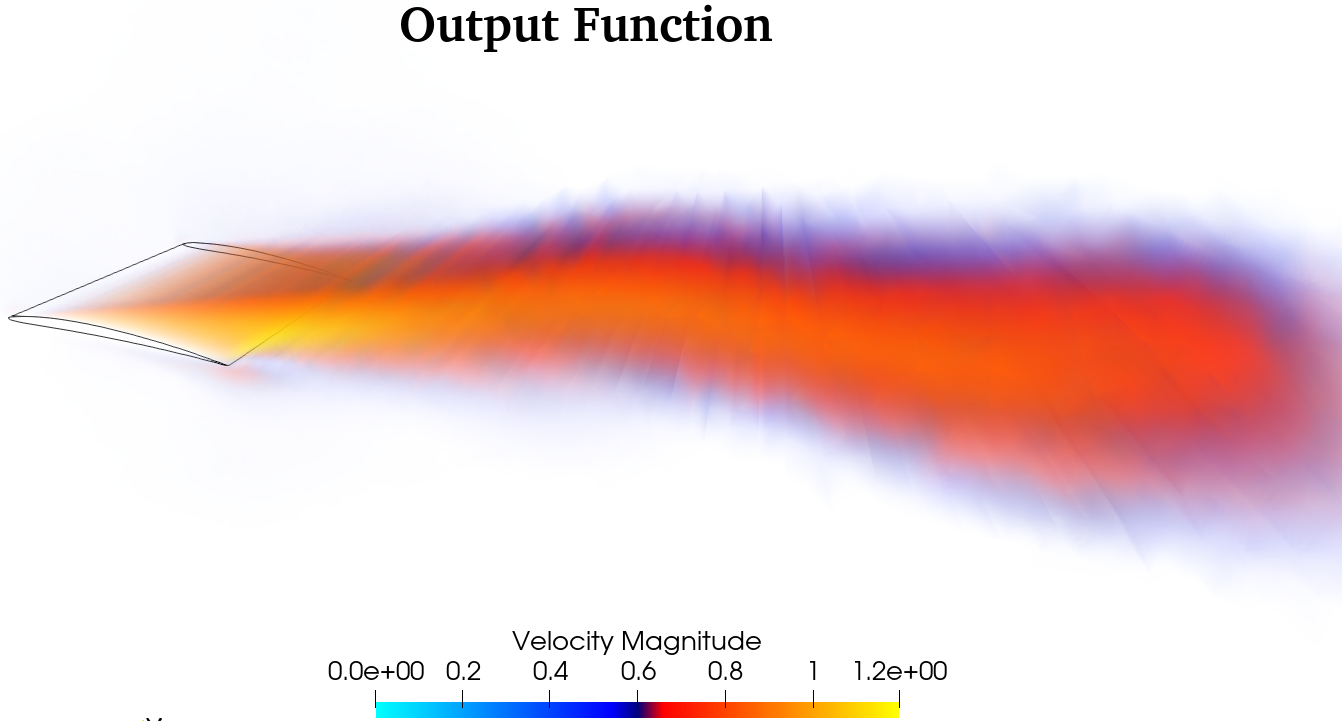}
        \caption{}
        \label{fig:airfoil_output}
    \end{subfigure}
    \begin{tabular}{cc}
    \begin{subfigure}{0.52\linewidth}
        \centering
        \includegraphics[width=\linewidth]{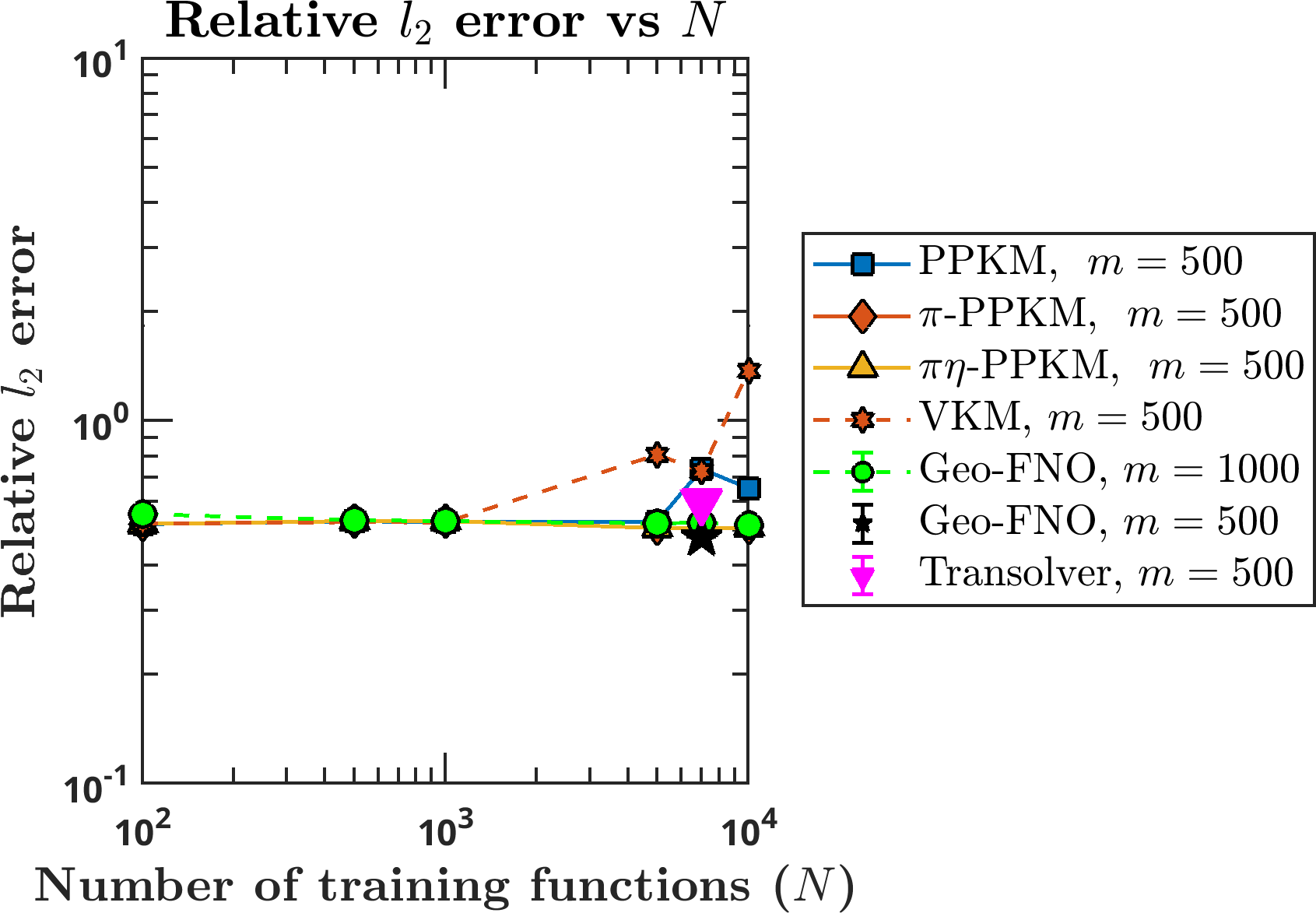}
        \caption{}
        \label{fig:airfoil_errors}
    \end{subfigure} &
    \begin{subfigure}{0.52\linewidth}
        \centering
        \includegraphics[width=\linewidth]{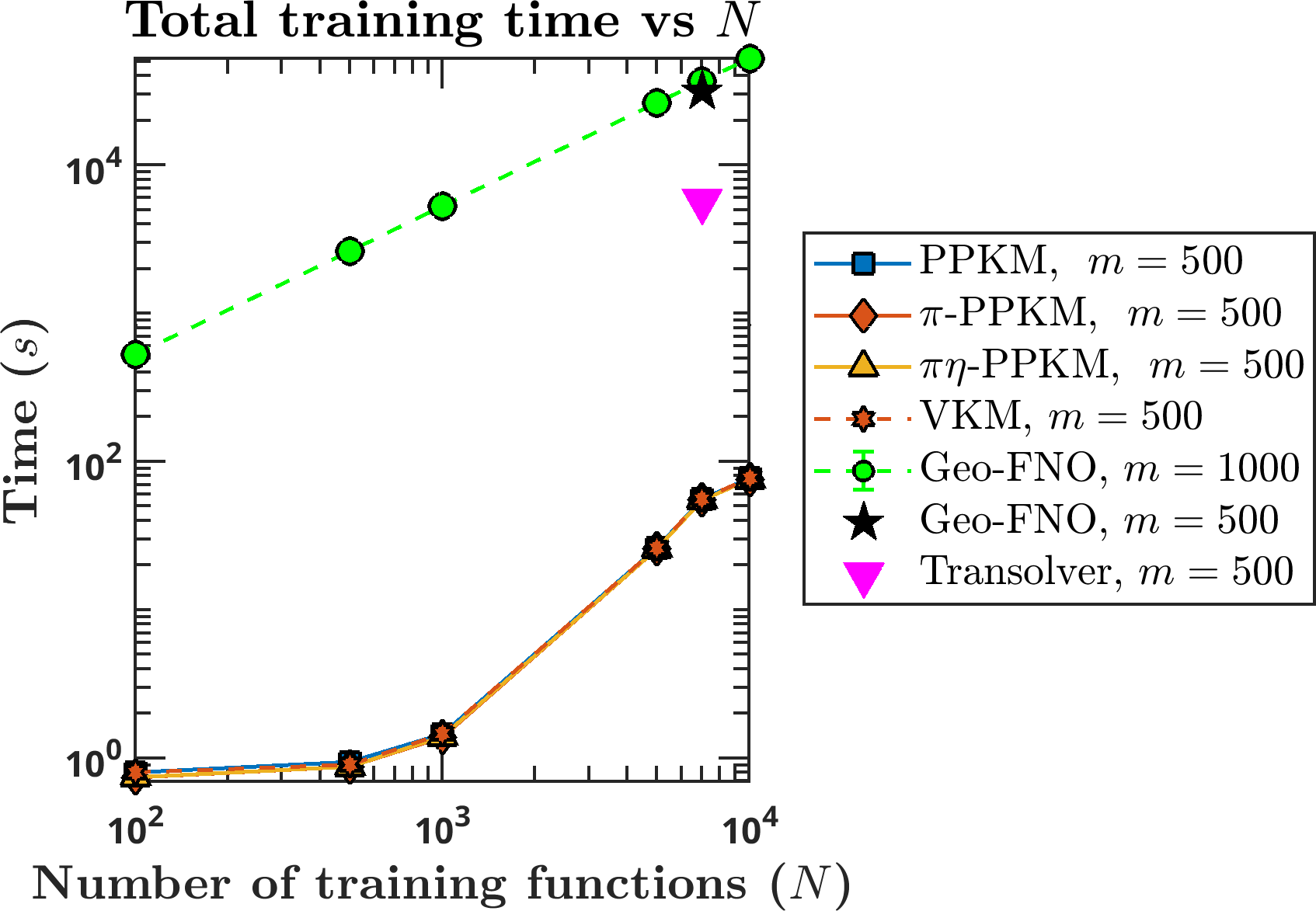}
        \caption{}
        \label{fig:airfoil_times}
    \end{subfigure}
    \end{tabular}
    \caption{The 3D turbulent flow past an airfoil. (\textbf{A}) and (\textbf{B}) show examples of an input function (the 2D set of points constituting an airfoil shape) and an example output function (the final velocity), respectively. (\textbf{C}) and (\textbf{D}) show the test relative $\ell_2$ errors and training runtimes as functions of $N$.}
    \label{fig:airfoil}
\end{figure}
Arguably the most difficult (and possibly ill-posed) problem in this work, this problem attempts to recover the turbulent flow field in the wake of a 3D airfoil as a function of the shape of the airfoil on the domain $\Omega_v = [-7, 10] \times [-7, 7] \times [0, 3]$. Let $\Omega_a = \tau$ denote the parametrization of the airfoil geometry, parametrized using the popular NACA 4-digit series~\citep{ladson1975development}. This format had three parameters; the position of the camber (measure of airfoil curvature), maximum camber, and airfoil thickness. We sampled these three parameters from the random distributions $\mathcal{N}_{[1, 8]}(3, 1.4)$, $\mathcal{N}_{[3, 7]}(6, 2)$, and $\mathcal{N}_{[10, 25]}(22, 4)$, respectively. Then, we sampled an airfoil at $2000$ two-dimensional points (used as input functions) and then extruded it in the \texttt{z}--direction with 20 uniform slices to obtain a 3D airfoil. The left and right walls had inlet BCs and zero pressure BCs, respectively; the front and back walls had periodic BCs; the top and bottom walls and the airfoil surface had no-slip BCs. We ran the simulation until $T=1$ with $\Delta t=10^{-2}$ and measured the velocity on the plane $x=3$ at $m=1000$ points. Each simulation used a different airfoil shape causing the resulting mesh and the number of input points to slightly vary. We obtained the velocity on the plane using local divergence-free interpolation; a paper on this technique is forthcoming.

We learned the operator map $\G: \tau \rightarrow \bu(y, 1)$. 
Example input and output functions are shown in Figures~\ref{fig:airfoil_input} and~\ref{fig:airfoil_output} respectively. The errors and runtimes are reported in Figures~\ref{fig:airfoil_errors} and~\ref{fig:airfoil_times} respectively. While all methods generally struggled to be accurate here, the $\pi \eta$-PPKM and $\pi$-PPKM outperformed the PPKM and VKM as $N$ increased. These results show the accuracy benefits of simultaneously enforcing multiple fluid properties in the property-preserving kernel. Using the efficiency techniques described in Section~\ref{sec:kernel_implementation}, our method trains orders of magnitude faster than the neural operators.
\subsection{Effect of the ridge Parameter $\theta$ on accuracy}
\label{sec:nugget_effect}
\begin{figure}[!htpb]
    \centering
    \begin{tabular}{cc}
    \begin{subfigure}{0.52\linewidth}
        \centering
        \includegraphics[width=\linewidth]{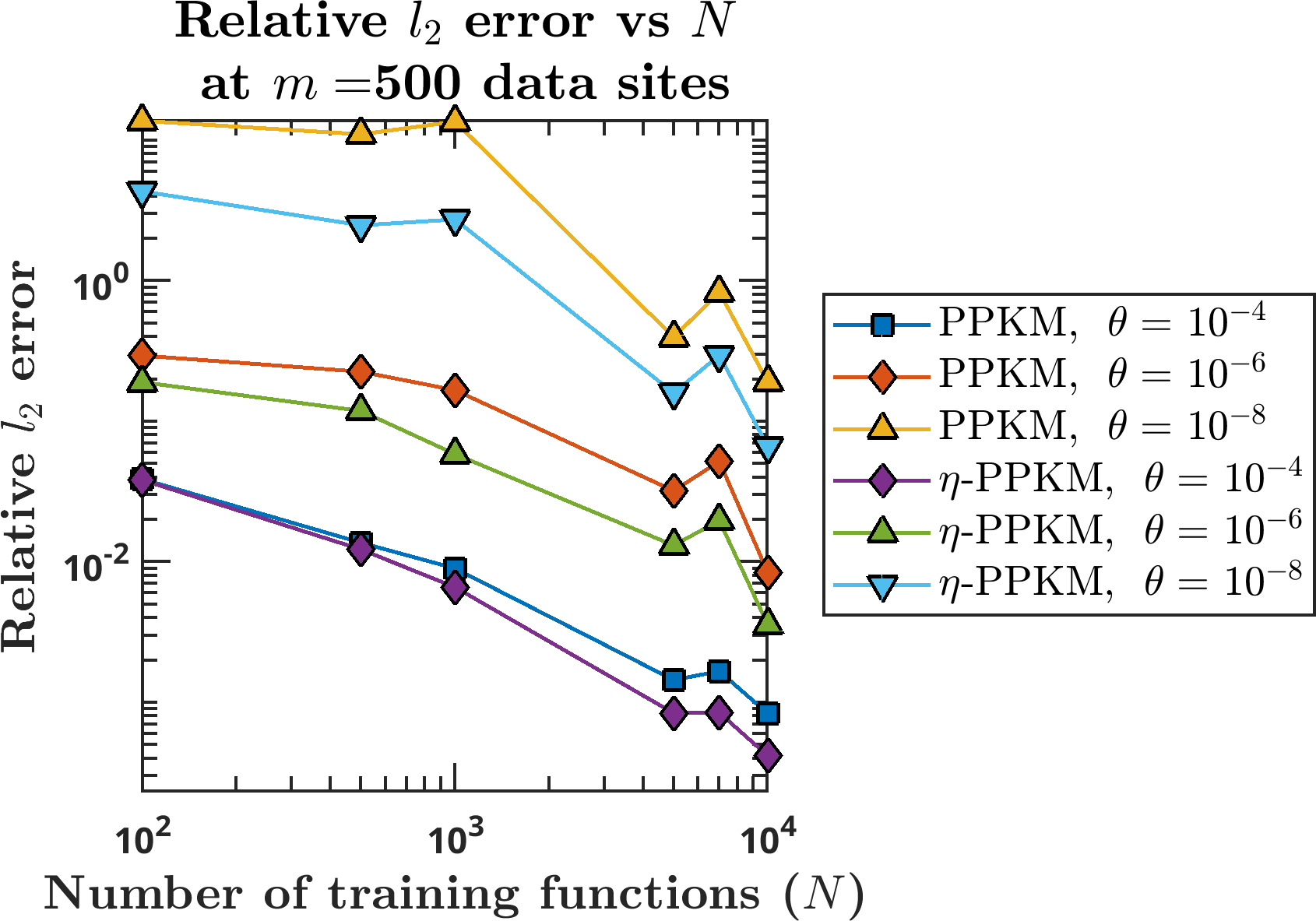}
        \caption{}
        \label{fig:species_transport_ridge_500}
    \end{subfigure} &
    \begin{subfigure}{0.52\linewidth}
        \centering
        \includegraphics[width=\linewidth]{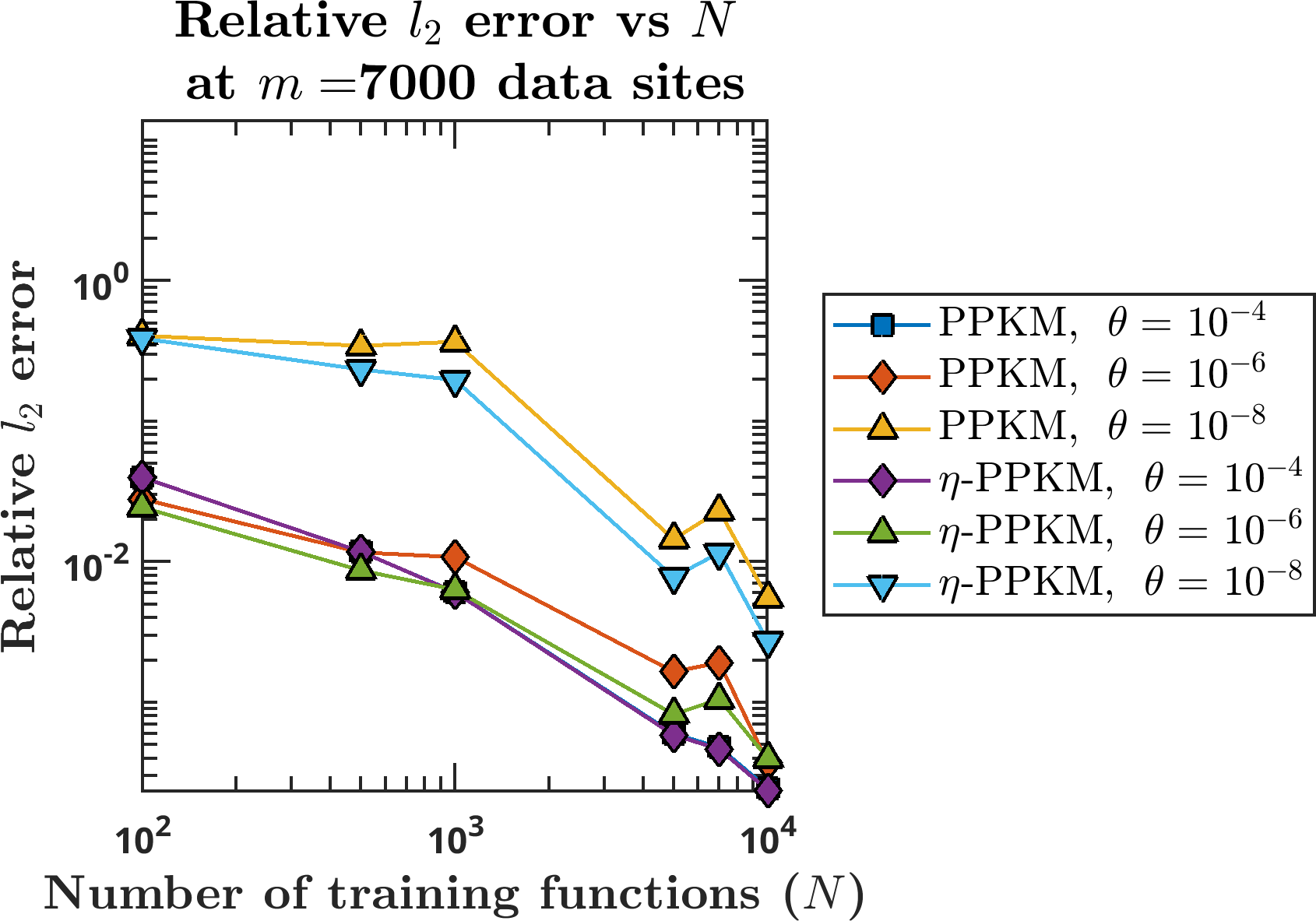}
        \caption{}
        \label{fig:species_transport_ridge_7000}
    \end{subfigure}
    \end{tabular}
    \caption{The effect of the magnitude of $\theta$ in the 3D species transport problem. We tested using $\theta=10^{-4}$, $\theta=10^{-6}$, and $\theta=10^{-8}$. (\textbf{A}) and (\textbf{B}) show the test relative $\ell_2$ errors at 500 and 7000 points, respectively, using different $\theta$.}
    \label{fig:species_transport_nugget_exp}
\end{figure}
\begin{figure}[!htpb]
    \centering
    \begin{tabular}{cc}
    \begin{subfigure}{0.52\linewidth}
        \centering
        \includegraphics[width=\linewidth]{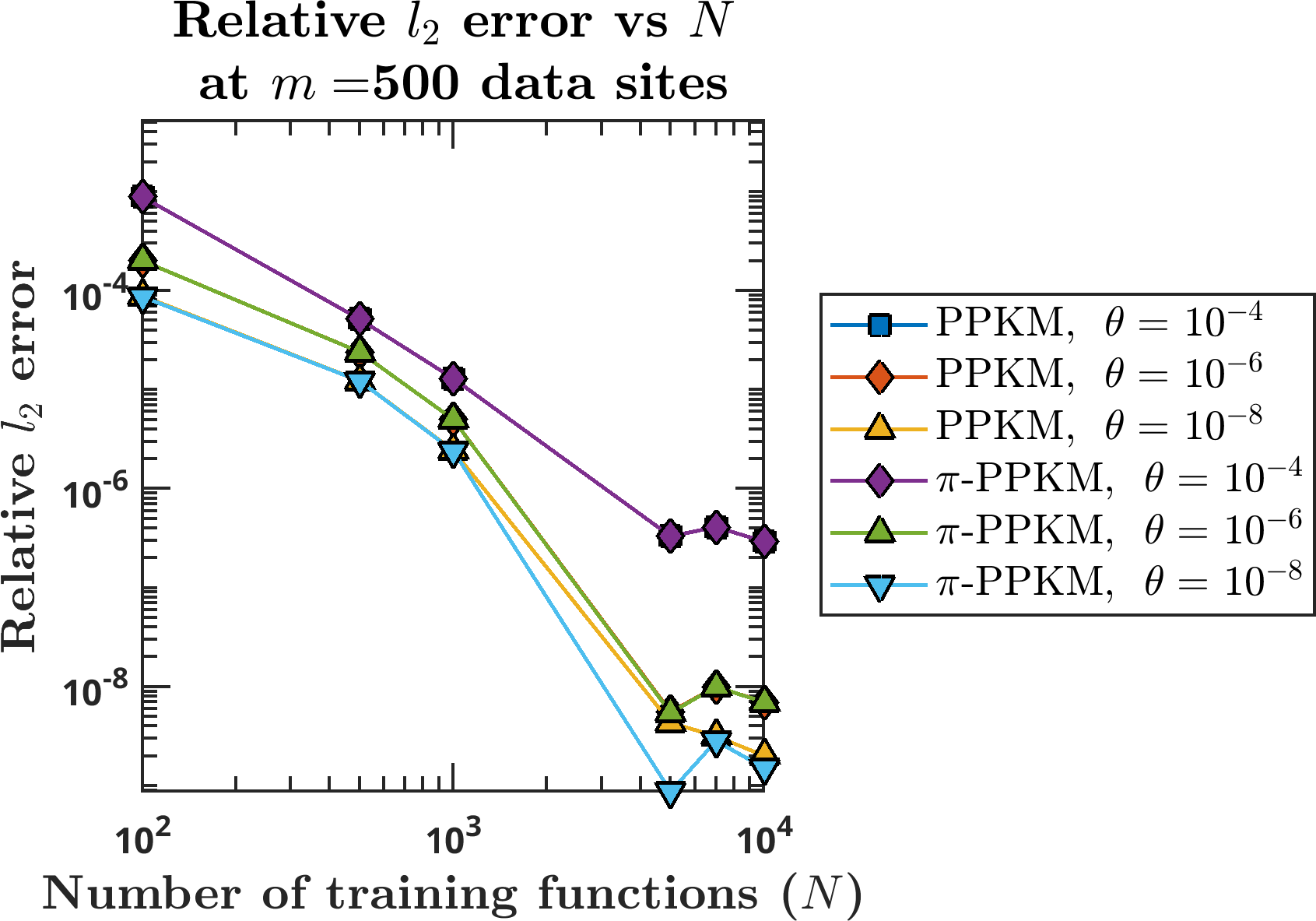}
        \caption{}
        \label{fig:taylor_green_ridge_500}
    \end{subfigure} &
    \begin{subfigure}{0.52\linewidth}
        \centering
        \includegraphics[width=\linewidth]{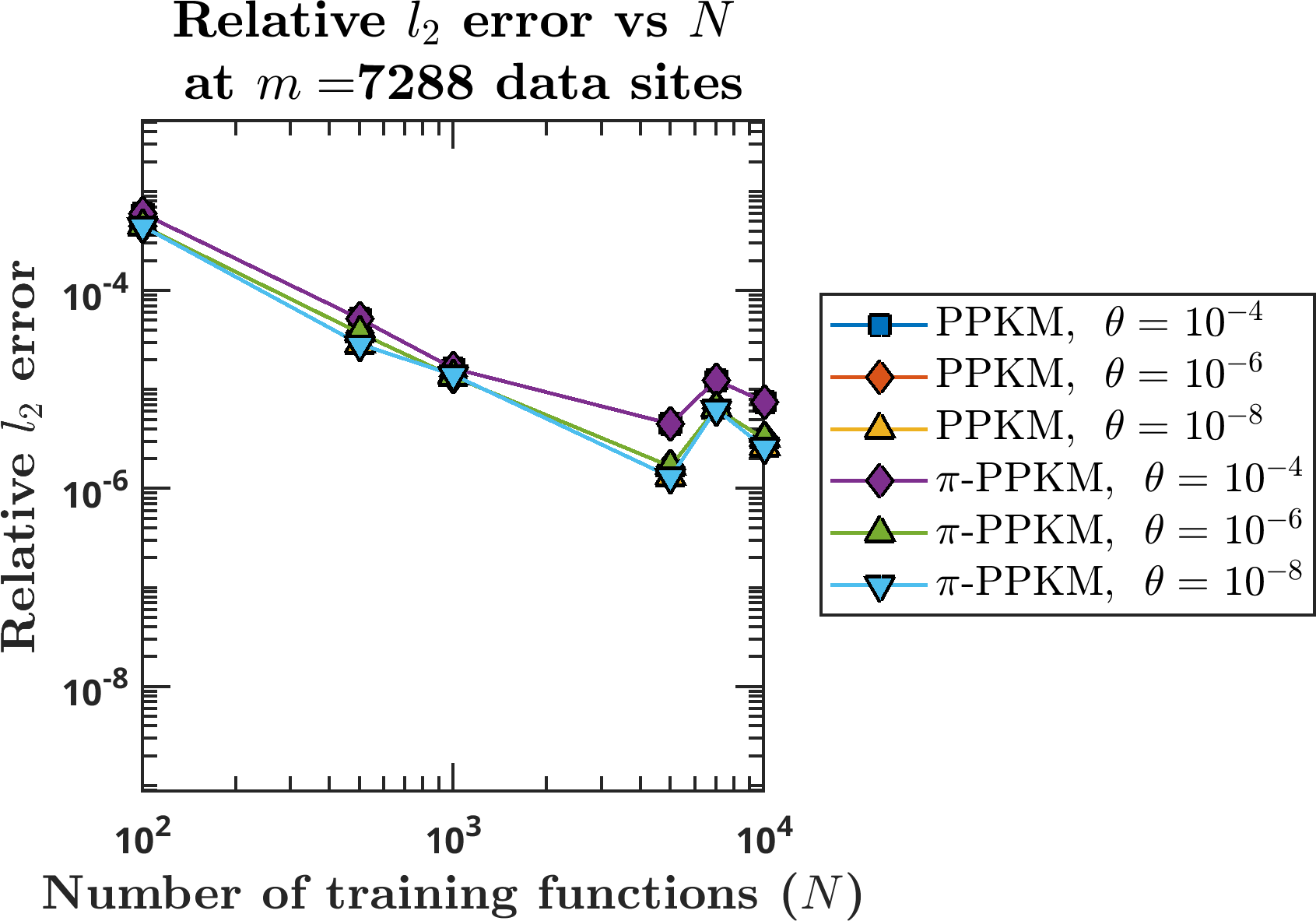}
        \caption{}
        \label{fig:taylor_green_ridge_7288}
    \end{subfigure}
    \end{tabular}
    \caption{The effect of the magnitude of $\theta$ in the 2D laminar Taylor--Green vortices problem for the purely spatial operator map. We tested using $\theta=10^{-4}$, $\theta=10^{-6}$, and $\theta=10^{-8}$. (\textbf{A}) and (\textbf{B}) show the test relative $\ell_2$ errors at 500 and 7288 points, respectively, using different $\theta$.}
    \label{fig:taylor_green_nugget_exp}
\end{figure}
As described in Section~\ref{sec:common_details}, we added a small regularization parameter (``nugget'') $\theta$ to the diagonal of the operator kernel. Across our experiments, we tested three different values of $\theta$: $10^{-4}$, $10^{-6}$, and $10^{-8}$, and reported the one with the best errors. We formed two insights from this experiment. First, for problems where the data set was generated using SU2, the operator kernel benefited greatly from a larger $\theta$ as opposed to problems where the data set was generated from analytical solutions. This can be seen in Figures~\ref{fig:species_transport_nugget_exp} and~\ref{fig:taylor_green_nugget_exp}, which show the relative $\ell_2$ errors at the coarsest and finest nodesets using all three magnitudes of $\theta$ for the 3D species transport and 2D Taylor--Green vortices problems, respectively. The effect of the magnitude of $\theta$ was more pronounced for the results on the coarser nodesets, shown in Figures~\ref{fig:species_transport_ridge_500} and~\ref{fig:taylor_green_ridge_500}, than in the finer nodesets, shown in Figures~\ref{fig:species_transport_ridge_7000} and~\ref{fig:taylor_green_ridge_7288}. This is likely due to the numerical truncation errors inherent in the solution fields, manifesting as both numerical dissipation and dispersion. These errors likely propagate to the spatial interpolation coefficients which are then interpolated by the operator kernel. 
In contrast, the Taylor--Green problems do not use SU2 as the solutions are known analytically. Consequently, adding a regularization parameter hurts the performance there. For most problems using SU2, $\theta=10^{-4}$ or $\theta=10^{-6}$ yielded the best errors. We also experimented with adding a similar $\theta$ to the property-preserving kernel to address the noise in the velocity fields, however, it exaggerated the regularization of the spatial interpolation coefficients which resulted in very poor relative $\ell_2$ errors on generalization.

\section{Discussion}
\label{sec:discussion}
The underlying motivation for this work is the observation that a trustworthy surrogate must actually satisfy incompressibility and other properties analytically, much like the PDE solvers that these surrogates aim to replace; while pointwise generalization errors are important, lower errors alone are insufficient.

Our property-preserving operator learning framework achieves this through a key idea: learning maps from input function samples to expansion coefficients of output functions in a property-preserving basis. The core of the method involves (1) a property-preserving kernel interpolant for each training output function that analytically preserves incompressibility, periodicity (when applicable), and turbulence (when applicable); and (2) an operator kernel interpolant that maps from input function samples to the interpolation/expansion coefficients obtained from (1). We chose kernels for (1) because they allowed for the \emph{simultaneous} and \emph{analytical} enforcement of multiple properties without any feature engineering; we chose kernels for (2) because kernel methods allow for rapid training at scale. Our method also allows for natural uncertainty quantification via Gaussian processes (GPs), though this is not the primary focus of this work. Our method also admits pessimistic and optimistic worst-case \emph{a priori} error estimates, as discussed in Section \ref{sec:approx-thms}.

Remarkably, our results from Section \ref{sec:results} show that the PPKM almost always outperforms state-of-the-art neural operators (and the vanilla kernel method) even on the metric of pointwise generalization error across a suite of benchmark problems (see Table \ref{tab:exp_config} for experimental configuration details). Our method was anywhere from one to six orders of magnitude more accurate on generalization. In addition, our method produces \emph{analytically} incompressible flow fields, in contrast with neural operators; Table \ref{tab:best_results} shows the latter typically exhibited absolute divergences of $O(1) - O(10^4)$ across this benchmark suite. Surprisingly, our method achieves these superior results with \textbf{only two trainable parameters}---one for the property-preserving interpolant, and one for the operator interpolant; in contrast, neural operators typically required $O(10^5)$ to $O(10^6)$ trainable parameters.

The benefits of our method extend beyond superior accuracy. The PPKM consistently trains up to five orders of magnitude faster than the neural operators, despite the use of double precision floating point (fp64) for our method and single precision floating point (fp32) for neural operators; note that it is in general inadvisable to use fp64 with neural operators due to the difficulties in storing trainable parameters. This training speedup is even more remarkable when one takes into account the fact that our method was timed on desktop GPUs while the neural operators were trained on high-end GPU servers. However, this leads to a \textbf{limitation} of our method: our inference times were typically an order of magnitude slower than neural operators; see Table \ref{tab:best_results}. We are currently exploring if this was merely due to the difference in hardware, but it is also plausible that this was due to the difference in speeds of fp32 and fp64 on modern GPUs. Nevertheless, our method remains competitive for inference.

Neural operators are able to tackle very large numbers of training functions ($N$) primarily through batching techniques, but standard kernel methods do not directly allow for batching in $N$. However, our simple and powerful streaming technique allowed us to apply the kernel method on the GPU efficiently even for $N=10,000$ training functions. In addition, our recursive Schur complement techniques in conjunction with careful interpolation node selection allow us to reduce the preprocessing times to $O(d m^3)$ (over a naive $O(d^3 m^3)$); storage costs to $O(d m^2)$ (over a naive $O(d^2 m^2)$); and inference times to $O(N^2 + Nd m^2)$ (over a naive $O(N^2 + Nd^2 m^2)$). These critical details allowed us to actually tackle large, real-world problems in 3D, using only desktop GPUs.
Another contribution of this work, though not major, is the meticulous and careful documentation of domain geometries, solver details from the SU2 solver, initial and boundary conditions, and flow regimes for all of our problems in Table \ref{tab:exp_config}.

We conclude with a discussion on the limitations of our method and avenues for future work. The primary limitation of kernel methods---which we partially worked around by streaming and efficient linear algebra---is the need to store and compute with large dense interpolation and evaluation matrices. Further, kernel methods typically perform best when at the edge of ill-conditioning (at least when using kernel translates as the basis), which necessitates the use of fp64 to ensure high accuracy; however, this in turn leads to slower GPU computation. Further speedups may be made possible by the use of multipole expansions~\citep{rokhlin1985rapid}, treecodes~\citep{yokota2011treecode}, compactly-supported kernels~\citep{Wendland2004}, and low-rank approximation~\citep{drineas2005approximating, gittens2013revisiting}. Finally, our approach leverages \emph{global} property-preserving kernel interpolation for the output functions, which inflates memory requirements; local interpolation techniques offer a way around this while potentially allowing for greater accuracy. We will also explore generalizations to compressible flow, electromagnetism, and magnetohydrodynamics (MHD), each of which come with their own unique properties and constraints.

\acks{RS and VS were supported by the National Science Foundation under DMS 2505986, Cooperative Agreement 2421782, and the Simons Foundation grant MPS-AI-00010515 awarded to the NSF-Simons AI Institute for Cosmic Origins — CosmicAI, \url{https://www.cosmicai.org}. VS was additionally supported by the Air Force Office of Scientific Research (AFOSR) under LRIR award FA9550-25-1-0042. HO acknowledges support from the Air Force Office of Scientific Research under MURI award number FOA-AFRL-AFOSR-2023-0004 (Mathematics of Digital Twins), the Department of Energy under award number DE-SC0023163 (SEA-CROGS: Scalable, Efficient, and Accelerated Causal Reasoning Operators, Graphs and Spikes for Earth and Embedded Systems) and the DoD Vannevar Bush Faculty Fellowship Program under ONR award number N00014-18-1-2363. This work used the Delta system at the National Center for Supercomputing Applications [award OAC 2005572] through allocation [CIS260011] from the Advanced Cyberinfrastructure Coordination Ecosystem: Services \& Support (ACCESS) program, which is supported by National Science Foundation grants \#2138259, \#2138286, \#2138307, \#2137603, and \#2138296.}

\newpage
\appendix
\section{Additional results}
\label{app:additional_results}
We present additional experimental results in this section on four classic incompressible fluid flow problems (including a temperature-driven flow) and a challenging one involving merging vortices.
\subsection{2D Taylor--Green vortices}
\label{sec:taylor_green}
\begin{figure}[!htpb]
    \centering
    \begin{tabular}{ccc}
    \raisebox{8ex}{\rotatebox{90}{Initial velocity to final velocity}} &
    \begin{subfigure}{0.52\linewidth}
        \centering
        \includegraphics[width=\linewidth]{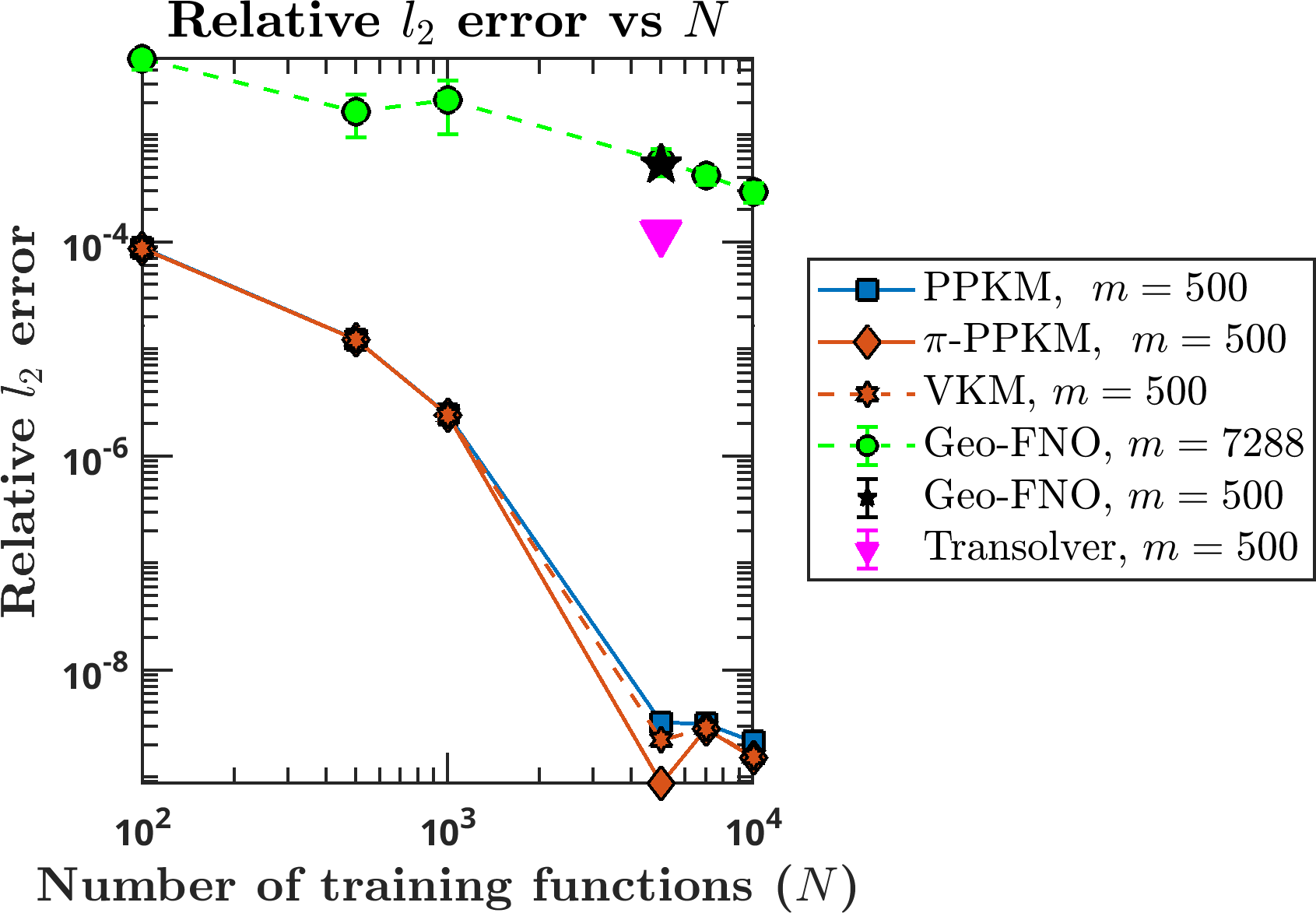}
        \caption{}
        \label{fig:taylor_green_errors}
    \end{subfigure} &
    \begin{subfigure}{0.52\linewidth}
        \centering
        \includegraphics[width=\linewidth]{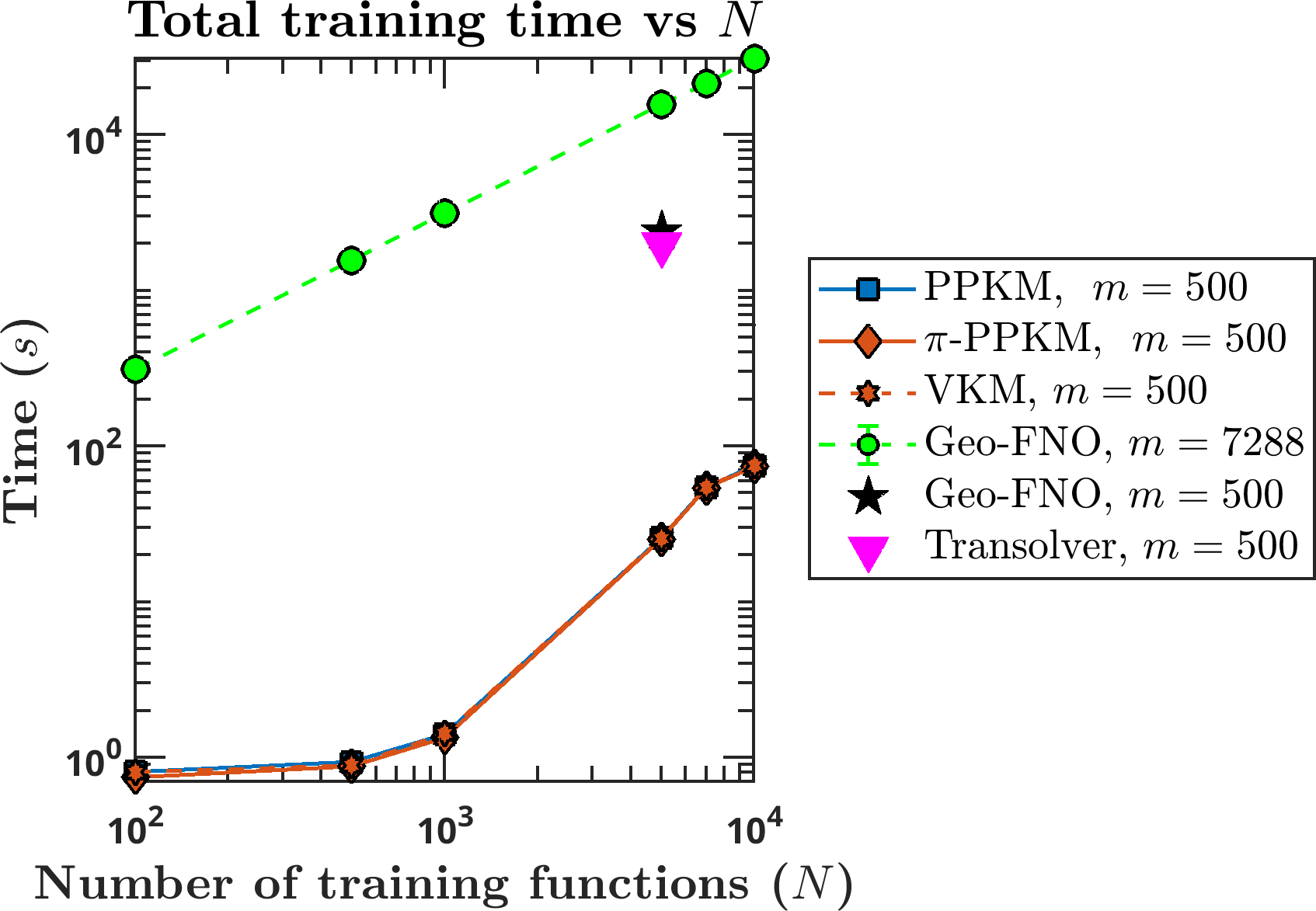}
        \caption{}
        \label{fig:taylor_green_times}
    \end{subfigure} \\
    \raisebox{8ex}{\rotatebox{90}{Flow parameters to final velocity}} &
    \begin{subfigure}{0.52\linewidth}
        \centering
        \includegraphics[width=\linewidth]{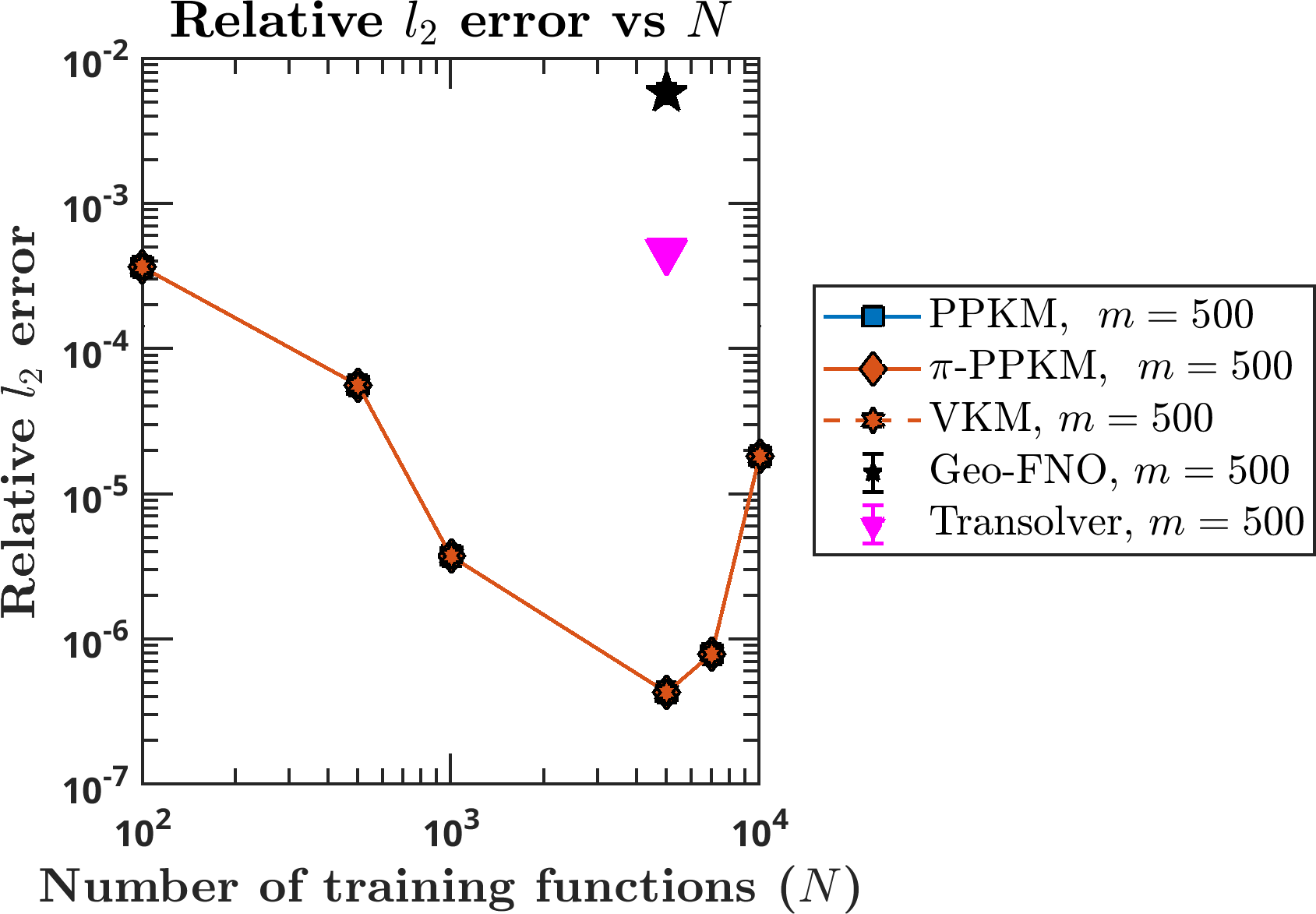}
        \caption{}
        \label{fig:taylor_green_coeffs_errors}
    \end{subfigure} &
    \begin{subfigure}{0.52\linewidth}
        \centering
        \includegraphics[width=\linewidth]{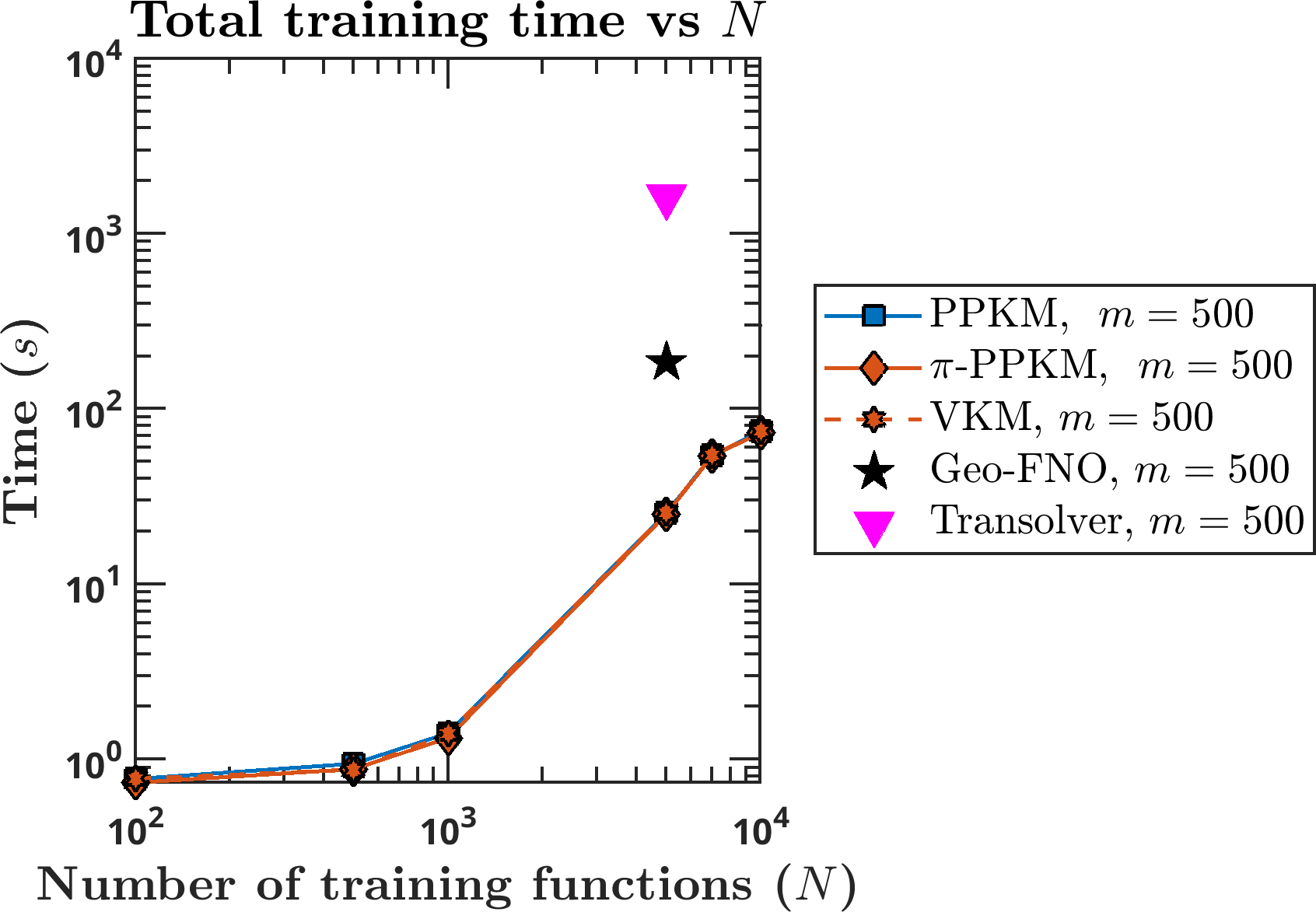}
        \caption{}
        \label{fig:taylor_green_coeffs_times}
    \end{subfigure}
    \end{tabular}
    \caption{The 2D laminar Taylor--Green vortices problem for the purely \textbf{spatial} operator map. The output functions are snapshots of the velocity at time $T=1$. (\textbf{A}) and (\textbf{B}) show the test relative $\ell_2$ errors and training runtimes as functions of $N$ for the operator map from the initial velocity to the final velocity. (\textbf{C}) and (\textbf{D}) show the same results for the operator map from the flow parameters to the final velocity.}
    \label{fig:taylor_green}
\end{figure}
We considered a purely spatial variant of the classical Taylor-Green vortex problem described in Section~\ref{sec:taylor_green_spacetime}, which we remind the reader has an exact solution. We are interested in two different operator maps being learned, (i) the spatial map ${\G: \bu(x, 0) \rightarrow \bu(y, 1)}$, and (ii) the parametric map ${\G: \tau \rightarrow \bu(y, 1)}$ where ${\Omega_a = \tau \subset \mathbb{R}^2}$ is the parameter space of $A$ and $\nu$. All other flow details were identical to the setup in Section~\ref{sec:taylor_green_spacetime} and are provided in row~5 in Table~\ref{tab:exp_config}. We show the results for both operator maps in Figure~\ref{fig:taylor_green}. We saw similar trends in this problem as the ones in the spacetime version of this problem in Section~\ref{sec:taylor_green_spacetime}. However, here we saw the $\pi$-PPKM gain a mild accuracy advantage over the PPKM method for $N=5000$, demonstrating the benefit of incorporating spatial periodicity in this periodic problem. Additionally, our method significantly outperforms the neural operator baselines in terms of both accuracy and training times.
\subsection{2D lid-driven cavity flow}
\begin{figure}[!htpb]
    \centering
    \begin{tabular}{cc}
    \begin{subfigure}{0.41\linewidth}
        \centering
        \includegraphics[width=\linewidth]{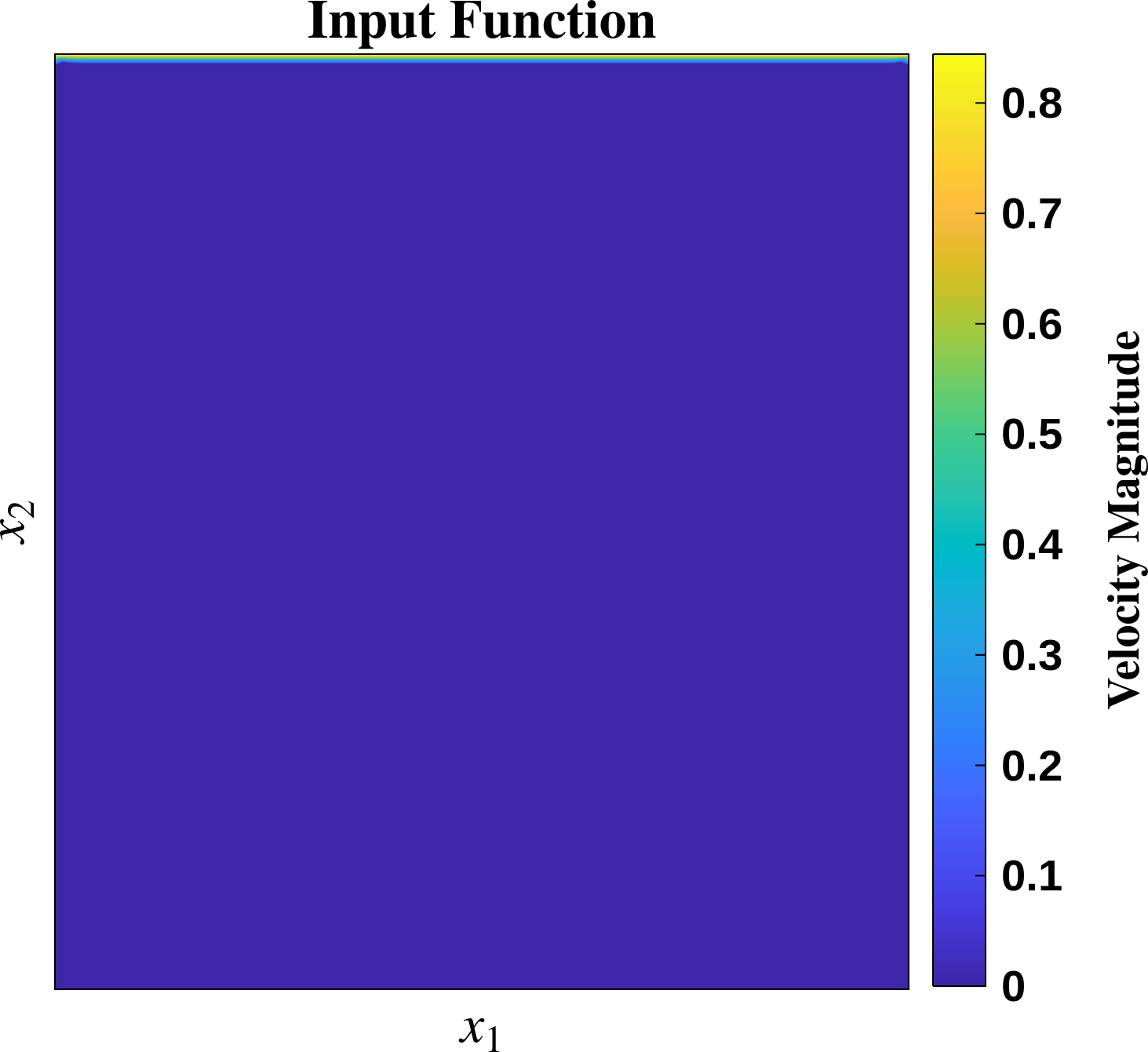}
        \caption{}
        \label{fig:lid_cavity_flow_input}
    \end{subfigure} &
    \begin{subfigure}{0.41\linewidth}
        \centering
        \includegraphics[width=\linewidth]{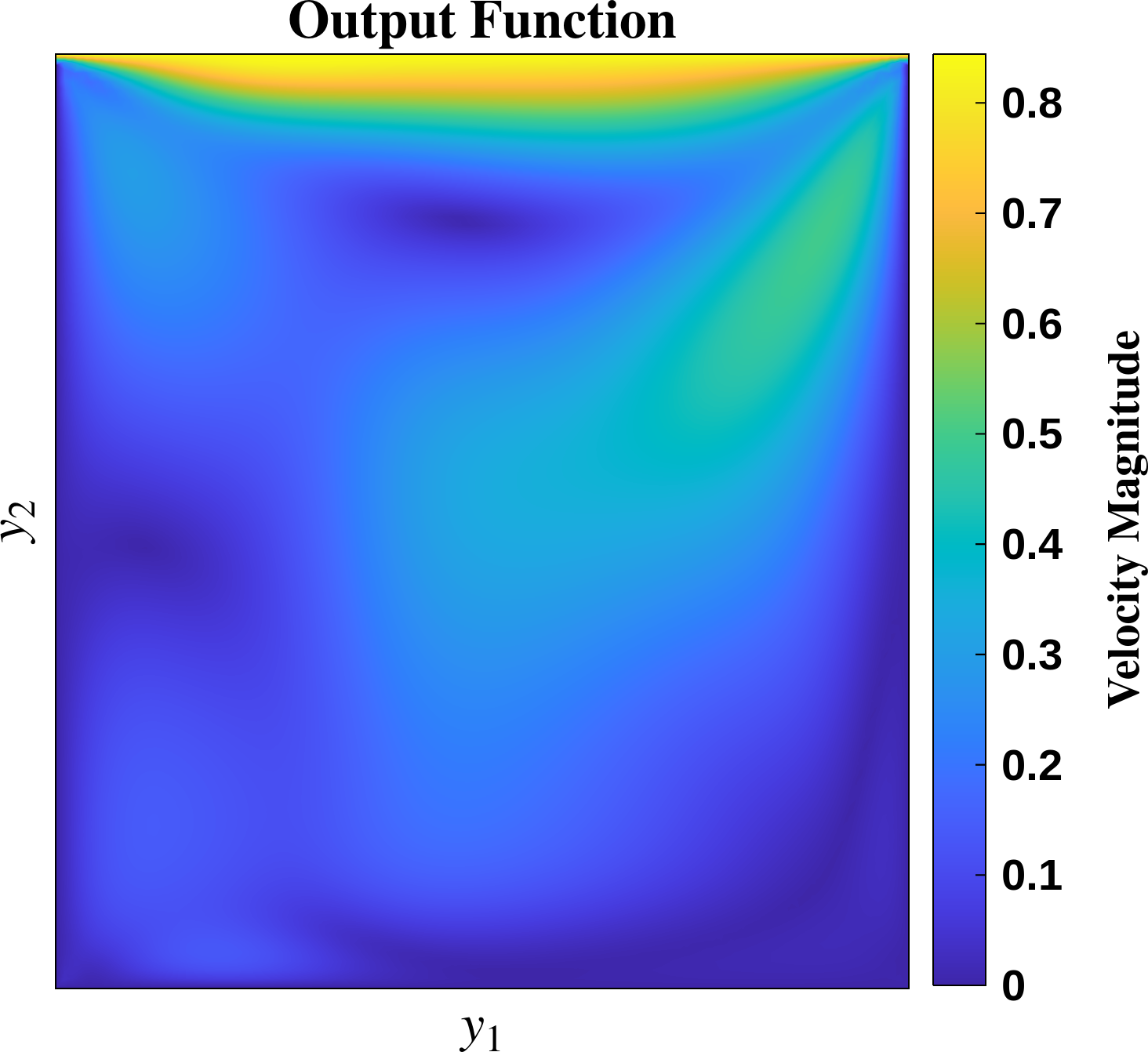}
        \caption{}
        \label{fig:lid_cavity_flow_output}
    \end{subfigure} \\
    \begin{subfigure}{0.52\linewidth}
        \centering
        \includegraphics[width=\linewidth]{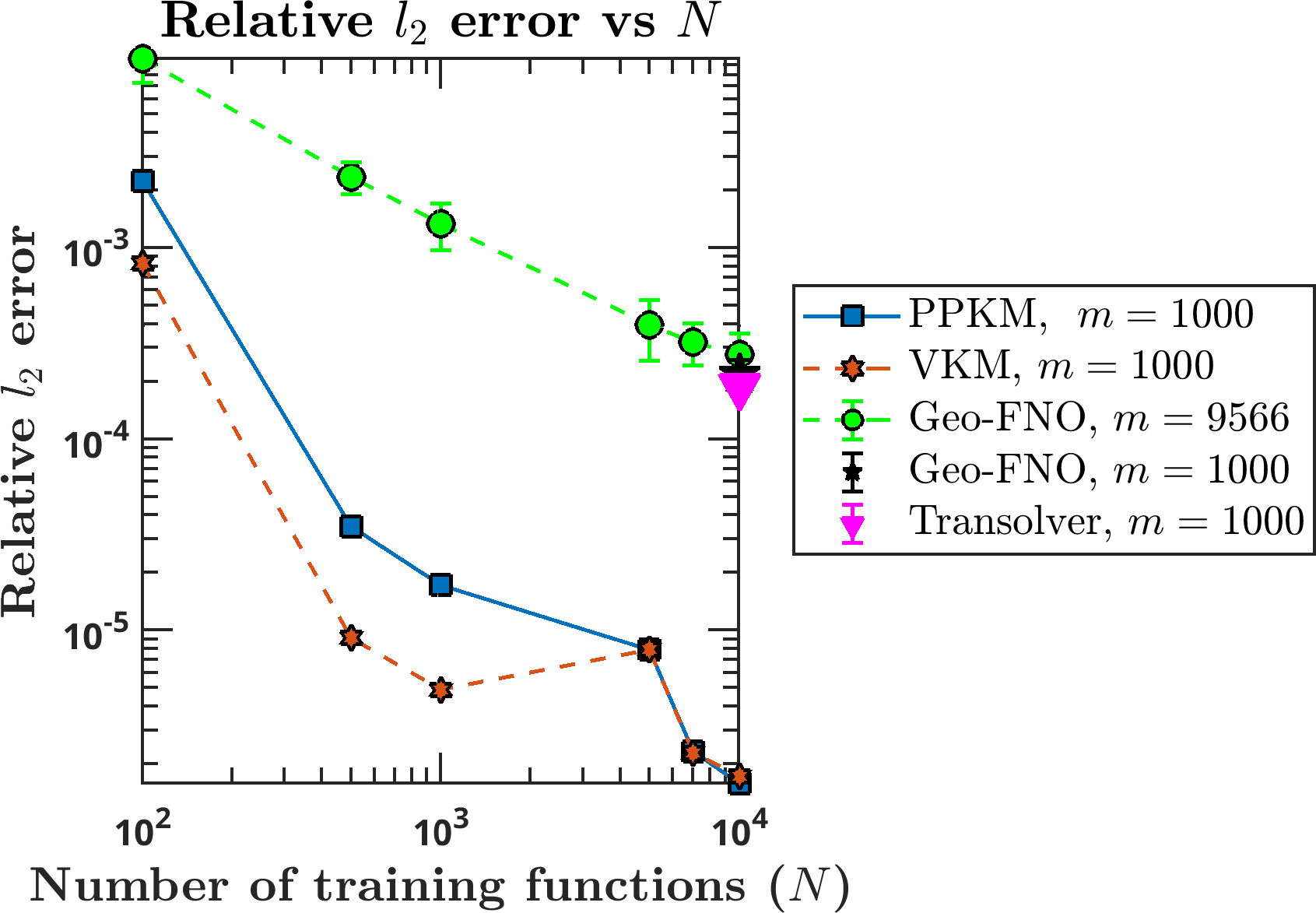}
        \caption{}
        \label{fig:lid_cavity_flow_errors}
    \end{subfigure} &
    \begin{subfigure}{0.52\linewidth}
        \centering
        \includegraphics[width=\linewidth]{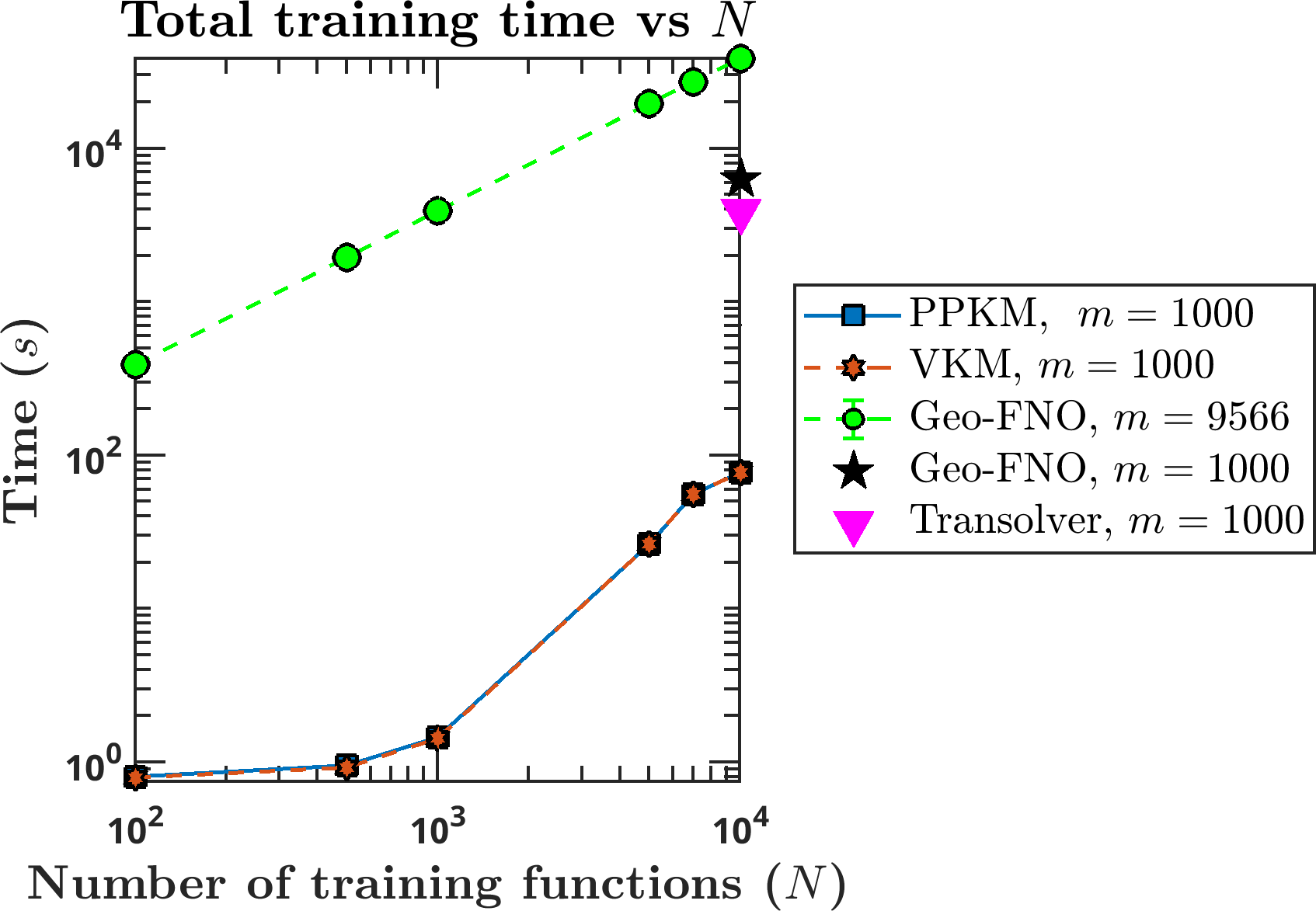}
        \caption{}
        \label{fig:lid_cavity_flow_times}
    \end{subfigure}
    \end{tabular}
    \caption{The 2D laminar lid-driven cavity flow problem. (\textbf{A}) and (\textbf{B}) show examples of an input function (the initial velocity) and an output function (the final velocity), respectively. (\textbf{C}) and (\textbf{D}) show the test relative $\ell_2$ errors and training runtimes as functions of $N$.}
    \label{fig:lid_cavity_flow}
\end{figure}
The classic 2D lid-driven cavity flow problem~\citep{ramanan1994linear, peng2003transition, sahin2003novel, kuhlmann2019} describes laminar fluid flow in a square cavity ${\Omega_a = \Omega_v =[0, 1]^2}$ whose lid moves tangentially along the top boundary. We initialized the horizontal velocity $u_1^{\mathrm{top}}$ to be constant along the top boundary (hence an impulsive start), sampled from the random distribution ${\mathcal{N}_{[0.8, 2.5]}(1.5, 1)}$, and zero everywhere else. $u_2$ was initialized to zero everywhere. We used the moving wall option provided in SU2 to prescribe the BC on the top boundary. The other three boundaries had no-slip BCs. We used a triangular mesh with $9566$ points. The simulation ran until $T=5$ with ${\Delta t=10^{-3}}$ and the operator map of interest was ${\G: \bu(x, 0) \rightarrow \bu(y, 5)}$. We also provide these details in row~2 in Table~\ref{tab:exp_config}. We show example input and output functions in Figures~\ref{fig:lid_cavity_flow_input} and~\ref{fig:lid_cavity_flow_output} respectively.

The VKM method outperformed the PPKM method for small $N$ as shown in Figure~\ref{fig:lid_cavity_flow_errors}, but both methods achieved the same errors as $N$ increased. Our method achieved around 2.5 orders of magnitude lower errors and around 1.4 orders of magnitude faster training times (Figure~\ref{fig:lid_cavity_flow_times}) than the neural operators.
\subsection{2D backward-facing step}
\begin{figure}[!htpb]
    \centering
    \begin{subfigure}{0.78\linewidth}
        \centering
        \includegraphics[width=\linewidth]{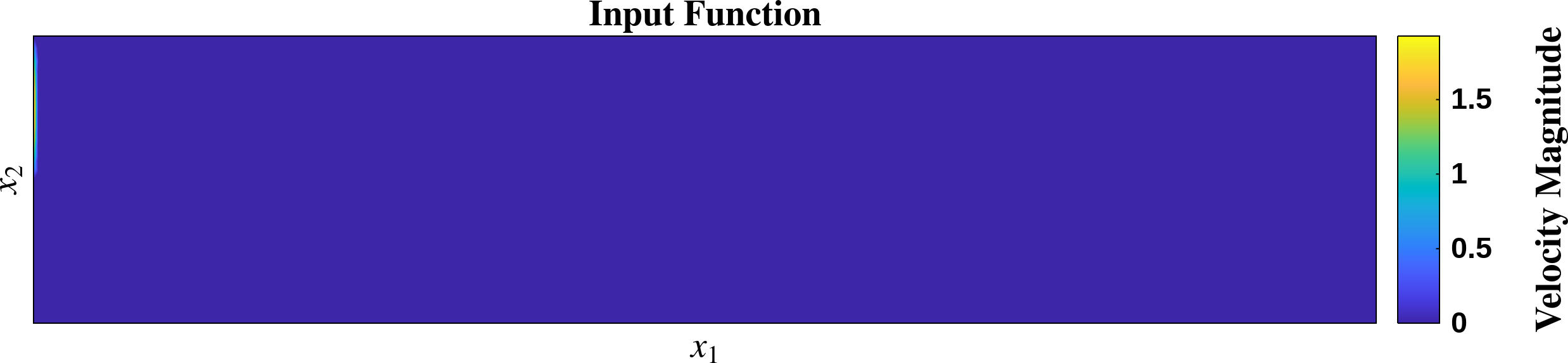}
        \caption{}
        \label{fig:backward_facing_step_input}
    \end{subfigure} \\
    \begin{subfigure}{0.78\linewidth}
        \centering
        \includegraphics[width=\linewidth]{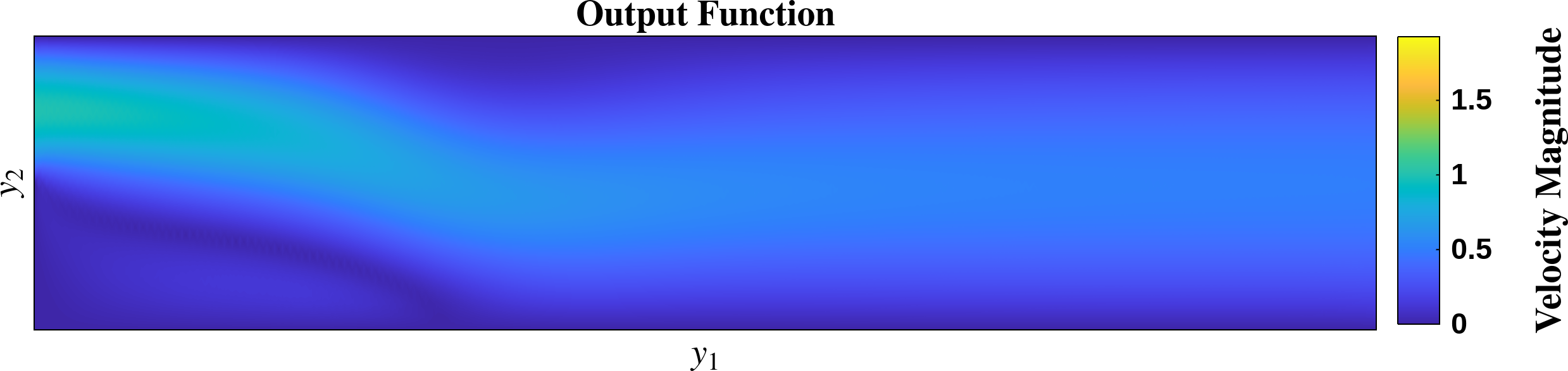}
        \caption{}
        \label{fig:backward_facing_step_output}
    \end{subfigure} \\
    \begin{subfigure}{0.38\linewidth}
        \centering
        \includegraphics[width=\linewidth]{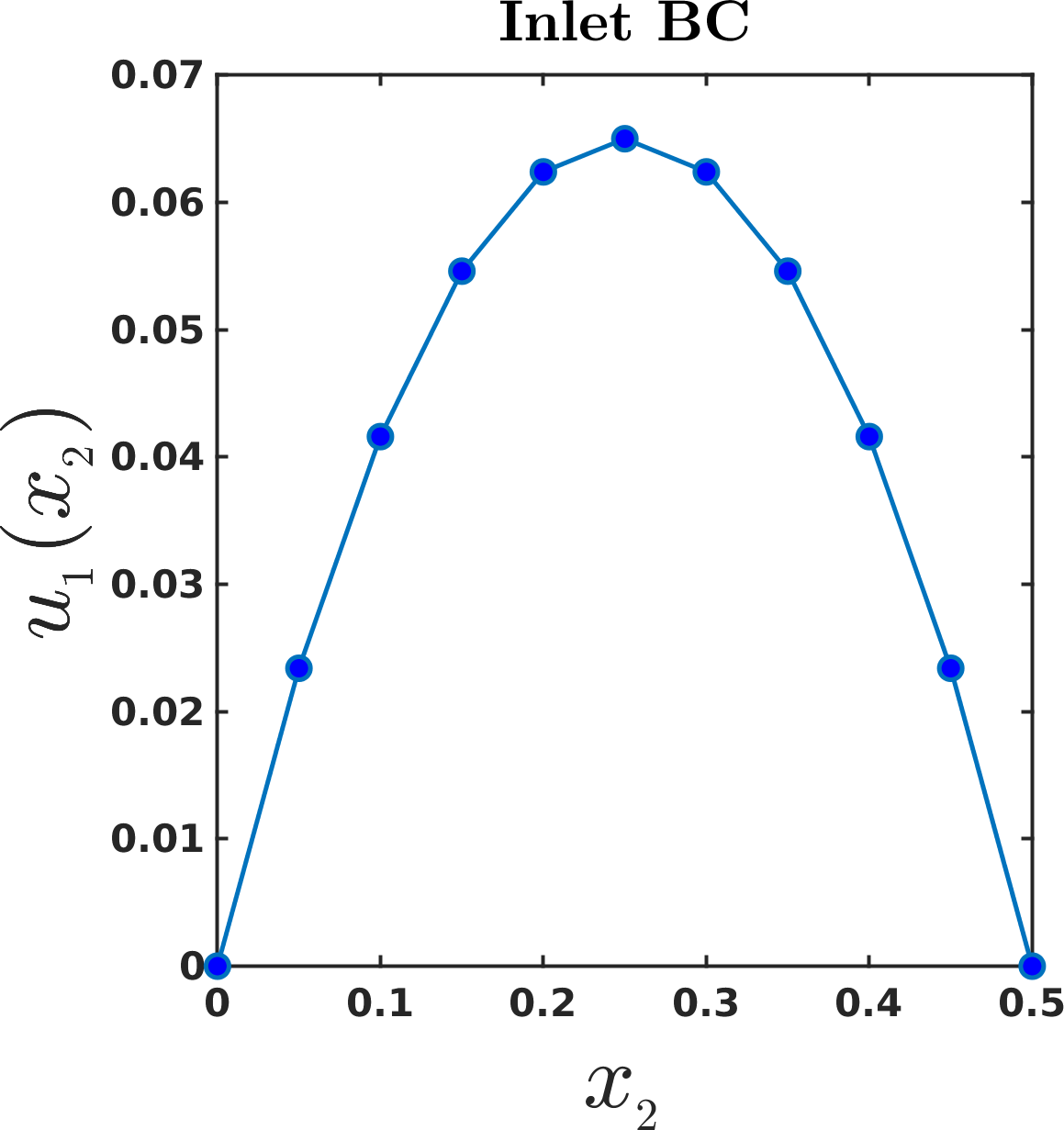}
        \caption{}
        \label{fig:backward_facing_step_input_bc}
    \end{subfigure}
    \begin{tabular}{cc}
    \begin{subfigure}{0.52\linewidth}
        \centering
        \includegraphics[width=\linewidth]{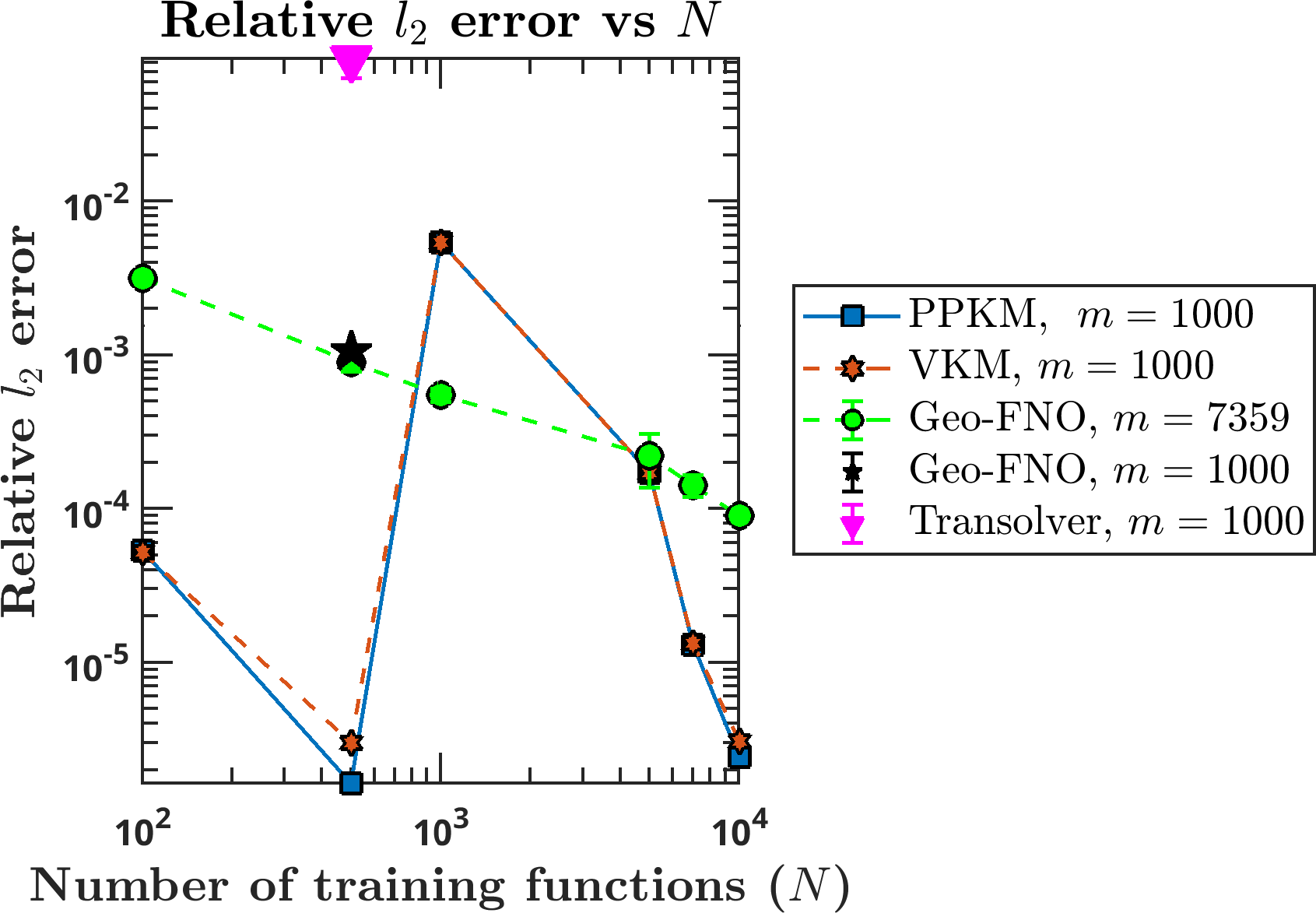}
        \caption{}
        \label{fig:backward_facing_step_errors}
    \end{subfigure} &
    \begin{subfigure}{0.52\linewidth}
        \centering
        \includegraphics[width=\linewidth]{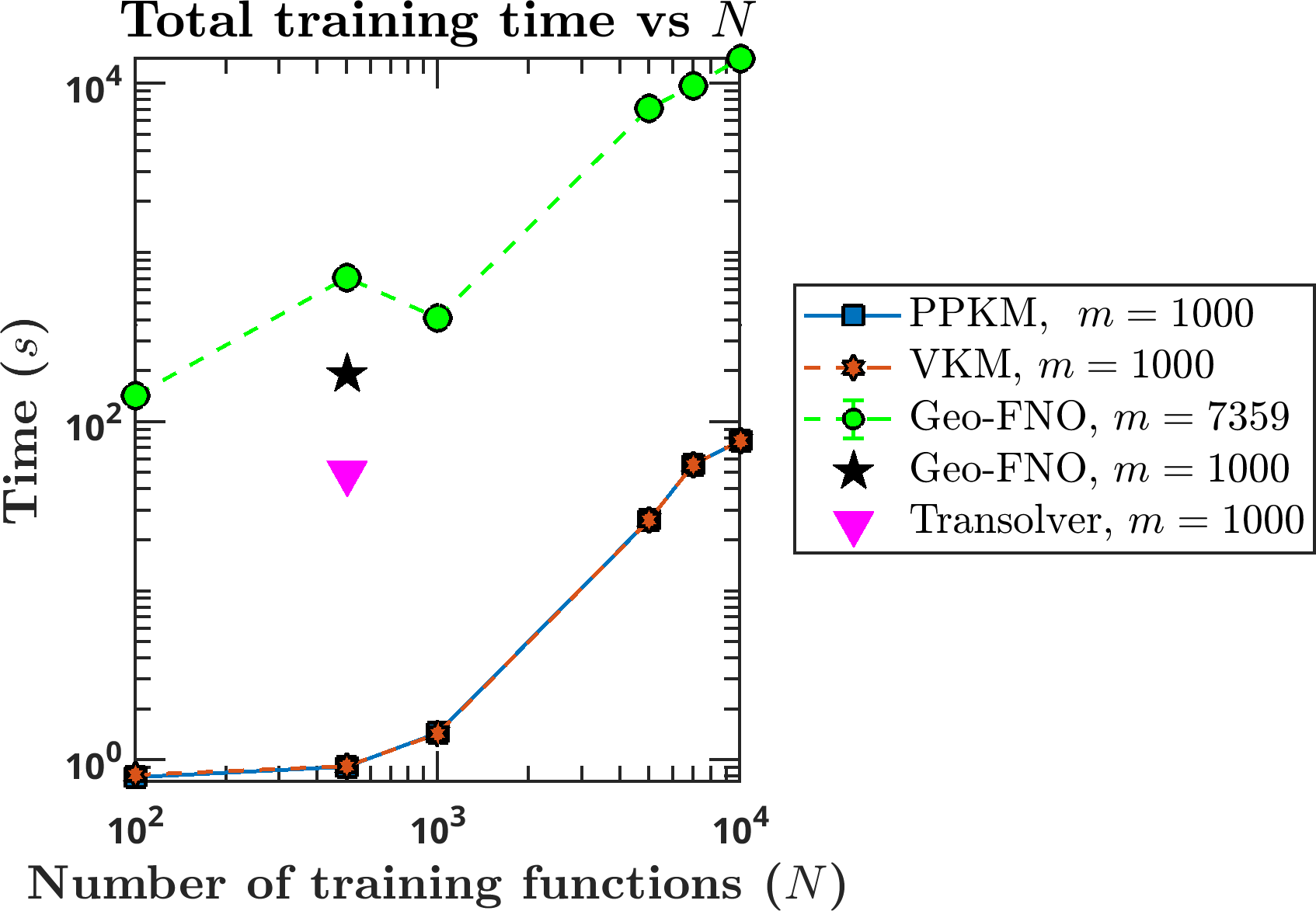}
        \caption{}
        \label{fig:backward_facing_step_times}
    \end{subfigure}
    \end{tabular}
    \caption{The 2D laminar backward-facing step problem. (\textbf{A}) and (\textbf{B}) show examples of an input function (the initial velocity) and an output function (the final velocity), respectively. (\textbf{C}) shows the parabolic profile of the inlet velocity. (\textbf{D}) and (\textbf{E}) show the test relative $\ell_2$ errors and training runtimes as functions of $N$.}
    \label{fig:backward_facing_step}
\end{figure}
This problem focuses on 2D laminar fluid flow over an isothermal backward-facing step at $Re=800$~\citep{chiang1999numerical}~\footnote{\url{https://su2code.github.io/tutorials/Inc_Laminar_Step}}. The domain of the problem is ${\Omega_a = \Omega_v = [0, 15] \times [-0.5, 0.5]}$. Typically, flow simulations for this problem involve fluid flow in both the upstream and downstream channels. We simplified the geometry by only simulating the downstream channel of the step. Let the channel height and width be $H$ and $15H$, respectively. The upper half of the left boundary was specified with the following initial condition,
\begin{align}
u_1(x_2) &= \nu x_2 * (0.5 - x_2), 0 \le x_2 \le 0.5,
\end{align}
where $\nu$ is the kinematic viscosity, $u_1$ is the horizontal velocity component, and $x_2$ is the vertical spatial coordinate. The velocity was initialized to be zero everywhere else. See Figure~\ref{fig:backward_facing_step_input_bc} for an example of the parabolic inlet BC. The top, bottom, and bottom half of the left (corresponding to the step face) boundaries were prescribed no-slip BCs. A zero pressure BC was imposed on the right boundary. We ran the SU2 simulation until $T=5$ with $\Delta t=10^{-3}$. $\nu$ was randomly sampled from the random distribution ${\mathcal{N}_{[1, 36]}(18, 18)}$ which resulted in Reynolds number range $[28, 900]$ across the simulations. A triangular mesh with $7359$ points was used. The operator of interest was ${\G: \bu(x, 0) \rightarrow \bu(y, 5)}$. These configuration details are provided in row~3 in Table~\ref{tab:exp_config}. We show examples of input and output functions in Figures~\ref{fig:backward_facing_step_input} and~\ref{fig:backward_facing_step_output} respectively.

Here, our method rapidly reached the lowest errors around $N=1000$, possibly due to hidden low-rank structure in the input functions. A subsequent spike was then followed by another steady decrease in error for increasing $N$ as shown in Figure~\ref{fig:backward_facing_step_errors}. Regardless, our method achieved close to three orders of magnitude lower errors and two orders of magnitude faster training times (Figure~\ref{fig:backward_facing_step_times}) than the neural operators.

\textbf{Out-of-distribution (OOD) test}: We additionally performed an experiment to compare our methods against baseline operator learning methods in terms of OOD capability. We trained on the data set which falls in the ${\mathrm{Re} \in [28, 900]}$ regime and tested on 1000 test functions whose $\nu$ was sampled from ${\mathcal{N}_{[38, 70]}(54, 30)}$. For $N=500$ and $m=1000$, both the kernel-based methods achieved a relative $\ell_2$ error around 0.961 whereas the Geo-FNO and the Transolver models achieved errors of 0.221 and 0.0852, respectively. This is somewhat intuitive: the kernel methods are interpolatory and thus more sensitive to the range of the training data upon generalization, while neural operators are essentially trained with least-squares techniques, have more trainable parameters, and therefore better OOD behavior. We strongly suspect that building a viscosity-aware kernel will change this easily; it is also likely that making the kernels have more trainable parameters will help close the gap.
\subsection{2D buoyancy-driven cavity flow}
\begin{figure}[!htpb]
    \centering
    \begin{tabular}{cc}
    \begin{subfigure}{0.41\linewidth}
        \centering
        \includegraphics[width=\linewidth]{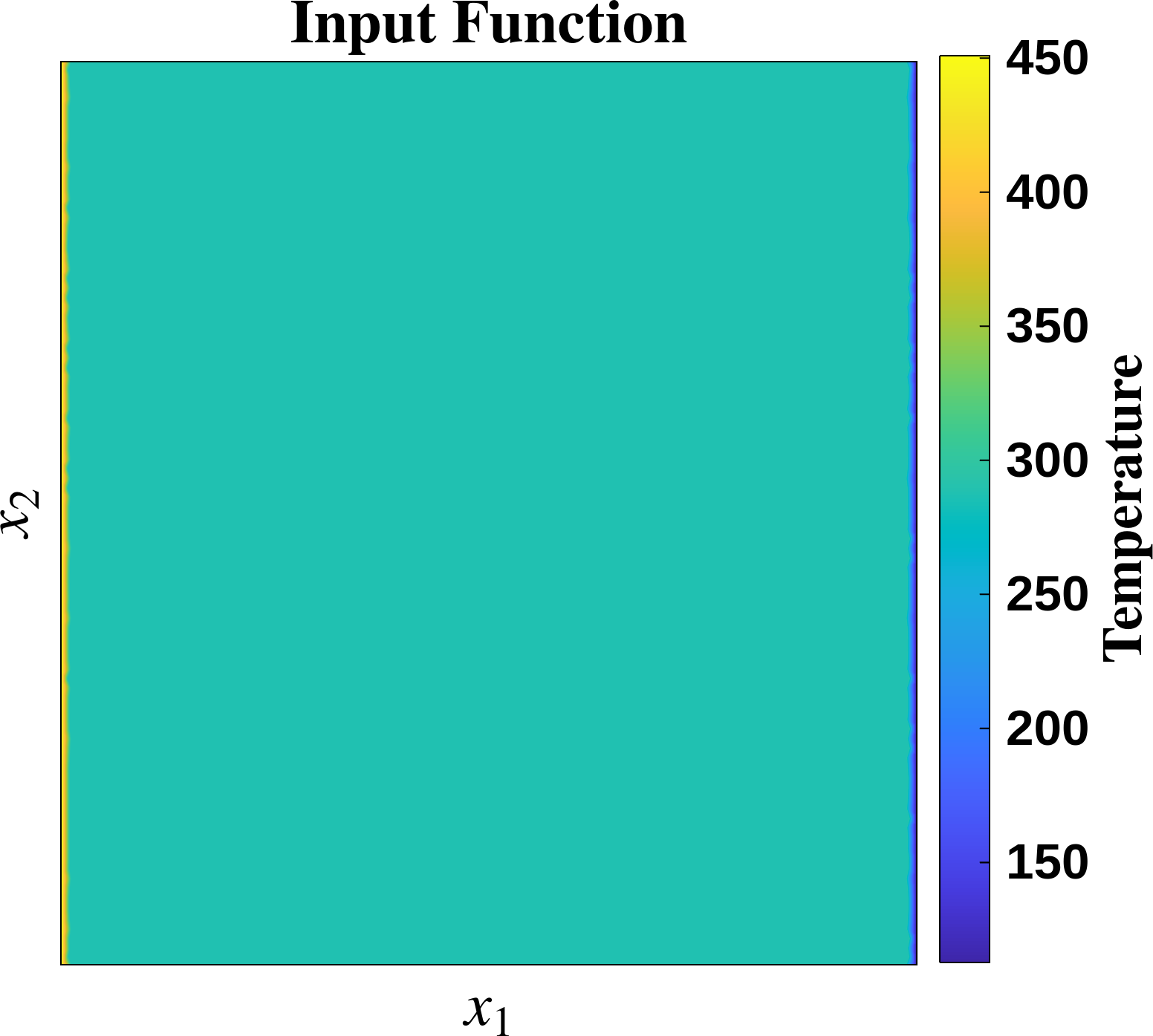}
        \caption{}
        \label{fig:buoyancy_cavity_flow_input}
    \end{subfigure} &
    \begin{subfigure}{0.41\linewidth}
        \centering
        \includegraphics[width=\linewidth]{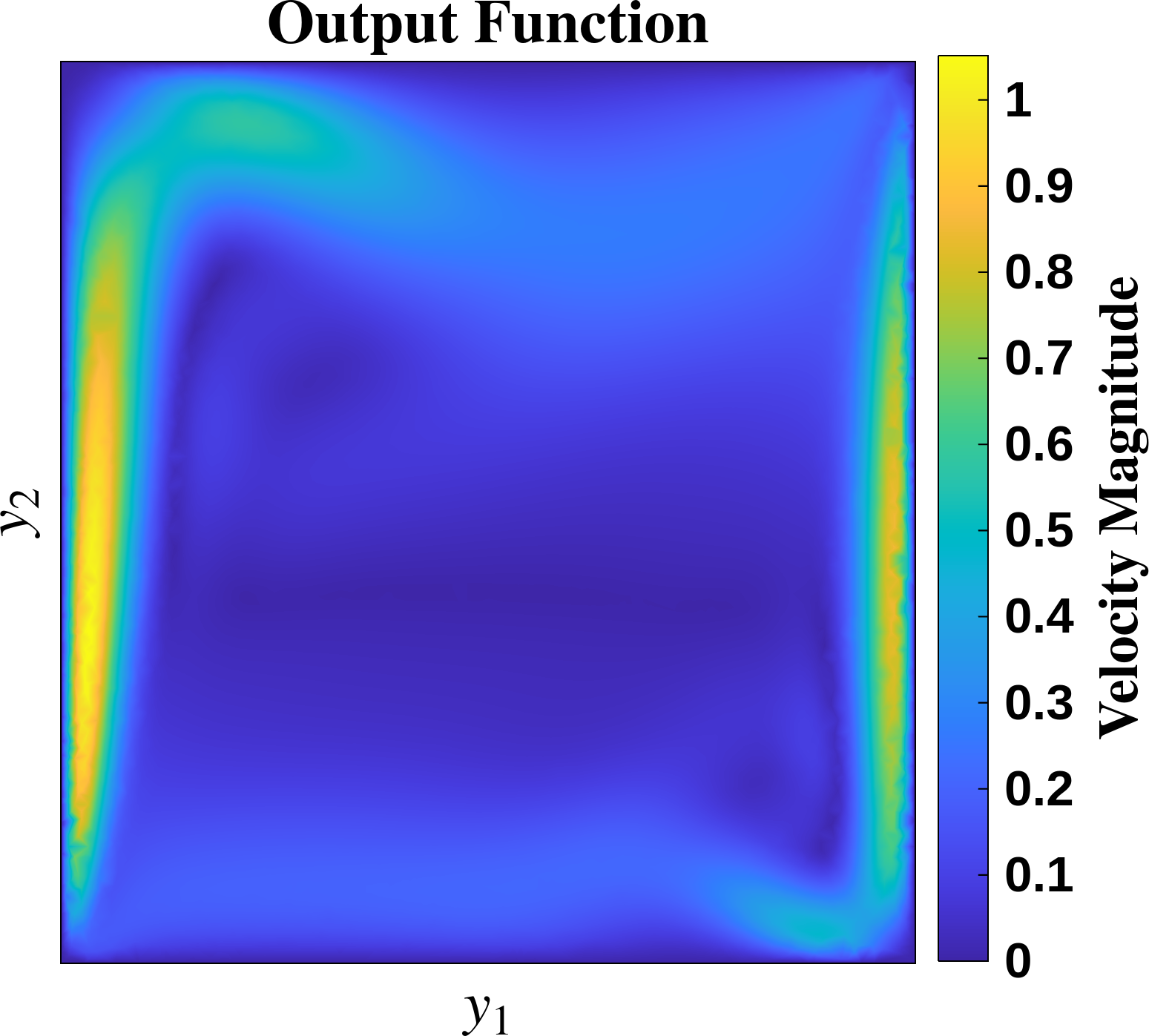}
        \caption{}
        \label{fig:buoyancy_cavity_flow_output}
    \end{subfigure} \\
    \begin{subfigure}{0.52\linewidth}
        \centering
        \includegraphics[width=\linewidth]{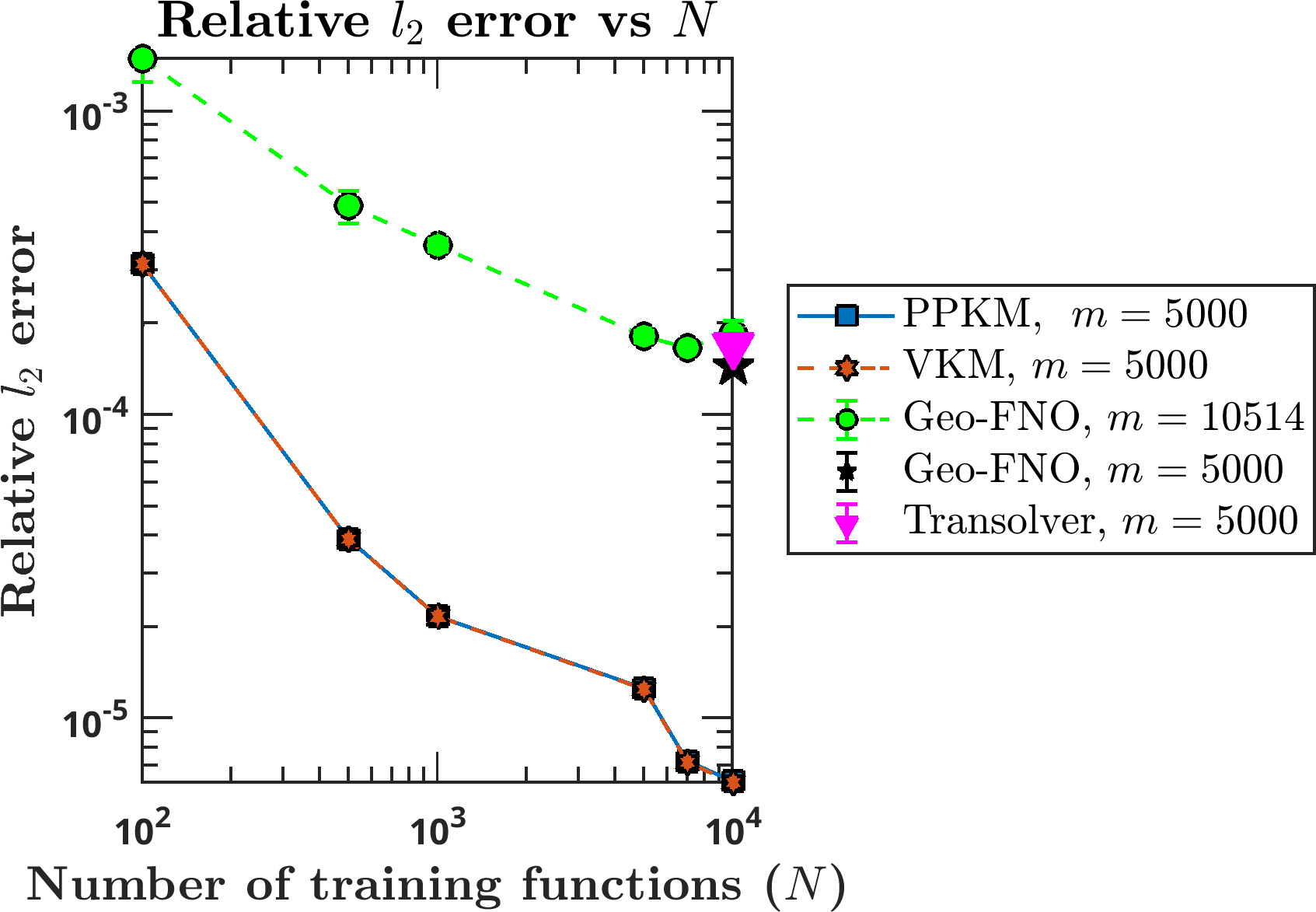}
        \caption{}
        \label{fig:buoyancy_cavity_flow_errors}
    \end{subfigure} &
    \begin{subfigure}{0.52\linewidth}
        \centering
        \includegraphics[width=\linewidth]{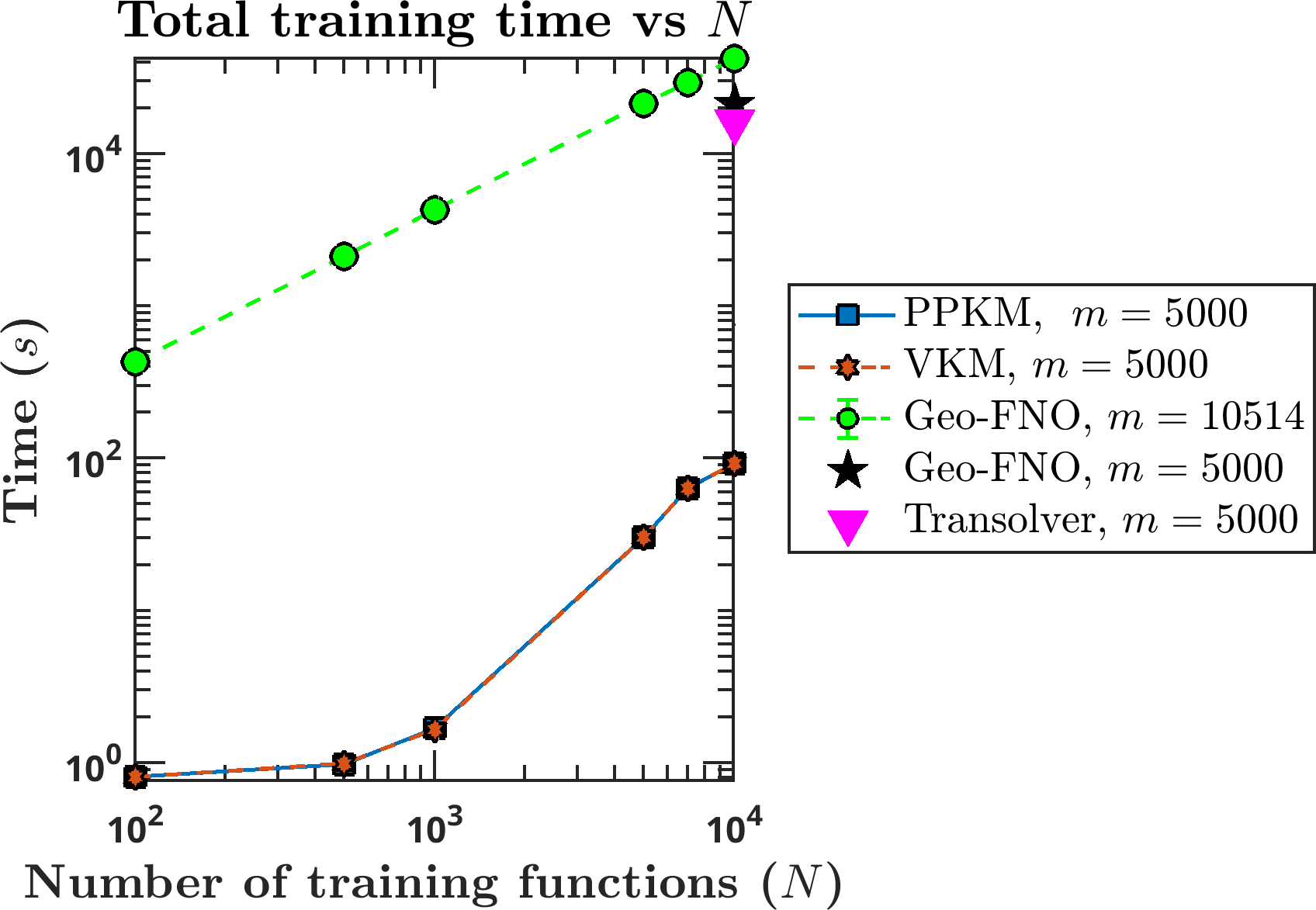}
        \caption{}
        \label{fig:buoyancy_cavity_flow_times}
    \end{subfigure}
    \end{tabular}
    \caption{The 2D laminar buoyancy-driven cavity flow problem. (\textbf{A}) and (\textbf{B}) show examples of an input function (the initial temperature) and an output function (the final velocity), respectively. (\textbf{C}) and (\textbf{D}) show the test relative $\ell_2$ errors and training runtimes as functions of $N$.}
    \label{fig:buoyancy_cavity_flow}
\end{figure}
This (modern) classic problem describes the 2D laminar buoyancy-driven flow in a square cavity ${\Omega_a = \Omega_v=[0, 1]^2}$ with adiabatic top and bottom walls and constant temperature on the left and right walls~\citep{sockol2003multigrid}~\footnote{\url{https://su2code.github.io/tutorials/Inc_Laminar_Cavity}}. Additionally, we also prescribed the gravitational body force $f=-g$. All four walls had no-slip BCs prescribed. We randomly initialized the temperature (in Kelvin) at the left wall $\texttt{T}^{\mathrm{left}}$ by sampling from the distribution ${\mathcal{N}_{[300, 600]}(450, 100)}$. $\texttt{T}^{\mathrm{right}}$ was set to $\texttt{T}^{\mathrm{left}} / 4$. The flow density was initialized to ${5.97 \times 10^{-3} \ \mathrm{kg/m}^3}$ which corresponds to a Rayleigh number of $10^6$. The temperature everywhere else in the domain was initialized to 288.15 K. The domain was discretized with a quadrilateral mesh containing $10,514$ points. We ran time dependent simulations in SU2 until $T=5$ with $\Delta t=10^{-3}$. The operator map being learned was ${\G: \texttt{T}(x, 0) \rightarrow \bu(y, 5)}$. These configuration details are provided in row~4 in Table~\ref{tab:exp_config}. We show examples of input and output functions in Figures~\ref{fig:buoyancy_cavity_flow_input} and~\ref{fig:buoyancy_cavity_flow_output}. Our method achieved orders of magnitude smaller errors (Figure~\ref{fig:buoyancy_cavity_flow_errors}) and training times (Figure~\ref{fig:buoyancy_cavity_flow_times}), and a faster convergence rate w.r.t $N$ than the neural operator baselines.
\subsection{2D merging vortices}
\begin{figure}[!htpb]
    \centering
    \begin{tabular}{cc}
    \begin{subfigure}{0.41\linewidth}
        \centering
        \includegraphics[width=\linewidth]{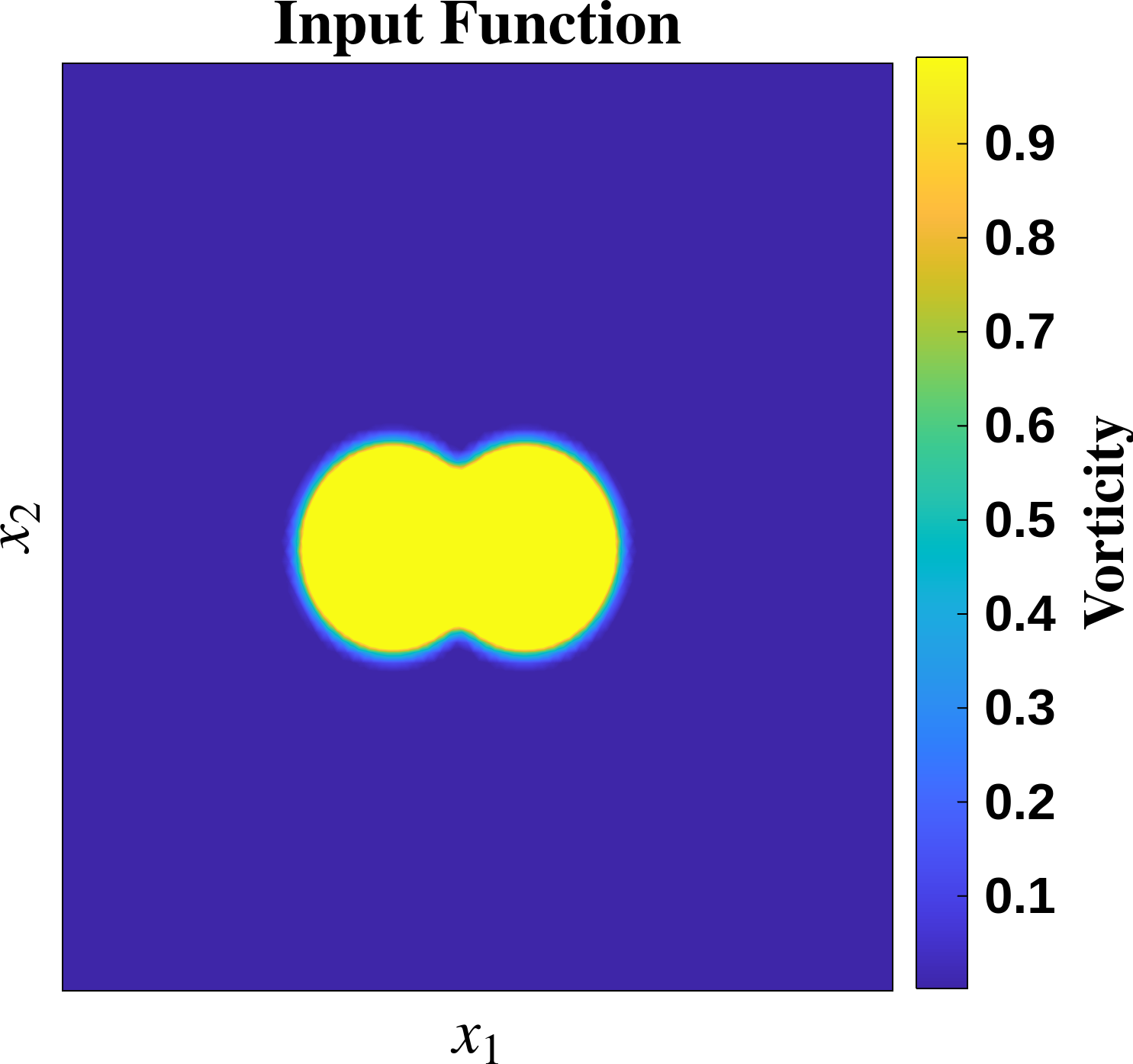}
        \caption{}
        \label{fig:merge_vortices_easier_input}
    \end{subfigure} &
    \begin{subfigure}{0.41\linewidth}
        \centering
        \includegraphics[width=\linewidth]{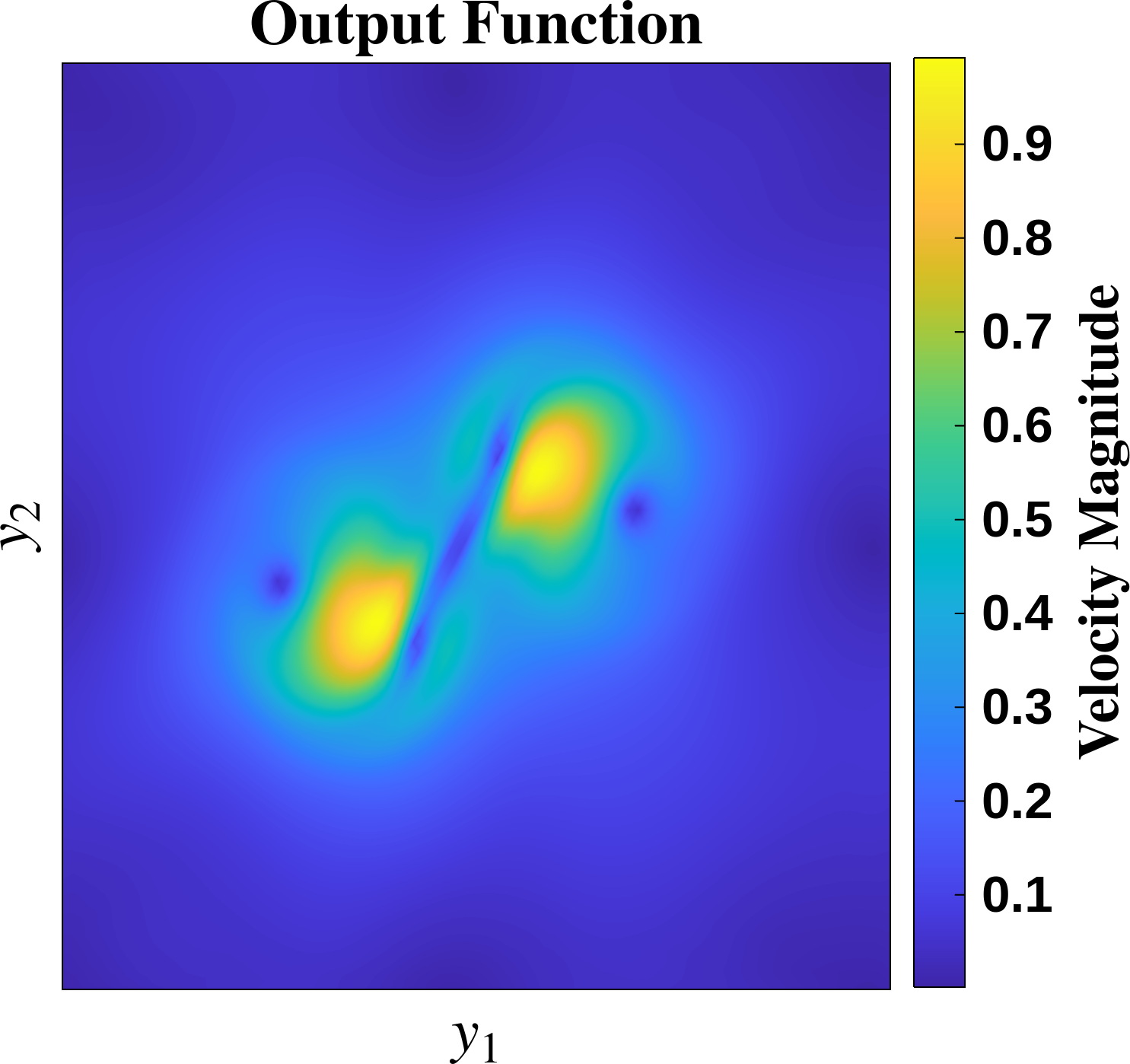}
        \caption{}
        \label{fig:merge_vortices_easier_output}
    \end{subfigure} \\
    \begin{subfigure}{0.52\linewidth}
        \centering
        \includegraphics[width=\linewidth]{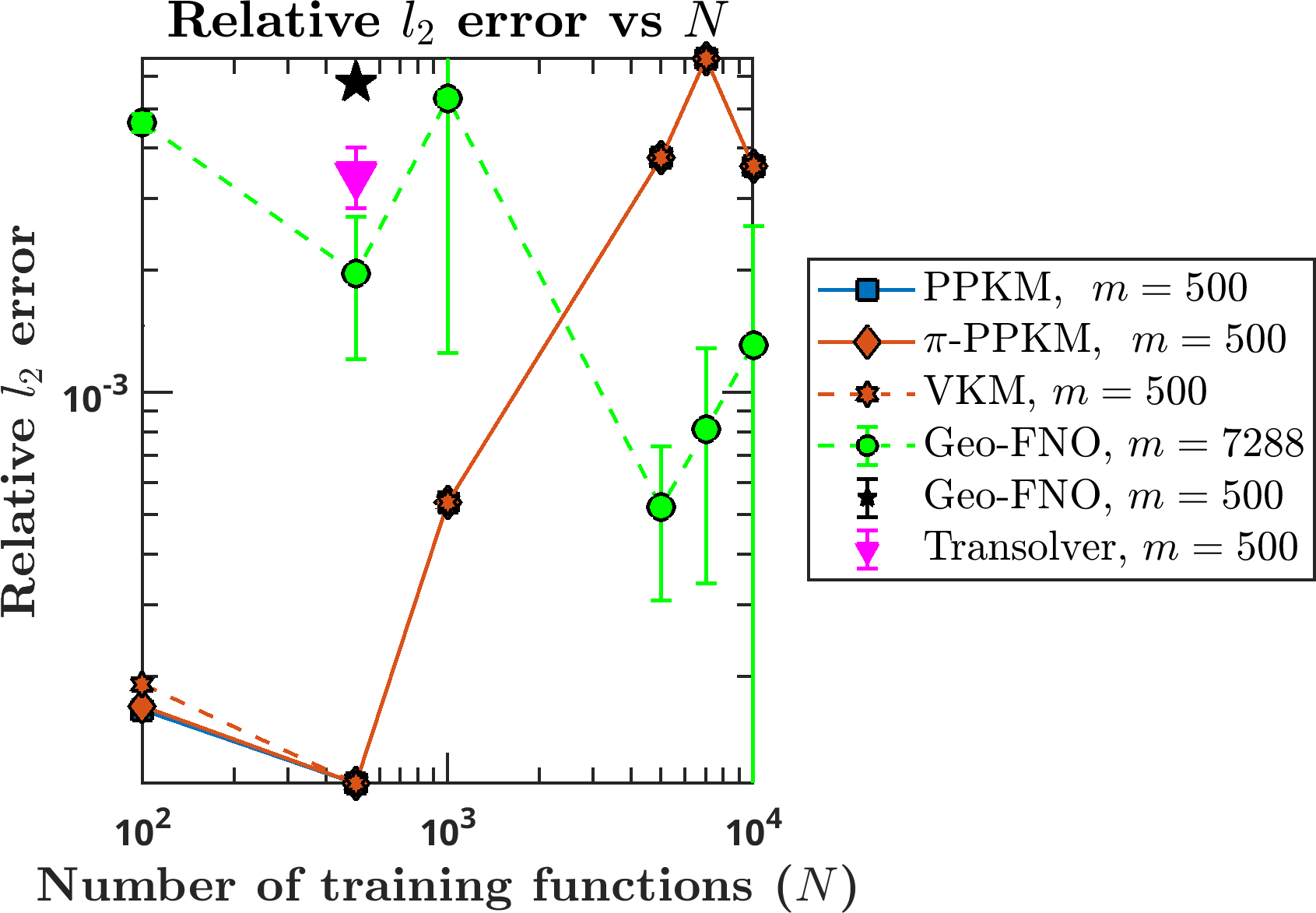}
        \caption{}
        \label{fig:merge_vortices_easier_errors}
    \end{subfigure} &
    \begin{subfigure}{0.52\linewidth}
        \centering
        \includegraphics[width=\linewidth]{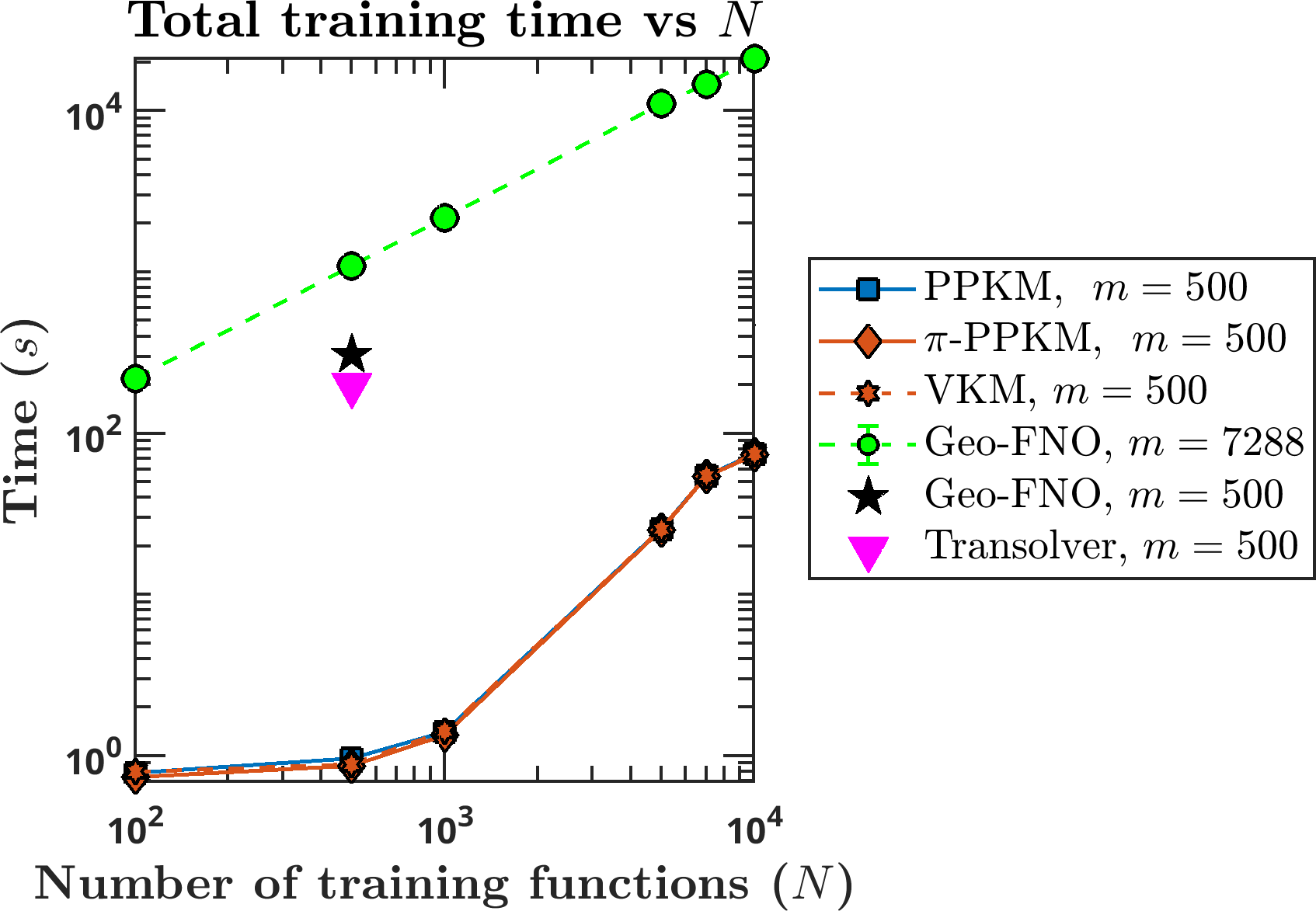}
        \caption{}
        \label{fig:merge_vortices_easier_times}
    \end{subfigure}
    \end{tabular}
    \caption{The 2D laminar merging vortices problem. (\textbf{A}) and (\textbf{B}) show examples of an input function (the initial vorticity containing two adjacent vortices) and an output function (the final velocity), respectively. (\textbf{C}) and (\textbf{D}) show the test relative $\ell_2$ errors and training runtimes as functions of $N$.}
    \label{fig:merge_vortices}
\end{figure}
This challenging problem describes the 2D laminar flow resulting from the merging of two vortices~\citep{dritschel1995general, josserand2007merging, trieling2010dynamics} in a square domain ${\Omega_a = \Omega_v = [0, 2\pi]^2}$. We started by prescribing an incompressible velocity field in the polar coordinates centered at $x = (x_1^c, x_2^c)$. Let ${r = \sqrt{(x_1-x_1^c)^2 + (x_2-x_2^c)^2}}$ where $x=(x_1, x_2) \in \Omega$. Then we prescribed the following Cartesian velocity field components:
\begin{align}
u_1(x) &= -\frac{x_2-x_2^c}{r}\, q(r), \\
u_2(x) &= \phantom{-}\frac{x_1-x_1^c}{r}\, q(r),\label{eq:merging_vortices_velocity}
\end{align}
with $q(r)=4r e^{-r \alpha}$, ensuring $q(0)=0$ for regularity at the vortex center. This construction yielded a divergence-free velocity field by construction. The associated scalar vorticity field $\omega$ is given by
\begin{align}
\omega(r) = (8 - 4\alpha r^{\alpha}) e^{-r^{\alpha}},
\end{align}
where $\alpha>0$ controls the sharpness of the vortex core. To the best of our knowledge as of the time of writing, SU2 does not allow the direct prescription of an initial vorticity field for time dependent simulations; we therefore initialized the flow through the velocity field in~\eqref{eq:merging_vortices_velocity}. We prescribed periodic BCs on all four walls. We sampled $\alpha \in [1, 14]$ from a uniform distribution. We used the same triangular mesh as the one in Section~\ref{sec:taylor_green}. We ran the SU2 simulation until $T=0.4$ with $\Delta t=10^{-3}$. The operator of interest here was ${\G: \omega(x, t=0) \rightarrow \bu(y, 0.4)}$. We provide these details in row~7 in Table~\ref{tab:exp_config}. We show examples of the input and output functions in Figures~\ref{fig:merge_vortices_easier_input} and~\ref{fig:merge_vortices_easier_output}.

The errors reported in Figure~\ref{fig:merge_vortices_easier_errors} show that our method rapidly achieves the lowest of all the errors just shy of $N=1000$. The errors then began to climb as a function of $N$. The difficulty of this problem is further shown by the unusually large error bars on the neural operators and the fact that there are no discernible differences in the accuracy between $\pi$-PPKM and PPKM despite this being a periodic problem. It is likely that our method's accuracy deteriorates for large $N$ due to the same reason as the no-vortex-shedding flow regime in Section~\ref{sec:flow_cylinder}; the output functions vary minimally with changes in the input functions. For a small training data set, $N=100$ or $N=500$, our method outperforms the neural operator baselines. These results where errors climb with $N$ also indicate the need for importance sampling and/or adaptive sampling of the input functions.
\section{Neural operator implementation details}
\label{app:neural_operator}
We outline relevant implementation details for the Geo-FNO~\citep{li2023fourier} and Transolver~\citep{wu2024transolver} models below.

\textbf{Geo-FNO}: The resolution is denoted by $s$ in the publicly available code by the authors of~\citep{li2023fourier}. We tested with $s = \{20, 30, 40, 50, 60\}$ for all the 2D problems and $s = \{10, 15, 20, 25\}$ for both 3D problems and picked the best performing $s$. We also tuned the number of modes used, searching over three distinct values for each problem including $\left\lfloor \frac{s}{2} \right\rfloor$. The channel dimension was varied over $\{32, 64, 128\}$. Following~\citep{li2023fourier}, we fixed the number of Fourier layers to four. Due to the Geo-FNO requiring the input and output functions to have the same $d$-dimensional domain, we made the following adjustments to the data sets:
\begin{itemize}
\item \textbf{2D Taylor--Green spacetime}: We treated this equivalently to a 3D spatial problem.
\item \textbf{2D Taylor--Green spacetime parametric map}: This was also treated as a 3D spatial problem except since the input functions are the $A$ and $\nu$ coefficients, we repeated them as vector-valued constant functions sampled on the same spacetime grid as the output functions.
\item \textbf{3D Species transport and flow past an airfoil}: We padded the input functions in the \texttt{z} direction with zeros to make them functions of three variables.
\end{itemize}

\textbf{Transolver}: The hyperparameters in this model included the number of attention layers (varied over $\{3, 4, 5\}$), dimensions of the embeddings (varied over $\{32, 64, 128\}$), number of heads (varied over $\{4, 6, 8\}$), and number of slices (varied over $\{16, 32, 64\}$). In addition to matching the dimensionality of the input and output functions like the Geo-FNO, the Transolver model additionally requires that the input and output functions share the same underlying discretization. To achieve this for the 2D Taylor--Green vortices spacetime (both the spatiotemporal and parametric maps), 3D species transport, and 3D flow-past-an-airfoil problems, we constructed a new grid as the union of the input and output discretizations. Both the input and output functions were evaluated on this new grid with zeros where their respective data was missing. The loss was computed only on the original output function grid.

\textbf{Training}: For both models, we used the Adam optimizer and the default learning rate schedule from the publicly available code. A cosine annealing learning rate schedule was used for the Geo-FNO and a OneCycle~\footnote{\url{https://docs.pytorch.org/docs/stable/generated/torch.optim.lr_scheduler.OneCycleLR.html}} learning rate schedule for the Transolver. For both, we set the maximum learning rate to $10^{-3}$ and the activation function to GELU. In all problems and for both models, we normalized the input spatial coordinates to $[0,1]^d$; we observed that omitting this step degraded the performance. We used the default normalization scheme for the input function samples themselves; specifically, the Geo-FNO rescales the input functions pointwise across all training functions to zero mean and unit variance, and the Transolver rescales input functions with a global normalization across all points \emph{and} training functions to zero mean and unit variance. Both were trained in single precision for $500$ epochs with a batch size of $20$ using the NCSA Delta GPU cluster~\footnote{\url{https://www.ncsa.illinois.edu/research/project-highlights/delta}} with NVIDIA A100 and A40 GPUs.

\textbf{Divergence computation}: We use high-order accurate local (stencil-wise) interpolation with polyharmonic spline (PHS) RBFs augmented with high-degree polynomials~\citep{flyer2016role} to interpolate the neural operators' output functions. Then, the interpolant's bases were differentiated analytically to compute the divergence. The order of accuracy of the interpolation was varied over $\{5, 7, 9\}$ and the lowest resulting divergence (pointwise maximum divergence averaged across all test functions) was reported; we excluded the boundary divergences for safety as the stencils at the boundary were one-sided and hence not as reliable.

\bibliography{references}

@phdthesis{kadri2011operator,
  title={Operator-Valued Kernels for Nonparametric Operator Estimation},
  author={Kadri, Hachem and Preux, Philippe and Duflos, Emmanuel and Canu, Stephane},
  year={2011},
  school={INRIA}
}

@article{kadri2016operator,
  title={Operator-valued kernels for learning from functional response data},
  author={Kadri, Hachem and Duflos, Emmanuel and Preux, Philippe and Canu, St{\'e}phane and Rakotomamonjy, Alain and Audiffren, Julien},
  journal={Journal of Machine Learning Research},
  volume={17},
  number={20},
  pages={1--54},
  year={2016}
}

@article{economon2020simulation,
  title={Simulation and adjoint-based design for variable density incompressible flows with heat transfer},
  author={Economon, Thomas D},
  journal={AIAA Journal},
  volume={58},
  number={2},
  pages={757--769},
  year={2020},
  publisher={American Institute of Aeronautics and Astronautics}
}

@article{chiang1999numerical,
  title={Numerical investigation of vortical evolution in a backward-facing step expansion flow},
  author={Chiang, TP and Sheu, Tony WH and Fang, CC},
  journal={Applied Mathematical Modelling},
  volume={23},
  number={12},
  pages={915--932},
  year={1999},
  publisher={Elsevier}
}

@article{economon2016su2,
  title={SU2: An open-source suite for multiphysics simulation and design},
  author={Economon, Thomas D and Palacios, Francisco and Copeland, Sean R and Lukaczyk, Trent W and Alonso, Juan J},
  journal={Aiaa Journal},
  volume={54},
  number={3},
  pages={828--846},
  year={2016},
  publisher={American Institute of Aeronautics and Astronautics}
}

@article{khorrami2024physics,
  title={A physics-encoded Fourier neural operator approach for surrogate modeling of divergence-free stress fields in solids},
  author={Khorrami, Mohammad S and Goyal, Pawan and Mianroodi, Jaber R and Svendsen, Bob and Benner, Peter and Raabe, Dierk},
  journal={arXiv preprint arXiv:2408.15408},
  year={2024}
}

@article{richter2022neural,
  title={Neural conservation laws: A divergence-free perspective},
  author={Richter-Powell, Jack and Lipman, Yaron and Chen, Ricky TQ},
  journal={Advances in Neural Information Processing Systems},
  volume={35},
  pages={38075--38088},
  year={2022}
}

@article{batlle2024kernel,
  title={Kernel methods are competitive for operator learning},
  author={Batlle, Pau and Darcy, Matthieu and Hosseini, Bamdad and Owhadi, Houman},
  journal={Journal of Computational Physics},
  volume={496},
  pages={112549},
  year={2024},
  publisher={Elsevier}
}

@article{li2024physics,
  title={Physics-informed neural operator for learning partial differential equations},
  author={Li, Zongyi and Zheng, Hongkai and Kovachki, Nikola and Jin, David and Chen, Haoxuan and Liu, Burigede and Azizzadenesheli, Kamyar and Anandkumar, Anima},
  journal={ACM/IMS Journal of Data Science},
  volume={1},
  number={3},
  pages={1--27},
  year={2024},
  publisher={ACM New York, NY}
}

@misc{goswami2022physics,
  title={Physics-informed deep neural operator networks},
  author={Goswami, S and Bora, A and Yu, Y and Karniadakis, GE},
  year={2022}
}

@article{li2023phase,
  title={Phase-Field DeepONet: Physics-informed deep operator neural network for fast simulations of pattern formation governed by gradient flows of free-energy functionals},
  author={Li, Wei and Bazant, Martin Z and Zhu, Juner},
  journal={Computer Methods in Applied Mechanics and Engineering},
  volume={416},
  pages={116299},
  year={2023},
  publisher={Elsevier}
}

@article{fuselier2016high,
  title={A high-order radial basis function (RBF) Leray projection method for the solution of the incompressible unsteady Stokes equations},
  author={Fuselier, Edward J and Shankar, Varun and Wright, Grady B},
  journal={Computers \& Fluids},
  volume={128},
  pages={41--52},
  year={2016},
  publisher={Elsevier}
}

@article{fuselier2017radial,
  title={A radial basis function method for computing Helmholtz--Hodge decompositions},
  author={Fuselier, Edward J and Wright, Grady B},
  journal={IMA Journal of Numerical Analysis},
  volume={37},
  number={2},
  pages={774--797},
  year={2017},
  publisher={Oxford University Press}
}

@article{shankar2015augmenting,
  title={Augmenting the immersed boundary method with Radial Basis Functions (RBFs) for the modeling of platelets in hemodynamic flows},
  author={Shankar, Varun and Wright, Grady B and Kirby, Robert M and Fogelson, Aaron L},
  journal={International Journal for Numerical Methods in Fluids},
  volume={79},
  number={10},
  pages={536--557},
  year={2015},
  publisher={Wiley Online Library}
}

@article{owhadi2023gaussian,
  title={Gaussian process hydrodynamics},
  author={Owhadi, Houman},
  journal={Applied Mathematics and Mechanics},
  volume={44},
  number={7},
  pages={1175--1198},
  year={2023},
  publisher={Springer}
}

@article{sommariva2009computing,
  title={Computing approximate Fekete points by QR factorizations of Vandermonde matrices},
  author={Sommariva, Alvise and Vianello, Marco},
  journal={Computers \& Mathematics with Applications},
  volume={57},
  number={8},
  pages={1324--1336},
  year={2009},
  publisher={Elsevier}
}

@article{li2023fourier,
  title={Fourier neural operator with learned deformations for pdes on general geometries},
  author={Li, Zongyi and Huang, Daniel Zhengyu and Liu, Burigede and Anandkumar, Anima},
  journal={Journal of Machine Learning Research},
  volume={24},
  number={388},
  pages={1--26},
  year={2023}
}

@article{wu2024transolver,
  title={Transolver: A fast transformer solver for PDEs on general geometries},
  author={Wu, Haixu and Luo, Huakun and Wang, Haowen and Wang, Jianmin and Long, Mingsheng},
  journal={arXiv preprint arXiv:2402.02366},
  year={2024}
}

@book{Frisch1995,
  title     = {Turbulence: The Legacy of A.\,N.\ Kolmogorov},
  author    = {Frisch, Uriel},
  year      = {1995},
  publisher = {Cambridge University Press},
  isbn      = {9780521451031},
  address   = {Cambridge, UK}
}

@article{Runge1901,
  author  = {Runge, Carl},
  title   = {{\"U}ber empirische Funktionen und die Interpolation zwischen {\"a}quidistanten Ordinaten},
  journal = {Zeitschrift f{\"u}r Mathematik und Physik},
  volume  = {46},
  pages   = {224--243},
  year    = {1901}
}

@article{epperson1987runge,
  title={On the Runge example},
  author={Epperson, James F},
  journal={The American Mathematical Monthly},
  volume={94},
  number={4},
  pages={329--341},
  year={1987},
  publisher={Taylor \& Francis}
}

@article{fornberg2007runge,
  title={The Runge phenomenon and spatially variable shape parameters in RBF interpolation},
  author={Fornberg, Bengt and Zuev, Julia},
  journal={Computers \& Mathematics with Applications},
  volume={54},
  number={3},
  pages={379--398},
  year={2007},
  publisher={Elsevier}
}

@article{platte2005polynomials,
  title={Polynomials and potential theory for Gaussian radial basis function interpolation},
  author={Platte, Rodrigo B and Driscoll, Tobin A},
  journal={SIAM Journal on Numerical Analysis},
  volume={43},
  number={2},
  pages={750--766},
  year={2005},
  publisher={SIAM}
}

@article{berrut2004barycentric,
  title={Barycentric lagrange interpolation},
  author={Berrut, Jean-Paul and Trefethen, Lloyd N},
  journal={SIAM review},
  volume={46},
  number={3},
  pages={501--517},
  year={2004},
  publisher={SIAM}
}

@article{boyd2010six,
  title={Six strategies for defeating the Runge Phenomenon in Gaussian radial basis functions on a finite interval},
  author={Boyd, John P},
  journal={Computers \& Mathematics with Applications},
  volume={60},
  number={12},
  pages={3108--3122},
  year={2010},
  publisher={Elsevier}
}

@article{boyd2009exponentially,
  title={Exponentially-convergent strategies for defeating the Runge phenomenon for the approximation of non-periodic functions, part I: single-interval schemes},
  author={Boyd, John P and Ong, Jun Rong},
  journal={Comput. Phys},
  volume={5},
  number={2-4},
  pages={484--497},
  year={2009}
}

@article{boyd2002comparison,
  title={A comparison of numerical algorithms for Fourier extension of the first, second, and third kinds},
  author={Boyd, John P},
  journal={Journal of Computational Physics},
  volume={178},
  number={1},
  pages={118--160},
  year={2002},
  publisher={Elsevier}
}

@article{bos2008calculation,
  title={On the calculation of approximate Fekete points: the univariate case},
  author={Bos, LEONARD PETER and Levenberg, Norman and others},
  journal={Electron. Trans. Numer. Anal},
  volume={30},
  pages={377--397},
  year={2008}
}

@article{de2012polynomial,
  title={Polynomial approximation and cubature at approximate Fekete and Leja points of the cylinder},
  author={De Marchi, Stefano and Marchioro, Martina and Sommariva, Alvise},
  journal={Applied Mathematics and Computation},
  volume={218},
  number={21},
  pages={10617--10629},
  year={2012},
  publisher={Elsevier}
}

@article{sommariva2010approximate,
  title={Approximate Fekete points for weighted polynomial interpolation},
  author={Sommariva, Alvise and Vianello, Marco and others},
  journal={Electron. Trans. Numer. Anal},
  volume={37},
  pages={1--22},
  year={2010}
}

@article{karvonen2021kernel,
  title={Kernel-based interpolation at approximate Fekete points},
  author={Karvonen, Toni and S{\"a}rkk{\"a}, Simo and Tanaka, Ken’ichiro},
  journal={Numerical Algorithms},
  volume={87},
  number={1},
  pages={445--468},
  year={2021},
  publisher={Springer}
}

@article{schaback2000adaptive,
  title={Adaptive greedy techniques for approximate solution of large RBF systems},
  author={Schaback, Robert and Wendland, Holger},
  journal={Numerical Algorithms},
  volume={24},
  number={3},
  pages={239--254},
  year={2000},
  publisher={Springer}
}

@article{de2005near,
  title={Near-optimal data-independent point locations for radial basis function interpolation},
  author={De Marchi, Stefano and Schaback, Robert and Wendland, Holger},
  journal={Advances in Computational Mathematics},
  volume={23},
  number={3},
  pages={317--330},
  year={2005},
  publisher={Springer}
}

@article{galy2021adaptive,
  title={Adaptive inducing points selection for gaussian processes},
  author={Galy-Fajou, Th{\'e}o and Opper, Manfred},
  journal={arXiv preprint arXiv:2107.10066},
  year={2021}
}

@article{fasshauer2007choosing,
  title={On choosing “optimal” shape parameters for RBF approximation},
  author={Fasshauer, Gregory E and Zhang, Jack G},
  journal={Numerical Algorithms},
  volume={45},
  number={1},
  pages={345--368},
  year={2007},
  publisher={Springer}
}

@article{cavoretto2024comparing,
  title={Comparing deterministic and statistical optimization techniques for the shape parameter selection in RBF interpolation},
  author={Cavoretto, Roberto and De Rossi, Alessandra and Haider, Adeeba and Lancellotti, Sandro},
  journal={Dolomites Research Notes on Approximation},
  volume={17},
  number={DRNA Volume 17.3},
  pages={48--55},
  year={2024},
  publisher={Padova University Press}
}

@article{ling2022stochastic,
  title={A stochastic extended Rippa’s algorithm for LpOCV},
  author={Ling, Leevan and Marchetti, Francesco},
  journal={Applied Mathematics Letters},
  volume={129},
  pages={107955},
  year={2022},
  publisher={Elsevier}
}

@article{rippa1999algorithm,
  title={An algorithm for selecting a good value for the parameter c in radial basis function interpolation},
  author={Rippa, Shmuel},
  journal={Advances in Computational Mathematics},
  volume={11},
  number={2},
  pages={193--210},
  year={1999},
  publisher={Springer}
}

@book{brent1973algorithms,
  author    = {Brent, Richard P.},
  title     = {Algorithms for Minimization Without Derivatives},
  publisher = {Prentice-Hall},
  address   = {Englewood Cliffs, NJ},
  year      = {1973}
}

@incollection{dekker1969finding,
  author    = {Dekker, T. J.},
  title     = {Finding a Zero by Means of Successive Linear Interpolation},
  booktitle = {Constructive Aspects of the Fundamental Theorem of Algebra},
  editor    = {Dejon, B. and Henrici, P.},
  publisher = {Wiley--Interscience},
  address   = {London},
  year      = {1969}
}

@misc{jax2018github,
  author = {James Bradbury and Roy Frostig and Peter Hawkins and Matthew James Johnson and Chris Leary and Dougal Maclaurin and George Necula and Adam Paszke and Jake Vander{P}las and Skye Wanderman-{M}ilne and Qiao Zhang},
  title = {{JAX}: composable transformations of {P}ython+{N}um{P}y programs},
  version = {0.3.13},
  year = {2018},
}

@article{provansal1987benard,
  title={B{\'e}nard-von K{\'a}rm{\'a}n instability: transient and forced regimes},
  author={Provansal, M and Mathis, C and Boyer, L},
  journal={Journal of Fluid Mechanics},
  volume={182},
  pages={1--22},
  year={1987},
  publisher={Cambridge University Press}
}

@techreport{ladson1975development,
  author      = {Charles L. Ladson and Cuyler W. Brooks, Jr.},
  title       = {Development of a Computer Program to Obtain Ordinates for NACA 4-Digit, 4-Digit Modified, 5-Digit and 16 Series Airfoils},
  institution = {National Aeronautics and Space Administration},
  type        = {NASA Technical Memorandum},
  number      = {NASA-TM-X-3284},
  year        = {1975},
  address     = {Washington, DC},
}

@book{chorin1993mathematical,
  title     = {A Mathematical Introduction to Fluid Mechanics},
  author    = {Chorin, Alexandre J. and Marsden, Jerrold E.},
  series    = {Texts in Applied Mathematics},
  volume    = {4},
  year      = {1993},
  publisher = {Springer},
  address   = {New York, NY},
  isbn      = {978-0-387-97918-2}
}

@article{goda1979multistep,
  title={A multistep technique with implicit difference schemes for calculating two-or three-dimensional cavity flows},
  author={Goda, Katuhiko},
  journal={Journal of computational physics},
  volume={30},
  number={1},
  pages={76--95},
  year={1979},
  publisher={Elsevier}
}

@article{chorin1967numerical,
  title={A numerical method for solving incompressible viscous flow problems},
  author={Chorin, Alexandre Joel},
  journal={Journal of computational physics},
  volume={2},
  number={1},
  pages={12--26},
  year={1967},
  publisher={Elsevier}
}

@article{johnston2004accurate,
  title={Accurate, stable and efficient Navier--Stokes solvers based on explicit treatment of the pressure term},
  author={Johnston, Hans and Liu, Jian-Guo},
  journal={Journal of Computational Physics},
  volume={199},
  number={1},
  pages={221--259},
  year={2004},
  publisher={Elsevier}
}

@book{holmes2012turbulence,
  title     = {Turbulence, Coherent Structures, Dynamical Systems and Symmetry},
  editor    = {Philip Holmes and John L. Lumley and Gahl Berkooz and Clarence W. Rowley},
  publisher = {Cambridge University Press},
address   = {Cambridge, UK; New York, NY, USA},
  year      = {2012},
  isbn      = {9781107008250},
}

@book{quarteroni2015reduced,
  title     = {Reduced Basis Methods for Partial Differential Equations: An Introduction},
  author    = {Alfio Quarteroni and Andrea Manzoni and Federico Negri},
  year      = {2016},
  address   = {Cham, Switzerland},
  publisher = {Springer},
  volume    = {92},
  isbn      = {978-3-319-15430-5},
}

@article{lee2020model,
  title={Model reduction of dynamical systems on nonlinear manifolds using deep convolutional autoencoders},
  author={Lee, Kookjin and Carlberg, Kevin T},
  journal={Journal of Computational Physics},
  volume={404},
  pages={108973},
  year={2020},
  publisher={Elsevier}
}

@article{fresca2020deep,
  title={Deep learning-based reduced order models in cardiac electrophysiology},
  author={Fresca, Stefania and Manzoni, Andrea and Ded{\`e}, Luca and Quarteroni, Alfio},
  journal={PloS one},
  volume={15},
  number={10},
  pages={e0239416},
  year={2020},
  publisher={Public Library of Science San Francisco, CA USA}
}

@article{sirignano2018dgm,
  title={DGM: A deep learning algorithm for solving partial differential equations},
  author={Sirignano, Justin and Spiliopoulos, Konstantinos},
  journal={Journal of computational physics},
  volume={375},
  pages={1339--1364},
  year={2018},
  publisher={Elsevier}
}

@article{thuerey2020deep,
  title={Deep learning methods for Reynolds-averaged Navier--Stokes simulations of airfoil flows},
  author={Thuerey, Nils and Wei{\ss}enow, Konstantin and Prantl, Lukas and Hu, Xiangyu},
  journal={AIAA journal},
  volume={58},
  number={1},
  pages={25--36},
  year={2020},
  publisher={American Institute of Aeronautics and Astronautics}
}

@inproceedings{sanchez2020learning,
  title={Learning to simulate complex physics with graph networks},
  author={Sanchez-Gonzalez, Alvaro and Godwin, Jonathan and Pfaff, Tobias and Ying, Rex and Leskovec, Jure and Battaglia, Peter},
  booktitle={International conference on machine learning},
  pages={8459--8468},
  year={2020},
  organization={PMLR}
}

@inproceedings{pfaff2020learning,
  title={Learning mesh-based simulation with graph networks},
  author={Pfaff, Tobias and Fortunato, Meire and Sanchez-Gonzalez, Alvaro and Battaglia, Peter},
  booktitle={International conference on learning representations},
  year={2020}
}

@article{PINNs1,
title = {Physics-informed neural networks: A deep learning framework for solving forward and inverse problems involving nonlinear partial differential equations},
journal = {Journal of Computational Physics},
volume = {378},
pages = {686-707},
year = {2019},
issn = {0021-9991},
author = {M. Raissi and P. Perdikaris and G.E. Karniadakis}
}

@incollection{boulle2024mathematical,
  title={A mathematical guide to operator learning},
  author={Boull{\'e}, Nicolas and Townsend, Alex},
  booktitle={Handbook of Numerical Analysis},
  volume={25},
  pages={83--125},
  year={2024},
  address={Amsterdam, The Netherlands},
  publisher={Elsevier}
}

@article{lu2019deeponet,
   title={Learning nonlinear operators via DeepONet based on the universal approximation theorem of operators},
   volume={3},
   ISSN={2522-5839},
   number={3},
   journal={Nature Machine Intelligence},
   publisher={Springer Science and Business Media LLC},
   author={Lu, Lu and Jin, Pengzhan and Pang, Guofei and Zhang, Zhongqiang and Karniadakis, George Em},
   year={2021},
   month=mar, pages={218–229} }

@article{li2020fourier,
  title={Fourier neural operator for parametric partial differential equations},
  author={Li, Zongyi and Kovachki, Nikola and Azizzadenesheli, Kamyar and Liu, Burigede and Bhattacharya, Kaushik and Stuart, Andrew and Anandkumar, Anima},
  journal={arXiv preprint arXiv:2010.08895},
  year={2020}
}

@article{gin2021deepgreen,
  title={DeepGreen: deep learning of Green’s functions for nonlinear boundary value problems},
  author={Gin, Craig R and Shea, Daniel E and Brunton, Steven L and Kutz, J Nathan},
  journal={Scientific reports},
  volume={11},
  number={1},
  pages={21614},
  year={2021},
  publisher={Nature Publishing Group UK London}
}

@article{boulle2022data,
  title={Data-driven discovery of Green’s functions with human-understandable deep learning},
  author={Boull{\'e}, Nicolas and Earls, Christopher J and Townsend, Alex},
  journal={Scientific reports},
  volume={12},
  number={1},
  pages={4824},
  year={2022},
  publisher={Nature Publishing Group UK London}
}

@misc{hao2023gnot,
      title={GNOT: A General Neural Operator Transformer for Operator Learning},
      author={Zhongkai Hao and Zhengyi Wang and Hang Su and Chengyang Ying and Yinpeng Dong and Songming Liu and Ze Cheng and Jian Song and Jun Zhu},
      year={2023},
      eprint={2302.14376},
}

@article{liu2025geometry,
  title={Geometry-informed neural operator transformer},
  author={Liu, Qibang and Zhong, Weiheng and Meidani, Hadi and Abueidda, Diab and Koric, Seid and Geubelle, Philippe},
  journal={arXiv preprint arXiv:2504.19452},
  year={2025}
}

@article{liu2024mitigating,
  title={Mitigating spectral bias for the multiscale operator learning},
  author={Liu, Xinliang and Xu, Bo and Cao, Shuhao and Zhang, Lei},
  journal={Journal of Computational Physics},
  volume={506},
  pages={112944},
  year={2024},
  publisher={Elsevier}
}

@article{mora2025operator,
  title={Operator learning with Gaussian processes},
  author={Mora, Carlos and Yousefpour, Amin and Hosseinmardi, Shirin and Owhadi, Houman and Bostanabad, Ramin},
  journal={Computer Methods in Applied Mechanics and Engineering},
  volume={434},
  pages={117581},
  year={2025},
  publisher={Elsevier}
}

@article{li2020neural,
  title={Neural operator: Graph kernel network for partial differential equations},
  author={Li, Zongyi and Kovachki, Nikola and Azizzadenesheli, Kamyar and Liu, Burigede and Bhattacharya, Kaushik and Stuart, Andrew and Anandkumar, Anima},
  journal={arXiv preprint arXiv:2003.03485},
  year={2020}
}

@article{zhang2025floating,
  title={Floating-Body Hydrodynamic Neural Networks},
  author={Zhang, Tianshuo and Zhai, Wenzhe and Yann, Rui and Gao, Jia and Cao, He and Xing, Xianglei},
  journal={arXiv preprint arXiv:2509.13783},
  year={2025}
}

@article{alfonsi2009reynolds,
    author = {Alfonsi, Giancarlo},
    title = {Reynolds-Averaged Navier–Stokes Equations for Turbulence Modeling},
    journal = {Applied Mechanics Reviews},
    volume = {62},
    number = {4},
    pages = {040802},
    year = {2009},
    month = {06},
}

@article{durbin2018some,
  title={Some recent developments in turbulence closure modeling},
  author={Durbin, Paul A},
  journal={Annual Review of Fluid Mechanics},
  volume={50},
  number={1},
  pages={77--103},
  year={2018},
  publisher={Annual Reviews}
}

@article{spalart2000strategies,
  title={Strategies for turbulence modelling and simulations},
  author={Spalart, Philippe R},
  journal={International journal of heat and fluid flow},
  volume={21},
  number={3},
  pages={252--263},
  year={2000},
  publisher={Elsevier}
}

@article{alam2015hybrid,
  title={Hybrid Reynolds-Averaged Navier--Stokes/Large-Eddy simulation models for flow around an iced wing},
  author={Alam, MF and Thompson, David S and Walters, D Keith},
  journal={Journal of aircraft},
  volume={52},
  number={1},
  pages={244--256},
  year={2015},
  publisher={American Institute of Aeronautics and Astronautics}
}

@article{cambon1999linear,
  title={Linear and nonlinear models of anisotropic turbulence},
  author={Cambon, Claude and Scott, Julian F},
  journal={Annual review of fluid mechanics},
  volume={31},
  number={1},
  pages={1--53},
  year={1999},
  publisher={Annual Reviews 4139 El Camino Way, PO Box 10139, Palo Alto, CA 94303-0139, USA}
}

@article{geuzaine2009gmsh,
  title={Gmsh: A 3-D finite element mesh generator with built-in pre-and post-processing facilities},
  author={Geuzaine, Christophe and Remacle, Jean-Fran{\c{c}}ois},
  journal={International journal for numerical methods in engineering},
  volume={79},
  number={11},
  pages={1309--1331},
  year={2009},
  publisher={Wiley Online Library}
}

@article{secomb2016hemodynamics,
  title={Hemodynamics},
  author={Secomb, Timothy W},
  journal={Comprehensive physiology},
  volume={6},
  number={2},
  pages={975--1003},
  year={2016},
  publisher={Wiley Online Library}
}

@article{castorrini2023investigations,
  title={Investigations on offshore wind turbine inflow modelling using numerical weather prediction coupled with local-scale computational fluid dynamics},
  author={Castorrini, Alessio and Gentile, Sabrina and Geraldi, Edoardo and Bonfiglioli, Aldo},
  journal={Renewable and Sustainable Energy Reviews},
  volume={171},
  pages={113008},
  year={2023},
  publisher={Elsevier}
}

@article{kwak2009cfd,
  title={CFD for incompressible flows at NASA Ames},
  author={Kwak, Dochan and Kiris, Cetin},
  journal={Computers \& fluids},
  volume={38},
  number={3},
  pages={504--510},
  year={2009},
  publisher={Elsevier}
}

@article{perktold1995computer,
  title={Computer simulation of local blood flow and vessel mechanics in a compliant carotid artery bifurcation model},
  author={Perktold, Karl and Rappitsch, Gerhard},
  journal={Journal of biomechanics},
  volume={28},
  number={7},
  pages={845--856},
  year={1995},
  publisher={Elsevier}
}

@article{thwaites1960incompressible,
  title={Incompressible aerodynamics: an account of the theory and observation of the steady flow of incompressible fluid past aerofoils, wings, and other bodies},
  author={Thwaites, Bryan and Street, RE},
  journal={Physics Today},
  volume={13},
  number={12},
  pages={60--61},
  year={1960},
  publisher={American Institute of Physics}
}

@article{womersley1955method,
  title={Method for the calculation of velocity, rate of flow and viscous drag in arteries when the pressure gradient is known},
  author={Womersley, John R},
  journal={The Journal of physiology},
  volume={127},
  number={3},
  pages={553},
  year={1955}
}

@article{mcdonald1955relation,
  title={The relation of pulsatile pressure to flow in arteries},
  author={McDonald, DA},
  journal={The Journal of physiology},
  volume={127},
  number={3},
  pages={533},
  year={1955}
}

@article{kwak1986three,
  title={A three-dimensional incompressible Navier-Stokes flow solver using primitive variables},
  author={Kwak, Dochan and Chang, James LC and Shanks, Samuel P and Chakravarthy, Sukumar R},
  journal={AIAA journal},
  volume={24},
  number={3},
  pages={390--396},
  year={1986}
}

@article{chang1988numerical,
  title={Numerical simulation methods of incompressible flows and an application to the space shuttle main engine},
  author={Chang, JLC and Kwak, D and Rogers, SE and Yang, R-J},
  journal={International Journal for numerical methods in fluids},
  volume={8},
  number={10},
  pages={1241--1268},
  year={1988},
  publisher={Wiley Online Library}
}

@article{najm2005modeling,
  title={Modeling low Mach number reacting flow with detailed chemistry and transport},
  author={Najm, Habib N and Knio, Omar M},
  journal={Journal of Scientific Computing},
  volume={25},
  number={1},
  pages={263--287},
  year={2005},
  publisher={Springer}
}

@article{brown2001accurate,
  title={Accurate projection methods for the incompressible Navier--Stokes equations},
  author={Brown, David L and Cortez, Ricardo and Minion, Michael L},
  journal={Journal of computational physics},
  volume={168},
  number={2},
  pages={464--499},
  year={2001},
  publisher={Elsevier}
}

@article{guermond2006overview,
  title={An overview of projection methods for incompressible flows},
  author={Guermond, Jean-Luc and Minev, Peter and Shen, Jie},
  journal={Computer methods in applied mechanics and engineering},
  volume={195},
  number={44-47},
  pages={6011--6045},
  year={2006},
  publisher={Elsevier}
}

@article{orszag1986boundary,
  title={Boundary conditions for incompressible flows},
  author={Orszag, Steven A and Israeli, Moshe and Deville, Michel O},
  journal={Journal of Scientific Computing},
  volume={1},
  number={1},
  pages={75--111},
  year={1986},
  publisher={Springer}
}

@article{karniadakis1991high,
  title={High-order splitting methods for the incompressible Navier-Stokes equations},
  author={Karniadakis, George Em and Israeli, Moshe and Orszag, Steven A},
  journal={Journal of computational physics},
  volume={97},
  number={2},
  pages={414--443},
  year={1991},
  publisher={Elsevier}
}

@article{kan1986secondorder,
    author = {van Kan, J.},
    title = {A Second-Order Accurate Pressure-Correction Scheme for Viscous Incompressible Flow},
    journal = {SIAM Journal on Scientific and Statistical Computing},
    volume = {7},
    number = {3},
    pages = {870-891},
    year = {1986},
}

@article{ahmed2018assessment,
  title        = {An assessment of solvers for saddle point problems emerging from the incompressible Navier--Stokes equations},
  author       = {Ahmed, Naveed and Bartsch, Clemens and John, Volker and Wilbrandt, Ulrich},
  journal      = {Computational Methods in Applied Mechanics and Engineering},
  volume       = {331},
  pages        = {492--513},
  year         = {2018},
  publisher    = {Elsevier},
  note         = {Based on WIAS Preprint 2408 (2017)},
}

@article{john2002higher,
  title={Higher order finite element methods and multigrid solvers in a benchmark problem for the 3D Navier--Stokes equations},
  author={John, Volker},
  journal={International Journal for Numerical Methods in Fluids},
  volume={40},
  number={6},
  pages={775--798},
  year={2002},
  publisher={Wiley Online Library}
}

@article{elman2002preconditioners,
  title={Preconditioners for saddle point problems arising in computational fluid dynamics},
  author={Elman, Howard C},
  journal={Applied Numerical Mathematics},
  volume={43},
  number={1-2},
  pages={75--89},
  year={2002},
  publisher={Elsevier}
}

@article{dick2015solving,
  title={Solving the fluid pressure Poisson equation using multigrid—evaluation and improvements},
  author={Dick, Christian and Rogowsky, Marcus and Westermann, R{\"u}diger},
  journal={IEEE transactions on visualization and computer graphics},
  volume={22},
  number={11},
  pages={2480--2492},
  year={2015},
  publisher={IEEE}
}

@article{jung2013heterogenous,
    author = {Jung, Hwi-Ryong and Kim, Sun-Tae and Noh, Junyong and Hong, Jeong-Mo},
    title = {A heterogeneous CPU–GPU parallel approach to a multigrid Poisson solver for incompressible fluid simulation},
    journal = {Computer Animation and Virtual Worlds},
    volume = {24},
    number = {3-4},
    pages = {185-193},
    year = {2013}
}

@article{kress2006time,
  title={Time step restrictions using semi-explicit methods for the incompressible Navier--Stokes equations},
  author={Kress, Wendy and L{\"o}tstedt, Per},
  journal={Computer methods in applied mechanics and engineering},
  volume={195},
  number={33-36},
  pages={4433--4447},
  year={2006},
  publisher={Elsevier}
}

@article{ohlberger2015reduced,
  title={Reduced basis methods: Success, limitations and future challenges},
  author={Ohlberger, Mario and Rave, Stephan},
  journal={arXiv preprint arXiv:1511.02021},
  year={2015}
}

@article{ionita2014data,
  title={Data-driven parametrized model reduction in the Loewner framework},
  author={Ionita, Antonio Cosmin and Antoulas, Athanasios C},
  journal={SIAM Journal on Scientific Computing},
  volume={36},
  number={3},
  pages={A984--A1007},
  year={2014},
  publisher={SIAM}
}

@article{tezzele2020enhancing,
  title={Enhancing CFD predictions in shape design problems by model and parameter space reduction},
  author={Tezzele, Marco and Demo, Nicola and Stabile, Giovanni and Mola, Andrea and Rozza, Gianluigi},
  journal={Advanced Modeling and Simulation in Engineering Sciences},
  volume={7},
  number={1},
  pages={40},
  year={2020},
  publisher={Springer}
}

@article{bhattacharya2021model,
  title={Model reduction and neural networks for parametric PDEs},
  author={Bhattacharya, Kaushik and Hosseini, Bamdad and Kovachki, Nikola B and Stuart, Andrew M},
  journal={The SMAI journal of computational mathematics},
  volume={7},
  pages={121--157},
  year={2021}
}

@article{suncica2003effective,
    author = {Canic, Suncica and Mikelic, Andro},
    title = {Effective Equations Modeling the Flow of a Viscous Incompressible Fluid through a Long Elastic Tube Arising in the Study of Blood Flow through Small Arteries},
    journal = {SIAM Journal on Applied Dynamical Systems},
    volume = {2},
    number = {3},
    pages = {431-463},
    year = {2003}
}

@article{janela20102783,
    title = {A 3D non-Newtonian fluid–structure interaction model for blood flow in arteries},
    journal = {Journal of Computational and Applied Mathematics},
    volume = {234},
    number = {9},
    pages = {2783-2791},
    year = {2010},
    author = {João Janela and Alexandra Moura and Adélia Sequeira},
}

@article{guy2005stability,
    title = {Stability of approximate projection methods on cell-centered grids},
    journal = {Journal of Computational Physics},
    volume = {203},
    number = {2},
    pages = {517-538},
    year = {2005},
    issn = {0021-9991},
    author = {Robert D. Guy and Aaron L. Fogelson}
}

@article{bell1989secondorder,
    title = {A second-order projection method for the incompressible navier-stokes equations},
    journal = {Journal of Computational Physics},
    volume = {85},
    number = {2},
    pages = {257-283},
    year = {1989},
    issn = {0021-9991},
    author = {John B Bell and Phillip Colella and Harland M Glaz}
}

@InProceedings{raviartthomas1977,
    author="Raviart, P. A.
        and Thomas, J. M.",
    editor="Galligani, Ilio
        and Magenes, Enrico",
    title="A mixed finite element method for 2-nd order elliptic problems",
    booktitle="Mathematical Aspects of Finite Element Methods",
    year="1977",
    publisher="Springer Berlin Heidelberg",
    address="Berlin, Heidelberg",
    pages="292--315",
}

@article{brezzi1985two,
  title={Two families of mixed finite elements for second order elliptic problems},
  author={Brezzi, Franco and Douglas Jr, Jim and Marini, L Donatella},
  journal={Numerische Mathematik},
  volume={47},
  number={2},
  pages={217--235},
  year={1985},
  publisher={Springer}
}

@article{nedelec1980mixed,
    title = {Mixed finite elements in {$\mathbb{R}^3$}},
    author={N{\'e}d{\'e}lec, Jean-Claude},
    journal={Numerische Mathematik},
    volume={35},
    number={3},
    pages={315--341},
    year={1980},
    publisher={Springer}
}

@article{narcowich1994generalized,
  title={Generalized Hermite interpolation via matrix-valued conditionally positive definite functions},
  author={Narcowich, Francis J and Ward, Joseph D},
  journal={Mathematics of Computation},
  volume={63},
  number={208},
  pages={661--687},
  year={1994}
}

@article{narcowich2007divergence,
  title={Divergence-free RBFs on surfaces},
  author={Narcowich, Francis J and Ward, Joseph D and Wright, Grady B},
  journal={Journal of Fourier Analysis and Applications},
  volume={13},
  number={6},
  pages={643--663},
  year={2007},
  publisher={Springer}
}

@article{fuselier2009error,
  title={Error and stability estimates for surface-divergence free RBF interpolants on the sphere},
  author={Fuselier, Edward and Narcowich, Francis and Ward, Joseph and Wright, Grady},
  journal={Mathematics of Computation},
  volume={78},
  number={268},
  pages={2157--2186},
  year={2009}
}

@article{fuselier2008sobolev,
  title={Sobolev-type approximation rates for divergence-free and curl-free RBF interpolants},
  author={Fuselier, Edward},
  journal={Mathematics of Computation},
  volume={77},
  number={263},
  pages={1407--1423},
  year={2008}
}

@article{wendland2009divergence,
  title={Divergence-free kernel methods for approximating the Stokes problem},
  author={Wendland, Holger},
  journal={SIAM Journal on Numerical Analysis},
  volume={47},
  number={4},
  pages={3158--3179},
  year={2009},
  publisher={SIAM}
}

@article{jacob2002sampling,
  title={Sampling of periodic signals: A quantitative error analysis},
  author={Jacob, Mathews and Blu, Thierry and Unser, Michael},
  journal={IEEE Transactions on Signal Processing},
  volume={50},
  number={5},
  pages={1153--1159},
  year={2002},
  publisher={IEEE}
}

@article{nguyen2012imposing,
  title={Imposing periodic boundary condition on arbitrary meshes by polynomial interpolation},
  author={Nguyen, V-D and B{\'e}chet, Eric and Geuzaine, Christophe and Noels, Ludovic},
  journal={Computational Materials Science},
  volume={55},
  pages={390--406},
  year={2012},
  publisher={Elsevier}
}

@book{boissonnat2006effective,
  editor    = {Boissonnat, Jean‐Daniel and Teillaud, Monique},
  title     = {Effective Computational Geometry for Curves and Surfaces},
  series    = {Mathematics and Visualization},
  publisher = {Springer Berlin Heidelberg},
  address   = {Berlin, Heidelberg},
  year      = {2006},
  edition   = {1st},
  isbn      = {978-3-540-33258-9},
  ebookisbn = {978-3-540-33259-6},
}

@book{canuto2006spectral,
  author    = {Canuto, Claudio and Hussaini, M. Youssuff and Quarteroni, Alfio and Zang, Thomas A.},
  title     = {Spectral Methods: Fundamentals in Single Domains},
  series    = {Scientific Computation},
  publisher = {Springer Berlin Heidelberg},
  address   = {Berlin, Heidelberg},
  year      = {2006},
  isbn      = {978-3-540-30725-9},
  ebookisbn = {978-3-540-30726-6},
}

@book{boyd2001chebyshev,
    title={Chebyshev and Fourier spectral methods},
    author={Boyd, John P},
    year={2001},
    address   = {Berlin, Heidelberg},
    publisher={Courier Corporation}
}

@book{fornberg_1996,
    address   = {Cambridge, United Kingdom},
    series={Cambridge Monographs on Applied and Computational Mathematics},
    title={A Practical Guide to Pseudospectral Methods},
    publisher={Cambridge University Press},
    author={Fornberg, Bengt},
    year={1996},
    collection={Cambridge Monographs on Applied and Computational Mathematics}
}

@book{fasshauer2007meshfree,
    title={Meshfree approximation methods with Matlab (With Cd-rom)},
    author={Fasshauer, Gregory E},
    volume={6},
    year={2007},
    address   = {Singapore; Hackensack, NJ, USA},
    publisher={World Scientific Publishing Company}
}

@article{xiao2014periodized,
  title={Periodized radial basis functions, part I: theory},
  author={Xiao, Jianping and Boyd, John P},
  journal={Applied Numerical Mathematics},
  volume={86},
  pages={43--73},
  year={2014},
  publisher={Elsevier}
}

@article{kaarnioja2022fast,
  title={Fast approximation by periodic kernel-based lattice-point interpolation with application in uncertainty quantification},
  author={Kaarnioja, Vesa and Kazashi, Yoshihito and Kuo, Frances Y and Nobile, Fabio and Sloan, Ian H},
  journal={Numerische Mathematik},
  volume={150},
  number={1},
  pages={33--77},
  year={2022},
  publisher={Springer}
}

@article{cools2019lattice,
  title={Lattice algorithms for multivariate approximation in periodic spaces with general weight parameters},
  author={Cools, Ronald and Kuo, Frances Y and Nuyens, Dirk and Sloan, Ian H},
  journal={arXiv preprint arXiv:1910.06604},
  year={2019}
}

@article{shankar2015radial,
  title={Radial basis function (RBF)-based parametric models for closed and open curves within the method of regularized stokeslets},
  author={Shankar, Varun and Olson, Sarah D},
  journal={International Journal for Numerical Methods in Fluids},
  volume={79},
  number={6},
  pages={269--289},
  year={2015},
  publisher={Wiley Online Library}
}

@article{fuselier2015order,
  title={Order-preserving derivative approximation with periodic radial basis functions},
  author={Fuselier, Edward and Wright, Grady B},
  journal={Advances in Computational Mathematics},
  volume={41},
  number={1},
  pages={23--53},
  year={2015},
  publisher={Springer}
}

@book{owhadi2021kernel,
  title={Kernel Mode Decomposition and the programming of kernels},
  author={Owhadi, Houman and Scovel, Clint and Yoo, Gene Ryan},
  address   = {Cham, Switzerland},
  year={2021},
  publisher={Springer}
}

@article{romero2017kernel,
  title={Kernel-based reconstruction of space-time functions on dynamic graphs},
  author={Romero, Daniel and Ioannidis, Vassilis N and Giannakis, Georgios B},
  journal={IEEE Journal of Selected Topics in Signal Processing},
  volume={11},
  number={6},
  pages={856--869},
  year={2017},
  publisher={IEEE}
}

@article{cressie01121999,
    author = {Noel Cressie and Hsin-Cheng Huang},
    title = {Classes of Nonseparable, Spatio-Temporal Stationary Covariance Functions},
    journal = {Journal of the American Statistical Association},
    volume = {94},
    number = {448},
    pages = {1330--1339},
    year = {1999},
    publisher = {Taylor \& Francis},
}

@article{stein01032005,
    author = {Michael L Stein},
    title = {Space–Time Covariance Functions},
    journal = {Journal of the American Statistical Association},
    volume = {100},
    number = {469},
    pages = {310--321},
    year = {2005},
    publisher = {Taylor \& Francis},
}

@article{wendland1995piecewise,
  author  = {Wendland, Holger},
  title   = {Piecewise Polynomial, Positive Definite and Compactly Supported Radial Functions of Minimal Degree},
  journal = {Advances in Computational Mathematics},
  volume  = {4},
  number  = {1},
  pages   = {389--396},
  year    = {1995},
}

@manual{demarchi2018lectures,
  title        = {Lectures on Radial Basis Functions},
  author       = {De Marchi, Stefano and Perracchione, Emma},
  organization = {Department of Mathematics ``Tullio Levi-Civita'', University of Padua},
  address      = {Padua, Italy},
  year         = {2018},
}

@article{chen2023selection,
    title = {On the selection of a better radial basis function and its shape parameter in interpolation problems},
    journal = {Applied Mathematics and Computation},
    volume = {442},
    pages = {127713},
    year = {2023},
    issn = {0096-3003},
    author = {Chuin-Shan Chen and Amir Noorizadegan and D.L. Young and C.S. Chen},
}

@article{wang2002optimal,
    title = {On the optimal shape parameters of radial basis functions used for 2-D meshless methods},
    journal = {Computer Methods in Applied Mechanics and Engineering},
    volume = {191},
    number = {23},
    pages = {2611-2630},
    year = {2002},
    issn = {0045-7825},
    author = {J.G. Wang and G.R. Liu},
}

@article{krowiak2019choosing,
    author = {Krowiak, Artur and Podgórski, Jordan},
    title = {On choosing a value of shape parameter in radial basis function collocation methods},
    journal = {AIP Conference Proceedings},
    volume = {2116},
    number = {1},
    pages = {450020},
    year = {2019},
    month = {07},
    issn = {0094-243X},
}

@book{fasshauer2015kernel,
    author = {Fasshauer, Gregory and McCourt, Michael},
    title = {Kernel-based Approximation Methods using MATLAB},
    publisher = {World Scientific},
    year = {2015},
    address   = {Singapore; Hackensack, NJ, USA},
}

@book{williams2006gaussian,
    author = {Rasmussen, Carl Edward and Williams, Christopher K. I.},
    title = {Gaussian Processes for Machine Learning},
    publisher = {The MIT Press},
    year = {2005},
    month = {11},
    address   = {Cambridge, MA},
    isbn = {9780262256834},
}

@article{jing2024modified,
  title={Modified space-time radial basis function collocation method for solving three-dimensional transient elastodynamic problems},
  author={Jing, Xiaohan and Qiu, Lin and Wang, Fajie and Gu, Yan},
  journal={Engineering Analysis with Boundary Elements},
  volume={169},
  pages={106027},
  year={2024},
  publisher={Elsevier}
}

@article{ku2022space,
  title={Space--time polyharmonic radial polynomial basis functions for modeling saturated and unsaturated flows},
  author={Ku, Cheng-Yu and Hong, Li-Dan and Liu, Chih-Yu and Xiao, Jing-En},
  journal={Engineering with Computers},
  volume={38},
  number={6},
  pages={4947--4960},
  year={2022},
  publisher={Springer}
}

@article{ku2020space,
  title={Space--time radial basis function--based meshless approach for solving convection--diffusion equations},
  author={Ku, Cheng-Yu and Xiao, Jing-En and Liu, Chih-Yu},
  journal={Mathematics},
  volume={8},
  number={10},
  pages={1735},
  year={2020},
  publisher={MDPI}
}

@article{yue2019novel,
  title={A novel space--time meshless method for nonhomogeneous convection--diffusion equations with variable coefficients},
  author={Yue, Xingxing and Wang, Fajie and Hua, Qingsong and Qiu, Xiang-Yun},
  journal={Applied Mathematics Letters},
  volume={92},
  pages={144--150},
  year={2019},
  publisher={Elsevier}
}

@article{li2016estimation,
  title={Estimation of river pollution source using the space-time radial basis collocation method},
  author={Li, Zi and Mao, Xian-Zhong and Li, Tak Sing and Zhang, Shiyan},
  journal={Advances in Water Resources},
  volume={88},
  pages={68--79},
  year={2016},
  publisher={Elsevier}
}

@article{li2011global,
  title={Global multiquadric collocation method for groundwater contaminant source identification},
  author={Li, Zi and Mao, Xian-zhong},
  journal={Environmental modelling \& software},
  volume={26},
  number={12},
  pages={1611--1621},
  year={2011},
  publisher={Elsevier}
}

@article{posa1993simple,
  title={A simple description of spatial-temporal processes},
  author={Posa, Donato},
  journal={Computational Statistics \& Data Analysis},
  volume={15},
  number={4},
  pages={425--437},
  year={1993},
  publisher={Elsevier}
}

@article{myers2002space,
  title={Space-time radial basis functions},
  author={Myers, DE and De Iaco, Sandra and Posa, Donato and De Cesare, Luigi},
  journal={Computers \& Mathematics with Applications},
  volume={43},
  number={3-5},
  pages={539--549},
  year={2002},
  publisher={Elsevier}
}

@article{chen2005scientific,
  title={Scientific computing with radial basis functions},
  author={Chen, CS and Hon, YC and Schaback, RA},
  journal={Department of Mathematics, University of Southern Mississippi, Hattiesburg, MS},
  volume={39406},
  year={2005}
}

@article{baxter2010interpolation,
  title={The interpolation theory of radial basis functions},
  author={Baxter, Brad},
  journal={arXiv preprint arXiv:1006.2443},
  year={2010}
}

@inproceedings{sudret2017surrogate,
  title={Surrogate models for uncertainty quantification: An overview},
  author={Sudret, Bruno and Marelli, Stefano and Wiart, Joe},
  booktitle={2017 11th European conference on antennas and propagation (EUCAP)},
  pages={793--797},
  year={2017},
  organization={IEEE}
}

@article{asouti2023radial,
  title={Radial basis function surrogates for uncertainty quantification and aerodynamic shape optimization under uncertainties},
  author={Asouti, Varvara and Kontou, Marina and Giannakoglou, Kyriakos},
  journal={Fluids},
  volume={8},
  number={11},
  pages={292},
  year={2023},
  publisher={MDPI}
}

@article{zhu2018bayesian,
  title={Bayesian deep convolutional encoder--decoder networks for surrogate modeling and uncertainty quantification},
  author={Zhu, Yinhao and Zabaras, Nicholas},
  journal={Journal of Computational Physics},
  volume={366},
  pages={415--447},
  year={2018},
  publisher={Elsevier}
}

@article{kovachki2022multiscale,
  title={Multiscale modeling of materials: Computing, data science, uncertainty and goal-oriented optimization},
  author={Kovachki, Nikola and Liu, Burigede and Sun, Xingsheng and Zhou, Hao and Bhattacharya, Kaushik and Ortiz, Michael and Stuart, Andrew},
  journal={Mechanics of Materials},
  volume={165},
  pages={104156},
  year={2022},
  publisher={Elsevier}
}

@article{huang2022iterated,
  title={Iterated Kalman methodology for inverse problems},
  author={Huang, Daniel Zhengyu and Schneider, Tapio and Stuart, Andrew M},
  journal={Journal of Computational Physics},
  volume={463},
  pages={111262},
  year={2022},
  publisher={Elsevier}
}

@article{martin2012stochastic,
  title={A stochastic Newton MCMC method for large-scale statistical inverse problems with application to seismic inversion},
  author={Martin, James and Wilcox, Lucas C and Burstedde, Carsten and Ghattas, Omar},
  journal={SIAM Journal on Scientific Computing},
  volume={34},
  number={3},
  pages={A1460--A1487},
  year={2012},
  publisher={SIAM}
}

@article{narayan2021optimal,
  title={Optimal design for kernel interpolation: Applications to uncertainty quantification},
  author={Narayan, Akil and Yan, Liang and Zhou, Tao},
  journal={Journal of Computational Physics},
  volume={430},
  pages={110094},
  year={2021},
  publisher={Elsevier}
}

@article{tharakan2015scalable,
  title={Scalable kernel methods for uncertainty quantification},
  author={Tharakan, S and March, WB and Biros, G},
  journal={Recent Trends in Computational Engineering-CE2014: Optimization, Uncertainty, Parallel Algorithms, Coupled and Complex Problems},
  pages={3--28},
  year={2015},
  publisher={Springer}
}

@article{csaji2019distribution,
  title={Distribution-free uncertainty quantification for kernel methods by gradient perturbations},
  author={Cs{\'a}ji, Bal{\'a}zs Cs and Kis, Kriszti{\'a}n B},
  journal={Machine Learning},
  volume={108},
  number={8},
  pages={1677--1699},
  year={2019},
  publisher={Springer}
}

@article{rosales2021high,
  title={High-order finite element methods for a pressure Poisson equation reformulation of the Navier--Stokes equations with electric boundary conditions},
  author={Rosales, Rodolfo Ruben and Seibold, Benjamin and Shirokoff, David and Zhou, Dong},
  journal={Computer Methods in Applied Mechanics and Engineering},
  volume={373},
  pages={113451},
  year={2021},
  publisher={Elsevier}
}

@article{guermond2015high,
  title={High-order time stepping for the incompressible Navier--Stokes equations},
  author={Guermond, Jean-Luc and Minev, Peter},
  journal={SIAM Journal on Scientific Computing},
  volume={37},
  number={6},
  pages={A2656--A2681},
  year={2015},
  publisher={SIAM}
}

@article{guermond2017high,
  title={High-order time stepping for the Navier--Stokes equations with minimal computational complexity},
  author={Guermond, Jean-Luc and Minev, Peter D},
  journal={Journal of Computational and Applied Mathematics},
  volume={310},
  pages={92--103},
  year={2017},
  publisher={Elsevier}
}

@article{lowery2024kernel,
  title={Kernel neural operators (KNOs) for scalable, memory-efficient, geometrically-flexible operator learning},
  author={Lowery, Matthew and Turnage, John and Morrow, Zachary and Jakeman, John D and Narayan, Akil and Zhe, Shandian and Shankar, Varun},
  journal={arXiv preprint arXiv:2407.00809},
  year={2024}
}

@article{tritton1959experiments,
  title={Experiments on the flow past a circular cylinder at low Reynolds numbers},
  author={Tritton, David J},
  journal={Journal of Fluid Mechanics},
  volume={6},
  number={4},
  pages={547--567},
  year={1959},
  publisher={Cambridge University Press}
}

@article{rajani2009numerical,
  title={Numerical simulation of laminar flow past a circular cylinder},
  author={Rajani, BN and Kandasamy, A and Majumdar, Sekhar},
  journal={Applied Mathematical Modelling},
  volume={33},
  number={3},
  pages={1228--1247},
  year={2009},
  publisher={Elsevier}
}

@article{li1991numerical,
  title={Numerical study of laminar flow past one and two circular cylinders},
  author={Li, J and Chambarel, A and Donneaud, M and Martin, R},
  journal={Computers \& fluids},
  volume={19},
  number={2},
  pages={155--170},
  year={1991},
  publisher={Elsevier}
}

@article{jackson1987finite,
  title={A finite-element study of the onset of vortex shedding in flow past variously shaped bodies},
  author={Jackson, CP},
  journal={Journal of fluid Mechanics},
  volume={182},
  pages={23--45},
  year={1987},
  publisher={Cambridge University Press}
}

@article{taylor1937mechanism,
  title={Mechanism of the production of small eddies from large ones},
  author={Taylor, Geoffrey Ingram and Green, Albert Edward},
  journal={Proceedings of the Royal Society of London. Series A-Mathematical and Physical Sciences},
  volume={158},
  number={895},
  pages={499--521},
  year={1937},
  publisher={The Royal Society London}
}

@article{kim1985application,
  title={Application of a fractional-step method to incompressible Navier-Stokes equations},
  author={Kim, John and Moin, Parviz},
  journal={Journal of computational physics},
  volume={59},
  number={2},
  pages={308--323},
  year={1985},
  publisher={Elsevier}
}

@article{ubal2024adjoint,
  title={Adjoint-based design optimization of a Kenics static mixer},
  author={Ubal, Cristopher Morales and Beishuizen, Nijso and Kusch, Lisa and van Oijen, Jeroen},
  journal={Results in Engineering},
  volume={21},
  pages={101856},
  year={2024},
  publisher={Elsevier}
}

@article{sockol2003multigrid,
  title={Multigrid solution of the Navier--Stokes equations at low speeds with large temperature variations},
  author={Sockol, Peter M},
  journal={Journal of Computational Physics},
  volume={192},
  number={2},
  pages={570--592},
  year={2003},
  publisher={Elsevier}
}

@article{peng2003transition,
  title={Transition in a 2-D lid-driven cavity flow},
  author={Peng, Yih-Ferng and Shiau, Yuo-Hsien and Hwang, Robert R},
  journal={Computers \& Fluids},
  volume={32},
  number={3},
  pages={337--352},
  year={2003},
  publisher={Elsevier}
}

@article{ramanan1994linear,
  title={Linear stability of lid-driven cavity flow},
  author={Ramanan, Natarajan and Homsy, George M},
  journal={Physics of Fluids},
  volume={6},
  number={8},
  pages={2690--2701},
  year={1994},
  publisher={American Institute of Physics}
}

@article{sahin2003novel,
  title={A novel fully-implicit finite volume method applied to the lid-driven cavity problem—Part II: Linear stability analysis},
  author={Sahin, Mehmet and Owens, Robert G},
  journal={International journal for numerical methods in fluids},
  volume={42},
  number={1},
  pages={79--88},
  year={2003},
  publisher={Wiley Online Library}
}

@Inbook{kuhlmann2019,
    author="Kuhlmann, Hendrik C.
        and Roman{\`o}, Francesco",
    title="The Lid-Driven Cavity",
    bookTitle="Computational Modelling of Bifurcations and Instabilities in Fluid Dynamics",
    year="2019",
    publisher="Springer International Publishing",
    address="Cham",
    pages="233--309"
}

@article{dritschel1995general,
  title={A general theory for two-dimensional vortex interactions},
  author={Dritschel, David G},
  journal={Journal of Fluid Mechanics},
  volume={293},
  pages={269--303},
  year={1995},
  publisher={Cambridge University Press}
}

@article{josserand2007merging,
  title={The merging of two co-rotating vortices: a numerical study},
  author={Josserand, Ch and Rossi, M},
  journal={European Journal of Mechanics-B/Fluids},
  volume={26},
  number={6},
  pages={779--794},
  year={2007},
  publisher={Elsevier}
}

@article{trieling2010dynamics,
  title={Dynamics of two identical vortices in linear shear},
  author={Trieling, RR and Dam, CEC and van Heijst, GJF},
  journal={Physics of Fluids},
  volume={22},
  number={11},
  year={2010},
  publisher={AIP Publishing}
}

@book{Pope2000,
    author    = {S. B. Pope},
    title     = {Turbulent Flows},
    publisher = {Cambridge University Press},
    year      = {2000},
    address   = {Cambridge, UK},
}

@article{Kolmogorov1941,
  author  = {A. N. Kolmogorov},
  title   = {The Local Structure of Turbulence in Incompressible Viscous Fluid for Very Large Reynolds Numbers},
  journal = {Doklady Akademii Nauk SSSR},
  volume  = {30},
  pages   = {301--305},
  year    = {1941}
}

@book{Stull1988,
    author    = {R. B. Stull},
    title     = {An Introduction to Boundary Layer Meteorology},
    publisher = {Kluwer Academic Publishers},
    year      = {1988},
    address   = {Dordrecht, The Netherlands},
}

@book{Vallis2017,
    author    = {G. K. Vallis},
    title     = {Atmospheric and Oceanic Fluid Dynamics},
    edition   = {2},
    publisher = {Cambridge University Press},
    year      = {2017},
    address   = {Cambridge, UK},
}

@article{Nezu2005,
  author  = {I. Nezu and H. Nakagawa},
  title   = {Turbulence in Open-Channel Flows},
  journal = {IAHR Monograph Series},
  year    = {2005}
}

@book{Schlichting2000,
    author    = {H. Schlichting and K. Gersten},
    title     = {Boundary-Layer Theory},
    edition   = {8},
    publisher = {Springer},
    year      = {2000},
    address   = {Berlin, Heidelberg},
}

@book{Richardson1922,
    author    = {L. F. Richardson},
    title     = {Weather Prediction by Numerical Process},
    publisher = {Cambridge University Press},
    year      = {1922},
    address   = {Cambridge, UK},
}

@book{Stein1999,
    author    = {Michael L. Stein},
    title     = {Interpolation of Spatial Data: Some Theory for Kriging},
    publisher = {Springer},
    year      = {1999},
    address   = {New York, NY},
}

@book{Wendland2004,
    title        = {Scattered Data Approximation},
    author       = {Wendland, Holger},
    publisher    = {Cambridge University Press},
    year         = {2005},
    address      = {Cambridge, UK},
}

@inproceedings{drineas2005approximating,
  title={Approximating a gram matrix for improved kernel-based learning},
  author={Drineas, Petros and Mahoney, Michael W},
  booktitle={International Conference on Computational Learning Theory},
  pages={323--337},
  year={2005},
  organization={Springer}
}

@inproceedings{gittens2013revisiting,
  title={Revisiting the nystrom method for improved large-scale machine learning},
  author={Gittens, Alex and Mahoney, Michael},
  booktitle={International Conference on Machine Learning},
  pages={567--575},
  year={2013},
  organization={PMLR}
}

@incollection{yokota2011treecode,
    title={Treecode and fast multipole method for N-body simulation with CUDA},
    author={Yokota, Rio and Barba, Lorena A},
    booktitle={GPU Computing Gems Emerald Edition},
    pages={113--132},
    year={2011},
    publisher={Elsevier},
    address      = {Boston, MA}
}

@article{rokhlin1985rapid,
  title={Rapid solution of integral equations of classical potential theory},
  author={Rokhlin, Vladimir},
  journal={Journal of computational physics},
  volume={60},
  number={2},
  pages={187--207},
  year={1985},
  publisher={Elsevier}
}

@article{micchelli2005learning,
  title={On learning vector-valued functions},
  author={Micchelli, Charles A and Pontil, Massimiliano},
  journal={Neural computation},
  volume={17},
  number={1},
  pages={177--204},
  year={2005},
  publisher={MIT Press}
}

@inproceedings{duchon1976splinesmr,
  title={Splines minimizing rotation-invariant semi-norms in Sobolev spaces},
  author={Duchon, Jean},
  booktitle={Constructive theory of functions of several variables: proceedings of a conference held at oberwolfach April 25--May 1, 1976},
  pages={85--100},
  year={2006},
  organization={Springer}
}

@book{leveque2002finite,
    title={Finite volume methods for hyperbolic problems},
    author={LeVeque, Randall J},
    volume={31},
    year={2002},
    publisher={Cambridge university press},
    address = {Cambridge, UK},
}

@book{shankar2014radial,
  title={Radial basis function-based numerical methods for the simulation of platelet aggregation},
  author={Shankar, Varun},
  year={2014},
  publisher={The University of Utah},
  address={Utah, US}
}

@article{flyer2011radial,
  title={Radial basis functions: Developments and applications to planetary scale flows},
  author={Flyer, Natasha and Fornberg, Bengt},
  journal={Computers \& Fluids},
  volume={46},
  number={1},
  pages={23--32},
  year={2011},
  publisher={Elsevier}
}

@article{flyer2009radial,
  title={A radial basis function method for the shallow water equations on a sphere},
  author={Flyer, Natasha and Wright, Grady B},
  journal={Proceedings of the Royal Society A: Mathematical, Physical and Engineering Sciences},
  volume={465},
  number={2106},
  pages={1949--1976},
  year={2009},
  publisher={The Royal Society}
}

@article{flyer2007transport,
  title={Transport schemes on a sphere using radial basis functions},
  author={Flyer, Natasha and Wright, Grady B},
  journal={Journal of Computational Physics},
  volume={226},
  number={1},
  pages={1059--1084},
  year={2007},
  publisher={Elsevier}
}

@article{kassen2021immersed,
    title = {Immersed boundary simulations of cell-cell interactions in whole blood},
    journal = {Journal of Computational Physics},
    volume = {469},
    pages = {111499},
    year = {2022},
    issn = {0021-9991},
    author = {Andrew Kassen and Aaron Barrett and Varun Shankar and Aaron L. Fogelson},
}

@book{hastie2009elements,
    title={The elements of statistical learning: data mining, inference, and prediction},
    author={Hastie, Trevor and Tibshirani, Robert and Friedman, Jerome H and Friedman, Jerome H},
    volume={2},
    year={2009},
    publisher={Springer},
    address = {New York, NY}
}

@article{drake2021partition,
  title={A partition of unity method for divergence-free or curl-free radial basis function approximation},
  author={Drake, Kathryn P and Fuselier, Edward J and Wright, Grady B},
  journal={SIAM Journal on Scientific Computing},
  volume={43},
  number={3},
  pages={A1950--A1974},
  year={2021},
  publisher={SIAM}
}

@article{flyer2016role,
  title={On the role of polynomials in RBF-FD approximations: I. Interpolation and accuracy},
  author={Flyer, Natasha and Fornberg, Bengt and Bayona, Victor and Barnett, Gregory A},
  journal={Journal of Computational Physics},
  volume={321},
  pages={21--38},
  year={2016},
  publisher={Elsevier}
}

@article{sriperumbudur2011universality,
  title   = {Universality, Characteristic Kernels and {RKHS} Embedding of Measures},
  author  = {Sriperumbudur, Bharath K. and Fukumizu, Kenji and Lanckriet, Gert R. G.},
  journal = {Journal of Machine Learning Research},
  volume  = {12},
  pages   = {2389--2410},
  year    = {2011}
}

@book{runst1996sobolev,
  title     = {Sobolev Spaces of Fractional Order, Nemytskij Operators, and Nonlinear Partial Differential Equations},
  author    = {Runst, Thomas and Sickel, Winfried},
  publisher = {Walter de Gruyter},
  year      = {1996},
  isbn      = {9783110151138}
}

@article{bourdaud2023composition,
  title   = {An Introduction to Composition Operators in Sobolev Spaces},
  author  = {Bourdaud, G{\'e}rard},
  journal = {Eurasian Mathematical Journal},
  volume  = {14},
  number  = {1},
  pages   = {39--54},
  year    = {2023},
  doi     = {10.32523/2077-9879-2023-14-1-39-54}
}
\end{document}